\theoremstyle{definition}
\newtheorem{definition}{Definition}[subsection]
\newtheorem{remark}[definition]{Remark}
\theoremstyle{plain}
\newtheorem{theorem}[definition]{Theorem}
\newtheorem{corollary}[definition]{Corollary}
\newtheorem{proposition}[definition]{Proposition}
\newtheorem{lemma}[definition]{Lemma}
\newtheorem{example}[definition]{Example}
\newcommand{\flac}[2]{#1}
\title{Comparative Analyses of the Type D ASEP: Stochastic Fusion and Crystal Bases}
\author{Erik Brodsky\textsuperscript{1}, Eva R. Engel\textsuperscript{2}, Connor Panish\textsuperscript{3}, Lillian Stolberg\textsuperscript{4} \\
 \\
\textsuperscript{1}Undergraduate, Michigan State University, Department of Mathematics\\
\textsuperscript{2}Undergraduate, Princeton University\\
\textsuperscript{3}Undergraduate, University of Florida, Department of Mathematics\\
\textsuperscript{4}Undergraduate, University of Rochester, Department of Mathematics\\
 \\}
\date{February 5, 2025}
\begin{document}

\newlength{\one}
\setlength{\one}{2.5cm}
\newlength{\two}
\setlength{\two}{2cm}
\newlength{\threeaxisshift}
\setlength{\threeaxisshift}{0.3cm}
\newlength{\threeorthshift}
\setlength{\threeorthshift}{0.5cm}
\newlength{\cornerdistance}
\setlength{\cornerdistance}{2.5cm}

\maketitle

\fbox{\parbox{\textwidth}{Accessibility Statement: a WCAG2.1AA compliant version of this PDF will be posted at \href{https://www.math.tamu.edu/undergraduate/research/REU/}{the Texas A\&M University Mathematics REU Website.} We thank the National Science
Foundation’s Office for Strategic Initiatives on ‘STEM Access for Persons with Disabilities’ for funding the creation of the accessible version.}}

\section{Abstract} \label{section--Abstract}
The Type D asymmetric simple exclusion process (ASEP) is a particle system involving two classes of particles that can be viewed from both a probabilistic and an algebraic perspective \cite{kuan2020interactingparticlesystemstype}. From a probabilistic perspective, we perform stochastic fusion on the Type D ASEP introduced in \cite{kuan2019stochastic} and analyze the outcome on generator matrices, limits of drift speed, stationary distributions, and Markov self-duality. From an algebraic perspective, we construct a fused Type D ASEP system from a Casimir element of $\U_q(\mathfrak{so}_6)$, using crystal bases to analyze and manipulate various representations of $\mathcal{U}_q(\mathfrak{so}_6)$. From \cite{kuan2019stochastic}, we know that the same generators for the normal ASEP are produced by stochastic fusion 
and by a suitable ground state transformation on a central element of $\U_q(\mathfrak{sl}_2)$ in a symmetric tensor product representation. However, in the case of Type D ASEP, we find that the probabilistic and algebraic generator matrices are not the same and thus represent different processes. We conclude that the relationship between stochastic fusion and a ground state transformation (specifically, what we term the type A ground state transformation) established in \cite{kuan2019stochastic} does not generalize to all finite-dimensional simple Lie algebras.

\section{Introduction} \label{section--Intro}

The interacting particle system labeled Type D ASEP \cite{kuan2020interactingparticlesystemstype} is a generalization of Spitzer's ASEP \cite{SPITZER1970246} that involves two classes of particles jumping on a lattice. Two particles of different classes can exist at a site but two particles of the same class cannot, and the jump rate of an individual particle depends on its class and configuration in relation to other particles. 

We question whether constructing a generator according to the probabilistic method of stochastic fusion, as defined in \cite{kuan2019stochastic}, on the Type D ASEP will result in the same generator as the algebraic method, as studied in \cite{Carinci2014,CrampeRagoucyVanicatReview2014,Carinci2016,Kuan_2016,Kuan_2017,Kuan_2018,KuanTwoDualities2021}, of applying a ground state transformation on a Casimir element in the second tensor power of an irreducible representation of $\U_q(\so_6)$, the case of Type $D$ Lie algebras. We find that the generators found probabilistically and algebraically are not the same; in fact, the probabilistic generator allows for nine states at each lattice site while the algebraic generator allows for fourteen. We therefore explore various characteristics of the different processes generated using these separate approaches. The probabilistic characteristics include limits of component generator matrices, stationary distributions, spectral gaps, and Markov self-duality. In terms of algebraic characteristics, we study the algebraic structure of various representations needed to construct the generator, creating crystal bases and Young tableaux to decompose $\U_q(\so_6)$-modules into irreducible representations as well as weight spaces. We then reflect on the impacts that these decompositions have on the eigenvalues and overall structure of the constructed Markov generator.

For background, \cite{10.1214/aop/1176996084} is the first paper considering two-class ASEP. Later research by \cite{Carinci2014,Belitsky_2015,belitskyschutz15, Carinci2016, belitsky2016selfdualityshockdynamicsncomponent,Kuan_2016, Kuan_2017, Kuan_2018, kuan2020interactingparticlesystemstype, rohr2023type} shows that generators of multi-class interacting particle systems can be constructed using a central element from various Drinfeld-Jimbo quantum subgroups of $\U_q(\mathfrak{gl}_n)$ \cite{Drinfeld85, Jimbo85}. These results inform our algebraic approach to constructing the generator matrix for Type $D$ Lie algebras. In terms of probability background, \cite{kuan2019stochastic} develops the probabilistic procedure of stochastic fusion, which is shown to produce the same process as central elements on normal ASEP. This result suggests that the generators are equivalent for the algebraic and probabilistic approaches, a finding extended by \cite{rohr2023type}, whose methods inform our derivation of the generator for stochastic fusion on the Type D ASEP.

The paper proceeds by the following outline: probabilistic and algebraic  notation is introduced in the remainder of \hyperref[section--Intro]{Section 2}, probabilistic and algebraic results are stated in \hyperref[section--ProbResults]{Section 3} and \hyperref[section--AlgResults]{Section 4}, respectively, and the respective proofs are discussed in \hyperref[section--ProbProofs]{Section 5} and \hyperref[section--AlgProofs]{Section 6}. We list a variety of matrices and methods of calculating them in the \hyperref[section--Appendix]{Appendix}.

\textbf{Acknowledgements.} This research was conducted during a mathematics REU at Texas A\&M University, funded by the National Science Foundation under REU Site Grant DMS-2150094 and the National Science Foundation’s Office for Strategic Initiatives on ‘STEM Access for Persons with Disabilities’. This research was also supported by Princeton University's Office of Undergraduate Research Undergraduate Fund for Academic Conferences through the Hewlett Fund. The authors express their gratitude to their mentor, Dr. Jeffrey Kuan, and teaching assistant, Kennedi Hatcher, for their involvement in the project. 

\subsection{Probability Notation} \label{subsection--ProbNotation}
\subsubsection{Type D ASEP}
\label{TypeDASEPIntro}
The Type D ASEP is a Markov process in which there are two classes of particles jumping on a lattice. Two particles of the same class cannot share a lattice site, but two particles of different classes can. The Type D ASEP is determined by parameters $(q,n,\delta = 0)$. The variable $n$ measures drift speed, with larger values of $n$ indicating more rapid drifts in the Type D ASEP. The variable $q$ represents drift direction, with rightwards drift for $0<q<1$ and leftwards drift for $q > 1$. Note that $q$ cannot equal $1$. The variable $\delta$ represents the interaction between the two particle classes. The jump rate of a particle in the Type D ASEP is determined by the particle's class and configuration on the lattice.

We use $\Gamma$ to denote lattice sites of the Type D ASEP. For each $\Gamma$ site, we use $0$ to denote an empty site (contains no particles), $1$ to denote a particle of class 1 at a site, $2$ to denote a particle of class 2, and $3$ to denote a particle of class 1 and a particle of class 2.  Furthermore, we use parentheses to denote a $\Gamma$ state in the Type D ASEP. For example, we let $(3,1,0,2)$ denote the four $\Gamma$ state particle configuration in the top line of Figure \ref{tikz--asepDiagram}.

We now consider the Type D ASEP over two $\Gamma$ lattice sites with $\delta = 0$. See \cite{rohr2023type} for a rigorous definition and \cite{kuan2020interactingparticlesystemstype} for a picture of the generator of Type D ASEP on two $\Gamma$ lattice sites. Note that there are sixteen $(a,b)$ possible states since $0 \leq a,b \leq 3$. We use $L_p^{(2)}$ to denote the generator for Type D ASEP on two $\Gamma$ sites, which is a direct sum of 
\[L_1 = \begin{bmatrix} * & \frac{\left(- q^{1 - n} + q^{n - 1}\right)^{2}}{q^{2}} & - q^{2 n - 4} + q^{2 n - 2} + \frac{2}{q^{2}}

\\q^{2 n} - q^{2 n - 2} + 2 & * & - q^{2 - 2 n} + 2 + q^{- 2 n} & \left(- q^{1 - n} + q^{n - 1}\right)^{2}

\\q^{2} \left(- q^{1 - n} + q^{n - 1}\right)^{2} & 2 q^{2} + q^{2 - 2 n} - q^{4 - 2 n} & * & 2 q^{2} + q^{2 - 2 n} - q^{4 - 2 n}

\\q^{2 n} - q^{2 n - 2} + 2 & \left(- q^{1 - n} + q^{n - 1}\right)^{2} & - q^{2 - 2 n} + 2 + q^{- 2 n} & *
\end{bmatrix}\]
corresponding to the communicating classes $\{(3,0),(2,1),(0,3),(1,2)\}$, four 2 by 2 blocks
\[L_2 = \begin{bmatrix} * & \frac{q^{1 - 2 n} + q^{2 n - 1}}{q}\\q \left(q^{1 - 2 n} + q^{2 n - 1}\right) & * \end{bmatrix}\]
corresponding to the communicating classes $\{(1,0),(0,1)\},\{(2,0),(0,2)\},\{(3,1),(1,3)\},\{(3,2),(2,3)\}$, and four $1$ by $1$ blocks of zeros corresponding to the communicating classes $\{(0,0)\},\{(1,1)\},\{(2,2)\},\{(3,3)\}$. Finally we express $L_p^{(2)}$,
\[L_p^{(2)} = L_1 \oplus \bigoplus_{i=1}^4 L_2 \oplus \bigoplus_{i=1}^4 [0]\]
with respect to the ordered basis
\[\Omega = \Bigl( (3,0),(2,1),(0,3),(1,2),(1,0),(0,1),(2,0),(0,2),(3,1),(1,3),(3,2),(2,3),(0,0),(1,1),(2,2),(3,3) \Bigl).\]
Having provided an example over two $\Gamma$ lattice sites, we expand the definition of Type D ASEP to $K$ $\Gamma$-lattice sites where the generator matrix is given by
\[L = L^{1,2}+L^{2,3}+...+L^{K-1,K}\]
We let $L^{x,x+1}$ denote the matrix on lattice sites $x$ and $x+1$. For this paper, we usually restrict to the case of Type D ASEP on 4 lattice sites. Therefore we define,
\[L_p \coloneqq L^{1,2}+L^{2,3}+L^{3,4}\]
to represent the generator of Type D ASEP on four $\Gamma$ sites. The generator $L_p$ is a 256 by 256 matrix, since $256 = 4^4$. Furthermore, we define $L_m$ to be the generator where only the two middle particles interact, ie. $L_m \coloneqq L^{2,3}$.

\subsubsection{Stochastic Fusion}
Taking a lattice with an even number of sites, we define stochastic fusion as the merging of each set of two consecutive lattice sites, such that the particles in each original site of the set are stacked at the new site. Note that two particles of the same species can exist at one new site, and that the total number of sites involved in the process is decreased by one half through stochastic fusion. See \cite{kuan2019stochastic} for a thorough treatment of stochastic fusion.
We proceed to define notation for the fused Type D ASEP, using $\gamma$ to denote lattice sites and states that have undergone the process of fusion, and $\Gamma$ to denote lattice sites and states that have not. For $\gamma$ lattice  sites, we use $11$ to denote two particles of class 1 at a site, $22$ to denote two particles of class 2, $3$ to denote a particle of class 1 and a particle of class 2, $33$ to denote two particles of class 1 and two particles of class 2, $31$ to denote two particles of class 1 and one particle of class 2, and $32$ to denote one particle of class 1 and two particles of class 2.  We will use angle brackets to denote $\gamma$ states. See Figure \ref{fig:visualLabel} for a visualization of this notation. As another example of $\gamma$-state notation, note that $\langle 31,2 \rangle$ is the $\gamma$ state of $(3,1,0,2)$ and is depicted in the bottom line of Figure \ref{tikz--asepDiagram}.

\begin{figure}
    \centering
    \includegraphics[]{./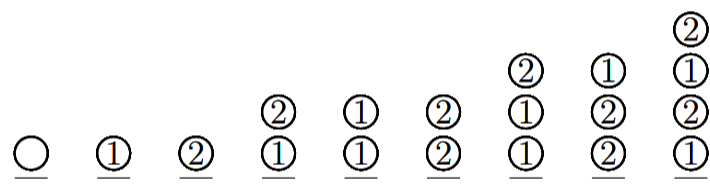}
    \caption{The fusion states for one $\gamma$ site from left to right $\langle 0 \rangle, \langle 1 \rangle, \langle 2 \rangle,\langle 3 \rangle, \langle 11 \rangle,\langle 22 \rangle,\langle 31 \rangle,\langle 32 \rangle,\langle 33 \rangle$}
    \label{fig:visualLabel}
\end{figure}

\begin{figure}
    \centering
    \scalebox{1}{
    \includegraphics[]{./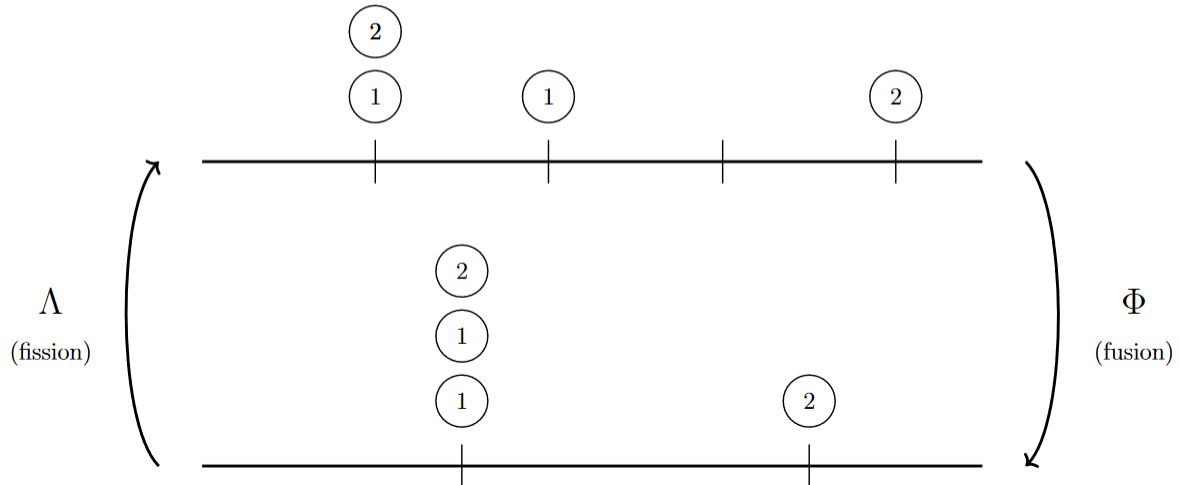}
    }
    \caption{Depiction of fusion mapping $(3,1,0,2) \rightarrow \langle 31,2 \rangle$ and fission mapping $\langle 31,2 \rangle \rightarrow (3,1,0,2)$.}
    \label{tikz--asepDiagram}
\end{figure}

Let $S$ be the state space for the four-site Type D ASEP, in which there are $256 = 4^4$ possible states. Rigorously, $$S = \{(a, b, c, d) \mid a, b, c, d \in \{0,1,2,3\} \}$$ Fusing our lattice, we reach the state space $\hat{S}$ of two $\gamma$ lattice  sites, in which there are $81 = 9^2$ possible states. Rigorously, $$\hat{S} = \{\langle x, y \rangle \mid x, y \in \{0,1,2,3,11,22,31,32,33\} \}$$

Our generator, fusion, and fission matrices depend on the ordering of states, so we order states in the Type D ASEP over two $\Gamma$ lattice  sites as $\Omega$ and we order states in the Type D ASEP over one $\gamma$ lattice site as 

\[\chi = \Bigl( \la 0 \ra,\la 1 \ra, \la 2 \ra, \la 11 \ra, \la 12 \ra, \la 22 \ra,\la 31 \ra,\la 32 \ra, \la 33 \ra \Bigl) \] 
 A visualization of this ordering and the possible states over one $\gamma$ lattice site can be seen in Figure \ref{fig:visualLabel}. See section \ref{PermutationsAppendix} for the ordering of $\Gamma$ and $\gamma$ states over four $\Gamma$ lattice sites and two $\gamma$ lattice sites, respectively.

We now formally define the stochastic fusion map.

\begin{definition}
\label{StochasticDef}
We define $\phi:\Omega \rightarrow \chi$ as the stochastic fusion map for the Type D ASEP on two $\Gamma$ lattice  sites, where $x_i \in \{0,1,2,3\}$ describes the particles at $\Gamma$ lattice site $i$:
\[ 
\phi(x_1,x_2) \begin{cases} 
      \la 0 \ra & x_1,x_2 = 0 \\
      \la x_1 \ra  & x_1 \neq 0, x_2 = 0 \\
       \la x_2 \ra & x_1 = 0, x_2 \neq 0  \\
       \la x_1x_2 \ra & x_1,x_2 \neq 0, x_1 \geq x_2 \\
       \la x_2x_1 \ra & x_1,x_2 \neq 0, x_1 < x_2
   \end{cases}
\]
\end{definition}

This definition is implied as a specific case in \cite{kuan2019stochastic} when there are only class 1 and class 2 particles. Notice that this map corresponds to stacking particles where order does not matter. To perform stochastic fusion on four $\Gamma$ lattice sites of the Type D ASEP the stochastic fusion map is $\phi \otimes \phi$, and so on for additional sites of stochastic fusion.

\subsubsection{Transition and Generator Matrices}

We now turn to the transition matrices of Type D ASEP. Let $P^{(K)}_t$ be the transition matrix and $L^{(K)}_p$ be the generator matrix for the Type D ASEP on $K$ $\Gamma$-lattice sites, where $K$ is an even integer. Let $L^{(K)}_m = \sum_{i=1}^{\frac{K}{2}-1} L^{2i,2i+1}$, a restriction on which $K$ $\Gamma$-lattice sites can interact. Fusing down to $K/2$ $\gamma$-lattice sites, $Q^{(K)}_t$ is the fused transition matrix and $L^{(K)}_Q$ is the generator matrix for the Type D ASEP on  $K/2$ $\gamma$-lattice sites. Summarizing this process, we let $\Lambda^{(K)}: \hat{S} \rightarrow S$ be the matrix of the fission map from the Type D ASEP on $K/2$ $\gamma$-lattice sites to Type D ASEP on $K$ $\Gamma$ sites, and $\Phi: S \rightarrow \hat{S}$ be the matrix of the stochastic fusion map from the Type D ASEP on $K$ $\Gamma$-lattice sites to $K/2$ $\gamma$-lattice sites. We drop the subscript when $K = 4$, which is the first non-trivial example of stochastic fusion since it represents two $\gamma$ sites interacting. For example, $L_p^{(4)} = L_p$, $\Lambda^{(4)} = \Lambda$, and $\Phi^{(4)} = \Phi$.

\subsection{Algebraic Notation}
We must first define $\U_q(\so_6)$ as well as its fundamental representation in order to construct more complex $\U_q(\so_6)$-modules. We introduce crystal base theory as needed in Section \ref{section--AlgProofs}.
\begin{definition}
    The special orthogonal Lie group $SO_{6}$ is the multiplicative group with elements of the form
    \begin{align*}
        SO_{6} = \left\{ X \in M_{6 \times 6} (\C): XX^T = I, \det X = 1\right\}
    \end{align*}
\end{definition}

\begin{definition}
    The special orthogonal Lie algebra $\so_{6}$ is the Lie algebra with elements of the form
    \begin{align*}
        \mathfrak{so}_{6} = \left\{\begin{bmatrix}
        A & C \\
        -C^T & B \\
    \end{bmatrix}: A,B,C \in M_{3 \times 3} (\C),  A = - A^T, B=-B^T\right\}
    \end{align*}
\end{definition}

\begin{definition}
    The Universal Enveloping Algebra of $\so_6$, denoted $\U(\so_6)$, is generated by \\
    $\{E_1, E_2, E_3, F_1, F_2, F_3, H_1, H_2, H_3\}$ and the following relations:
    \begin{align*}
        [E_i, F_i] = H_i, \text{\hspace{1cm}} 1 \leq i \leq 3
    \end{align*}
    and
    \begin{align*}
        E_l^2E_j + E_jE_l^2 = 2E_lE_jE_l; \text{\hspace{.25cm}} F_l^2F_j + F_jF_l^2 = 2F_lF_jF_l
    \end{align*}
    for $(l,j) \in \{ (1,2), (2,1), (1,3), (3,1) \}$.
\end{definition}

\begin{definition}
    The $q$-deformed quantum group $\U_q(\so_6)$, as described by \cite{Drinfeld85} and \cite{Jimbo85}, is generated by $\{E_1, E_2, E_3, F_1, F_2, F_3, q^{H_1}, q^{H_2}, q^{H_3}\}$ and the relations:
    \begin{align*}
    [E_i, F_i] &= \frac{q^{H_i} - q^{-H_i}}{q-q^{-1}} \\
    q^{H_i}E_j &= q^{\alpha_i \cdot \alpha_j}E_j q^{H_i} \\
    q^{H_i}F_j &= q^{-\alpha_i \cdot \alpha_j}F_j q^{H_i}
    \end{align*}
    for $ 1 \leq i,j \leq 3$ and 
    \begin{align*}
    E_l^2E_k + E_kE_l^2 = (q+q^{-1})E_lE_kE_l; \text{\hspace{.25cm}} F_l^2F_k + F_kF_l^2 = (q+q^{-1})F_lF_kF_l
    \end{align*}
    for $(l,k) \in \{ (1,2), (2,1), (1,3), (3,1) \}$.
    
    For a matrix $H$ in the Cartan subalgebra $\mathfrak{h}$ of $\so_6$, let $L_i \in \mathfrak{h}^*$ map $H$ to $H_{i,i}$. Then we define the simple roots of $\U_q(\so_6)$ to be
        $$\Pi = \{\alpha_1 = L_1 - L_2, \alpha_2 = L_2 - L_2, \alpha_3 = L_2 + L_3\}$$
    and the fundamental weights to be
        $$\left\{\omega_1 = L_1, \omega_2 = \frac{1}{2}(L_1+L_2-L_3), \omega_3 = \frac{1}{2}(L_1+L_2+L_3) \right\}.$$
\end{definition}

\begin{definition} \label{definition--coproduct}
    We define the coproduct $\Delta$, counit $\epsilon$, and antipode $S$ of $\U_q(\so_6)$ to be
    \begin{align*}
        \Delta (E_i) = E_i \otimes 1 + q^{H_i} \otimes E_i, \text{\hspace{1.14cm}} \epsilon(&E_i) = 0, \text{\hspace{1.17cm}}S(E_i) = -E_iq^{H_i} \\
        \Delta(F_i) = 1 \otimes F_i + F_i \otimes q^{-H_i}, \text{\hspace{1cm}} \epsilon(&F_i) = 0,\text{\hspace{1.2cm}} S(F_i) = -q^{-H_i}F_i \\
        \Delta(q^{H_i}) = q^{H_i} \otimes q^{H_i},\text{\hspace{2.47cm}} \epsilon(&q^{H_i}) = 1,\text{\hspace{1cm}}
        S(q^{H_i}) = q^{-H_i}. \\
    \end{align*}
    Equipping $\U_q(\so_6)$ with $\Delta$, $\epsilon$, and $S$ makes the quantum group a Hopf algebra.
\end{definition}

\pagebreak

\begin{definition} \label{definition-fundrep}
    A fundamental representation of $\mathcal{U}_q(\mathfrak{so}_6)$ is defined to be the subset of $M_{6 \times 6}(\R[q,q^{-1}])$ as follows: 

    \begin{table}[h!]
    \centering
    \resizebox{\columnwidth}{!}{
        \begin{tabular}{|c||c|c|c|}
            \hline
             &  $E_i$ & $F_i$ & $q^{H_i}$ \\
            \hline \hline
            $i=1$ & $E_{1,2}-E_{5,4}$ & $E_{2,1}-E_{4,5}$ & $qE_{1,1} +q^{-1}E_{2,2} +E_{3,3}+q^{-1}E_{4,4} +qE_{5,5} + E_{6,6}$\\
            \hline
            $i=2$ & $E_{2,3}-E_{6,5}$ & $E_{3,2}-E_{5,6}$ & $E_{1,1} +qE_{2,2} +q^{-1}E_{3,3}+E_{4,4} +q^{-1}E_{5,5} + qE_{6,6}$\\
            \hline
            $i=3$ & $E_{2,6}-E_{3,5}$ & $E_{6,2}-E_{5,3}$ & $E_{1,1} +qE_{2,2} +qE_{3,3}+E_{4,4} +q^{-1}E_{5,5} + q^{-1}E_{6,6}$\\
            \hline
        \end{tabular}
    }
    \end{table}
    
    $E_{i,j}$ indicates the $6\times 6$ zero matrix with a $1$ in the $(i,j)^{th}$ entry, and $q^{-H_i}$ is defined as the multiplicative inverse of $q^{H_i}$. We denote $V$ to be $\R^6 \otimes \R^6$.
\end{definition}

\section{Probabilistic Results}
\label{section--ProbResults}

We begin by defining the generator for stochastic fusion on $K$ $\Gamma$-lattice sites for the Type D ASEP, focusing on a general simplification in the construction process. We then restrict to $4$ $\Gamma$-lattice sites ($K=4$), noting the block diagonal form of the generator matrix. We examine the unique characteristics of the generator's component blocks (communicating classes), particularly their limits and spectral gaps as drift speeds are taken to infinity and their stationary distributions in relation to drift speed. Fixing drift speed $n$ to two, we explore two approaches to finding the matrix of Markov self-duality for our system. The matrix of Markov self-duality will be expressed in terms of the eigenvectors of the block-diagonalized fusion generator, allowing for the dual and spectral gap to be found through the same eigenvalue-eigenvector calculation. 

\subsection{Producing Stochastic Fusion Matrix}
In this section we perform stochastic fusion on Type D ASEP for $K$ $\Gamma$-lattice sites following the procedure developed in Section 3 of \cite{kuan2019stochastic}. To begin, we define 
\[Q^{(K)}_t \coloneqq \Lambda^{(K)} P^{(K)}_t \Phi^{(K)}\]
Taking the derivative of both sides at time $t=0$, we express the same equation in terms of generators.
\[L^{(K)}_Q = \Lambda^{(K)} L^{(K)}_p \Phi^{(K)}\]
\begin{restatable}[]{theorem}{MiddleSwap}
\label{MiddleSwap}
The generator for Type D ASEP on $K/2$  $\gamma$-lattice sites, denoted $L^{(K)}_Q$, is equal to $\Lambda^{(K)} L^{(K)}_m \Phi^{(K)}$.
\end{restatable}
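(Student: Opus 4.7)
The plan is to start from the relation $L^{(K)}_Q = \Lambda^{(K)} L^{(K)}_p \Phi^{(K)}$ (which follows by differentiating $Q^{(K)}_t = \Lambda^{(K)} P^{(K)}_t \Phi^{(K)}$ at $t=0$) and to decompose the nearest-neighbor generator into within-pair and across-pair interactions:
\[
L^{(K)}_p \;=\; L^{(K)}_m \;+\; \sum_{i=1}^{K/2} L^{2i-1,\,2i},
\]
where $L^{2i-1,2i}$ acts only on the two $\Gamma$-sites that get fused together into the $i$-th $\gamma$-site. By linearity it then suffices to prove that $\Lambda^{(K)} L^{2i-1,2i} \Phi^{(K)} = 0$ for every $i$.

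Next I would use the tensor product structure. Writing $\Phi^{(K)} = \Phi^{\otimes K/2}$ and $\Lambda^{(K)} = \Lambda^{\otimes K/2}$, and grouping the tensor slots of $S^{\otimes K}$ into pairs, the within-pair operator factors as $L^{2i-1,2i} = I_{S^2}^{\otimes (i-1)} \otimes L_p^{(2)} \otimes I_{S^2}^{\otimes (K/2-i)}$. The fission-fusion composition on a single slot satisfies $\Lambda \Phi = I$, because $\Phi$ is deterministic with row $\phi(a,b)$ and $\Lambda$ assigns probability mass only to the preimage of each fused state. Therefore
\[
\Lambda^{(K)} L^{2i-1,\,2i} \Phi^{(K)} \;=\; I^{\otimes (i-1)} \otimes \bigl(\Lambda\, L_p^{(2)}\, \Phi\bigr) \otimes I^{\otimes (K/2-i)},
\]
reducing the whole statement to the single identity $\Lambda\, L_p^{(2)}\, \Phi = 0$ on one $\gamma$-site.

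To verify this identity I would exploit the block-diagonal structure of $L_p^{(2)}$ with respect to $\Omega$. On the four singleton communicating classes $\{(i,i)\}$, $L_p^{(2)}$ is zero and there is nothing to check. On each of the four $2\times 2$ blocks $L_2$, the communicating class $\{(a,0),(0,a)\}$ is a single fiber of $\phi$, so $\Lambda\, L_p^{(2)}\, \Phi$ on this fiber is $\pi L_2 \cdot \mathbf{1}$ where $\pi$ is the fission distribution; choosing $\pi$ to be the $q$-weighted stationary vector of $L_2$ (as prescribed by the stochastic fusion procedure in \cite{kuan2019stochastic}) yields $\pi L_2 = 0$. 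On the $4\times 4$ block $L_1$, the communicating class $\{(3,0),(2,1),(0,3),(1,2)\}$ splits into the two fibers $\phi^{-1}(\langle 3\rangle) = \{(3,0),(0,3)\}$ and $\phi^{-1}(\langle 21\rangle) = \{(2,1),(1,2)\}$, so $\Lambda L_1 \Phi$ is a $2\times 2$ matrix whose four entries are explicit linear combinations of the entries of $L_1$ weighted by the fission probabilities on the two fibers; I would verify each of these combinations vanishes by direct $q$-algebraic simplification using the stated entries of $L_1$.

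The main obstacle is the $L_1$ calculation: unlike the $L_2$ case, here the fibers are nontrivially mixed by the generator, so the vanishing is not a single stationarity statement but a pair of cross-fiber rate cancellations that must hold simultaneously. Getting these cancellations to work is precisely what pins down the correct fission probabilities; the rest of the argument is purely structural (linearity and the tensor factorization above).
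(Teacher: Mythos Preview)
Your structural reduction matches the paper exactly: decompose $L_p^{(K)}$ into within-pair and across-pair terms, use linearity and the tensor factorization $\Lambda^{(K)}=(\Lambda^{(2)})^{\otimes K/2}$, $\Phi^{(K)}=(\Phi^{(2)})^{\otimes K/2}$ together with $\Lambda^{(2)}\Phi^{(2)}=I$ to collapse everything to the single-pair identity $\Lambda^{(2)} L_p^{(2)} \Phi^{(2)}=0$. That part is clean and correct.

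The gap is in your treatment of $L_1$. The four states $(3,0),(2,1),(0,3),(1,2)$ all carry the same particle content (one class-1 and one class-2 particle), so under $\phi$ they all land in the \emph{same} $\gamma$-state; there is no separate state $\langle 21\rangle$ distinct from $\langle 3\rangle$ (you can read this off directly from the explicit $\Phi^{(2)}$ matrix in Lemma~\ref{PhiTwo}, where rows $1$--$4$ all have their $1$ in column $5$). In fact every communicating class of $L_p^{(2)}$ is a single fiber of $\phi$, because both are determined by the pair's particle counts. Hence on each block the restriction of $\Phi^{(2)}$ is the all-ones column, so $L_p^{(2)}\Phi^{(2)}=0$ follows immediately from the rows of a generator summing to zero---you never need stationarity of the fission weights for $L_2$, and the proposed $2\times 2$ cross-fiber computation for $L_1$ never arises. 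This conservation-of-particle-content observation is exactly the one-line argument the paper gives for step~$(*)$.
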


\begin{corollary}
    The generator for $L_Q^{(4)}=L_Q$ is equal to $\Lambda L_m \Phi$ where $L_m$ is the generator where only the middle two $\Gamma$-lattice sites interact.
\end{corollary}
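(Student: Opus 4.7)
The plan is to use linearity together with the zero-row-sum property of generator matrices. Starting from the definition $L_Q^{(K)} = \Lambda^{(K)} L_p^{(K)} \Phi^{(K)}$ and the decomposition
\[L_p^{(K)} = L_m^{(K)} + \sum_{i=1}^{K/2} L^{2i-1,2i},\]
linearity reduces the theorem to showing $\Lambda^{(K)} L^{2i-1,2i} \Phi^{(K)} = 0$ for each $i \in \{1, \ldots, K/2\}$. I would in fact establish the stronger identity $L^{2i-1,2i} \Phi^{(K)} = 0$, from which the theorem follows immediately on multiplying by $\Lambda^{(K)}$ on the left.

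To prove this strengthened claim, I would fix an arbitrary $\Gamma$-state $t \in S$ and $\gamma$-state $s \in \hat{S}$, and expand
\[\bigl(L^{2i-1,2i} \Phi^{(K)}\bigr)_{t,s} = \sum_{t' \in S} L^{2i-1,2i}_{t,t'} \, \Phi^{(K)}_{t',s}.\]
The entry $L^{2i-1,2i}_{t,t'}$ vanishes unless $t$ and $t'$ agree outside sites $2i-1, 2i$ and the pairs $(t_{2i-1}, t_{2i})$ and $(t'_{2i-1}, t'_{2i})$ lie in a common communicating class of the two-site generator $L_p^{(2)}$. A direct case check of the block structure, namely the $4 \times 4$ block $L_1$ on $\{(3,0),(2,1),(0,3),(1,2)\}$, the four copies of $L_2$, and the four singleton zero blocks, shows that each communicating class consists of pairs with the same overall particle content, so Definition~\ref{StochasticDef} gives $\phi(t'_{2i-1}, t'_{2i}) = \phi(t_{2i-1}, t_{2i})$. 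Since $\Phi^{(K)}$ applies $\phi$ separately to each adjacent pair of sites and all other coordinates of $t'$ match those of $t$, this forces $\phi(t') = \phi(t)$, and hence $\Phi^{(K)}_{t',s} = \Phi^{(K)}_{t,s}$, whenever $L^{2i-1,2i}_{t,t'} \neq 0$. Factoring this common value out of the sum yields
\[\bigl(L^{2i-1,2i} \Phi^{(K)}\bigr)_{t,s} = \Phi^{(K)}_{t,s} \sum_{t' \in S} L^{2i-1,2i}_{t,t'} = 0,\]
where the vanishing is the zero-row-sum property of the generator $L^{2i-1,2i}$.

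The main obstacle I anticipate is the combinatorial verification that every communicating class of $L_p^{(2)}$ is in fact a fiber of the pairwise fusion map, rather than merely a subset of one; once this is in hand, the remainder is formal index manipulation. The argument is uniform in the parameters $q$ and $n$ and in the even integer $K$, so the general theorem and its four-site corollary are handled simultaneously.
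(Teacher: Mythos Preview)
Your proposal is correct and follows essentially the same route as the paper: both decompose $L_p^{(K)}$ into $L_m^{(K)}$ plus the within-pair generators $L^{2i-1,2i}$ and argue that the latter vanish after fusion because the fused state of a pair depends only on its particle content, which the two-site dynamics conserves. Your version is slightly sharper in that you prove the stronger identity $L^{2i-1,2i}\Phi^{(K)}=0$ (rather than only $\Lambda^{(K)}L^{2i-1,2i}\Phi^{(K)}=0$) and make the use of the zero-row-sum property explicit, but the underlying idea is identical; note also that your argument only requires each communicating class to lie \emph{inside} a fiber of $\phi$, not to equal one, so the ``obstacle'' you flag is milder than stated.
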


We explicitly calculate $L_Q$, the generator matrix for two $\gamma$ sites. This process reduces the $256 \times 256$ matrix $L_p$ to the fused $81 \times 81$ matrix $L_Q$. Continuing, we want to understand the communicating classes of $L_Q$. We show that $L_Q$ can be broken up into twenty-five communicating classes, where every communicating class with the same number of states has the same generator.

\begin{restatable}[]{proposition}{DiagonalBlockOrdering}
\label{DiagonalBlockOrdering}
Let $L^D_Q = \mathcal{L}_9 \oplus \bigoplus_{i=1}^4 \mathcal{L}_6 \oplus \bigoplus_{i=1}^4 \mathcal{L}_4 \oplus \bigoplus_{i=1}^4 \mathcal{L}_3 \oplus \bigoplus_{i=1}^8 \mathcal{L}_2 \oplus \bigoplus_{i=1}^4 [0]$, where $\mathcal{L}_9$ is the generator for the 9-state communicating class, $\mathcal{L}_6$ is the generator for each 6-state communicating class, and so on for $\mathcal{L}_4,\mathcal{L}_3$ and $\mathcal{L}_2$. There exists a permutation matrix $C$ such that $L_Q$ = $CL^D_QC^{-1}$.
\end{restatable}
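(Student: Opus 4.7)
The plan is to identify the communicating classes of $L_Q$ via the particle-number conservation law of Type D ASEP, verify that they match the stated multiplicities, and exhibit the reordering permutation $C$. Since Type D ASEP with $\delta = 0$ preserves both the total number $n_1$ of class-$1$ particles and the total number $n_2$ of class-$2$ particles, every communicating class of $L_Q$ lies in a single level set of the map $\langle x, y \rangle \mapsto (n_1(x)+n_1(y),\, n_2(x)+n_2(y))$ on $\hat{S}$. The pair $(n_1, n_2)$ takes values in $\{0,\ldots,4\}^2$, so there are at most $25$ such strata.

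A direct enumeration shows exactly one stratum of size $9$ at $(2,2)$; four of size $6$ at $(1,2), (2,1), (3,2), (2,3)$; four of size $4$ at $(1,1), (1,3), (3,1), (3,3)$; four of size $3$ at $(2,0), (0,2), (2,4), (4,2)$; eight of size $2$ at $(1,0), (0,1), (3,0), (0,3), (1,4), (4,1), (3,4), (4,3)$; and four singletons at $(0,0), (4,0), (0,4), (4,4)$ corresponding to $\langle 0,0 \rangle, \langle 11,11 \rangle, \langle 22,22 \rangle, \langle 33,33 \rangle$. The count $1 \cdot 9 + 4 \cdot 6 + 4 \cdot 4 + 4 \cdot 3 + 8 \cdot 2 + 4 \cdot 1 = 81$ matches $|\hat{S}|$. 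Using the explicit matrix $L_Q = \Lambda L_m \Phi$ from the preceding corollary, I would inspect the nonzero off-diagonal entries within each stratum to confirm that no stratum splits further, so every stratum is a single communicating class. To verify that equal-size blocks coincide as matrices, I would exploit two symmetries of the fused process: the class-swap involution $1 \leftrightarrow 2$, which exchanges $\langle 11 \rangle \leftrightarrow \langle 22 \rangle$ and $\langle 31 \rangle \leftrightarrow \langle 32 \rangle$ and fixes $\langle 0 \rangle, \langle 3 \rangle, \langle 33 \rangle$, and the particle-hole involution pairing $\langle 0 \rangle \leftrightarrow \langle 33 \rangle$, $\langle 1 \rangle \leftrightarrow \langle 32 \rangle$, $\langle 2 \rangle \leftrightarrow \langle 31 \rangle$, $\langle 11 \rangle \leftrightarrow \langle 22 \rangle$, and fixing $\langle 3 \rangle$. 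Both commute with $L_Q$ and permute same-size strata onto one another; any remaining identifications not covered by these involutions are checked by direct comparison of entries in $L_Q$.

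Once all size-$k$ blocks are shown to agree with a common matrix $\mathcal{L}_k$, the permutation matrix $C$ is defined by reordering the basis of $\hat{S}$ so that the $(2,2)$-stratum appears first, followed by the four size-$6$ strata, the four size-$4$ strata, the four size-$3$ strata, the eight size-$2$ strata, and finally the four singletons; with this choice $L_Q = C L_Q^D C^{-1}$. The main obstacle is the combinatorial bookkeeping of the second step: conservation of $(n_1, n_2)$ only bounds the number of communicating classes above by $25$, so one must separately verify both that no stratum splits further and that every pair of same-size strata is related by a symmetry of the generator. While the class-swap and particle-hole involutions pair up many strata (for example $(1,0) \leftrightarrow (0,1)$ via class swap and $(1,0) \leftrightarrow (3,4)$ via particle-hole), fully eliminating the case analysis may still require an exhaustive entry-wise comparison among the eight $2$-state blocks and the four $3$-state blocks.
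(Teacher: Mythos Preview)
Your approach is sound and considerably more conceptual than the paper's. The paper's proof is a one-liner: it simply says to permute $L_Q$ according to an explicit state ordering listed in the appendix, and declares that this ordering groups the communicating classes. All of the content is deferred to the computation of $L_Q$ and the tabulated ordering; no structural explanation is offered. By contrast, you explain \emph{why} the block sizes are $9,6,4,3,2,1$ with the stated multiplicities via the conservation of $(n_1,n_2)$, and you attempt to explain \emph{why} same-size blocks coincide via the class-swap and particle-hole involutions. This is a genuinely different and more illuminating route: the paper buys brevity by outsourcing everything to a table, while your argument buys understanding at the cost of some residual case-checking.

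One point deserves more care. Your claim that the particle-hole involution commutes with $L_Q$ is not automatic: in asymmetric exclusion processes, particle-hole alone typically reverses the drift, and one must combine it with a spatial reflection of the two $\gamma$ sites (or equivalently with $q\mapsto q^{-1}$) to obtain a genuine symmetry of the generator. Concretely, particle-hole sends $\langle 0,1\rangle$ to $\langle 33,32\rangle$, but the paper's ordering of that $2$-state class is $(\langle 32,33\rangle,\langle 33,32\rangle)$, so matching the matrices $\mathcal{L}_2$ entrywise requires either the extra reflection or a compensating reordering within the block. You implicitly acknowledge this by allowing a fallback to direct comparison of entries, so the argument closes, but it would strengthen the write-up to state the correct symmetry as particle-hole composed with site reflection and then verify that this composite intertwines $L_Q$.
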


Henceforth, we define $L^D_Q$ as a permutation of $L_Q$ that is a block diagonal matrix, where the ordering is based on the communicating classes.

\subsection{Taking \texorpdfstring{$n$}{n} to Infinity and Spectral Gaps}
\begin{restatable}[]{theorem}{finiteLimitTheorem}
\label{theorem}
The following are finite: $\lim_{n \to \infty} (q^{-2n} L_Q)$ for $q > 1$, and $\lim_{n \to \infty} (q^{2n} L_Q)$ for $0 < q < 1$.
\end{restatable}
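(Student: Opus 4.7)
The plan is to reduce the claim to an entry-wise analysis of $L_m$ via the corollary $L_Q = \Lambda L_m \Phi$, and then to show that every entry of $L_m$ is a Laurent polynomial in $q$ whose $n$-dependence is controlled by a small, explicit set of exponents. Since $\Lambda$ and $\Phi$ are matrices arising from the fusion/fission maps and their entries do not depend on $n$, any asymptotic bound we establish for the entries of $L_m$ transfers immediately to $L_Q$.

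First I would inspect the explicit form of $L_m = L^{2,3}$, which on each communicating block has the structure of $L_1$ or $L_2$ displayed earlier, plus the zero blocks. A direct examination shows that every off-diagonal entry has the form $\sum_i c_i\, q^{a_i n + b_i}$ with $a_i \in \{-2,0,2\}$ and $b_i \in \Z$, and the diagonal entries (being negatives of row sums) inherit the same property. For example, the $(1,2)$ entry of $L_1$ expands as $q^{-2n} - 2q^{-2} + q^{2n-4}$, and every other entry admits an analogous expansion.

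Next, I would analyze the two regimes separately. For $q > 1$, multiplying each such entry by $q^{-2n}$ produces a sum $\sum_i c_i\, q^{(a_i - 2)n + b_i}$ in which every exponent of $n$ lies in $\{-4,-2,0\}$. As $n \to \infty$, the terms with $a_i - 2 < 0$ vanish because $q > 1$, and the $a_i = 2$ terms contribute the finite value $c_i q^{b_i}$. For $0 < q < 1$, multiplying by $q^{2n}$ gives $\sum_i c_i\, q^{(a_i + 2)n + b_i}$ with exponents of $n$ in $\{0,2,4\}$; now the $a_i > -2$ terms vanish because $q < 1$, and the $a_i = -2$ terms survive as $c_i q^{b_i}$. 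In either case the entry-wise limit exists and is finite.

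Finally, since $L_Q = \Lambda L_m \Phi$ and the matrices $\Lambda, \Phi$ are constant in $n$, the entry-wise limit passes through the matrix product, giving finiteness of $\lim_{n\to\infty} q^{-2n} L_Q$ for $q > 1$ and of $\lim_{n\to\infty} q^{2n} L_Q$ for $0 < q < 1$. The only real work is the step-two exponent inventory on $L_m$; I expect this to be the main (though routine) obstacle, since one must verify the exponent constraint uniformly over all blocks $L_1$ and $L_2$ and convince the reader that no entry escapes the $a_i \in \{-2,0,2\}$ window. Everything after that is a transparent limiting argument.
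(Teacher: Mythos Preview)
Your proposal is correct and follows essentially the same route as the paper: factor $L_Q = \Lambda L_m \Phi$ with $\Lambda,\Phi$ independent of $n$, then check directly that the nontrivial blocks $L_1,L_2$ of $L_p^{(2)}$ (hence of $L_m$, which is a permutation of $\mathrm{Id}_{16}\otimes L_p^{(2)}$) have entries of the form $\sum_i c_i q^{a_i n + b_i}$ with $a_i\in\{-2,0,2\}$, so that $q^{\mp 2n}L_m$ has finite entrywise limit. The paper's write-up is terser---it simply asserts the finiteness of $\lim q^{\pm 2n}L_p^{(2)}$ by inspection and then pushes through $\Lambda,\Phi$---whereas you spell out the exponent inventory explicitly, but the argument is the same.
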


Recall that $n$ affects drift speed, with larger values of $n$ causing faster drifting in the Type D ASEP. Theorem \ref{theorem} proves the intuitive statement that, as the drift speed $n$ of the Type D ASEP on four $\Gamma$-lattice sites increases, so does the drift speed of the stochastically fused process.

Continuing our study of limits as $n$ approaches infinity, we now consider the limits of spectral gaps. The spectral gap of a Markov process is the absolute value of the second-largest (ie. least negative) eigenvalue of the generator matrix, where all eigenvalues of generator matrices are less than or equal to zero. From our calculations, the spectral gap corresponding to the generator $\mathcal{L}_2$ simplifies to a reasonable expression, while the spectral gaps for $\mathcal{L}_3,\mathcal{L}_4,\mathcal{L}_6$, and $\mathcal{L}_9$ do not simplify to expressions of a reasonable length. 

\begin{restatable}[]{proposition}{SpectralStatement}
    \label{SpectralStatement}
    The spectral gap for the $\mathcal{L}_2$ communicating class, denoted as $\lvert \lambda_{\mathcal{L}_2} \rvert$ is:
    \[\lvert \lambda_{\mathcal{L}_2} \rvert = \frac{(q^{4n-2}+1)(q^4+1)}{q^{2n}(q^2+1)}\]
\end{restatable}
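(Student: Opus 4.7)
The plan is to reduce the proposition to a direct eigenvalue computation for a $2\times 2$ matrix, since every 2-state communicating class of $L_Q$ has the same generator $\mathcal{L}_2$ (by \Cref{DiagonalBlockOrdering}). The first step is to pin down the explicit form of $\mathcal{L}_2$. I would pick one representative 2-state communicating class (for instance $\{\langle 1,0\rangle,\langle 0,1\rangle\}$, chosen because $\langle 1\rangle$ fuses uniquely from a single $\Gamma$-configuration in each cell, which keeps the fission map $\Lambda$ and fusion map $\Phi$ very simple there) and compute $\Lambda\, L_m\,\Phi$ restricted to the two basis vectors indexing that class, using the identity $L_Q = \Lambda L_m \Phi$ from the corollary to \Cref{MiddleSwap}. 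Since only $\Gamma$-lattice sites $2$ and $3$ interact in $L_m$, the relevant off-diagonal entries come directly from the $L_2$ block in the displayed formula for $L_p^{(2)}$, weighted by the combinatorial multiplicities arising from the pre-images of $\langle 1,0\rangle$ and $\langle 0,1\rangle$ under the fusion map.

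Once $\mathcal{L}_2$ is written out, the spectral gap is essentially immediate. Every generator matrix annihilates the all-ones vector on the right, so $0$ is automatically an eigenvalue of $\mathcal{L}_2$. For a $2\times 2$ matrix the remaining eigenvalue equals the trace, so
\begin{equation*}
|\lambda_{\mathcal{L}_2}| \;=\; |\operatorname{tr}(\mathcal{L}_2)| \;=\; (\mathcal{L}_2)_{12} + (\mathcal{L}_2)_{21},
\end{equation*}
where the last equality uses the row-sum-zero property and the fact that both off-diagonal entries are positive rates. So the entire computation reduces to summing the two off-diagonal entries of $\mathcal{L}_2$.

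The final step is to simplify that sum to the stated closed form. After substituting the rates produced by fusion, I expect to see a combination of powers of the form $q^{\pm 2n}, q^{\pm(2n-2)}, q^{\pm(2n\pm 2)}$; the aim is to pull out the common factor $q^{-2n}$ and then factor the remaining Laurent polynomial as a product of two cyclotomic-type pieces. Concretely, I anticipate an intermediate expression of the form $q^{-2n}(1+q^{4n-2})(1+q^4)/(1+q^2)$ or a minor variant, which matches the target $\dfrac{(q^{4n-2}+1)(q^4+1)}{q^{2n}(q^2+1)}$. Recognizing this factorization is the only nontrivial algebraic step.

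The hard part, such as it is, is less conceptual and more bookkeeping: making sure the fusion/fission multiplicities are applied correctly so that $\mathcal{L}_2$ is the genuine fused $2\times 2$ generator (and not just a raw restriction of $L_m$), and then performing the Laurent polynomial factorization cleanly. After that, everything else follows from the trivial eigenvalue structure of a $2\times 2$ Markov generator, which also explains why only $\mathcal{L}_2$ admits a compact closed-form spectral gap while $\mathcal{L}_3,\mathcal{L}_4,\mathcal{L}_6,\mathcal{L}_9$ do not.
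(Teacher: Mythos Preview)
Your proposal is correct and follows essentially the same approach as the paper, which simply says to compute the eigenvalues of $\mathcal{L}_2$ directly and take the absolute value of the nonzero one. Your trace observation (that for a $2\times 2$ generator the nonzero eigenvalue equals the trace, hence $-(\mathcal{L}_2)_{12}-(\mathcal{L}_2)_{21}$) is a clean way to do this by hand, whereas the paper defers to code; once you plug in the explicit $\mathcal{L}_2$ given in the appendix, your sum-and-factor step produces exactly the stated closed form.
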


A careful analysis of the spectral gaps and associated relaxation times given in Proposition \ref{SpectralStatement} suggests that as the drift speed $n$ of the Type D ASEP approaches infinity, the time it takes for the system to converge to the stationary distribution approaches zero for the 2-state communicating class. After multiplying the generator for the 2-state communicating class by $q^{-2n}$, the time to convergence to the stationary distribution increases (relative to the previous time to convergence) as $n$ approaches infinity.

\subsection{Reversible Measures} \label{subsection--StationaryDistr}

Take Markov process $X_t$ with generator matrix $L$ and stationary distribution $\pi$. Then $\pi L = 0$. In other words, $\pi$ is a left eigenvector of $L$. Notice that, in our particle system, the non-normalized versions of the stationary distributions are reversible measures.
We use the previous two results to obtain Proposition \ref{allStationary}, which depends on the communicating classes listed in Section \ref{subsection-commClasses} and q-deformed integer notation $[n]_q = \frac{q^n - q^{-n}}{q-q^{-1}}$.

\begin{restatable}[]{lemma}{allStationary}
\label{allStationary}
For the 9-state communicating class, a reversible measure is 
$$q^8 [2]^4_q \left[1, \frac{q^{-8}}{[2]^4_q}, \frac{q^{8}}{[2]^4_q}, \frac{q^{-4}}{[2]^2_q}, \frac{q^{4}}{[2]^2_q}, \frac{q^{-4}}{[2]^2_q}, \frac{q^{4}}{[2]^2_q}, \frac{1}{[2]^4_q}, \frac{1}{[2]^4_q}\right]$$

For the 6-state communicating class, a reversible measure is 
$$q^6 [2]^2_q \left[q^{-2}, q^2, \frac{q^{-6}}{[2]^2_q},\frac{q^{6}}{[2]^2_q}, \frac{q^{-2}}{[2]^2_q},\frac{q^{2}}{[2]^2_q} \right]$$

For the 4-state communicating class, a reversible measure is 
$$q^4 \left[q^4, 1, 1, q^{-4} \right]$$

For the 3-state communicating class, a reversible measure is 
$$q^4 [2]^2_q \left[1, \frac{q^{-4}}{[2]^2_q},\frac{q^{4}}{[2]^2_q}  \right]$$

For the 2-state communicating class, a reversible measure is 
$$q^2 \left[q^{-2}, q^2\right]$$

for all $q > 0$ where $q \neq 1$. 
\end{restatable}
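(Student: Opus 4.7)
The plan is to prove each claim by verifying the detailed balance condition $\pi_i \mathcal{L}_{ij} = \pi_j \mathcal{L}_{ji}$ for every pair of states $(i,j)$ within the given communicating class. Detailed balance is strictly stronger than stationarity, so it immediately implies the reversible (and hence stationary) measure property. This reduces the lemma to a finite, block-by-block verification.

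First, I would make the off-diagonal entries of each $\mathcal{L}_k$ explicit. By the corollary following Theorem \ref{MiddleSwap}, $\mathcal{L}_k$ is read off from $\Lambda L_m \Phi$ after the permutation supplied by Proposition \ref{DiagonalBlockOrdering}. The key structural observation is that each off-diagonal entry factors as an ASEP-style rate (a polynomial in $q^{\pm n}$, essentially inherited from the two-site blocks $L_1, L_2$ displayed in Section \ref{TypeDASEPIntro}) times a combinatorial multiplicity coming from fusion. When two distinct species occupy the same $\gamma$-site (e.g.\ $\la 12 \ra$), both orderings in the $\Gamma$-lattice ($(1,2)$ and $(2,1)$) contribute to the fused rate; the $[2]_q$ factors appearing in the denominators of the proposed measures precisely record this multiplicity, since states containing a $\la 12 \ra$-type site have more preimages under $\Phi$.

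Next, I would verify class by class that $\mathcal{L}_{ji}/\mathcal{L}_{ij} = \pi_i/\pi_j$. The 2-state case is immediate: the ratio of the two off-diagonal entries of $\mathcal{L}_2$ is $q^2/q^{-2}$, matching $q^2 \cdot [q^{-2}, q^2]$. The 3- and 4-state classes each reduce to two or three independent ratio checks, and the 6-state class to a small transition graph whose edges pair up. For the 9-state class, I would group the nine states according to the multiset of occupancies per $\gamma$-site, and check ratios along each edge of the transition graph; the built-in $q \leftrightarrow q^{-1}$ symmetry between leftward and rightward jumps of the ASEP pairs transitions with their reverses, cutting the work roughly in half.

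The main obstacle is bookkeeping in the 9-state class: there are many off-diagonal entries, and the $[2]_q$ multiplicities must be propagated correctly through the fusion map to explain why, for example, the weight of $\la 11, 22 \ra$-type states carries a $1/[2]_q^4$ relative to a pure single-occupancy state. Once the fusion multiplicities are catalogued and the entries of $\Lambda L_m \Phi$ are tabulated on each block, the detailed balance checks are routine polynomial identities in $q$ and $q^n$.
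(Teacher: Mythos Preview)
Your proposal is correct and takes essentially the same approach as the paper: both reduce the lemma to a direct block-by-block computational verification using the explicit $\mathcal{L}_k$ matrices of Section~\ref{Presenting_LQ}. The only difference is that the paper phrases the check as finding the left null eigenvector $\pi \mathcal{L}_k = 0$ (first for $n=2$, then verifying for general $n$), whereas you verify the stronger detailed balance condition $\pi_i \mathcal{L}_{ij} = \pi_j \mathcal{L}_{ji}$ directly; since the statement asserts \emph{reversibility}, your route is in fact the more natural one, and your structural remark about the $[2]_q$ factors tracking the $q$-deformed fusion multiplicities is a nice heuristic the paper does not make explicit.
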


In order to find reversible measures, we found the left eigenvectors corresponding to zero for $n = 2$ (drift speed 2), and then verified that the same eigenvectors work for general $n$.

Now we generalize our reversible measures to arbitrary system sizes. Notice that the interacting particle system does not differentiate amongst the two classes of particles, so we can use occupation variables to write a particle configuration $A$ as $(A_1, A_2)$ where $A_1$ is the position of particles of class 1 and $A_2$ is the position of particles of class 2. 

\begin{restatable}[]{proposition}{generalizeStationary}
\label{generalizeStationary}
\textbf{[With assistance from Dr. Jeffrey Kuan]}
There exists a reversible measure $\pi^{(N)}$ for generator $L$ on $N$ sites such that $$\pi^{(N)}(\langle a_1, \ldots a_N \rangle) = \prod_{i < j} \pi(\langle a_i, a_j \rangle)$$

where each $\pi$ is a reversible measures from \cref{allStationary}.
\end{restatable}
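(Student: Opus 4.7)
The plan is to verify the detailed balance condition for $\pi^{(N)}$ with respect to the generator $L = \sum_{k=1}^{N-1} L^{k,k+1}$, where each $L^{k,k+1}$ acts on adjacent $\gamma$-sites $k$ and $k+1$. Since reversibility for a sum of generators follows from reversibility with respect to each summand, it suffices to fix $k$ and verify detailed balance for $L^{k,k+1}$ alone. For a transition $\vec{a} \to \vec{a}'$ differing only at sites $k$ and $k+1$, write $(a,b) = (a_k, a_{k+1})$ and $(a',b') = (a_k', a_{k+1}')$. Substituting the product formula $\pi^{(N)}(\vec{a}) = \prod_{i<j}\pi(\langle a_i, a_j\rangle)$ into the detailed balance equation, the factors from pairs $(i,j)$ disjoint from $\{k,k+1\}$ cancel identically, leaving a central factor $\pi(\langle a,b\rangle)/\pi(\langle a',b'\rangle)$ from the pair $(k,k+1)$ itself and a collection of cross factors of the form $\pi(\langle a_m,a\rangle)\pi(\langle a_m,b\rangle)/\pi(\langle a_m,a'\rangle)\pi(\langle a_m,b'\rangle)$ for $m<k$, with analogous products for $m>k+1$.

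By Lemma \ref{allStationary} applied to the two-site dynamics, the central factor already equals the two-site jump rate ratio $L^{k,k+1}(\vec{a}',\vec{a})/L^{k,k+1}(\vec{a},\vec{a}')$. The proof therefore reduces to showing that every cross factor equals $1$, i.e., proving the pairwise balance identity
\[
\pi(\langle c, a\rangle)\,\pi(\langle c, b\rangle) \;=\; \pi(\langle c, a'\rangle)\,\pi(\langle c, b'\rangle)
\]
for each auxiliary state $c$ and each transition $(a,b)\to(a',b')$ allowed by $L_Q$, together with the mirrored identity placing $c$ in the second slot for the indices $m > k+1$.

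The main obstacle is establishing this pairwise balance identity, which does not follow from two-site reversibility alone. I would attack it by observing that every allowed transition conserves the class-1 and class-2 particle counts on the affected pair, and then inspecting the explicit expressions in Lemma \ref{allStationary} to show that $\pi(\langle c,\cdot\rangle)$ factors, up to a $c$-dependent constant, into a function of only these two conserved counts. With such a factorization in hand, the identity reduces to a finite case check carried out once for each of the five communicating-class sizes from Lemma \ref{allStationary}. Once pairwise balance is verified, detailed balance holds at each site $k$, yielding reversibility of $\pi^{(N)}$ for the full generator $L$ on $N$ sites.
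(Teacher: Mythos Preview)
Your approach differs substantially from the paper's. The paper does not attempt a direct detailed-balance verification; instead it (i) observes that the two-site measure factors across particle classes, $\pi(A)=\pi(A_1)\pi(A_2)$, and then (ii) invokes an external result (Theorem~2.1 of \cite{kuan2024qexchangeablemeasurestransformationsinteracting}, together with the product measure (16) of \cite{Carinci2014}) to conclude that the pairwise product is reversible for the $N$-site generator. Your route is more elementary and self-contained in spirit, and the reduction you write down---detailed balance for $L^{k,k+1}$ splits into the two-site factor plus ``cross factors'' indexed by the remaining sites---is correct. The paper's shortcut, by contrast, packages all of that work into the cited $q$-exchangeability theorem.

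There is, however, a genuine gap in your plan. The pairwise balance identity you isolate does \emph{not} follow from Lemma~\ref{allStationary} as stated, because that lemma specifies a reversible measure only up to an independent scalar on each communicating class, whereas your identity compares values of $\pi$ across \emph{different} classes (for example $\langle 1,0\rangle$, $\langle 1,11\rangle$, and $\langle 1,1\rangle$ live in three distinct classes). With the particular normalizations printed in Lemma~\ref{allStationary}, the identity already fails for $c=1$ and the transition $(11,0)\to(1,1)$: one computes $\pi(\langle 1,11\rangle)\,\pi(\langle 1,0\rangle)=1\cdot q^4=q^4$, while $\pi(\langle 1,1\rangle)^2=\bigl(q^4[2]_q^2\bigr)^2=q^8[2]_q^4$. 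Moreover, the factorization you propose---``$\pi(\langle c,\cdot\rangle)$ equals $f(c)$ times a function of the two conserved counts''---is vacuous as written, since a $\gamma$-site state is already determined by its class-1 and class-2 particle counts; it does not by itself force $g(a)g(b)=g(a')g(b')$. To salvage the argument you must first exhibit a single globally consistent choice of normalizations across all communicating classes (equivalently, a genuine site-product structure for the two-site measure) and then verify that such a choice exists. That existence statement is precisely what the paper outsources to the $q$-exchangeability machinery; without it, the ``finite case check'' you describe will not close.
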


Lemma \ref{allStationary} and Proposition \ref{generalizeStationary} emphasize that, although the jump rates of particles in the Type D ASEP do depend on $n$, the reversible measures do not. 

\subsection{Markov Duality}\label{subsection--MarkovDuality}
\begin{definition}
Let $X_t$ be a time-homogeneous Markov process on a discrete state space $\mathcal{X}$. Label the generator matrix of the Markov process as $L_X$. Let $D$ be a matrix with rows and columns indexed by $\mathcal{X}$. Then $X_t$ is self-dual with respect to the matrix $D$ if $L_X D = D L_X^T$
\cite{10.1214/aop/1176990628}.
\end{definition}

\subsubsection{Duality with Quantum q-Krawtchouk Polynomials} \label{subsubsection--qKrawtchouk}
\begin{remark}
Our mentor, Dr. Jeffrey Kuan, previously dropped the term ``quantum'' from ``quantum q-Krawtchouk polynomials'' \cite{kuanAcknowledgement}. We have used the correct term in our paper. 
\end{remark}

There are several papers on quantum q-Krawtchouk polynomials, see \cite{koekoek1996askeyschemehypergeometricorthogonalpolynomials, carinci2021qorthogonaldualitiesasymmetricparticle, franceschini2024orthogonalpolynomialdualityunitary, Zhou_2021, Groenevelt_2018} for example. Furthermore, in Theorem 3.1 of \cite{Blyschak_2023}, the authors propose a self-duality function for the non-fused Type D ASEP with parameters $(q, n = 2, \delta = 0)$ and $(q, n = 3, \delta = 0)$. We question whether their self-duality function works for $L_Q^D$, the generator of the Type D ASEP over two $\gamma$ lattice sites with ordered state space $\mathcal{X}_2$ and parameters $(q, n = 2, \delta = 0)$, when approached from a probability perspective.

Using the notation of \cite{Blyschak_2023} with minimal changes for clarity, we denote the self-duality function given by \cite{Blyschak_2023} as $D_{\alpha_1, \alpha_2}^{(L)} (\eta, \xi) = D_{\alpha_1}^{(L)} (\eta_1, \xi_1) \cdot D_{\alpha_2}^{(L)} (\eta_2, \xi_2)$ where $D_{\alpha_i}^{(L)} (\eta_i, \xi_i)$ is taken over particle class $i \in \{1,2\}$. We take variables $\alpha_1, \alpha_2 \in (0, q^4)$. Using the parameters $L = 2$ and  $n = 2$, we find a potential self-duality function (given by Theorem 3.1 in \cite{Blyschak_2023}) for this 2-site Type D ASEP, where $D$ is the matrix of this proposed self-duality function. For each $\mathcal{X}_2[i] = \eta$ and $\mathcal{X}_2[j] = \xi$, the $(i,j)$th entry of $D$ is given by  $D_{\alpha_1, \alpha_2}^{(2)} (\eta, \xi)$.

However, using the matrix $D$ that we computed, $L_Q^D D[1,2] \neq D(L_Q^D)^T[1,2]$, so $L_Q^D D \neq D(L_Q^D)^T$ and thus $D$ is not the matrix of the self-duality function of the Type D ASEP with two $\gamma$ lattice sites. In our case, 
constructing our particle system from a probabilistic rather than an algebraic process \cite{Blyschak_2023}, resulted in our particle systems having different duality functions.

\subsubsection{Other Avenues to Duality}
\label{DualityAvenueStatements}
Since the aforementioned duality function does not generate a valid dual for the stochastically fused Type D ASEP, another logical approach is to find an algebraic connection between the dual of the Type D ASEP generator on two $\Gamma$ lattice sites and the dual of the stochastically fused process on two $\gamma$ lattice sites. Unfortunately, we were not able to find a direct algebraic relation between a dual of $L_p^{(2)}$ and a dual of $L_Q$.

Due to the eigenvalues of $L_Q$ not simplifying to reasonable expressions, we now restrict to the case where $n = 2$ and find a dual based on the eigenvectors. We construct a non-trivial dual for $L^D_Q$, a permutation of $L_Q$ in block diagonal form defined in Proposition \ref{DiagonalBlockOrdering}, for $n = 2$. Note that the eigenvalues for the communicating classes of $L_Q^D$ when $n=2$ were already given in Lemma \ref{Leigenvalues}.

We find that the generator $\mathcal{L}_i$ for a communicating class with $i$ states is diagonalizable, so we diagonalize $L^D_Q$ block by block and express $L_Q^D$ in a diagonalized form.

\begin{restatable}[]{lemma}{Ldiagonalizable}
\label{Ldiagonalizable}
\begin{itemize}
    \mbox{}
    \item [a.] The block generators $\mathcal{L}_9, \mathcal{L}_6, \mathcal{L}_4, \mathcal{L}_3, \mathcal{L}_2$ are diagonalizable for $n = 2$. Let $\mathcal{P}_i$ denote the matrix of the right eigenvectors for the diagonalization of $\mathcal{L}_i$, and let $\mathcal{A}_i$ be the diagonal matrix of eigenvalues listed in $\cref{Leigenvalues}$ such that $\mathcal{L}_i = \mathcal{P}_i \mathcal{A}_i \mathcal{P}^{-1}_i$ for $i \in \{2,3,4,6,9\}$.
    \item[b.] Define $\mathcal{P} = \mathcal{P}_9 \oplus \bigoplus_{i=1}^4 \mathcal{P}_6 \oplus \bigoplus_{i=1}^4 \mathcal{P}_4 \oplus \bigoplus_{i=1}^4 \mathcal{P}_3 \oplus \bigoplus_{i=1}^8 \mathcal{P}_2$, $\mathcal{A} = \mathcal{A}_9 \oplus \bigoplus_{i=1}^4 \mathcal{A}_6 \oplus \bigoplus_{i=1}^4 \mathcal{A}_4 \oplus \bigoplus_{i=1}^4 \mathcal{A}_3 \oplus \bigoplus_{i=1}^8 \mathcal{A}_2$ and $\mathcal{Z} = \bigoplus_{i=1}^4 [0]$.
    \item[c.] The block diagonal matrix $L_Q^D = \mathcal{P}\mathcal{A}\mathcal{P}^{-1} \oplus \mathcal{Z}$.
\end{itemize}

\end{restatable}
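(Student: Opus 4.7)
The plan is to handle the three parts in order, observing that part (b) is purely notational and that once part (a) is established, part (c) is a formal consequence of (a) and the block-diagonal decomposition of $L_Q^D$ recorded in Proposition \ref{DiagonalBlockOrdering}. With $n$ fixed at $2$, every block $\mathcal{L}_i$ becomes a finite matrix with entries in $\mathbb{Q}(q)$, so part (a) reduces to a finite, block-by-block check of diagonalizability.

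For part (a), I would proceed one block at a time for $i \in \{2,3,4,6,9\}$. The eigenvalues are already given by Lemma \ref{Leigenvalues}, so the task is to exhibit a full set of linearly independent right eigenvectors. For each eigenvalue $\lambda$ of $\mathcal{L}_i$ I would solve $(\mathcal{L}_i - \lambda I)v = 0$ by symbolic Gaussian elimination, stack the resulting vectors as columns of a candidate matrix $\mathcal{P}_i$, and verify that $\det \mathcal{P}_i \neq 0$ in $\mathbb{Q}(q)$. For $i=2$ and $i=3$ the eigenvalues are visibly distinct in $q$, so invertibility of $\mathcal{P}_i$ is automatic. For $i \in \{4,6,9\}$, I would first check whether the listed eigenvalues are pairwise distinct for generic $q$; if so, diagonalizability follows from the standard argument that eigenvectors for distinct eigenvalues are independent. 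Should any eigenvalue repeat, the additional step is to verify that the geometric multiplicity matches the algebraic multiplicity, which amounts to computing $\dim \ker(\mathcal{L}_i - \lambda I)$ by a rank argument on $\mathcal{L}_i - \lambda I$. In every case the identity $\mathcal{L}_i = \mathcal{P}_i \mathcal{A}_i \mathcal{P}_i^{-1}$ then holds by construction.

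Part (b) requires no argument beyond stating the definitions. Part (c) follows from the elementary fact that a block-diagonal matrix is diagonalizable if and only if each block is, with the diagonalizing conjugation given by the direct sum of the block conjugations. Explicitly, Proposition \ref{DiagonalBlockOrdering} gives
\[
L_Q^D \;=\; \mathcal{L}_9 \;\oplus\; \bigoplus_{i=1}^4 \mathcal{L}_6 \;\oplus\; \bigoplus_{i=1}^4 \mathcal{L}_4 \;\oplus\; \bigoplus_{i=1}^4 \mathcal{L}_3 \;\oplus\; \bigoplus_{i=1}^8 \mathcal{L}_2 \;\oplus\; \bigoplus_{i=1}^4 [0].
\]
Substituting $\mathcal{L}_i = \mathcal{P}_i \mathcal{A}_i \mathcal{P}_i^{-1}$ into each summand, and using that $(A \oplus B)(C \oplus D)(A \oplus B)^{-1} = ACA^{-1} \oplus BDB^{-1}$ whenever the dimensions match, the first five groups of blocks consolidate to $\mathcal{P}\mathcal{A}\mathcal{P}^{-1}$, while the final four $1 \times 1$ zero blocks combine into $\mathcal{Z}$, giving the desired identity.

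The main obstacle will be the $9 \times 9$ block $\mathcal{L}_9$, and to a lesser extent the $6 \times 6$ block $\mathcal{L}_6$. If Lemma \ref{Leigenvalues} produces a repeated eigenvalue in $\mathcal{L}_9$ at $n=2$, one must carefully verify that its eigenspace has the correct dimension rather than merely assume so; this is a rank computation over $\mathbb{Q}(q)$ that I would perform by symbolic row reduction and cross-check in a computer algebra system. The remaining effort is organizational: packaging the column vectors consistently with the ordering of eigenvalues in $\mathcal{A}_i$ so that the relation $\mathcal{L}_i = \mathcal{P}_i \mathcal{A}_i \mathcal{P}_i^{-1}$ holds on the nose and assembles cleanly into the direct-sum identity in part (c).
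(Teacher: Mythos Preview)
Your proposal is correct and matches the paper's approach almost exactly: for part (a) the paper exhibits each eigenvector matrix $\mathcal{P}_i$ explicitly (with the larger cases relegated to an appendix) and verifies invertibility by computing $\det\mathcal{P}_i$ as a rational function of $q$, and for part (c) it carries out precisely the block-diagonal direct-sum calculation you describe. Your anticipation that the larger blocks have repeated eigenvalues is warranted (Lemma~\ref{Leigenvalues} lists only five distinct values for $\mathcal{L}_9$, five for $\mathcal{L}_6$, and three for $\mathcal{L}_4$), so the distinct-eigenvalue shortcut indeed fails there and one must fall back on the explicit determinant check, exactly as the paper does.
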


We then relate $\mathcal{P}$ to our matrix of Markov self-duality.

\begin{restatable}[]{theorem}{DiagonalStrategy}
\label{DiagonalStrategy}
The matrix $\mathcal{D} = \mathcal{P}\mathcal{P}^T \oplus \mathcal{Z}$ is a non-trivial dual for $L_Q^D$.
\end{restatable}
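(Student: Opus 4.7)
The proof will be essentially a direct algebraic verification that exploits the block-diagonal decomposition granted by \cref{Ldiagonalizable}. The plan is to show that $L_Q^D \mathcal{D} = \mathcal{D} (L_Q^D)^T$ by multiplying out both sides using the factorization $L_Q^D = \mathcal{P}\mathcal{A}\mathcal{P}^{-1} \oplus \mathcal{Z}$ and the definition $\mathcal{D} = \mathcal{P}\mathcal{P}^T \oplus \mathcal{Z}$, and then to address non-triviality separately.

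First I would invoke \cref{Ldiagonalizable}(c) to write $L_Q^D$ as a block sum of $\mathcal{P}\mathcal{A}\mathcal{P}^{-1}$ with the zero block $\mathcal{Z}$, and use the fact that direct sums respect multiplication and transpose block-wise. On the non-zero block, the computation
\begin{align*}
(\mathcal{P}\mathcal{A}\mathcal{P}^{-1})(\mathcal{P}\mathcal{P}^T) &= \mathcal{P}\mathcal{A}\mathcal{P}^T, \\
(\mathcal{P}\mathcal{P}^T)(\mathcal{P}\mathcal{A}\mathcal{P}^{-1})^T &= (\mathcal{P}\mathcal{P}^T)(\mathcal{P}^{-T}\mathcal{A}\mathcal{P}^T) = \mathcal{P}\mathcal{A}\mathcal{P}^T
\end{align*}
reduces the duality identity to a tautology once one uses $\mathcal{P}^T \mathcal{P}^{-T} = I$ and the fact that $\mathcal{A}$ is diagonal (hence equal to $\mathcal{A}^T$). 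On the trivial block, both $L_Q^D$ and $\mathcal{D}$ restrict to $\mathcal{Z}$, whose square is zero, so the identity holds there as well. Assembling the two blocks gives $L_Q^D \mathcal{D} = \mathcal{D}(L_Q^D)^T$, which is exactly the self-duality relation from the definition in \hyperref[subsection--MarkovDuality]{Section 4.4}.

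The second half of the statement is that $\mathcal{D}$ is \emph{non-trivial}, by which we mean not a scalar multiple of the identity (equivalently, not the trivial duality matrix). Here I would argue block-by-block: each $\mathcal{P}_i$ for $i \in \{2,3,4,6,9\}$ has non-orthonormal columns (the eigenvectors of $\mathcal{L}_i$ are not orthogonal in the standard inner product, since $\mathcal{L}_i$ is not symmetric), so $\mathcal{P}_i \mathcal{P}_i^T$ has non-zero off-diagonal entries. Concretely one can check the claim on $\mathcal{P}_2 \mathcal{P}_2^T$, where the eigenvectors of the $2 \times 2$ block $\mathcal{L}_2$ are easy to write down, and observe a non-zero off-diagonal entry; this single witness rules out $\mathcal{D}$ being diagonal, hence non-trivial. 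I would also note that one could choose any normalization of the columns of $\mathcal{P}$, and this flexibility yields a whole family of dualities parameterized by diagonal rescalings, all of which remain non-trivial.

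The main conceptual obstacle is not the verification itself, which is a standard consequence of the identity $L = P A P^{-1}$ with $A$ diagonal implying $L(PP^T) = PAP^T = (PP^T)L^T$; rather, it is (i) ensuring that this identity genuinely lifts through the direct sum with $\mathcal{Z}$, and (ii) giving a precise and defensible notion of ``non-trivial'' and producing the explicit witness. The first is handled by noting that block-diagonal structure is preserved under both transposition and multiplication, and that $\mathcal{Z}\mathcal{Z} = 0 = \mathcal{Z}\mathcal{Z}^T$ so the zero sector contributes identically to both sides. The second is the place where one actually has to look at the eigenvectors computed for \cref{Ldiagonalizable}, so I would defer that explicit inspection to the proof and keep it short by citing the $2 \times 2$ case as representative.
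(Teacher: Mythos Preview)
Your proposal is correct and follows essentially the same approach as the paper: both reduce the duality identity on the non-zero block to the equality $\mathcal{P}\mathcal{A}\mathcal{P}^T = \mathcal{P}\mathcal{A}\mathcal{P}^T$ using the diagonalization from \cref{Ldiagonalizable} and the fact that $\mathcal{A}^T=\mathcal{A}$, handle the $\mathcal{Z}$ block trivially, and verify non-triviality by exhibiting a non-zero off-diagonal entry of $\mathcal{P}_2\mathcal{P}_2^T$.
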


Since $\mathcal{P}$ has a block diagonal form, it follows that $\mathcal{P}\mathcal{P}^T$ has a block diagonal form and thus so does $\mathcal{D}$.

\section{Algebraic Results} \label{section--AlgResults}
Previous research used the symmetries of Type $A$, $C$, and $D$ quantum groups to construct asymmetric interacting particle systems on various lattice sites \cite{Carinci2014,belitskyschutz15,Belitsky_2015, Carinci2016, belitsky2016selfdualityshockdynamicsncomponent,Kuan_2016, Kuan_2017, Kuan_2018, kuan2020interactingparticlesystemstype, rohr2023type}. The main algebraic focus of this project was to investigate how this construction differs for the Type $D$ Lie algebra $\so_6$. We are also interested in whether this method produced the same Markov generator as the probabilistic approach did. We found the following key result, with the second research question answered in Proposition \ref{prop-weredifferent}. Our definition of a ground state transformation can be found in Definition \ref{defn-groundstate}.
\begin{theorem} \label{thm-bigalgebra}
    A Markov generator is produced by performing a ground state transformation on the representation $W \otimes W$ of \cite{kuan2020interactingparticlesystemstype}'s Casimir element; the explicit generator can be found in Section \ref{subsection-MarkovGen}.
\end{theorem}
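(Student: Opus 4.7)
The plan is to establish \cref{thm-bigalgebra} by an explicit construction-and-verification argument: first produce a concrete candidate operator via the ground state transformation, and then check the two defining properties of a Markov generator (non-negative off-diagonal entries and zero row sums). As a first step, I would write down the Casimir element $C$ of $\U_q(\so_6)$ used in \cite{kuan2020interactingparticlesystemstype} and compute its action on $W \otimes W$ using the coproduct $\Delta$ from \cref{definition--coproduct}. Because $\Delta$ is a homomorphism, the action of $C$ on $W \otimes W$ is obtained by summing the images of each generator under $(\Delta\otimes\mathrm{id})\circ\Delta$ and then assembling them via the Casimir's defining expression; the entries will be Laurent polynomials in $q$ arising from the explicit matrices in \cref{definition-fundrep} (and the corresponding higher weight representation giving $W$). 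I would record the resulting matrix $C_{W\otimes W}$ in the basis used in \hyperref[subsection-MarkovGen]{Section \ref*{subsection-MarkovGen}}.

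Next I would carry out the ground state transformation itself. Following \cref{defn-groundstate}, one selects a strictly positive ``ground state'' vector $g$ (a distinguished eigenvector of $C_{W\otimes W}$, typically built from the lowest/highest-weight vector or from the reversible measure coming from the algebra), forms the diagonal matrix $G = \operatorname{diag}(g)$, and conjugates to obtain
\[
L_a \;=\; G^{-1}\bigl(C_{W\otimes W} - \lambda\, I\bigr)\, G,
\]
where $\lambda$ is the eigenvalue of $g$. The shift by $\lambda I$ ensures that $g$ lies in the kernel of the shifted operator, so that after conjugation the all-ones vector lies in the kernel of $L_a$; this is precisely the zero row-sum condition. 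The off-diagonal entries of $L_a$ are equal to $g_j/g_i$ times the $(i,j)$ entry of $C_{W\otimes W}$, so their sign is controlled by the off-diagonal signs of the Casimir action together with the strict positivity of $g$.

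The main obstacle I anticipate is the verification that every off-diagonal entry of $L_a$ is non-negative for all admissible $q$. The Casimir is manifestly symmetric (up to the quasi-triangular twist), and its off-diagonal entries are polynomial combinations of $q^{\pm 1}$; one must check that all relevant coefficients have a definite sign across the $14 \times 14$ sector arising from $W$, not just within the nine-state sector that appears on the probabilistic side. This is where the two approaches diverge: the enlarged state space from $W\otimes W$ introduces additional transitions, and one must verify positivity block-by-block, using the weight decomposition of $W\otimes W$ (to be carried out via crystal bases in \hyperref[section--AlgProofs]{Section \ref*{section--AlgProofs}}) to organize the calculation.

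Finally, I would assemble the verification: the weight decomposition guarantees that $C_{W\otimes W}$ is block diagonal with respect to total weight, the ground state transformation preserves that block structure, each block has zero row sums by the choice of $\lambda$ and $g$, and the sign check confirms non-negativity of off-diagonals. The resulting matrix matches the explicit generator displayed in \hyperref[subsection-MarkovGen]{Section \ref*{subsection-MarkovGen}}, giving the theorem. The contrast with the probabilistic $L_Q$ on nine states, required later for \cref{prop-weredifferent}, will then be immediate because the algebraic construction genuinely uses all fourteen weight vectors of $W\otimes W$.
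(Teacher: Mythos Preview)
Your overall framework---compute $\pi_{W\otimes W}(C)$, identify an eigenvector $g$ for the top eigenvalue, conjugate by $G=\operatorname{diag}(g)$, and check the generator axioms---matches the paper's. But two of your assumptions fail, and the paper's actual argument is built around working past those failures rather than verifying them.

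First, you assume a \emph{strictly positive} ground state vector $g$. The paper's \cref{propositiongroundstate} shows precisely the opposite: the natural Type~$A$ recipe $\langle u_i\otimes u_j\,|\,\Delta(F_1)^{k_1}\Delta(F_2)^{k_2}\Delta(F_3)^{k_3}\,|\,e_1\otimes e_1\rangle$ produces zero for the lowest-weight vector (and more generally on a substantial set of basis vectors), because the $\so_6$ crystal graph forces compositions like $F_1F_2F_3F_1$ that the Type~$A$ ansatz cannot capture. The paper therefore characterizes ground state matrices via \cref{lemma--GSTeigenvalue} as eigenvectors of $\pi_{W\otimes W}(C)$, obtains a $37$-parameter family $\mathcal{S}$, and then \emph{sets the free parameters to zero}, accepting that many $g_i$ vanish.

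Second, you plan to ``confirm non-negativity of off-diagonals'' block by block. This will not succeed: after conjugation there remain rows and columns with negative off-diagonal entries. The paper does not prove positivity; it \emph{deletes} every row and column $i$ for which either $g_i=0$ or a negative off-diagonal appears (see the opening paragraph of Section~\ref{subsection-MarkovGen}). What survives is a $196\times 196$ matrix, interpreted as $14$ allowed states per site (all configurations except four particles of one class), and \cref{thm-bigalgebra} is then the statement that this restricted object is a genuine Markov generator. Your outline, as written, would stall at the sign check; the missing idea is that the theorem is proved by restriction to the positivity locus, not by establishing positivity globally.
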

To prove this, we need to first construct the aforementioned representation of $\U_q(\so_6)$. We therefore prove the following two propositions.
\begin{restatable}[]{proposition}{propositionWtwentydim}
\label{propositionWtwentydim}
    Define $W$ to be the 20-dimensional subspace of $\R^6 \otimes \R^6$ satisfying $$ \mathrm{Sym}_q^2(\R^6) = W \oplus \mathrm{span}\{q^{-2}e_1 \otimes e_4 + q^2 e_4 \otimes e_1 +q^{-1}e_2\otimes e_5 + qe_5\otimes e_2+ e_3 \otimes e_6 + e_6 \otimes e_3 \}$$  with $\{e_i| i = 1,...,6\}$ standard basis vectors of $\R^6$. Then, $W$ is an irreducible representation of $\U_q(\so_6)$.
\end{restatable}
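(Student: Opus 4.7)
The plan is to identify $W$ with the irreducible $\U_q(\so_6)$-module of highest weight $2\omega_1 = 2L_1$, whose dimension is $20$. I will first show that the spanning vector
\[v_0 := q^{-2}e_1 \otimes e_4 + q^2 e_4 \otimes e_1 + q^{-1}e_2 \otimes e_5 + q e_5 \otimes e_2 + e_3 \otimes e_6 + e_6 \otimes e_3\]
is $\U_q(\so_6)$-invariant, then exhibit a highest weight vector of weight $2\omega_1$ inside $\mathrm{Sym}_q^2(\R^6)$, and finally conclude by a dimension count.

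For the invariance of $v_0$, I would apply the coproducts $\Delta(E_i) = E_i \otimes 1 + q^{H_i} \otimes E_i$ and $\Delta(F_i) = 1 \otimes F_i + F_i \otimes q^{-H_i}$ directly, using the matrix entries from Definition \ref{definition-fundrep}. The paired basis vectors $(e_1, e_4)$, $(e_2, e_5)$, $(e_3, e_6)$ have opposite weights under each $H_i$, and the $q$-powers in $v_0$ are tailored precisely so that the two summands of $\Delta(E_i)v_0$ (and of $\Delta(F_i)v_0$) cancel term by term. The equality $\Delta(q^{H_i})v_0 = v_0$ follows immediately since each paired monomial has total $H_i$-weight zero. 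Hence $\mathrm{span}\{v_0\}$ is a one-dimensional trivial subrepresentation, and it sits inside $\mathrm{Sym}_q^2(\R^6)$ by construction.

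Next, I would identify $e_1 \otimes e_1$ as a highest weight vector of weight $2\omega_1$. Reading off Definition \ref{definition-fundrep} shows $E_i e_1 = 0$ for each $i = 1,2,3$, so $\Delta(E_i)(e_1 \otimes e_1) = (E_ie_1)\otimes e_1 + (q^{H_i}e_1)\otimes(E_ie_1) = 0$, while $\Delta(q^{H_i})(e_1 \otimes e_1)$ reveals the weight to be $2L_1 = 2\omega_1$. Since $e_1 \otimes e_1$ is manifestly symmetric, it lies in $\mathrm{Sym}_q^2(\R^6)$. Let $W' := \U_q(\so_6) \cdot (e_1 \otimes e_1)$ be the cyclic submodule generated by this vector; at generic $q$ this is the irreducible highest weight module $V(2\omega_1)$. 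The Weyl dimension formula for $D_3$, applied with $\rho = 2L_1 + L_2$ over the six positive roots $L_i \pm L_j$ ($i<j$), yields $\dim V(2\omega_1) = 20$.

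To finish, I would observe that $W'$ is irreducible with nonzero highest weight, so $v_0 \notin W'$ and the sum $W' + \mathrm{span}\{v_0\}$ is direct. Since $\dim \mathrm{Sym}_q^2(\R^6) = 21$ at generic $q$ (matching the classical count $\binom{7}{2}$ via the $q \to 1$ specialization), the inclusion $W' \oplus \mathrm{span}\{v_0\} \subseteq \mathrm{Sym}_q^2(\R^6)$ is forced to be an equality. Comparing with the defining decomposition $\mathrm{Sym}_q^2(\R^6) = W \oplus \mathrm{span}\{v_0\}$ then gives $W = W'$, so $W$ is irreducible. The main obstacle lies in the two dimension inputs $\dim V(2\omega_1) = 20$ and $\dim \mathrm{Sym}_q^2(\R^6) = 21$: either one appeals to the generic-$q$ equivalence between $\U_q(\so_6)$-mod and $\U(\so_6)$-mod together with the Weyl dimension formula, or one builds $W'$ explicitly by applying ordered monomials in the $F_i$'s to $e_1 \otimes e_1$ and verifies linear independence of the twenty resulting weight vectors — the latter also doubles as a crystal-basis cross-check against Section \ref{section--AlgProofs}.
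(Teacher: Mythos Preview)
Your argument is correct and arrives at the same conclusion, but the route differs from the paper's. The paper first invokes crystal bases to decompose $V(L_1)\otimes V(L_1)\cong V(2L_1)\oplus V(L_1+L_2)\oplus V(0)$ (Lemma~\ref{VxV-decomp}), then in Lemma~\ref{lemma--Wis20dimensional} explicitly builds the twenty basis vectors of $V(2L_1)$ by applying compositions of $\Delta(F_i)$ to $e_1\otimes e_1$ along paths in the crystal graph, and finally observes that these vectors span~$W$. Your proof instead verifies directly that $v_0$ spans a trivial submodule, cites the Weyl dimension formula for $\dim V(2\omega_1)=20$, and closes with the dimension count $\dim\mathrm{Sym}_q^2(\R^6)=21$. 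This is cleaner for the bare statement and sidesteps the crystal machinery; the paper's explicit-basis approach (which you flag as an alternative at the end) is the one actually taken because those twenty vectors are reused throughout Section~\ref{section--AlgProofs} to compute $\pi_{W\otimes W}(C)$. One small point worth tightening in your write-up: the equality $W=W'$ from the two direct-sum decompositions relies on $W$ being a $\U_q(\so_6)$-submodule (so that uniqueness of isotypic components applies), which is implicit in the proposition's phrasing but not stated outright.
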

\begin{proposition}\label{proposition--WtensorWdecomposition}
    As a representation of $\U_q(\so_6)$, $W \otimes W$ decomposes into a direct sum of irreducible representations and thus a direct sum of weight spaces, listed respectively as follows:
         $$W \otimes W \cong V(4L_1) \oplus V(3L_1+L_2) \oplus V(2L_1+2L_2) \oplus V(2L_1) \oplus V(L_1+L_2) \oplus V(0)$$
         Let $\langle i,j,k \rangle$ denote the weight space $W \otimes W [i,j,k]$. Then
         \begin{align*}
             W \otimes W \cong & \langle0,0,0\rangle \oplus \langle1,1,0\rangle \oplus \langle1,0,1\rangle \oplus \langle0,1,1\rangle \oplus \langle-1,1,0\rangle \oplus \langle-1,0,1\rangle \oplus \langle0,-1,1\rangle \oplus \langle1,-1,0\rangle \\
             &\oplus \langle1,0,-1\rangle \oplus \langle0,1,-1\rangle \oplus \langle-1,-1,0\rangle \oplus \langle-1,0,-1\rangle \oplus \langle0,-1,-1\rangle \oplus \langle2,0,0\rangle \oplus \langle0,2,0\rangle \\
             &\oplus \langle0,0,2\rangle \oplus \langle-2,0,0\rangle \oplus \langle0,-2,0\rangle \oplus \langle0,0,-2\rangle \oplus \langle2,1,1\rangle \oplus \langle1,2,1\rangle \oplus \langle1,1,2\rangle \oplus \langle2,1,-1\rangle \\
             &\oplus \langle2,-1,1\rangle \oplus \langle1,2,-1\rangle \oplus \langle-1,2,1\rangle \oplus \langle1,-1,2\rangle \oplus \langle-1,1,2\rangle \oplus \langle2,-1,-1\rangle \oplus \langle-1,2,-1\rangle \\
             &\oplus \langle-1,-1,2\rangle \oplus \langle-2,1,1\rangle \oplus \langle1,-2,1\rangle \oplus \langle1,-2,1\rangle \oplus \langle-2,1,-1\rangle \oplus \langle-2,-1,1\rangle \oplus \langle1,-2,-1\rangle \\
             &\oplus \langle-1,-2,1\rangle \oplus \langle1,-1,-2\rangle \oplus \langle-1,1,-2\rangle \oplus \langle-2,-1,-1\rangle \oplus \langle-1,-2,-1\rangle \oplus \langle-1,-1,-2\rangle \\
             &\oplus \langle2,2,0\rangle \oplus \langle2,0,2\rangle \oplus \langle0,2,2\rangle \oplus \langle-2,2,0\rangle \oplus \langle-2,0,2\rangle \oplus \langle0,-2,2\rangle \oplus \langle2,-2,0\rangle \oplus \langle2,0,-2\rangle \\
             &\oplus \langle0,2,-2\rangle \oplus \langle-2,-2,0\rangle \oplus \langle-2,0,-2\rangle \oplus \langle0,-2,-2\rangle \oplus \langle3,1,0\rangle \oplus \langle3,0,1\rangle \oplus \langle0,3,1\rangle \\
             &\oplus \langle1,3,0\rangle \oplus \langle1,0,3\rangle \oplus \langle0,1,3\rangle \oplus \langle3,-1,0\rangle \oplus \langle3,0,-1\rangle \oplus \langle0,3,-1\rangle \oplus \langle-1,3,0\rangle \oplus \langle-1,0,3\rangle \\
             &\oplus \langle0,-1,3\rangle \oplus \langle-3,1,0\rangle \oplus \langle-3,0,1\rangle \oplus \langle0,-3,1\rangle \oplus \langle1,-3,0\rangle \oplus \langle1,0,-3\rangle \oplus \langle0,1,-3\rangle \\
             &\oplus \langle-3,-1,0\rangle \oplus \langle-3,0,-1\rangle \oplus \langle0,-3,-1\rangle \oplus \langle-1,-3,0\rangle \oplus \langle-1,0,-3\rangle \oplus \langle0,-1,-3\rangle \\
             &\oplus \langle4,0,0\rangle \oplus \langle0,4,0\rangle \oplus \langle0,0,4\rangle \oplus \langle-4,0,0\rangle \oplus \langle0,-4,0\rangle \oplus \langle0,0,-4\rangle.
         \end{align*}
\end{proposition}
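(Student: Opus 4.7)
The plan is to leverage two foundational facts: that at generic $q$ the category of finite-dimensional $\U_q(\so_6)$-modules is semisimple with the same combinatorics as $U(\so_6)$, and that Proposition \ref{propositionWtwentydim} already identifies $W$ as a 20-dimensional irreducible. First I would pin down $W$ as $V(2L_1)$: exhibit a highest-weight vector in $W$ --- a natural candidate coming from $e_1 \otimes e_1$ in $\mathrm{Sym}_q^2(\R^6)$ --- verify it is annihilated by $E_1, E_2, E_3$, and compute its weight to be $2L_1$. Matching the resulting isomorphism class against the list of 20-dimensional irreducibles of $\U_q(\so_6)$ confirms $W \cong V(2L_1)$, reducing the problem to decomposing $V(2L_1) \otimes V(2L_1)$.

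For the irreducible decomposition, the cleanest route is via crystal bases, which the paper develops in Section \ref{section--AlgProofs}. The crystal $B(2L_1)$ of $V(2L_1)$ is given by Kashiwara-Nakashima tableaux of shape $(2)$ for type $D_3$, and by Kashiwara's tensor product rule the crystal of $W \otimes W$ is $B(2L_1) \otimes B(2L_1)$. Connected components of this tensor product crystal correspond bijectively to irreducible summands, and the highest-weight vertex of each component (characterized by $\tilde{e}_i = 0$ for all $i$) gives the highest weight of that summand. One expects to find exactly six such highest-weight elements, with weights $4L_1$, $3L_1 + L_2$, $2L_1 + 2L_2$, $2L_1$, $L_1 + L_2$, and $0$. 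As a dimension sanity check, the Weyl dimension formula for $D_3$ gives $105 + 175 + 84 + 20 + 15 + 1 = 400 = 20^2$, matching $\dim(W \otimes W)$.

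For the second claim, I would aggregate the weight multisets of the six irreducible summands. Each $V(\lambda)$ has a classical weight diagram: for instance, $V(4L_1)$ contributes the weights of the fourth harmonic tensor on the standard six-dimensional representation, while $V(L_1 + L_2)$ is the adjoint representation and contributes the root system together with copies of the zero weight. Rather than enumerating each weight space by hand, I would organize them by Weyl-group orbit, using the fact that each dominant weight appearing in some summand forces its entire $W$-orbit to appear with the same multiplicity. Collating the six resulting orbit lists then yields the stated weight-space decomposition; the non-trivial multiplicities (for example at weight $\langle 0,0,0\rangle$, which receives contributions from five of the six summands) must be tracked carefully.

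The main obstacle is the combinatorial bookkeeping: enumerating highest-weight vertices inside a crystal on $400$ elements, and then aggregating weight multiplicities across six summands without double counting or dropping a weight. To mitigate errors I would cross-check the final character of the decomposition against the square of the character of $V(2L_1)$, exploiting the exceptional isomorphism $\so_6 \cong \mathfrak{sl}_4$ to convert, if convenient, to the more familiar Young tableau combinatorics for $\mathfrak{sl}_4$.
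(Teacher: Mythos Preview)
Your approach to the irreducible decomposition is essentially the paper's: both identify $W \cong V(2L_1)$ and then apply crystal-basis combinatorics to $B(2L_1) \otimes B(2L_1)$, with the paper phrasing this via the Young tableaux tensor product rule (Lemma~\ref{lemma-young}) and checking dimensions exactly as you do. Your suggested $\mathfrak{sl}_4$ cross-check also appears in the paper as a remark computing Kostka numbers.

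For the weight space decomposition, however, you and the paper diverge. You propose aggregating the full weight multisets of each of the six irreducible summands and then summing multiplicities orbit by orbit. The paper (Lemma~\ref{WeightSpaceLemma}) bypasses the irreducible decomposition entirely: it observes that the twenty basis vectors of $W$ have weights of the form $0$ (twice), $2L_i$, or $L_i + L_j$, and then directly enumerates which sums $\lambda_1 + \lambda_2$ can arise and with what multiplicity by counting ordered pairs $(w_1, w_2)$. This is substantially less bookkeeping than your route, since it avoids computing the weight diagram of each $V(\lambda)$ separately; on the other hand, your approach would give finer information (namely how each weight space splits across the summands), which the paper does not need.
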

Once we characterized $W \otimes W$, we were able to extract enough information to construct a $400 \times 400$ representation of \cite{kuan2020interactingparticlesystemstype}'s Casimir element without significant computation, which brings us to our next proposition. 
\begin{proposition} \label{proposition-Wblocks}
     We can write $\pi_{W\otimes W}(C)$ as a block matrix with one $22\times 22$ block, twelve $12\times 12$ blocks, six $8 \times 8$ blocks, twenty-four $4 \times 4$ blocks, twelve $3 \times 3$ blocks, twenty-four $2 \times 2$ blocks, and six $1 \times 1$ blocks. Denoting $v_i$ to be the $i^{th}$ basis vector of $W$, the blocked matrix is with respect to the ordered basis in Section \ref{appendix-basisWxW}.
\end{proposition}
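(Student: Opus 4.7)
The plan is to leverage the centrality of $C$ in $\U_q(\so_6)$ to force $\pi_{W \otimes W}(C)$ to be block diagonal with respect to a weight-ordered basis of $W \otimes W$. First I would observe that $\Delta(C)$ is central in $\U_q(\so_6) \otimes \U_q(\so_6)$ because $\Delta$ is an algebra homomorphism and $C$ is central, so in particular $\Delta(C)$ commutes with $\Delta(q^{H_i}) = q^{H_i} \otimes q^{H_i}$ for each $i \in \{1,2,3\}$. Since the weight spaces of $W \otimes W$ are by definition the simultaneous eigenspaces of the operators $\pi_W(q^{H_i}) \otimes \pi_W(q^{H_i})$, the operator $\pi_{W \otimes W}(C)$ must preserve each weight space. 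Ordering a basis so that each weight space appears contiguously therefore makes $\pi_{W \otimes W}(C)$ block diagonal, with one block per weight space, of size equal to the multiplicity of that weight in $W \otimes W$.

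It then remains to compute the multiplicities of the 85 weights catalogued in Proposition \ref{proposition--WtensorWdecomposition}. For each weight $\mu$, its multiplicity in $W \otimes W$ equals the sum of its multiplicities in the six irreducible summands $V(4L_1)$, $V(3L_1+L_2)$, $V(2L_1+2L_2)$, $V(2L_1)$, $V(L_1+L_2)$, and $V(0)$. I would read these multiplicities directly off the crystal bases developed elsewhere in the paper (equivalently, from semistandard Young tableaux via the exceptional isomorphism $\so_6 \cong \mathfrak{sl}_4$). Grouping weights by Weyl orbit, the expected counts are: the zero weight has multiplicity $22$; each of the twelve weights in the orbit of $(\pm 1,\pm 1,0)$ has multiplicity $12$; each of the six weights of type $(\pm 2,0,0)$ has multiplicity $8$; each of the twenty-four weights of type $(\pm 2,\pm 1,\pm 1)$ has multiplicity $4$; each of the twelve weights of type $(\pm 2,\pm 2,0)$ has multiplicity $3$; each of the twenty-four weights of type $(\pm 3,\pm 1,0)$ has multiplicity $2$; and each of the six extremal weights of type $(\pm 4,0,0)$ has multiplicity $1$. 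As a consistency check, $1 \cdot 22 + 12 \cdot 12 + 6 \cdot 8 + 24 \cdot 4 + 12 \cdot 3 + 24 \cdot 2 + 6 \cdot 1 = 400 = \dim(W \otimes W)$, matching the stated block tallies exactly.

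The main obstacle will be the combinatorial bookkeeping rather than any deep structural step: one must enumerate all $85$ weight orbits without collision, tally the contributions from each of the six irreducible summands correctly, and verify that the ordered basis listed in Section \ref{appendix-basisWxW} indeed groups tensors $v_i \otimes v_j$ by their joint Cartan weight under the fundamental action of Definition \ref{definition-fundrep}. Once the weight assignment of each $v_i \otimes v_j$ is read off from the diagonal $q^{H_i}$ eigenvalues in Definition \ref{definition-fundrep} and the basis is permuted to cluster identical weights, centrality of $C$ yields the block-diagonal form and the multiplicity computation yields the precise block sizes, completing the proof.
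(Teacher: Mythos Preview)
Your proposal is correct and follows the same overarching strategy as the paper: centrality of $C$ forces $\pi_{W\otimes W}(C)$ to commute with the Cartan action, hence to preserve each weight space, so a weight-ordered basis yields a block-diagonal matrix with block sizes equal to weight multiplicities. Your justification of this step is in fact more explicit than the paper's, which simply asserts that ``any matrix in $W\otimes W$'' decomposes along weight spaces.

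The one genuine methodological difference is in how the multiplicities are obtained. You propose summing the weight multiplicities over the six irreducible summands $V(4L_1),\ldots,V(0)$ from Proposition~\ref{proposition--WtensorWdecomposition}, reading them off crystal bases or equivalently Kostka numbers under $\so_6\cong\mathfrak{sl}_4$. The paper instead bypasses the irreducible decomposition entirely: Lemma~\ref{WeightSpaceLemma} lists the twenty weights of $W$ (with $0$ occurring twice) and, for each target weight $\mu$, directly counts ordered pairs $(\lambda_1,\lambda_2)$ of $W$-weights with $\lambda_1+\lambda_2=\mu$. This elementary pair-count gives the multiplicities $22,12,8,4,3,2,1$ for the seven orbit types without invoking Lemma~\ref{lemma-young} at all. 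The paper does remark afterward that your Kostka-number route would recover the same numbers, so both approaches are acknowledged; the paper's is more self-contained, while yours ties the block structure more tightly to the representation-theoretic decomposition already established.
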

Finally, we must perform a ground state transformation so that each row of $\pi_{W\otimes W}(C)$ sums to $0$. We introduce and tweak the method in Section \ref{subsection-groundstatetransformation} to compute this transformation. 

\begin{restatable}[]{proposition}{propositiongroundstate}
\label{propositiongroundstate}
    Applying variations of
    $$\langle u_i \otimes u_j| \Delta(F_1)^{k_1}\Delta(F_2)^{k_2}\Delta(F_3)^{k_3}| e_1 \otimes e_1 \rangle$$
    will generate zero values for the lowest weight vector of the $\U_q(\so_6)$-modules $V$, $W$, and $W \otimes W$.
\end{restatable}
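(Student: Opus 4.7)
The plan is to compute the matrix coefficient $\langle u_i \otimes u_j \,|\, \Delta(F_1)^{k_1}\Delta(F_2)^{k_2}\Delta(F_3)^{k_3}\,|\, e_1 \otimes e_1 \rangle$ directly by expanding each $\Delta(F_i)$ via the coproduct rule $\Delta(F_i) = 1 \otimes F_i + F_i \otimes q^{-H_i}$ from Definition \ref{definition--coproduct}, then applying the explicit matrix entries of $F_i$ and $q^{H_i}$ from Definition \ref{definition-fundrep}. Since $e_1 \otimes e_1$ is the highest weight vector of $W \otimes W$ (inheriting the highest weight of $W$ under the embedding of Proposition \ref{propositionWtwentydim}) with weight $(2,0,0)$, while the lowest weight vectors have weights $-L_1$ for $V$, $(-2,0,0)$ for $W$, and $(-4,0,0)$ for $W \otimes W$, the first reduction is a weight-matching constraint: a necessary condition for the matrix element to be nonzero is $k_1\alpha_1 + k_2\alpha_2 + k_3\alpha_3$ equal to the weight difference. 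Any $(k_1,k_2,k_3)$ failing this trivially yields zero.

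For the tuples satisfying the weight constraint, I would next trace the iterated action step by step. From the fundamental representation one reads off $F_1 e_1 = e_2$, $F_2 e_1 = 0$, $F_3 e_1 = 0$, and $q^{-H_i} e_1$ is a scalar multiple of $e_1$. So the leftmost $\Delta(F_i)$ application annihilates the state unless $i = 1$, and each subsequent $\Delta(F_i)$ splits into two branches (one for each summand of the coproduct), carrying a power of $q$ from the $q^{-H_i}$ factor. For $V$, the computation is brief since $V$ is six-dimensional and the weight path from $e_1$ to $e_4$ is essentially forced by the sparsity of the $F_i$. For $W$, the 20-dimensional structure from Proposition \ref{propositionWtwentydim} admits several contributing paths, but only those compatible with the quotient by the symmetric combination appearing in the definition of $W$ remain.

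The third step is to identify the cancellations that force the lowest-weight coefficient to vanish for each variation. I expect these cancellations to stem from the quantum Serre relations $F_l^2 F_j + F_j F_l^2 = (q+q^{-1})F_l F_j F_l$, which propagate through the coproduct as identities on $\Delta(F_l)$ and $\Delta(F_j)$. The two branches of each coproduct expansion accumulate $q$-weights that, at the lowest weight line, combine precisely into telescoping cancellations: terms of the form $q^{-H_i}$ acting on a high-weight tensor factor give scalars that are the reciprocals of those produced by the symmetric branch acting on the low-weight factor, so summing over variations yields $0$.

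The hard part will be the bookkeeping for $W \otimes W$. By Proposition \ref{proposition--WtensorWdecomposition}, $W \otimes W$ contains 86 weight spaces and many candidate paths from $(2,0,0)$ to $(-4,0,0)$, so in principle the calculation is large. However, most intermediate vectors are supported away from the lowest weight line because of the sparse action of $F_i$ on the fundamental basis, so I expect the effective enumeration to reduce to a manageable case analysis organized by the block structure identified in Proposition \ref{proposition-Wblocks}. Verifying cancellation in the single $22\times 22$ block containing the highest and lowest weight vectors will be the main technical step, with the smaller blocks following by analogous arguments.
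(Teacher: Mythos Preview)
Your approach misidentifies the mechanism behind the vanishing. You frame the proof as a cancellation argument: expand the coproduct, track $q$-weights along branches, and invoke the quantum Serre relations to produce ``telescoping cancellations'' among nonzero contributions. But there is nothing to cancel. The paper's proof is a short structural observation from the crystal graph of $V$ (Figure~\ref{tikz-V}): any path from $e_1$ to the lowest weight vector $e_4$ must begin \emph{and} end with $F_1$, since the only edge into $e_1$ and the only edge out of $e_4 = \bar 1$ are both labeled $1$. Consequently no monomial of the block form $F_{\sigma(1)}^{k_{\sigma(1)}}F_{\sigma(2)}^{k_{\sigma(2)}}F_{\sigma(3)}^{k_{\sigma(3)}}$ (for any permutation $\sigma$) can connect $e_1$ to $e_4$, because such a monomial has all its $F_1$'s on one side. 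When you expand $\Delta(F_1)^{k_1}\Delta(F_2)^{k_2}\Delta(F_3)^{k_3}$ into a sum of simple tensors, each tensor factor still receives $F_i$'s in block order (the $q^{-H_i}$'s are diagonal and irrelevant), so each factor individually fails to reach $e_4$; every summand is already zero. The same reasoning propagates to $W$ (lowest weight $e_4 \otimes e_4$) and $W \otimes W$ (lowest weight $e_4^{\otimes 4}$) with no additional work.

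So the Serre relations play no role, and the large case analysis you anticipate for the $22 \times 22$ block of $W \otimes W$ never arises. Your weight-matching step is fine as a preliminary reduction, but even for the triples $(k_1,k_2,k_3)$ that satisfy the weight constraint the matrix element vanishes term-by-term, not by interference. (Minor point: the highest weight of $W \otimes W$ is $4L_1$, not $(2,0,0)$.) The fix is to drop the cancellation narrative entirely and argue directly from the shape of the crystal graph that $F_1$ must appear on both ends of any composition reaching the lowest weight.
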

Using the previous four propositions, we compute a $196 \times 196$ fused Type D ASEP system generator from \cite{kuan2020interactingparticlesystemstype}'s central element represented in $W \otimes W$, which can be found in Section \ref{subsection-MarkovGen}. This proves Theorem \ref{thm-bigalgebra}, and also has probabilistic significance: as an interacting particle system, the process allows for $14$ different states at each site, i.e., all states except four particles of the same class. This differs from the probabilistically-constructed generator, which does not allow for three particles of the same class on a lattice site.

Finally, we compare our result to the fused generator matrix $L_Q$ defined in Section \ref{section--ProbResults} and written explicitly in Section \ref{Presenting_LQ}.

\begin{proposition}
    Regardless of which ground state transformation is applied to $\pi_{W \otimes W}(C)$, the resulting Markov generator will not match the probabilistically-generated matrix $L_Q$ in Section \ref{Presenting_LQ}. \label{prop-weredifferent}
\end{proposition}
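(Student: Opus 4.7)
The plan is to exploit a structural mismatch between the two matrices. A ground state transformation, as specified in Definition \ref{defn-groundstate} and constructed in Section \ref{subsection-groundstatetransformation}, acts by conjugation with an invertible diagonal matrix (derived from a positive null eigenvector), possibly combined with a scalar shift. Such a transformation is a similarity, so it preserves the underlying matrix dimension and the multiset of communicating-class block sizes, because diagonal conjugation can neither create nor destroy nonzero off-diagonal entries.

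Given this, I would contrast the structural data of the two matrices directly. The algebraic generator produced in Section \ref{subsection-MarkovGen} is a $196 \times 196$ Markov generator, reflecting the 14-state per-site space described after Proposition \ref{propositiongroundstate} (all particle configurations on a site except four particles of a single class). In contrast, $L_Q$ is an $81 \times 81$ matrix, reflecting the 9-state per-site space $\chi$ on two $\gamma$-sites. Since $196 \neq 81$, the two matrices have incompatible sizes, so no ground state transformation can equate them.

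For robustness I would supplement this with a block-level comparison: by Proposition \ref{proposition-Wblocks}, $\pi_{W\otimes W}(C)$ carries a $22 \times 22$ block together with twelve $12 \times 12$ blocks and six $8 \times 8$ blocks, all of which are preserved by any diagonal conjugation. By Proposition \ref{DiagonalBlockOrdering}, however, $L_Q$ has no block larger than $9 \times 9$. The main obstacle I anticipate is verifying that these large algebraic blocks are not fully annihilated by the ground state transformation (that is, the ground state vector does not vanish identically on any of them); granted this, the mismatch in block-size multisets prevents any ground state transformation from identifying the two matrices, completing the proof of Proposition \ref{prop-weredifferent}.
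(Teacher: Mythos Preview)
Your argument has a genuine gap. The size comparison $196\neq 81$ pertains to one particular choice of ground state transformation (the one obtained in Section~\ref{subsubsection--SelectingGST} by setting all $37$ free parameters to zero), together with the subsequent deletion of rows and columns where $g_i=0$ or a negative off-diagonal entry appears. The proposition, however, quantifies over \emph{all} ground state transformations, and different choices of $G\in\mathcal S$ lead to different deletion patterns and hence different dimensions. Likewise, your block-size argument fails for the same reason: the generator displayed in Section~\ref{subsection-MarkovGen} already shows that the $22\times 22$ weight-space block of $\pi_{W\otimes W}(C)$ becomes a $9\times 9$ block after deletion, so the large blocks you cite are not preserved at the level of the resulting Markov generator. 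The obstacle you flag at the end --- that the large blocks might be partially annihilated --- is precisely what happens, and it is fatal to your strategy.

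The paper's proof works at a finer level. It uses the particle--weight correspondence of \cite{kuan2020interactingparticlesystemstype} to identify the $4\times 4$ communicating class $\bigl(\langle 3,0\rangle,\langle 1,2\rangle,\langle 2,1\rangle,\langle 0,3\rangle\bigr)$ of $L_Q$ with the twenty-four $4\times 4$ weight-space blocks of $\pi_{W\otimes W}(C)$. Every one of those $4\times 4$ blocks (listed in Section~\ref{4x4blocks}) carries an off-diagonal zero entry, whereas the $\mathcal L_4$ block of $L_Q$ has none. Since diagonal conjugation and scalar shift preserve the off-diagonal zero pattern, no ground state transformation can send any of these blocks to $\mathcal L_4$. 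This block-level argument is essential: the paper explicitly notes that the $2\times 2$ and $3\times 3$ blocks \emph{do} agree with those of $L_Q$ up to a scalar, so the obstruction genuinely lives in the $4\times 4$ blocks and cannot be detected by a coarse dimension or block-count comparison.
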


\section{Probabilistic Proofs} \label{section--ProbProofs}
\subsection{Producing Stochastic Fusion Matrix}
\label{ProduceLQ}
To begin, we let 
\[Q^{(K)}_t \coloneqq \Lambda^{(K)} P^{(K)}_t \Phi^{(K)}\]
Which can be written in terms of generators as
\[L^{(K)}_Q = \Lambda^{(K)} L^{(K)}_p \Phi^{(K)}\]

We restrict to the case of two $\Gamma$-lattice sites, constructing $\Lambda^{(2)}$ based on the reversibility measures given by $G^2$ and constructing $\Phi^{(2)}$ based on the map $\phi$ from two $\Gamma$ lattice sites to a single $\gamma$ lattice site.

Any particle configuration on two lattice sites can be represented as a function $\eta:\{1,2\} \rightarrow \{0,1,2,3\}$, a function from a lattice site to the particle configuration on that lattice site. Furthermore, $A_1(\eta) \subset \{1,2\}$ is the set of lattice sites that class 1 particles occupy and similarly for $A_2(\eta)$. Finally, below is the function from Proposition 1.3 of \cite{kuan2020interactingparticlesystemstype},
\[G^2(\eta) = \prod_{x \in A_1(\eta)} q^{-2x}\prod_{x \in A_2(\eta)} q^{-2x}\]
For simplicity, we represent the function $\eta$ as an ordered pair $(\eta_1,\eta_2)$, where $\eta(1) = \eta_1$ and $\eta(2) = \eta_2$. As an example for the reader to check their understanding with the new notation,
\[G^2(2,1) = \prod_{x \in \{2\}}q^{-2x}\prod_{x \in \{1\}}q^{-2x} = q^{-6}\]

\begin{lemma}
    \label{LambdaTwo}
    We construct and express $\Lambda^{(2)}$, and define the fission map $\Lambda \coloneqq \Lambda^{(2)} \otimes \Lambda^{(2)}$.
\end{lemma}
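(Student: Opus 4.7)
The plan is to construct $\Lambda^{(2)}$ as the stochastic fission matrix from one $\gamma$ lattice site to two $\Gamma$ lattice sites, following the procedure developed in Section 3 of \cite{kuan2019stochastic} and adapted to the reversibility measure $G^2$ supplied by \cite{kuan2020interactingparticlesystemstype}. The key idea is that while fusion $\phi$ is deterministic (each $(x_1,x_2)\in\Omega$ has a unique image in $\chi$), fission must randomly distribute probability mass from a fused state $\langle y\rangle$ back over its preimage $\phi^{-1}(\langle y\rangle)\subseteq \Omega$, with weights proportional to $G^2$ so that the resulting intertwining is compatible with the reversible measure.

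First I would enumerate the preimages under $\phi$ for each of the nine fused states in $\chi$. Using Definition \ref{StochasticDef}, these are (up to ordering): $\phi^{-1}(\langle 0\rangle)=\{(0,0)\}$, $\phi^{-1}(\langle i\rangle)=\{(i,0),(0,i)\}$ for $i\in\{1,2,3\}$, $\phi^{-1}(\langle 11\rangle)=\{(1,1)\}$, $\phi^{-1}(\langle 22\rangle)=\{(2,2)\}$, $\phi^{-1}(\langle 33\rangle)=\{(3,3)\}$, $\phi^{-1}(\langle 31\rangle)=\{(3,1),(1,3)\}$, and $\phi^{-1}(\langle 32\rangle)=\{(3,2),(2,3)\}$. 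Next I would define the entries of $\Lambda^{(2)}$, viewed as a $9\times 16$ matrix with rows indexed by $\chi$ and columns by $\Omega$, by
\[
\Lambda^{(2)}_{\langle y\rangle,(x_1,x_2)} \;=\;
\begin{cases}
\dfrac{G^2(x_1,x_2)}{\displaystyle\sum_{(a,b)\in\phi^{-1}(\langle y\rangle)} G^2(a,b)} & \text{if } \phi(x_1,x_2)=\langle y\rangle,\\[1ex]
0 & \text{otherwise.}
\end{cases}
\]
Singleton-preimage rows collapse to standard basis vectors, while for the five two-element preimages the two nonzero entries take the form $\frac{q^{-2x}}{q^{-2}+q^{-4}}$ (or an analogous ratio), which can be simplified using $G^2((i,0))=q^{-2}$ and $G^2((0,i))=q^{-4}$ for class-1 contributions, and the product rule for the mixed-class states $(3,1)$, $(1,3)$, $(3,2)$, $(2,3)$. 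I would then display the resulting $9\times 16$ matrix explicitly in the order specified by $\chi$ (rows) and $\Omega$ (columns), verifying that every row sums to $1$ so that $\Lambda^{(2)}$ is stochastic.

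Finally I would define the four-site fission matrix by the tensor product $\Lambda\coloneqq\Lambda^{(2)}\otimes\Lambda^{(2)}$, which is the natural construction because the fusion map $\Phi$ on four $\Gamma$-sites factors as $\phi\otimes\phi$ (as noted after Definition \ref{StochasticDef}), and the reversibility measure $G^2$ extends multiplicatively across disjoint site-pairs. The resulting $\Lambda$ is then an $81\times 256$ stochastic matrix, exactly matching the dimensional requirements for the intertwining $L_Q^{(4)} = \Lambda L_p^{(4)}\Phi^{(4)}$.

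The main obstacle is less conceptual than it is bookkeeping: aligning the ordering $\chi$ with the ordering $\Omega$ and being careful with the notational discrepancy between the $\gamma$-state $\langle 3\rangle$ (a single site holding one class-1 and one class-2 particle) and the entry $\langle 12\rangle$ appearing in $\chi$, so that the preimage for this state—$\{(3,0),(0,3),(1,2),(2,1)\}$ versus $\{(3,0),(0,3)\}$—is handled consistently with the state-space definition. Once this bookkeeping is pinned down, the construction and its stochasticity follow immediately, and defining $\Lambda$ via the tensor product is routine.
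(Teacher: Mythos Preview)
Your approach is essentially identical to the paper's: define $\Lambda^{(2)}$ entrywise as $G^2(\omega)$ on the fiber $\phi^{-1}(x)$ and zero elsewhere, normalize each row to sum to $1$, display the $9\times 16$ matrix, and set $\Lambda\coloneqq\Lambda^{(2)}\otimes\Lambda^{(2)}$.

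The one loose end you correctly flag---and should resolve rather than leave open---is the $\langle 3\rangle$ versus $\langle 12\rangle$ issue. In the paper these are the \emph{same} fused state (one class-1 and one class-2 particle), so your enumeration $\phi^{-1}(\langle i\rangle)=\{(i,0),(0,i)\}$ for $i\in\{1,2,3\}$ is wrong at $i=3$: the correct fiber is the four-element set $\{(3,0),(2,1),(0,3),(1,2)\}$. This is visible in row~5 of the paper's displayed matrix, whose nonzero entries $\tfrac{q^4}{(q^2+1)^2},\tfrac{q^2}{(q^2+1)^2},\tfrac{1}{(q^2+1)^2},\tfrac{q^2}{(q^2+1)^2}$ come from $G^2(3,0)=q^{-4}$, $G^2(2,1)=q^{-6}$, $G^2(0,3)=q^{-8}$, $G^2(1,2)=q^{-6}$ with common denominator $q^{-4}(1+q^{-2})^2$. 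Once you commit to that reading, the rest of your construction goes through verbatim.
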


\begin{proof}
Let the rows of $\Lambda^{(2)}$ be indexed by $x \in \chi$ and the columns be indexed $\omega \in \Omega$. We define
\[
    (\Lambda^{(2)})_{x, \omega} = 
    \begin{cases}   
    G^2(\omega) & \phi(\omega) = x \\
    0 & \phi(\omega) \neq x
    \end{cases}
\]
with the additional step that the rows of $\Lambda^{(2)}$ are normalized to sum to 1. $\Lambda^{(2)}$ is below.
\[
\begin{bmatrix}
    0 & 0 & 0 & 0 & 0 & 0 & 0 & 0 & 0 & 0 & 0 & 0 & 0 & 0 & 0 & 1 \\
    0 & 0 & 0 & 0 & \frac{q^2}{q^2+1} & \frac{1}{q^2+1} & 0 & 0 & 0 & 0 & 0 & 0 & 0 & 0 & 0 & 0 \\
    0& 0 & 0 & 0 & 0 & 0 & \frac{q^2}{q^2+1} & \frac{1}{q^2+1} & 0 & 0 & 0 & 0 & 0 & 0 & 0 & 0 \\
    0 & 0 & 0 & 0 & 0 & 0 & 0 & 0 & 0 & 0 & 0 & 0 & 0 & 1 & 0 & 0 \\
    \frac{q^4}{(q^2+1)^2} & \frac{q^2}{(q^2+1)^2} & \frac{1}{(q^2+1)^2} & \frac{q^2}{(q^2+1)^2} & 0 & 0 & 0 & 0 & 0 & 0 & 0 & 0 & 0 & 0 & 0 & 0 \\
    0 & 0 & 0 & 0 & 0 & 0 & 0 & 0 & 0 & 0 & 0 & 0 & 0 & 0 & 1 & 0 \\
    0 & 0 & 0 & 0 & 0 & 0 & 0 & 0 & \frac{q^2}{q^2+1} & 0 & 0 & 0 & 0 & 0 & 0 & 0 \\
    0 & 0 & 0 & 0 & 0 & 0 & 0 & 0 & 0 & 0 & \frac{q^2}{q^2+1} & 0 & 0 & 0 & 0 & 0 \\
    0 & 0 & 0 & 0 & 0 & 0 & 0 & 0 & 0 & 0 & 0 & 0 & 0 & 0 & 0 & 1
\end{bmatrix}
\]

Note that $\Lambda^{(2)}$ is a random map.
\end{proof}

\begin{lemma}
    \label{PhiTwo}
    We construct and express $\Phi^{(2)}$, and define the fusion map $\Phi \coloneqq \Phi^{(2)} \otimes \Phi^{(2)}$.
\end{lemma}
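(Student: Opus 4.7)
The plan is to construct $\Phi^{(2)}$ as the matrix representation of the deterministic map $\phi : \Omega \to \chi$ given by Definition \ref{StochasticDef}. For the product $L_Q^{(K)} = \Lambda^{(K)} L_p^{(K)} \Phi^{(K)}$ to have correct dimensions, $\Phi^{(2)}$ must be a $16 \times 9$ matrix with rows indexed by $\omega \in \Omega$ (in the order specified in Section \ref{subsection--ProbNotation}) and columns indexed by $x \in \chi$. I would define
\[
(\Phi^{(2)})_{\omega, x} = \begin{cases} 1 & \phi(\omega) = x \\ 0 & \phi(\omega) \neq x \end{cases}
\]
Because $\phi$ is a (deterministic) function rather than a stochastic kernel, each row of $\Phi^{(2)}$ contains exactly one $1$ and is already normalized, so unlike $\Lambda^{(2)}$ no normalization step is needed.

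Next I would compute each row by applying $\phi$ to the corresponding element of $\Omega$. For example, $\phi(3,0) = \langle 3 \rangle$ places a $1$ in the column for $\langle 3 \rangle$ in the first row; $\phi(2,1) = \langle 12 \rangle$ (since $1 < 2$, so we reorder) places a $1$ in the column for $\langle 12 \rangle$ in the second row; $\phi(0,0) = \langle 0 \rangle$ places a $1$ in the column for $\langle 0 \rangle$ in the row for $(0,0)$; and so on through all $16$ states. Writing these out produces the explicit $16 \times 9$ zero-one matrix, which can then be displayed as a bmatrix analogous to the display of $\Lambda^{(2)}$ in Lemma \ref{LambdaTwo}.

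Finally, the fusion map on four $\Gamma$-lattice sites acts as $\phi \otimes \phi$ by the remark following Definition \ref{StochasticDef}, so the matrix representation is $\Phi := \Phi^{(2)} \otimes \Phi^{(2)}$, a $256 \times 81$ matrix. I would note that this definition is consistent with the general $\Phi^{(K)}$ notation introduced in Section \ref{subsection--ProbNotation} in the $K=4$ case.

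The only real obstacle is bookkeeping: one must be careful that the row ordering follows $\Omega$ and the column ordering follows $\chi$ exactly as specified in Section \ref{subsection--ProbNotation}, and that the reordering rule $x_1 < x_2 \Rightarrow \phi(x_1,x_2) = \langle x_2 x_1 \rangle$ is applied correctly (so that, e.g., $(1,2)$ and $(2,1)$ both map to $\langle 12 \rangle$). Once the matrix is written out, no further verification is required since the construction is a direct transcription of $\phi$.
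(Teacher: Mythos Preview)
Your proposal is correct and follows essentially the same approach as the paper: define $(\Phi^{(2)})_{\omega,x}=1$ if $\phi(\omega)=x$ and $0$ otherwise, note that $\Phi^{(2)}$ is the deterministic matrix representation of $\phi$, write out the $16\times 9$ matrix explicitly, and set $\Phi\coloneqq\Phi^{(2)}\otimes\Phi^{(2)}$. The only thing the paper adds beyond your outline is the actual display of the matrix.
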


\begin{proof}
    \label{PhiTwoProof}
    Let the rows of $\Phi^{(2)}$ be indexed by $\omega \in \Omega$ and the columns be indexed by $x \in \chi$. We define
    \[
    (\Phi^{(2)})_{\omega , x} = 
    \begin{cases}
        1 & \phi(\omega) = x \\
        0 & \phi(\omega) \neq x
    \end{cases}
    \]
    Note that $\Phi^{(2)}$ is a deterministic map and $\Phi^{(2)}$ is the matrix representation of $\phi$. $\Phi^{(2)}$ is below.
    
\[
\begin{bmatrix}
    0 & 0 & 0 & 0 & 1 & 0 & 0 & 0 & 0 \\
    0 & 0 & 0 & 0 & 1 & 0 & 0 & 0 & 0 \\
    0 & 0 & 0 & 0 & 1 & 0 & 0 & 0 & 0 \\
    0 & 0 & 0 & 0 & 1 & 0 & 0 & 0 & 0 \\ 
    0 & 1 & 0 & 0 & 0 & 0 & 0 & 0 & 0 \\
    0 & 1 & 0 & 0 & 0 & 0 & 0 & 0 & 0 \\
    0 & 0 & 1 & 0 & 0 & 0 & 0 & 0 & 0 \\
    0 & 0 & 1 & 0 & 0 & 0 & 0 & 0 & 0 \\
    0 & 0 & 0 & 0 & 0 & 0 & 1 & 0 & 0 \\
    0 & 0 & 0 & 0 & 0 & 0 & 1 & 0 & 0 \\
    0 & 0 & 0 & 0 & 0 & 0 & 0 & 1 & 0 \\
    0 & 0 & 0 & 0 & 0 & 0 & 0 & 1 & 0 \\
    1 & 0 & 0 & 0 & 0 & 0 & 0 & 0 & 0 \\
    0 & 0 & 0 & 1 & 0 & 0 & 0 & 0 & 0 \\
    0 & 0 & 0 & 0 & 0 & 1 & 0 & 0 & 0 \\
    0 & 0 & 0 & 0 & 0 & 1 & 0 & 0 & 1 \\
\end{bmatrix}
\]
\end{proof}

The generator matrix $L_m$ corresponds to only the two middle of the four $\Gamma$ lattice sites interacting, and can be expressed as a permutation of $Id_{16 \times 16} \otimes L_p^{(2)}$ as shown in Lemma \ref{MiddleRearrange}.

\begin{lemma}
    \label{MiddleRearrange}
    There exists a permutation matrix $J$ such that $L_m = J(\Id_{16 \times 16} \otimes L_p^{(2)})J^{-1}$. In other words, the generator matrix $L_m$ can be reordered into $\Id_{16 \times 16} \otimes L_p^{(2)}$.
\end{lemma}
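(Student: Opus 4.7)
The plan is to treat this as a direct tensor-factor rearrangement. Let $U = \C^4$ be the one-site state space with basis $\{e_0,e_1,e_2,e_3\}$, so the four-site state space is $U^{\otimes 4}$, and the two-site state space is identified with $U \otimes U \cong \C^{16}$. By definition $L_m = L^{2,3}$ acts nontrivially only on sites $2$ and $3$, whereas under the identification $\C^{16} \cong U \otimes U$ the operator $\Id_{16 \times 16} \otimes L_p^{(2)}$ acts nontrivially on sites $3$ and $4$. Writing $L_p^{(2)} = \sum_k A_k \otimes B_k$ as a formal tensor expansion on $U \otimes U$, the two operators may be written compactly as
\[
L_m = \sum_k \Id_U \otimes A_k \otimes B_k \otimes \Id_U \quad\text{and}\quad \Id_{16 \times 16} \otimes L_p^{(2)} = \sum_k \Id_U \otimes \Id_U \otimes A_k \otimes B_k,
\]
so the lemma reduces to producing a permutation of the four tensor slots that sends the two nontrivial positions $(3,4)$ to positions $(2,3)$ while fixing position $1$.

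First I would introduce $J$ as the linear operator implementing the site-permutation $\sigma = (2\ 4\ 3) \in S_4$, given explicitly on basis vectors by
\[
J(e_a \otimes e_b \otimes e_c \otimes e_d) = e_a \otimes e_c \otimes e_d \otimes e_b.
\]
Because $J$ sends the standard basis of $U^{\otimes 4}$ bijectively to itself, it is automatically a permutation matrix. Next I would record the standard identity that for any single-site operators $X_1,X_2,X_3,X_4$,
\[
J \, (X_1 \otimes X_2 \otimes X_3 \otimes X_4) \, J^{-1} \;=\; X_{\sigma^{-1}(1)} \otimes X_{\sigma^{-1}(2)} \otimes X_{\sigma^{-1}(3)} \otimes X_{\sigma^{-1}(4)} \;=\; X_1 \otimes X_3 \otimes X_4 \otimes X_2,
\]
which one checks by a short computation on basis vectors using the formula for $J$. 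Applying this termwise to $\Id_U \otimes \Id_U \otimes A_k \otimes B_k$ produces $\Id_U \otimes A_k \otimes B_k \otimes \Id_U$, and summing over $k$ yields exactly $L_m$.

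The main obstacle is bookkeeping rather than genuine mathematical content: one must be careful to distinguish the action of $\sigma$ on site indices from the induced action on tensor operators (which uses $\sigma^{-1}$), and one must confirm that the identification $\C^{16} \cong U \otimes U$ sitting inside $\Id_{16 \times 16} \otimes L_p^{(2)}$ is compatible with the ordered basis $\Omega$ of Section \ref{TypeDASEPIntro} used to write $L_p^{(2)}$ in the first place. If the basis orderings for $U^{\otimes 4}$ used for $L_m$ and for $\Id_{16 \times 16} \otimes L_p^{(2)}$ differ by a further relabeling, one simply composes $J$ with that additional permutation matrix; the composite remains a permutation matrix, so the conclusion is unaffected.
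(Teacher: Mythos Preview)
Your proof is correct and follows essentially the same approach as the paper: both recognize that $L_m$ acts as $L_p^{(2)}$ on the middle two tensor factors with the outer two idle, so a permutation of tensor slots (equivalently, of basis states) exhibits the claimed similarity. The paper phrases this as the state-level bijection $(x_1,x_2,x_3,x_4)\mapsto((x_1,x_4),(x_2,x_3))$ rather than your $3$-cycle $\sigma=(2\ 4\ 3)$ on tensor positions, but the content is identical, and your closing caveat about composing with an additional basis-reordering permutation is exactly what absorbs the discrepancy between the standard Kronecker basis and the paper's ordered basis $\Omega$.
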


\begin{proof}
    \label{MiddleRearrangeProof}
    For a short proof by existence, consider the diagonal ordering of states of four $\Gamma$ lattice sites listed in \cref{PermutationsAppendix}, we permute $L_m$ according to these states and get $\Id_{16 \times 16} \otimes L_p^{(2)}$.
    
    For a more intuitive proof, consider that any state in the Type D ASEP on four $\Gamma$ lattice sites can be represented as $(x_1,x_2,x_3,x_4)$ where $x_1,x_2,x_3,x_4 \in \{0,1,2,3\}$. Letting only the middle $\Gamma$ sites interact corresponds to $x_1$ and $x_4$ being held constant, while the interaction of $x_2$ and $x_3$ is governed by $L_p^{(2)}$. Now consider the bijective function $h:(x_1,x_2,x_3,x_4) \mapsto \Big((x_1,x_4),(x_2,x_3)\Big)$. Since $(x_1,x_4)$ is held constant while $(x_2,x_3)$ interact, it follows that $L_m$ can be permuted into $\bigoplus_{i=1}^{16} L_p^{(2)}$, since there are 16 possibilities for $(x_1,x_4)$. Finally, it follows that $\bigoplus_{i=1}^{16} L_p^{(2)} = \Id_{16 \times 16} \otimes L_p^{(2)}$. This proves that $L_m$ can be permuted into $\Id_{16 \times 16} \otimes L_p^{(2)}$.
\end{proof}

Finally we prove that the fused process only depends on the middle interacting particles.
\MiddleSwap*
\begin{proof}
    \label{MiddleSwapProof}
    We can express $L_p = L^{1,2}+L^{2,3}+L^{3,4}+...+L^{K-1,K}$. Now we show that,
    \[L^{(K)}_Q \coloneqq \Lambda^{(K)} L^{(K)}_p \Phi^{(K)} = \Lambda^{(K)} (L^{1,2}+L^{2,3}+L^{3,4}+...+L^{K-1,K}) \Phi^{(K)}\]
    \[= \sum_{i=1}^{K-1}(\Lambda^{(K)}L^{i,i+1}\Phi^{(K)})\]
    \[= \sum_{i=1}^{K/2} (\Lambda^{(K)}L^{2i-1,2i}\Phi^{(K)}) + \sum_{i=1}^{\frac{K}{2}-1} (\Lambda^{(K)}L^{2i,2i+1}\Phi^{(K)})\]
    \[= \sum_{i=1}^{K/2} \Big((\bigotimes_{k=1}^{K/2} \Lambda^{(2)}))
    L^{2i-1,2i}(\bigotimes_{i=1}^{K/2} \Phi^{(2)})\Big) + \sum_{i=1}^{\frac{K}{2}-1} (\Lambda^{(K)}L^{2i,2i+1}\Phi^{(K)})\]
    \[= 0+\sum_{i=1}^{\frac{K}{2}-1} (\Lambda^{(K)}L^{2i,2i+1}\Phi^{(K)}) \hspace{2em} (*)\]
    \[= \sum_{i=1}^{\frac{K}{2}-1} (\Lambda^{(K)}L^{2i,2i+1}\Phi^{(K)})\]
    \[= \Lambda^{(K)}(\sum_{i=1}^{\frac{K}{2}-1} L^{2i,2i+1})\Phi^{(K)}\]
    \[= \Lambda^{(K)}L_m^{(K)}\Phi^{(K)}\]
    The third equality applies $\Lambda^{(2)}$ and $\Phi^{(2)}$ to each pair of lattice sites $2i-1$ and $2i$ where $i \in \{1,...,K/2\}$. Equation $(*)$ follows because
    for $i \in \{1,...,K/2\}$, the lattice sites $2i-1$ and $2i$ are fused down by $\Phi^{(2)}$, undergo the Type D ASEP, then are fissioned by $\Lambda^{(2)}$. The fused state of the Type D ASEP on the $2i-1$ and $2i$ $\Gamma$-lattice sites which are being fused cannot change because the stochastically fused state is completely dependent on the number of type 1 and type 2 particles on those two $\Gamma$ lattice sites. This cannot change when only the $2i-1$ and $2i$ lattice sites interact.  
\end{proof}
\DiagonalBlockOrdering*
\begin{proof}
    \label{DiagonalBlockOrderingProof}
    Permute $L_Q$ according to the diagonal ordering for two $\gamma$ given in \cref{PermutationsAppendix}. This is a permutation of $L_Q$ according to its communicating classes.
\end{proof}

The states defining the permutation matrix $C$, which itself defines the permutation of states from $L_Q$ to the block diagonal $L^D_Q$, are listed in \cref{PermutationsAppendix}. Below are the four three-state communicating classes of $L_Q$: \[\{\la 1,1 \ra, \la 0,11 \ra, \la 11,0 \ra\}, \{\la 2,2 \ra, \la 0,22 \ra, \la 22,0 \ra\}, \{\la 31,31 \ra, \la 11,33 \ra, \la 33,11 \ra\}, \{\la 32,32 \ra, \la 22,33 \ra, \la 33,22 \ra\}\].
\[\mathcal{L}_3 = \ \begin{bmatrix}
    * & \frac{q^2+q^{4n}}{(q^6+2q^4+q^2)q^{2n}} & \frac{q^6+q^{4n+4}}{(q^4+2q^2+1)q^{2n}} \\
    \frac{q^2+q^{4n}}{q^{2n}} & * & 0 \\
    \frac{q^2+q^{4n}}{q^{2n+2}} & 0 & *
\end{bmatrix}\]
The remaining $\mathcal{L}_9,\mathcal{L}_6,\mathcal{L}_4,\mathcal{L}_2$ matrices are outlined in section \ref{Presenting_LQ}.

\subsection{Taking n to Infinity} \label{subsection--InfinityProofs}
\begin{restatable}[]{lemma}{commutativityProposition}
\label{commutativityProposition}
We have the following limits: $\lim_{n \to \infty} L_Q = \lim_{n \to \infty} (\Lambda L_m \Phi) = \Lambda (\lim_{n \to \infty} L_m) \Phi$. In other words, the limit as $n \to \infty$ is commutative on $L_Q$.
\end{restatable}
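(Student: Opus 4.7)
The plan is to combine Theorem \ref{MiddleSwap} (via its corollary) with the observation that the fission and fusion maps do not depend on the drift parameter $n$. The first equality $\lim_{n \to \infty} L_Q = \lim_{n \to \infty}(\Lambda L_m \Phi)$ is immediate from the corollary $L_Q = \Lambda L_m \Phi$, so the content of the lemma resides in the second equality, which asserts that the limit commutes past $\Lambda$ and $\Phi$.

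For the second equality, I would first verify that neither $\Lambda$ nor $\Phi$ depends on $n$. This is transparent from the constructions in Lemmas \ref{LambdaTwo} and \ref{PhiTwo}: the entries of $\Lambda^{(2)}$ are rational functions of $q$ alone (inherited from the reversibility weights $G^2(\omega) = \prod_{x \in A_1(\omega)} q^{-2x}\prod_{x \in A_2(\omega)} q^{-2x}$), and $\Phi^{(2)}$ has entries in $\{0,1\}$. Taking tensor products $\Lambda = \Lambda^{(2)}\otimes\Lambda^{(2)}$ and $\Phi = \Phi^{(2)}\otimes\Phi^{(2)}$ preserves this independence, so the $n$-dependence of $L_Q$ is carried entirely by $L_m$.

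Since $\Lambda$ and $\Phi$ are fixed finite matrices in $n$, left- and right-multiplication by them amounts, for any fixed pair of indices, to a finite linear combination of entries of $L_m$ with $n$-independent coefficients. Concretely,
\[
(\Lambda L_m \Phi)_{ij} \;=\; \sum_{k,\ell} \Lambda_{ik}\,(L_m)_{k\ell}\,\Phi_{\ell j},
\]
and because the sum is finite and the outer factors are constants in $n$, the limit passes inside the sum:
\[
\lim_{n\to\infty}(\Lambda L_m \Phi)_{ij} \;=\; \sum_{k,\ell} \Lambda_{ik}\Bigl(\lim_{n\to\infty}(L_m)_{k\ell}\Bigr)\Phi_{\ell j} \;=\; \bigl(\Lambda (\lim_{n\to\infty} L_m) \Phi\bigr)_{ij},
\]
whenever the entrywise limits of $L_m$ exist. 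Repeating this for every $(i,j)$ establishes the matrix identity.

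There is no real obstacle here; the only subtlety is that the raw entries of $L_m$ blow up as $n\to\infty$ for $q\ne 1$, so the lemma is used in the scaled sense invoked by Theorem \ref{theorem}: one multiplies both sides by the common scalar $q^{\mp 2n}$, which, being a scalar, also commutes with $\Lambda$ and $\Phi$, so the argument above applies verbatim to $q^{\mp 2n} L_m$ whose entrywise limits are genuinely finite. This justification is precisely what is needed to reduce the study of $\lim_{n\to\infty}(q^{\mp 2n} L_Q)$ to the (smaller and more tractable) study of $\lim_{n\to\infty}(q^{\mp 2n} L_m)$ in the subsequent proof of Theorem \ref{theorem}.
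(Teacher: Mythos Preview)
Your proposal is correct and follows the same approach as the paper: the paper's proof consists of a single sentence noting that $\Lambda$ and $\Phi$ do not depend on $n$, and you have expanded this into a careful entrywise argument with an additional remark on the scaled version used downstream. The extra detail you provide is sound and makes explicit what the paper leaves implicit.
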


\begin{proof} \label{commutativity_proof}
We remind the reader that $\Lambda$ and $\Phi$ do not depend on n. Therefore,
$$\lim_{n \to \infty} L_Q = \lim_{n \to \infty} (\Lambda L_m \Phi) = \Lambda (\lim_{n \to \infty} L_m) \Phi$$
\end{proof}

\begin{restatable}[]{proposition}{finiteProposition}
\label{finiteProposition}
The following are finite: $\lim_{n \to \infty} (q^{-2n} L_m)$ for $q >1$ and $\lim_{n \to \infty} (q^{2n} L_m)$ for $0 < q < 1$.
\end{restatable}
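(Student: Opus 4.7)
The plan is to reduce the statement about $L_m$ to a statement about $L_p^{(2)}$ and then to inspect the entries of $L_p^{(2)}$ directly. By Lemma \ref{MiddleRearrange}, there exists a permutation matrix $J$, independent of $n$, such that $L_m = J(\Id_{16\times 16} \otimes L_p^{(2)})J^{-1}$. Hence $q^{-2n}L_m$ and $q^{2n}L_m$ are the same permutations of $\Id_{16\times 16}\otimes(q^{-2n}L_p^{(2)})$ and $\Id_{16\times 16}\otimes(q^{2n}L_p^{(2)})$ respectively, so finiteness of each limit for $L_m$ is equivalent to finiteness of the corresponding limit for $L_p^{(2)}$.

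Next I would exploit the block decomposition $L_p^{(2)} = L_1 \oplus \bigoplus_{i=1}^4 L_2 \oplus \bigoplus_{i=1}^4 [0]$ from Section \ref{TypeDASEPIntro} and handle the blocks separately. The zero blocks are trivial. For each $L_2$, the off-diagonal entries $(q^{1-2n}+q^{2n-1})/q$ and $q(q^{1-2n}+q^{2n-1})$ are $\mathbb{Z}[q,q^{-1}]$-linear combinations of $q^{2n}$ and $q^{-2n}$, and the diagonal entries are minus the row sums. Multiplying by $q^{-2n}$ (or $q^{2n}$) therefore produces entries whose only $n$-dependence is through $q^{-4n}$ and bounded constants, all of which admit finite limits in the appropriate regime.

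For $L_1$, I would expand each entry using the identity $(q^{n-1}-q^{1-n})^2 = q^{2n-2} - 2 + q^{2-2n}$. Every off-diagonal entry then becomes a $\mathbb{Z}[q,q^{-1}]$-linear combination of $q^{2n}$, $q^{-2n}$, and $1$. Multiplying by $q^{-2n}$ therefore replaces $q^{2n}$-terms by bounded constants and sends $q^{-2n}$ and constant terms to $0$ as $n\to\infty$ for $q>1$; the symmetric argument works for $0<q<1$ upon multiplying by $q^{2n}$. Diagonal entries inherit the same growth bound since they are negatives of row sums of finitely many such terms.

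The step requiring the most care is the $L_1$ bookkeeping: one must confirm that no entry of the displayed $4\times 4$ matrix hides a factor $q^{kn}$ with $|k|>2$. Inspection of the explicit matrix in Section \ref{TypeDASEPIntro} shows that every exponent involving $n$ is of the form $\pm 2n$ plus an $n$-independent integer, so this verification is routine but unavoidable; once it is recorded, the limits of $q^{\mp 2n}L_p^{(2)}$ exist entrywise in the two regimes, proving the proposition.
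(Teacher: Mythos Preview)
Your proposal is correct and follows essentially the same approach as the paper: reduce $L_m$ to $L_p^{(2)}$ via Lemma~\ref{MiddleRearrange} and then verify the finiteness claim for $L_p^{(2)}$ directly. The paper's proof is terser---it simply says ``observing $L_p^{(2)}$, we see that multiplying it by $q^{-2n}$ then taking the limit as $n \to \infty$ is finite''---whereas you spell out the block-by-block inspection and the check that all $n$-dependent exponents are $\pm 2n$, which is exactly the content the paper leaves implicit.
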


\begin{proof}
\label{finite_proof}
Take $q >1$. Observing $L^{(2)}_{p}$, we see that multiplying it by $q^{-2n}$ then taking the limit as $n \to \infty$ is finite. The same applies to multiplying $L^{(2)}_{p}$ by $q^{2n}$ for $0 < q< 1$.  Finally, use \cref{MiddleRearrange} and notice that
$$q^{\pm 2n} L_m = q^{\pm 2n}J(\Id_{16 \times 16} \otimes L^{(2)}_p)J^{-1} = J(\Id_{16 \times 16} \otimes (q^{\pm 2n} L^{(2)}_p))J^{-1}$$ for respective boundaries on $q$, by the properties of the Kronecker product and standard matrix multiplication. Noting that $J$ and $J^{-1}$ do not depend on n, we take the limit as $n \to \infty$ of both sides and see that $\lim_{n \to \infty} (q^{\pm 2n} L_m) = \lim_{n \to \infty} (J(\Id_{16 \times 16} \otimes (q^{\pm 2n} L^{(2)}_p))J^{-1}) = J(\Id_{16 \times 16} \otimes (\lim_{n \to \infty} q^{\pm 2n} L^{(2)}_p))J^{-1}$, where the right (and therefore left) side of the equation is finite since $\lim_{n \to \infty} (q^{\pm 2n} L^{(2)}_p)$ is finite for respective boundaries on $q$. This implies that the rate matrix grows on the order of $q^{2n}$ irrespective of drift direction.
\end{proof}

Faster drift speeds in the Type D ASEP correspond to faster drift speeds in the fused Type D ASEP process. Recall that $L_m$ is the generator for the Type D ASEP on four lattice sites where only the two middle $\Gamma$ sites interact, and note that $L_m$ depends on $n$. Thus, increasing values of $n$ will indicate faster drift speeds for the Type D ASEP on four $\Gamma$ lattice sites where only the two middle $\Gamma$ sites interact. We see that $\Lambda$ and $\Phi$ do not depend on $n$, so faster drift speeds in $L_m$ will lead to faster drift speeds in $L_Q$, the fusion matrix. This is proved rigorously in Theorem \ref{theorem}.

\finiteLimitTheorem*
\begin{proof}
\label{finite_theorem_proof}
Note that Proposition \ref{finiteProposition} suggests that multiplying $q^{\pm 2n} L_m$ (for respective boundaries on $q$) by any constant preserves the finite limit, since the constant can be moved outside the limit. Note that $$q^{\pm 2n} L_Q = q^{\pm 2n} \Lambda L_m \Phi = \Lambda q^{\pm 2n} L_m \Phi$$ by the commutativity of scalar multiplication. Taking the limit of both sides of the equation, we have $$\lim_{n \to \infty} (q^{\pm 2n} L_Q) = \lim_{n \to \infty} (\Lambda q^{\pm 2n} L_m \Phi) = \Lambda \lim_{n \to \infty} (q^{\pm 2n} L_m) \Phi$$ by Lemma \ref{commutativityProposition}. We know that $\lim_{n \to \infty} (q^{\pm 2n} L_m)$ is finite for respective values of $q$, so $\lim_{n \to \infty} q^{\pm 2n} L_Q$ must also be finite for such values of $q$.
\end{proof}

We now consider the eigenvalues of the block generator matrices $\mathcal{L}_i$ for the communicating classes. Due to the difficulty of displaying the eigenvalues for general $q$ and $n$, we note that for $n=2$, the eigenvalues simplify nicely and are used to find a dual for $L_Q$ when $n=2$ in \cref{DualityAvenueStatements}.

\begin{restatable}[]{lemma}{Leigenvalues}
\label{Leigenvalues}
The eigenvalues for $\mathcal{L}_9, \mathcal{L}_6, \mathcal{L}_4, \mathcal{L}_3$, and $\mathcal{L}_2$ when $n = 2$ are the eigenvalues of $L_Q^D$ for $n = 2$.
\end{restatable}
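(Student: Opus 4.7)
The plan is to exploit the block-diagonal structure of $L_Q^D$ established in Proposition \ref{DiagonalBlockOrdering} and reduce the spectral problem to the spectra of the five distinct block types (plus the trivial zero blocks). Since
$$L_Q^D = \mathcal{L}_9 \oplus \bigoplus_{i=1}^4 \mathcal{L}_6 \oplus \bigoplus_{i=1}^4 \mathcal{L}_4 \oplus \bigoplus_{i=1}^4 \mathcal{L}_3 \oplus \bigoplus_{i=1}^8 \mathcal{L}_2 \oplus \bigoplus_{i=1}^4 [0],$$
the characteristic polynomial factors as the product of the characteristic polynomials of the summands. Therefore the spectrum (as a multiset) of $L_Q^D$ equals the multiset union of the block spectra, with eigenvalues of $\mathcal{L}_6$ contributing with multiplicity four, those of $\mathcal{L}_3$ with multiplicity four, those of $\mathcal{L}_2$ with multiplicity eight, and the value $0$ receiving the four additional copies from the $1 \times 1$ zero blocks.

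With this reduction in hand, it remains to compute the eigenvalues of each block at $n=2$ and verify they match the list claimed. The $2 \times 2$ block $\mathcal{L}_2$ is immediate from trace and determinant, and the $3 \times 3$ block $\mathcal{L}_3$ (written explicitly at the end of Section \ref{ProduceLQ}) has a characteristic polynomial that factors linearly at $n=2$ after the stationary measure from Lemma \ref{allStationary} is used to conjugate it into symmetric form. The $4 \times 4$ block $\mathcal{L}_4$ and the $6 \times 6$ block $\mathcal{L}_6$ can be handled similarly: substituting $n=2$ collapses all $q^{2n}$ factors to $q^4$, after which the matrices become rational in a single variable $q$ and their characteristic polynomials factor over $\mathbb{Q}(q)$.

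The main obstacle is the $9 \times 9$ block $\mathcal{L}_9$, whose characteristic polynomial at general $n$ does not admit a clean factorization. The approach I would take is to first symmetrize $\mathcal{L}_9$ using the diagonal matrix $D$ of square roots of the reversible measure given in Lemma \ref{allStationary} for the $9$-state class; then $D^{1/2}\mathcal{L}_9 D^{-1/2}$ is self-adjoint, its eigenvalues are real, and its structure often reveals obvious invariant subspaces (for instance symmetry under exchange of the two lattice sites). Decomposing $\mathcal{L}_9$ along these invariant subspaces should reduce the problem to blocks of size at most four or five, whose characteristic polynomials can then be factored by hand or computer algebra at $n=2$. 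Verifying that the resulting union of block eigenvalues equals the list in Lemma \ref{Leigenvalues} then completes the proof.
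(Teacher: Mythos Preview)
Your opening reduction via the block-diagonal structure is exactly what the paper uses, and it is really all that the statement as written requires: since $L_Q^D$ is \emph{defined} as the direct sum of the $\mathcal{L}_i$ and zero blocks, its spectrum is automatically the multiset union of the block spectra (and the zero eigenvalue from the $1\times 1$ blocks is already present in every $\mathcal{L}_i$, since each is a generator matrix).

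Where you diverge from the paper is in the computation of the individual block eigenvalues. The paper's ``proof'' is simply a direct listing of the eigenvalues at $n=2$, obtained by brute-force computation (the Appendix notes that SymPy was used for verification and that the matrices are small enough to do by hand). There is no symmetrization via the reversible measure, no search for invariant subspaces, and no structured factorization of the characteristic polynomial of $\mathcal{L}_9$; the authors just compute the characteristic polynomials of matrices of size at most $9\times 9$ with rational-function entries in a single variable $q$ and record the roots. Your proposed route through $D^{1/2}\mathcal{L}_9 D^{-1/2}$ and site-swap symmetry is a perfectly sound and more conceptual alternative that would likely explain \emph{why} the eigenvalues for $\mathcal{L}_9$ at $n=2$ come out so clean, but it is considerably more work than the paper invests and is not needed to establish the lemma.
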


\begin{proof}
We list the eigenvalues for the generator $\mathcal{L}_9$ of the 9-state communicating class:

\[
0, \frac{-q^8 - q^4 - 1}{q^4},\,
\frac{-q^8 + q^6 - 2q^4 + q^2 - 1}{q^4},\,
\frac{-q^{12} - 4q^{10} - 2q^8 - 6q^6 - 2q^4 - 4q^2 - 1}{q^8 + 2q^6 + q^4},\,
\]
\[
\frac{-q^{12} - 4q^{10} + q^8 - 8q^6 + q^4 - 4q^2 - 1}{q^8 + 2q^6 + q^4}
\]
The eigenvalues for the fusion generator $\mathcal{L}_6$ of the 6-state communicating class:
\[
0, \frac{-q^8 - q^4 - 1}{q^4}, \,
\frac{-q^8 + q^6 - 2q^4 + q^2 - 1}{q^4}, \,
\]
\[
\frac{-q^{12} - 4q^{10} - 2q^8 - 6q^6 - 2q^4 - 4q^2 - 1}{q^8 + 2q^6 + q^4}, \,
\frac{-q^{12} - 4q^{10} + q^8 - 8q^6 + q^4 - 4q^2 - 1}{q^8 + 2q^6 + q^4}
\]
The eigenvalues for the fusion generator $\mathcal{L}_4$ of the 4-state communicating class:
\[
0, \frac{-q^8 + q^6 - 2q^4 + q^2 - 1}{q^4}, \,
\frac{-q^{12} - 4q^{10} + q^8 - 8q^6 + q^4 - 4q^2 - 1}{q^8 + 2q^6 + q^4}
\]
The eigenvalues for the fusion generator $\mathcal{L}_3$ of the 3-state communicating class:
\[
0, \frac{-q^8 - q^4 - 1}{q^4}, \,
\frac{-q^8 + q^6 - 2q^4 + q^2 - 1}{q^4}
\]
The eigenvalues for the fusion generator $\mathcal{L}_2$ of the 2-state communicating class:
\[
0, \frac{-q^8 + q^6 - 2q^4 + q^2 - 1}{q^4}
\]
\end{proof}

\SpectralStatement*
\begin{proof}
\label{SpectralStatement_proof}
Use the code in \cref{section--Appendix} to directly calculate the eigenvalues for $\mathcal{L}_2$. Then take the absolute value of the second largest eigenvalue of each generator to obtain the spectral gap for $\mathcal{L}_2$.
\end{proof}

We remind the reader that the spectral gap is the absolute value of the least-negative eigenvalue of the generator matrix. Also of importance is that Type D ASEP particles drift to the right for $0<q<1$ and to the left for $q > 1$. For $q > 1$ and $0 < q < 1$, notice that the spectral gap $|\lambda_{\mathcal{L}_2}|$ goes to infinity as $n$ goes to infinity, meaning the relaxation times $\frac{1}{\lambda_{\mathcal{L}_2}}$ goes to zero. This implies that as the drift speed of the Type D ASEP goes to infinity, the time to convergence to the stationary distribution approaches zero for the 2-state communicating classes. The authors suspect the same is true for the 3, 4, 6, and 9-state communicating classes.

Note that, the eigenvalues of $q^{\pm 2n} \mathcal{L}_2$ are those of $\mathcal{L}_2$ multiplied by $q^{\pm 2n}$. The limit as $n$ goes to infinity of the spectral gap of $q^{-2n} \mathcal{L}_2$ for $q > 1$ is 1. The limit as $n$ goes to infinity of the spectral gap of $q^{2n} \mathcal{L}_2$ for $0 < q < 1$ is $\frac{q^4 + 1}{q^2 + 1}$. Thus, the relaxation time for generators for the 2-state communicating class for $q > 1$ is 1, and the relaxation time for $0 < q < 1$ is $\frac{q^2 + 1}{q^4 + 1}$. This result suggests that the time to convergence to the stationary distribution increases (relative to the time to convergence in the previous paragraph) as $n$ goes to infinity for the 2-state communicating class. The authors suspect the same is true for the 3, 4, 6, and 9-state communicating classes.

\begin{remark}
As $n$ goes to infinity, some of the jump rates in the (time-rescaled) fused Type D ASEP converge to $0$, which can be seen by taking the limit as $n \to \infty$ of entries in the fusion generator. Thus, as $n$ approaches infinity, the fused Type D ASEP (where $\delta = 0$ and time is rescaled by a factor of $q^{\pm 2n}$ for respective values of $q$) degenerates to the usual ASEP.   
\end{remark}

\subsection{Reversible Measures}
\allStationary*
\begin{proof} \label{stationaryProof}
For communicating class $i$ with generator matrix $\mathcal{L}_{i}$, a reversible measure for the class is represented by the left eigenvector $\pi$ corresponding to the eigenvalue zero for the generator matrix, ie. $\pi L_{i} = 0$. Through standard computation, one can see that a left eigenvector exists for each communicating class. Using the q-deformed integer notation $[n]_q = \frac{q^n - q^{-n}}{q-q^{-1}}$, we see that $[2]_q = q+q^{-1}$ so by substitution and basic arithmetic, the eigenvectors can be expressed as in the proposition.

Moreover, the normalization of each eigenvector is a stationary distribution for its corresponding communicating class.

\end{proof}

\generalizeStationary*
\begin{proof} \label{generalizeStationaryProof}
Recall that we can write a particle configuration $A$ as $ (A_1, A_2)$ where $A_1$ is the position of particles of class 1 and $A_2$ is the position of particles of class 2. Then $\pi(A_1)$ is a reversible measure from Lemma \ref{allStationary} for a state with no Type 2 particles and with Type 1 particles in the positions given by $A_1$. For example, take the state $\langle 3,1 \rangle$ on two $\gamma$ sites. Then $A_1 = \langle 1,1 \rangle$ and $A_2 = \langle 2,0 \rangle$. 

By simple calculation, one sees that $\pi(A) = \pi(A_1)\pi(A_2)$. Continuing with the previous example, we have $$\frac{ q^6(q^2+1)^2}{q^{12} + q^8 + q^2(q^2+1)^2+q^6(q^2+1)^2+ q^4 +1} = \pi(\langle 3,1 \rangle) = \pi(\langle 1,1 \rangle) \pi(\langle 2,0 \rangle) = \frac{q^2(q^2+1)^2}{q^8 + q^2(q^2+1)^2 + 1} \times \frac{q^4}{q^4 + 1}$$

Moreover, for $N$ sites, let
$$\pi^{(N)}(\langle a_1, \ldots a_N \rangle) = \prod_{i < j} \pi(\langle a_i, a_j \rangle)$$ 

Then by Theorem 2.1 of \cite{kuan2024qexchangeablemeasurestransformationsinteracting}, $\pi^{(N)}$ must be a reversible measure for generator $L$ on $N$ sites. Furthermore, it must be the product measure in (16) of \cite{Carinci2014} for $j = 1$ and some $\alpha$. 
\end{proof}

\subsection{Markov Duality}

\Ldiagonalizable*
\begin{proof}
To prove part a, we show that for each communicating class $\mathcal{L}_i$, the matrix of right eigenvectors is invertible for $q > 0$ where $q \neq 1$ (all possible values $q$ can take), and therefore that all the eigenvectors are linearly independent, which implies diagonalizability for $\mathcal{L}_i$. Below we show $\mathcal{P}_2$ is invertible. We list all the other cases in \cref{DiagonalizationAppendix}.
\[
\mathcal{P}_2 = \begin{bmatrix}
    1 & -q^4 \\
    1 & 1
\end{bmatrix}
\]
It follows that $\det(\mathcal{P}_2) = q^4+1 \neq 0$ for $q > 0$. Therefore, $\mathcal{P}_2$ is invertible for $q > 0$, and therefore $\mathcal{L}_2$ is diagonalizable for $q > 0$.

To prove part c, we show that $L_Q^D$ is the direct sum of the diagonalization over all of the communicating classes.
\[\mathcal{P}\mathcal{A}\mathcal{P}^{-1} \oplus \mathcal{Z}\]
\[= \Big(\mathcal{P}_9 \oplus \bigoplus_{i=1}^4 \mathcal{P}_6 \oplus \bigoplus_{i=1}^4 \mathcal{P}_4 \oplus \bigoplus_{i=1}^4 \mathcal{P}_3 \oplus \bigoplus_{i=1}^8 \mathcal{P}_2\Big)\Big(\mathcal{A}_9 \oplus \bigoplus_{i=1}^4 \mathcal{A}_6 \oplus \bigoplus_{i=1}^4 \mathcal{A}_4 \oplus \bigoplus_{i=1}^4 \mathcal{A}_3 \oplus \bigoplus_{i=1}^8 \mathcal{A}_2\Big)\]
\[\Big(\mathcal{P}^{-1}_9 \oplus \bigoplus_{i=1}^4 \mathcal{P}^{-1}_6 \oplus \bigoplus_{i=1}^4 \mathcal{P}^{-1}_4 \oplus \bigoplus_{i=1}^4 \mathcal{P}^{-1}_3 \oplus \bigoplus_{i=1}^8 \mathcal{P}^{-1}_2\Big) \oplus \mathcal{Z} \]
\[= \mathcal{P}_9 \mathcal{A}_9 \mathcal{P}^{-1}_9 \oplus \bigoplus_{i=1}^4 \mathcal{P}_6 \mathcal{A}_6 \mathcal{P}^{-1}_6 \oplus \bigoplus_{i=1}^4 \mathcal{P}_4 \mathcal{A}_4 \mathcal{P}^{-1}_4 \oplus \bigoplus_{i=1}^4 \mathcal{P}_3 \mathcal{A}_3 \mathcal{P}^{-1}_3 \oplus \bigoplus_{i=1}^8 \mathcal{P}_2 \mathcal{A}_2 \mathcal{P}^{-1}_2 \oplus \mathcal{Z}\]
\[= \mathcal{L}_9 \oplus \bigoplus_{i=1}^4 \mathcal{L}_6 \oplus \bigoplus_{i=1}^4 \mathcal{L}_4 \oplus \bigoplus_{i=1}^4 \mathcal{L}_3 \oplus \bigoplus_{i=1}^8 \mathcal{L}_2 \oplus \mathcal{Z}\]
\[= L_Q^D.\]

\end{proof}

\DiagonalStrategy*
\begin{proof}
    \label{DiagonalStrategyProof}
    Our goal is to prove that $L_Q^D \mathcal{D} = \mathcal{D}(L_Q^D)^T$, and we use \cref{Ldiagonalizable}. \\
    Consider,
    \[L_Q^D \mathcal{D} = (\mathcal{P}\mathcal{A}\mathcal{P}^{-1} \oplus \mathcal{Z})(\mathcal{P}\mathcal{P}^T \oplus \mathcal{Z})\]
    \[= (\mathcal{P}\mathcal{A}\mathcal{P}^{-1}\mathcal{P}\mathcal{P}^T) \oplus \mathcal{Z}\mathcal{Z} = \mathcal{P}\mathcal{A}\mathcal{P}^T \oplus \mathcal{Z}\]
    \[= \mathcal{P}(\mathcal{P}^T (\mathcal{P}^T)^{-1})\mathcal{A}\mathcal{P}^T \oplus \mathcal{Z}\]
    \[= (\mathcal{P}\mathcal{P}^T)((\mathcal{P}^{-1})^{T}\mathcal{A}\mathcal{P}^T) \oplus \mathcal{Z}\]
    \[= (\mathcal{P}\mathcal{P}^T)(\mathcal{P} \mathcal{A} \mathcal{P}^{-1})^T \oplus \mathcal{Z}\]
    \[= (\mathcal{P}\mathcal{P}^T \oplus \mathcal{Z})((\mathcal{P} \mathcal{A} \mathcal{P}^{-1})^T \oplus \mathcal{Z})\]
    \[= (\mathcal{P}\mathcal{P}^T \oplus \mathcal{Z})((\mathcal{P} \mathcal{A} \mathcal{P}^{-1}) \oplus \mathcal{Z})^T\]
    \[= \mathcal{D}(L_Q^D)^T.\]
    Note that \[\mathcal{D} = \mathcal{P}\mathcal{P}^T \oplus \mathcal{Z} = \mathcal{P}_9(\mathcal{P}_9)^T \oplus \bigoplus_{i=1}^4 \mathcal{P}_6(\mathcal{P}_6)^T \oplus \bigoplus_{i=1}^4 \mathcal{P}_4(\mathcal{P}_4)^T \oplus \bigoplus_{i=1}^4 \mathcal{P}_3(\mathcal{P}_3)^T \oplus \bigoplus_{i=1}^8 \mathcal{P}_2(\mathcal{P}_2)^T \oplus \mathcal{Z}\]
    and that
    \[\mathcal{P}_2(\mathcal{P}_2)^T = \begin{bmatrix}
    q^8+1 & 1-q^4 \\
    1-q^4 & 2
\end{bmatrix}.\]
    contains non-trivial values, completing the proof.
\end{proof}
The remaining entries of $\mathcal{D}$ can be easily calculated from \cref{DiagonalizationAppendix}. Moreover, since $\mathcal{P}$ is a block diagonal matrix, $\mathcal{P}\mathcal{P}^T$ is also a block diagonal matrix, so $\mathcal{D}$ is also a block diagonal matrix.

\section{Algebraic Proofs} \label{section--AlgProofs}

\subsection{\texorpdfstring{$W$}{W} is an irreducible representation}

We begin by proving Proposition \ref{propositionWtwentydim}. First, we must first establish a few definitions. 
\begin{definition}
    We define $P$ to be the weight lattice, with $\lambda \in P$ called a weight. A vector $v_\lambda$ is called a highest weight vector if $E_1v_\lambda = E_2 v_\lambda = E_3v_\lambda = 0$. Then, a $\U_q(\so_6)$-module $M$ is a highest weight module if $M=\U_q(\so_6)v_\lambda$, and will be denoted $M=V(\lambda)$.
\end{definition}
Since $\U_q(\so_6)$ can be triangularly decomposed into its three subalgebras generated by $\{E_i\}$, $\{q^{H_i}\}$, and $\{F_i\}$ (sometimes known as the Cartan decomposition), $M$ can be computed by multiplying compositions of $F_i$s with a highest weight vector $v_\lambda$  as described in \cite{QuantumBases}. Note that, in the fundamental representation of $\U_q(\so_6)$, each $E_i$ has an empty first column and thus $e_1 \in \R^6$ is a highest weight vector. Denote $\lambda = \lambda_1L_1 + \lambda_2L_2 + \lambda_3L_3$. Since $e_1$ has weight $L_1$, we can use the following equation from \cite{FultonHarris} to determine the dimension of $V(L_1)$:
\begin{align}
\label{dimensionformula}
\text{dim}(V(\lambda)) = \frac{1}{12}\prod_{1 \leq i < j \leq 3} (\lambda_i - \lambda_j +j-i)(\lambda_i+\lambda_j+6-i-j).
\end{align}
Therefore $V(L_1)$ is a 6-dimensional $\U_q(\so_6)$-module, and $V(L_1)$ is thus the entire fundamental representation.

We will prove multiple statements partially using the theory of crystal bases as stated in \cite{QuantumBases}. We use the correspondence between a weight module $V(\lambda)$ and its crystal base $(\mathcal{L}(\lambda), \mathcal{B}(\lambda))$, formed of the crystal lattice and crystal basis respectively. We also use the tensor product rule, Theorem 4.4.1 on pg 83 of \cite{QuantumBases}. 

\pagebreak

\begin{lemma}  \label{VxV-decomp} $V(L_1) \otimes V(L_1) \cong V(2L_1) \oplus V(L_1 + L_2) \oplus V(0)$.
\end{lemma}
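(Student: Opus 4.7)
The plan is to use the theory of crystal bases, as foreshadowed in the paragraph preceding the lemma, and then close the argument with a dimension count. Since $V(L_1)$ is the $6$-dimensional fundamental representation, the crystal base $\mathcal{B}(L_1)$ has $6$ vertices, which we may label by the standard basis vectors $e_1,\ldots,e_6 \in \R^6$. Reading off the eigenvalues of $q^{H_i}$ from Definition \ref{definition-fundrep}, these vertices carry weights $L_1, L_2, L_3, -L_3, -L_2, -L_1$ respectively, and the three families of arrows $\tilde f_i$ can be read directly from the action of each $F_i$ in the same table: e.g.\ $\tilde f_1$ sends $e_1 \mapsto e_2$ and $e_4 \mapsto e_5$, $\tilde f_2$ sends $e_2 \mapsto e_3$ and $e_5 \mapsto e_6$, and $\tilde f_3$ sends $e_2 \mapsto e_6$ and $e_3 \mapsto e_5$.

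Next, I would apply the tensor product rule (Theorem 4.4.1 of \cite{QuantumBases}) to assemble the $36$-vertex crystal $\mathcal{B}(L_1) \otimes \mathcal{B}(L_1)$. Its connected components are in bijection with the irreducible summands of $V(L_1) \otimes V(L_1)$, and the highest-weight vertex of each component (the one killed by every $\tilde e_i$) identifies which $V(\lambda)$ appears. One expects to find three components: one through $e_1 \otimes e_1$ of weight $2L_1$, contributing $V(2L_1)$; one through a weight $L_1+L_2$ vertex such as $e_1 \otimes e_2$, contributing $V(L_1+L_2)$; and a singleton $\U_q(\so_6)$-invariant of weight $0$, which is exactly the $q$-symmetric vector excluded from $W$ in Proposition \ref{propositionWtwentydim} and contributes $V(0)$.

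To rule out any missed summands I would verify the decomposition by a dimension count using the formula \eqref{dimensionformula}. Direct substitution gives $\dim V(2L_1) = 20$, $\dim V(L_1+L_2) = 15$, and $\dim V(0) = 1$, totalling $20 + 15 + 1 = 36 = \dim(V(L_1) \otimes V(L_1))$. Since any additional summand would push the total dimension past $36$, this forces the decomposition to be exactly the one claimed.

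The main obstacle is the bookkeeping step of computing all three families of Kashiwara arrows on the $36$ tensor-product vertices and verifying that the resulting graph splits into precisely three connected components with the claimed highest weights. If this becomes unwieldy I would fall back on the representation-theoretic shortcut: exhibit explicit highest-weight vectors of weights $2L_1$, $L_1+L_2$, and $0$ directly in $V(L_1) \otimes V(L_1)$ (namely $e_1 \otimes e_1$, an appropriate $q$-antisymmetrization of $e_1 \otimes e_2$, and the invariant from Proposition \ref{propositionWtwentydim}), check that each is annihilated by $E_1, E_2, E_3$, and then invoke complete reducibility of finite-dimensional $\U_q(\so_6)$-modules together with the dimension count above to conclude.
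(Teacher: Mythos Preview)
Your approach is essentially the paper's: write down the crystal $\mathcal{B}(L_1)$, apply the tensor product rule of \cite{QuantumBases} to build $\mathcal{B}(L_1)\otimes\mathcal{B}(L_1)$, and read the decomposition off the connected components (the paper draws this explicitly in Figures~\ref{tikz-V} and~\ref{tikz--BigCrystalGraph}). Your added dimension check $20+15+1=36$ and the fallback via explicit highest-weight vectors plus complete reducibility are sound and not in the paper, but they are supplementary rather than a different route.

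One correction to your setup: in the fundamental representation of Definition~\ref{definition-fundrep}, the weights of $e_4,e_5,e_6$ are $-L_1,-L_2,-L_3$ (not $-L_3,-L_2,-L_1$), and correspondingly $\tilde f_1\colon e_5\mapsto e_4$ and $\tilde f_2\colon e_6\mapsto e_5$ (the arrows you wrote for these go the wrong way---check $F_1=E_{2,1}-E_{4,5}$ acting on $e_5$). The paper handles the signs by setting $\bar 2=-e_5$. This does not affect your strategy, but if you carry out the tensor crystal with the wrong arrows you will get the wrong components, so fix the local data before assembling the $36$-vertex graph.
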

\begin{proof}
    First, define $1 = e_1$, $2 = e_2$, $3 = e_3$, $\bar{1} = e_4$, $\bar{2} = -e_5$, $\bar{3} = e_6$. Then, note that $V(L_1)$ has crystal base as shown in Figure \ref{tikz-V}, with arrow index $i$ corresponding to Kashiwara operator $\Tilde{f_i}$. 

    \begin{center}
        \includegraphics[]{./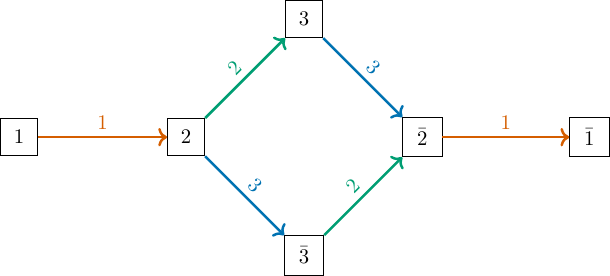}
        \captionof{figure}{Crystal graph of fundamental representation $\U_q(\so_6)$}\label{tikz-V}
    \end{center}

    Then, by the tensor product rule, we can draw the crystal graph of $V \otimes V$ as shown in Figure \ref{tikz--BigCrystalGraph}.

        Thus, by reading off the disjoint connected components in the crystal graph of Figure \ref{tikz--BigCrystalGraph}, we see that
    $$V(L_1) \otimes V(L_1) \cong V(2L_1) \oplus V(L_1 + L_2) \oplus V(0).$$
    Note that, by applying the corresponding $F_i$, we can trace a path of arrows to each vertex in $W$ and obtain the corresponding basis vector of $W$. This process is used to generate the compositions of $F_i$s shown in Lemma \ref{lemma--Wis20dimensional}.
    
\end{proof}
\pagebreak

    \begin{figure}[hbt!] 
        \centering
        \includegraphics[width=0.93\textwidth]{./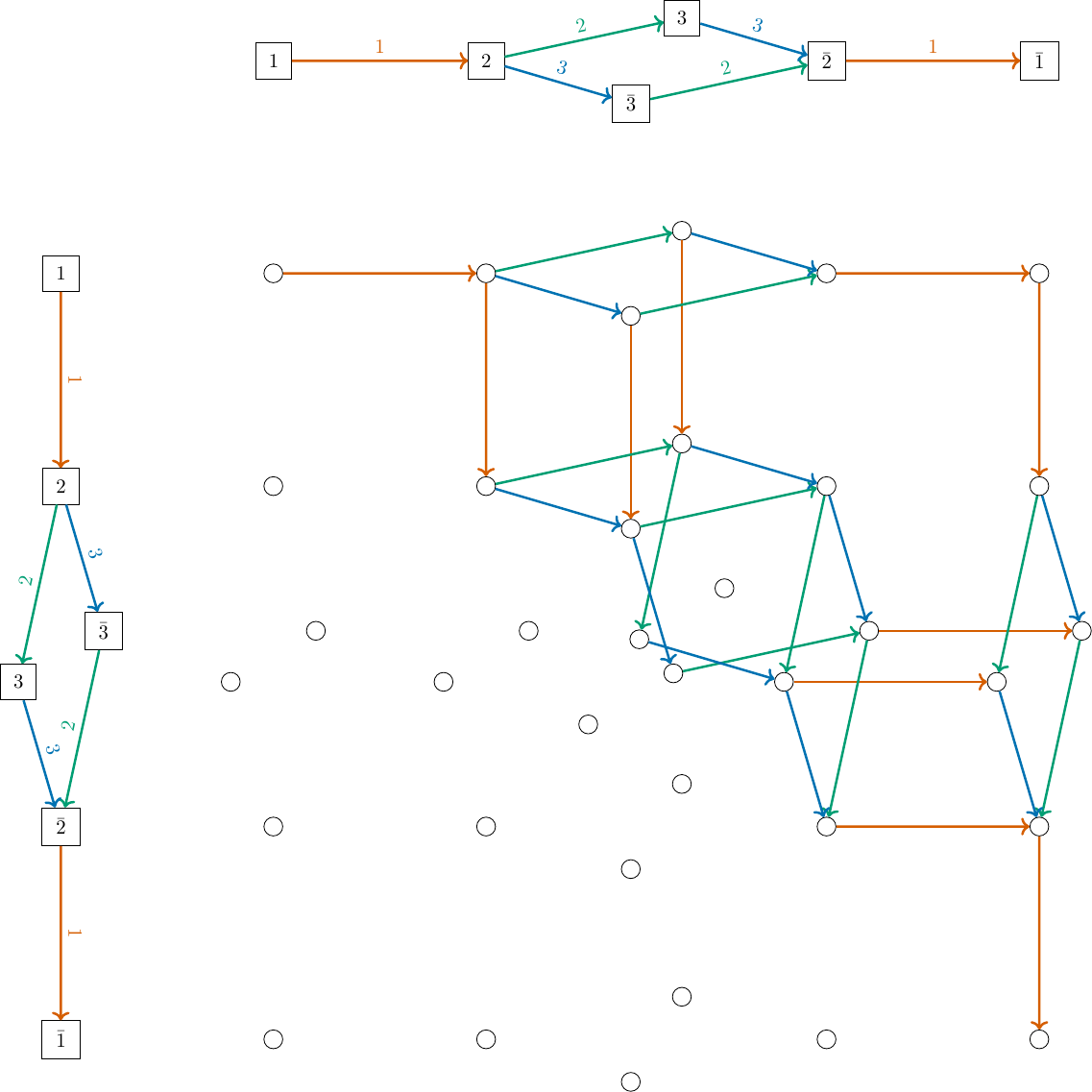}
        \caption{Crystal graph of $V(2L_1)$ in the representation $V \otimes V$}\label{tikz--BigCrystalGraph--JustW}
    \end{figure}
    
\pagebreak
    \begin{figure}[hbt!] 
        \centering
        \includegraphics[width=0.93\textwidth]{./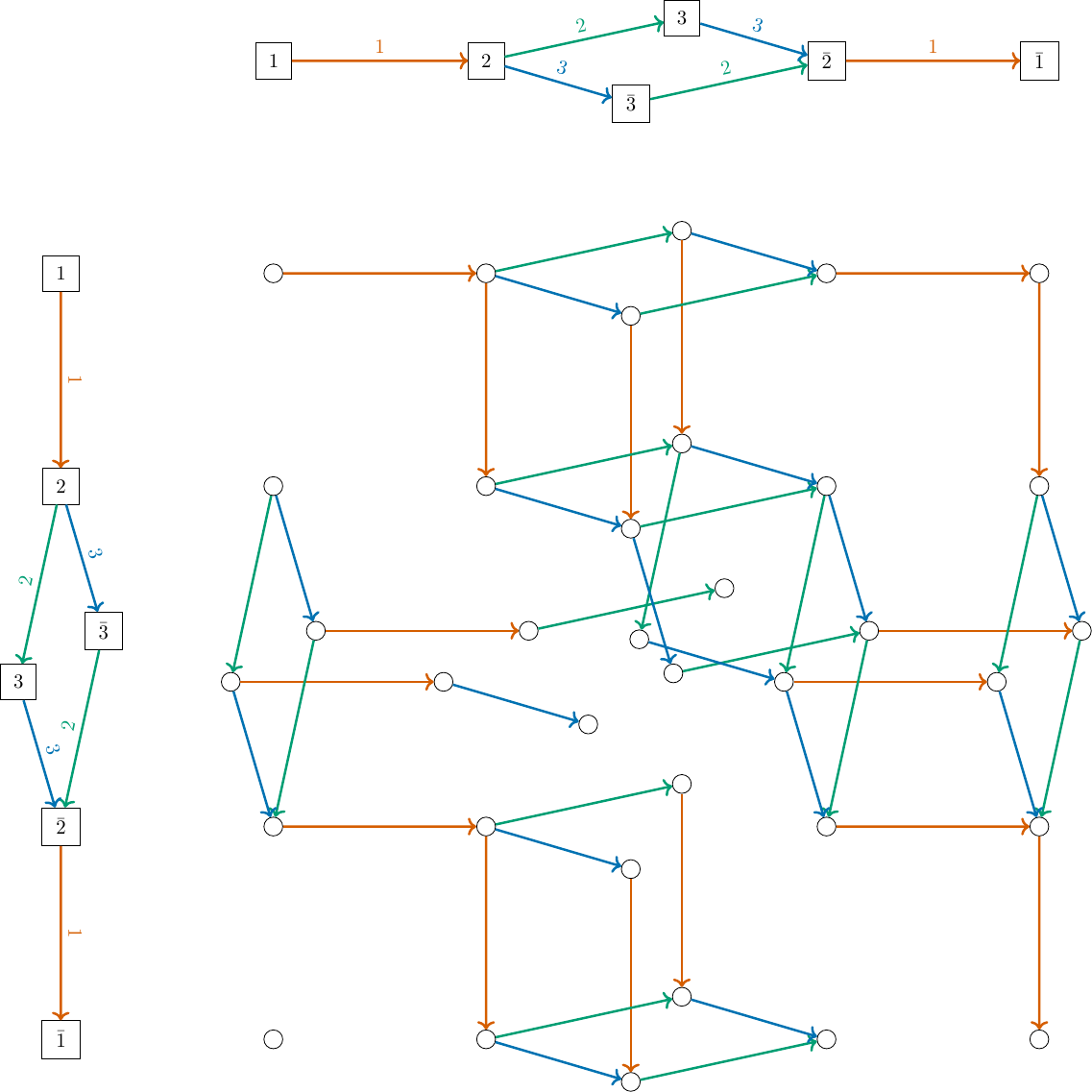}
        \caption{Crystal graph of $\U_q(\so_6)$ in the representation $V \otimes V$}\label{tikz--BigCrystalGraph}
    \end{figure}
\pagebreak

Now, we move on to constructing $V(2L_1)$ explicitly.
\begin{lemma}\label{lemma--Wis20dimensional} The $\U_q(\so_6)$-module $V(2L_1)$ is 20-dimensional with basis
\begin{align} \label{Wbasis}
    \begin{split}
        \{ e_1 \otimes e_1&, \text{\hspace{.2cm}} e_1 \otimes e_2 + q^{-1}e_2 \otimes e_1,\text{\hspace{.2cm}} e_1\otimes e_3 + q^{-1}e_3 \otimes e_1, \\
        &-e_1 \otimes e_4 +q^{-1}e_2 \otimes e_5 - q^{-2}e_4 \otimes e_1+q^{-1}e_5 \otimes e_2, \text{\hspace{.2cm}}e_1\otimes e_5 +q^{-1}e_5 \otimes e_1, \\ 
        &- e_1 \otimes e_6-q^{-1}e_6 \otimes e_1, \text{\hspace{.2cm}} (q+q^{-1})e_2 \otimes e_2,\text{\hspace{.2cm}}e_2 \otimes e_3 + q^{-1} e_3 \otimes e_2, \\
        &-e_2 \otimes e_4 -q^{-1}e_4 \otimes e_2, \text{\hspace{.2cm}}e_2 \otimes e_5 - q^{-1}e_3 \otimes e_6+q^{-2}e_5 \otimes e_2 -(q^2+1)e_6 \otimes e_3,\\
        &-e_2 \otimes e_6 -q^{_1}e_6\otimes e_2,(q^{-1}+q)e_3 \otimes e_3, \\
        &-(q^2 +2 +q^{-2})e_3 \otimes e_4 - (q+2q^{-1}+q^{-3})e_4 \otimes e_3, \\
        &(q^2 + 2+q^{-2})e_3 \otimes e_5 + (q+2q^{-1}+q^{-3})e_5 \otimes e_3, \\
        &(q^4 + 4q^2 + 6 + 4q^{-2} + q^{-4})e_4 \otimes e_4, \\
        &-(q^2+3+3q^{-2}+q^{-4})e_4 \otimes e_5 - (q^3 +3q+3q^{-1}+q^{-3})e_5 \otimes e_4,  \\
        &(q^{-2}+1)e_4 \otimes e_6 +(q+q^{-1})e_6 \otimes e_4,(q^2+2+q^{-2})e_5 \otimes e_5, \\
        &-(q^{-2}+1)e_5 \otimes e_6 - (q^{-1}+q)e_6 \otimes e_5, (q^{-1}+q)e_6 \otimes e_6\} 
    \end{split}
\end{align}
\end{lemma}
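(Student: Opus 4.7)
The plan is to combine the dimension formula (\ref{dimensionformula}) with the crystal graph picture from Lemma \ref{VxV-decomp} to identify $V(2L_1)$ inside $V(L_1) \otimes V(L_1)$ and then to construct its basis explicitly by applying lowering operators to the highest weight vector.

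First I would observe that $e_1 \in V(L_1)$ is annihilated by every $E_i$ in the fundamental representation of Definition \ref{definition-fundrep}, so by the coproduct formula of Definition \ref{definition--coproduct},
\[
\Delta(E_i)(e_1 \otimes e_1) = E_i e_1 \otimes e_1 + q^{H_i} e_1 \otimes E_i e_1 = 0 \quad \text{for } i = 1,2,3.
\]
Since $e_1$ has weight $L_1$, the vector $e_1 \otimes e_1$ has weight $2L_1$, so it is a highest weight vector of weight $2L_1$ inside $V(L_1) \otimes V(L_1)$. The submodule it generates is therefore an irreducible highest weight module $V(2L_1)$, and plugging $\lambda = 2L_1$ into the dimension formula (\ref{dimensionformula}) gives $\dim V(2L_1) = \tfrac{1}{12}\cdot 15 \cdot 16 \cdot 1 = 20$, consistent with $6 \cdot 6 = 20 + 15 + 1$ from the decomposition in Lemma \ref{VxV-decomp}.

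Next I would construct the basis systematically by walking the connected component of the crystal graph in Figure \ref{tikz--BigCrystalGraph--JustW} corresponding to $V(2L_1)$. Each vertex of that component is reached from the source $1 \otimes 1$ by a sequence of Kashiwara operators $\tilde f_{i_1}\cdots \tilde f_{i_k}$, and my lifted basis vector at that vertex is
\[
\Delta(F_{i_k}) \cdots \Delta(F_{i_1}) (e_1 \otimes e_1),
\]
using $\Delta(F_i) = 1 \otimes F_i + F_i \otimes q^{-H_i}$. A sample computation: $\Delta(F_1)(e_1 \otimes e_1) = e_1 \otimes e_2 + q^{-1} e_2 \otimes e_1$, which is the second vector of (\ref{Wbasis}). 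I would carry out the twenty such computations, following the twenty paths dictated by the crystal graph; the prefactors $q^{\pm 1}, q^{\pm 2}, q+q^{-1},\ldots$ appearing in (\ref{Wbasis}) arise precisely from the $q^{-H_i}$ weights picked up by the second summand of $\Delta(F_i)$ at each step.

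Finally I would verify linear independence of the twenty vectors so produced. For weight spaces occurring with multiplicity one in $V \otimes V[\lambda]$ this is automatic, because distinct weights give linearly independent vectors. The only weight spaces requiring a direct check are those that $V(2L_1)$ shares with $V(L_1+L_2)$ or $V(0)$ in the decomposition of Lemma \ref{VxV-decomp}; here I would pick the relevant weight space (most notably the zero weight space, dimension three in $V \otimes V$) and check that the candidate vector from (\ref{Wbasis}) is not in the span of the highest weight vectors of $V(L_1+L_2)$ and $V(0)$.

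The main obstacle is not conceptual but computational: the twenty applications of $\Delta(F_i)$ must be done carefully, tracking the $q^{-H_i}$ scalars correctly and choosing a consistent path for each weight space (for weight spaces with multiplicity greater than one within the $V(2L_1)$-component itself, two different reduced words for $\tilde f_{i_k}\cdots\tilde f_{i_1}$ produce vectors that agree only up to an element of the kernel of the map to that weight space, so one must confirm the resulting vector is nonzero and matches (\ref{Wbasis}) up to an overall scalar). I expect this bookkeeping of $q$-factors to be the most error-prone part; the rest follows cleanly from the crystal graph structure together with the dimension count $20$.
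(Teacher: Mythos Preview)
Your proposal is correct and follows essentially the same approach as the paper: both identify $e_1\otimes e_1$ as a highest weight vector of weight $2L_1$, invoke the dimension formula to get $20$, and then generate the basis by applying the coproduct lowering operators $\Delta(F_i)$ along paths in the crystal graph of Figure~\ref{tikz--BigCrystalGraph--JustW}, with the same sample computation $\Delta(F_1)(e_1\otimes e_1)=e_1\otimes e_2+q^{-1}e_2\otimes e_1$. Your discussion of linear independence is slightly more careful than the paper's (which simply notes the count matches the dimension formula), but the underlying argument is the same.
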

\begin{proof}
    Define $\pi_V(F_i)$ to be the projection of $\Delta(F_i)$ into $V = \R^6 \otimes \R^6$. The tensor product assumes the role of the Kronecker product in this 36-dimensional representation $V$ of $\U_q(\so_6)$. Note that $\{\pi_V(E_i)\}$ annihilates $e_1 \otimes e_1$, making $e_1 \otimes e_1$ a highest weight vector with highest weight $2L_1$. To generate $V(2L_1)$, we use Lemma \ref{VxV-decomp} to obtain the correct compositions of $F_i$s applied to $e_1 \otimes e_1$ to form each basis vector: we apply $F_i$s corresponding to the arrows in the path from $e_1 \otimes e_1$ to each $e_i \otimes e_j$ contained in the upper right-hand cycle as displayed Figure \ref{tikz--BigCrystalGraph}. A few examples are done below, and the rest are left to the reader. The notation $F_{ijk}$ is used to shorten $F_iF_jF_k$.
    \begin{flalign*}
        \pi_V(F_1)(e_1 \otimes e_1) &= (1 \otimes F_1)(e_1 \otimes e_1) + (F_1 \otimes q^{-H_1})(e_1 \otimes e_1) \\
        &= e_1 \otimes e_2 + q^{-1}e_2 \otimes e_1 \\
        \pi_V(F_{11}) \coloneqq \pi_V(F_1F_1)(e_1 \otimes e_1) &= (q+q^{-1})e_2 \otimes e_2  \\
        \pi_V(F_{11332211}) (e_1 \otimes e_1) &= (q^4 + 4q^2 + 6 + 4q^{-2} + q^{-4})e_4 \otimes e_4 
    \end{flalign*}
    After complete evaluation, we obtain the basis shown in Equation \ref{Wbasis} for $V(2L_1)$. Note that the number of basis vectors generated agrees with the dimension formula in Equation \ref{dimensionformula}.
\end{proof}
Using Lemma \ref{VxV-decomp} and Lemma \ref{lemma--Wis20dimensional}, we are now ready to prove Proposition \ref{propositionWtwentydim}.
\propositionWtwentydim*
\begin{proof}
    Note that the basis vectors of $V(2L_1)$ span 
    $W$ and are linearly independent. Since finite-dimensional highest weight modules of $\U_q(\so_6)$ are irreducible representations, $W$ is an irreducible representation.
\end{proof}
As a sanity check, if $C$ is a central element of $\U_q(\so_6)$, we will see that $\pi_V(C)|_W$ is a constant times $\mathrm{Id}_{20 \times 20}$.
\begin{example}\label{Example--C}
    In 2020, Kuan, Landry, Lin, Park, and Zhou \cite{kuan2020interactingparticlesystemstype} found the following central element of $\U_q(\so_6)$ (note that the fundamental representation of $\so_{2n}$ presented in that paper is slightly incorrect, leading to some sign errors in its central elements \cite{TailorBot}. The $E_3$ used in that paper is $-E_3$ under our notation). We write $F_{ij}$ to shorten $ F_iF_j$ and $r$ to represent $q+q^{-1}$. 
    \begin{align*}
        C &= q^{-4-2H_1-H_2-H_3}
                +q^{-2-H_2-H_3}
                +q^{H_2-H_3}
                +q^{H_3-H_2}
                +q^{2+H_2+H_3} \\
                &+q^{4+2H_1+H_2+H_3}
                +\frac{r^2}{q^3}F_1q^{-H_1-H_2-H_3}E_1
                +\frac{r^2}{q}F_2q^{-H_3}E_2 \\
                &+\frac{r^2}{q}F_3q^{-H_2}E_3
                +r^2qF_2q^{H_3}E_2
                +r^2qF_3q^{H_2}E_3
                +r^2q^3F_1q^{H_1+H_2+H_3}E_1 \\
                &+\frac{r^2}{q^3}(qF_{12}-F_{21})q^{-H_1-H_3}(qE_{21}-E_{12}) \\
                &+\frac{r^2}{q^3}(qF_{13}-F_{31})q^{-H_1-H_2}(qE_{31}-E_{13}) \\
                &+r^2q(qF_{21}-F_{12})q^{H_1+H_3}(qE_{12}-E_{21}) \\
                &+r^2q(qF_{31}-F_{13})q^{H_1+H_2}(qE_{13}-E_{31}) \\
                &+\frac{r^2}{q^3}(q^2F_{123}-qF_{213}-qF_{312}+F_{231})q^{-H_1}(q^2E_{231}-qE_{312}-qE_{213}+E_{123}) \\
                &+\frac{r^2}{q}(q^2F_{231}-qF_{312}-qF_{213}+F_{123})q^{H_1}(q^2E_{123}-qE_{213}-qE_{312}+E_{231}) \\
                &+\frac{r^4}{q^2}((q^2+1)F_{1231}-qF_{1312}-qF_{2131})((q^2+1)E_{1231}-qE_{1312}-qE_{2131}) \\
                &+r^4F_2F_3E_2E_3
    \end{align*}
    By some tedious $36\times 36$ matrix computations, one can find $\pi_V(C)$. Then, construct the change of basis matrix $Y$ with first 20 columns being the above-constructed basis of $W$ and the last 16 columns extending to a basis of $V$. Then, the matrix $Y^{-1}\pi_V(C)Y$ has an upper left $20 \times 20$ block of $(q^8+q^2+2+q^{-2}+q^{-8}) \mathrm{Id}_{20 \times 20} = \pi_V(C)|_W$.
\end{example}

\subsection{Decompositions of \texorpdfstring{$W \otimes W$}{W tensor W}}
In this section, we analyze the structure of $W \otimes W$. This will allow us to deduce properties of $\pi_{W \otimes W}(C)$ in Section \ref{subsection-piwwc-entries}.  
\subsubsection{Decomposition of \texorpdfstring{$W \otimes W$}{W tensor W} into irreducible representations}
In order to decompose $W \otimes W$ into a direct sum of irreducible representations, we would like to use some analog of the tensor product rule as done in Lemma \ref{VxV-decomp}. However, since our current crystal graph of $W$ is three-dimensional, creating the crystal graph of $W \otimes W$ is very difficult. Therefore, we introduce Young tableaux to be able to visualize $W \otimes W$, and use the corresponding tensor product rule to decompose $W \otimes W$. An interested reader may reference \cite{QuantumBases} chapters 7 and 8 for all appropriate background.

The chain of correspondence is as follows: $\mathcal{B}(2L_1)$ describes $W$, as seen in Lemma \ref{VxV-decomp}, and
$$\mathcal{B} \left( \vcenter{\hbox{\text{ 
    \begin{tikzpicture}
        \coordinate (11) at (.5,.25);
        \coordinate (01) at (0,.25);
        \coordinate (00) at (0,-.25);
        \coordinate (10) at (.5,-.25);
        \coordinate (20) at (1,-.25);
        \coordinate (21) at (1, .25);
        \draw[thick, -] (00) -- (20) -- (21) -- (01) -- (00);
        \draw[thick, -] (10) -- (11);
    \end{tikzpicture}
    }}}\right) \coloneqq \mathcal{B}(\mathcal{Y})$$
describes $\mathcal{B}(2L_1)$. Thus, the decomposition of $\mathcal{B}(\mathcal{Y}) \otimes \mathcal{B}(\mathcal{Y})$ dictates the structure of $W \otimes W$. This brings us to Lemma \ref{lemma-young}. 
\begin{lemma} \label{lemma-young} $W \otimes W$ decomposes as follows into irreducible representations:
    $$W \otimes W \cong V(4L_1) \oplus V(3L_1+L_2) \oplus V(2L_1+2L_2) \oplus V(2L_1) \oplus V(L_1+L_2) \oplus V(0).$$
    As expected, the sum of the dimensions of these irreducible representations is $400 = \mathrm{dim}(W\otimes W)$.
\end{lemma}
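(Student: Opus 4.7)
The plan is to use the crystal base formalism and Kashiwara–Nakashima tableaux to decompose $W \otimes W$ into irreducible representations. By Proposition \ref{propositionWtwentydim}, $W$ is the irreducible $\U_q(\so_6)$-module $V(2L_1)$, whose crystal base $\mathcal{B}(2L_1)$ is in bijection with Kashiwara–Nakashima tableaux of shape $(2)$, a single row of two boxes, over the signed alphabet $\{1, 2, 3, \bar{3}, \bar{2}, \bar{1}\}$ associated to $\U_q(\so_6)$. These tableaux can be read off directly from the crystal graph of $V(2L_1)$ displayed in Figure \ref{tikz--BigCrystalGraph--JustW}; call the resulting indexing set $\mathcal{B}(\mathcal{Y})$.

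First, I would form $\mathcal{B}(\mathcal{Y}) \otimes \mathcal{B}(\mathcal{Y})$ as the set of ordered pairs of such tableaux, equipped with Kashiwara operators acting via the tensor product rule (Theorem 4.4.1 of \cite{QuantumBases}) specialized to type $D_3$. Since the number of connected components of the resulting tensor-product crystal graph equals the number of irreducible summands of $W \otimes W$, with each component rooted at a unique highest-weight vertex, I would next enumerate every pair $(T_1, T_2) \in \mathcal{B}(\mathcal{Y}) \otimes \mathcal{B}(\mathcal{Y})$ satisfying $\Tilde{e}_i(T_1 \otimes T_2) = 0$ for all $i \in \{1,2,3\}$. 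The weight of each such highest-weight element then contributes exactly one irreducible summand $V(\lambda)$.

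I would then verify that the list of highest weights obtained is precisely $\{4L_1,\, 3L_1+L_2,\, 2L_1+2L_2,\, 2L_1,\, L_1+L_2,\, 0\}$. As a consistency check, applying the dimension formula from Equation \ref{dimensionformula} to each weight gives dimensions $105, 175, 84, 20, 15, 1$ respectively, which indeed sum to $400 = \dim(W \otimes W)$; this doubles as insurance that no highest-weight vertex has been omitted or counted twice. As a fully independent cross-check, one can instead verify the decomposition via the Weyl character formula, confirming that $\chi_{V(2L_1)}^2$ equals the sum of the characters of the six claimed irreducibles.

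The main obstacle is correctly applying the type $D$ tensor product rule: Kashiwara operators in type $D_3$ have subtle behavior around the middle letters $3$ and $\bar{3}$, where $\Tilde{e}_3$ and $\Tilde{f}_3$ do not act as a simple raising or lowering on a single box but rather involve both labels simultaneously. Consequently, checking that a candidate pair is genuinely annihilated by all three $\Tilde{e}_i$ (as opposed to merely $\Tilde{e}_1$ and $\Tilde{e}_2$) requires careful case analysis across the $20 \times 20 = 400$ ordered pairs, and is where I expect the most bookkeeping effort to be concentrated.
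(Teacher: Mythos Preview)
Your proposal is correct and would yield the same decomposition, but it takes a different route from the paper. The paper also works with the Kashiwara--Nakashima tableau model for $\mathcal{B}(2L_1)$, but rather than enumerating highest-weight vertices in $\mathcal{B}(\mathcal{Y}) \otimes \mathcal{B}(\mathcal{Y})$ via the tensor product rule of Theorem~4.4.1, it invokes the Young-diagram tensor product rule (Theorem~8.6.6 of \cite{QuantumBases}), which gives a direct combinatorial formula $\mathcal{B}(\mathcal{Y}) \otimes \mathcal{B}(\mathcal{Y}) \cong \bigoplus \mathcal{B}(\mathcal{Y}[b_1,b_2])$ in terms of tableau insertion. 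This reduces the computation to evaluating $\mathcal{B}(\mathcal{Y}[b_1,b_2])$ for a handful of specific pairs, after which the six shapes (and hence the six highest weights) can be read off without any case analysis on Kashiwara operators. Your approach is more elementary in that it requires only the basic tensor product rule rather than the type-$D$ insertion machinery, and your Weyl-character cross-check is a nice independent verification the paper does not perform; the cost is exactly the bookkeeping you anticipate around the $\tilde e_3$ action, which the paper's method sidesteps entirely.
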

\begin{proof}
    As can be read off of Figure \ref{tikz--BigCrystalGraph}, 
    $$\mathcal{B}(2L_1) = \left\{\vcenter{\hbox{\text{ 
    \begin{tikzpicture}
        \coordinate (11) at (.5,.25);
        \coordinate (01) at (0,.25);
        \coordinate (00) at (0,-.25);
        \coordinate (10) at (.5,-.25);
        \coordinate (20) at (1,-.25);
        \coordinate (21) at (1, .25);
        \draw[thick, -] (00) -- (20) -- (21) -- (01) -- (00);
        \draw[thick, -] (10) -- (11);
        \node[circle] (ablah) at (.25,-.05) {$a$};
        \node[circle] (ablah) at (.75,0) {$b$};
    \end{tikzpicture}
    }}} \mathrel{}\middle|\mathrel{}  b \succeq a \right\}$$
    We choose the English notation and convention of reading Young diagrams down columns from right to left. By the tensor product rule for Young diagrams (Theorem 8.6.6, pg 206 of \cite{QuantumBases}),
    $$\mathcal{B}(\mathcal{Y}) \otimes \mathcal{B}(\mathcal{Y}) \cong \bigoplus_{b_1 \otimes b_2 \in W} \mathcal{B}(\mathcal{Y}[b_1, b_2]).$$
    To compute the right hand side, note that:
    \begin{flalign*}
        \mathcal{B}(\mathcal{Y}[v_1, v_1]) &= \mathcal{B} \left( \vcenter{\hbox{\text{ 
    \begin{tikzpicture}
        \coordinate (11) at (.5,.25);
        \coordinate (01) at (0,.25);
        \coordinate (00) at (0,-.25);
        \coordinate (10) at (.5,-.25);
        \coordinate (20) at (1,-.25);
        \coordinate (21) at (1, .25);
        \coordinate (30) at (1.5, -.25);
        \coordinate (31) at (1.5, .25);
        \coordinate (40) at (2, -.25);
        \coordinate (41) at (2, .25);
        \draw[thick, -] (00) -- (40) -- (41) -- (01) -- (00);
        \draw[thick, -] (10) -- (11);
        \draw[thick, -] (20) -- (21);
        \draw[thick, -] (30) -- (31);
    \end{tikzpicture}
    }}}\right) \\
    \mathcal{B}(\mathcal{Y}[v_2, v_1]) &= \mathcal{B} \left( \vcenter{\hbox{\text{ 
    \begin{tikzpicture}
        \coordinate (11) at (.5,.25);
        \coordinate (01) at (0,.25);
        \coordinate (00) at (0,-.25);
        \coordinate (10) at (.5,-.25);
        \coordinate (20) at (1,-.25);
        \coordinate (21) at (1, .25);
        \coordinate (30) at (1.5, -.25);
        \coordinate (31) at (1.5, .25);
        \coordinate (02) at (0, -.75);
        \coordinate (12) at (.5, -.75);
        \draw[thick, -] (00) -- (30) -- (31) -- (01) -- (00);
        \draw[thick, -] (10) -- (11);
        \draw[thick, -] (20) -- (21);
        \draw[thick, -] (02) -- (12);
        \draw[thick, -] (01) -- (02);
        \draw[thick, -] (11) -- (12);
    \end{tikzpicture}
    }}}\right) \\
    \mathcal{B}(\mathcal{Y}[v_2, v_2]) &= \mathcal{B} \left( \vcenter{\hbox{\text{ 
    \begin{tikzpicture}
        \coordinate (11) at (.5,.25);
        \coordinate (01) at (0,.25);
        \coordinate (00) at (0,-.25);
        \coordinate (10) at (.5,-.25);
        \coordinate (20) at (1,-.25);
        \coordinate (21) at (1, .25);
        \coordinate (30) at (1.5, -.25);
        \coordinate (31) at (1, -.75);
        \coordinate (02) at (0, -.75);
        \coordinate (12) at (.5, -.75);
        \draw[thick, -] (01) -- (02) -- (31) -- (21) -- (01) ;
        \draw[thick, -] (00) -- (20);
        \draw[thick, -] (11) -- (12);
    \end{tikzpicture}
    }}}\right) \\
    \mathcal{B}(\mathcal{Y}[v_{\bar{1}}, v_1]) &= \mathcal{B} \left( \vcenter{\hbox{\text{ 
    \begin{tikzpicture}
        \coordinate (11) at (.5,.25);
        \coordinate (01) at (0,.25);
        \coordinate (00) at (0,-.25);
        \coordinate (10) at (.5,-.25);
        \coordinate (20) at (1,-.25);
        \coordinate (21) at (1, .25);
        \draw[thick, -] (00) -- (20) -- (21) -- (01) -- (00);
        \draw[thick, -] (10) -- (11);
    \end{tikzpicture}
    }}}\right) \\
    \mathcal{B}(\mathcal{Y}[v_{\bar{1}},v_2]) &= \mathcal{B} \left( \vcenter{\hbox{\text{ 
    \begin{tikzpicture}
        \coordinate (11) at (.5,.25);
        \coordinate (01) at (0,.25);
        \coordinate (00) at (0,-.25);
        \coordinate (10) at (.5,-.25);
        \coordinate (20) at (1,-.25);
        \coordinate (21) at (1, .25);
        \coordinate (30) at (1.5, -.25);
        \coordinate (31) at (1, -.75);
        \coordinate (02) at (0, -.75);
        \coordinate (12) at (.5, -.75);
        \draw[thick, -] (01) -- (02) -- (12) -- (11) -- (01) ;
        \draw[thick, -] (00) -- (10);
    \end{tikzpicture}
    }}}\right) \\
    \mathcal{B}(\mathcal{Y}[v_{\bar{1}}, v_{\bar{1}}]) &= \mathcal{B}(\varnothing). 
    \end{flalign*}
    Thus, we have that
    $$\mathcal{B}(\mathcal{Y}) \otimes \mathcal{B}(\mathcal{Y}) \cong 
    \mathcal{B} \left( \vcenter{\hbox{\text{ 
    \begin{tikzpicture}
        \coordinate (11) at (.5,.25);
        \coordinate (01) at (0,.25);
        \coordinate (00) at (0,-.25);
        \coordinate (10) at (.5,-.25);
        \coordinate (20) at (1,-.25);
        \coordinate (21) at (1, .25);
        \coordinate (30) at (1.5, -.25);
        \coordinate (31) at (1.5, .25);
        \coordinate (40) at (2, -.25);
        \coordinate (41) at (2, .25);
        \draw[thick, -] (00) -- (40) -- (41) -- (01) -- (00);
        \draw[thick, -] (10) -- (11);
        \draw[thick, -] (20) -- (21);
        \draw[thick, -] (30) -- (31);
    \end{tikzpicture}
    }}}\right)
    \oplus
    \mathcal{B} \left( \vcenter{\hbox{\text{ 
    \begin{tikzpicture}
        \coordinate (11) at (.5,.25);
        \coordinate (01) at (0,.25);
        \coordinate (00) at (0,-.25);
        \coordinate (10) at (.5,-.25);
        \coordinate (20) at (1,-.25);
        \coordinate (21) at (1, .25);
        \coordinate (30) at (1.5, -.25);
        \coordinate (31) at (1.5, .25);
        \coordinate (02) at (0, -.75);
        \coordinate (12) at (.5, -.75);
        \draw[thick, -] (00) -- (30) -- (31) -- (01) -- (00);
        \draw[thick, -] (10) -- (11);
        \draw[thick, -] (20) -- (21);
        \draw[thick, -] (02) -- (12);
        \draw[thick, -] (01) -- (02);
        \draw[thick, -] (11) -- (12);
    \end{tikzpicture}
    }}}\right)
    \oplus
    \mathcal{B} \left( \vcenter{\hbox{\text{ 
    \begin{tikzpicture}
        \coordinate (11) at (.5,.25);
        \coordinate (01) at (0,.25);
        \coordinate (00) at (0,-.25);
        \coordinate (10) at (.5,-.25);
        \coordinate (20) at (1,-.25);
        \coordinate (21) at (1, .25);
        \coordinate (30) at (1.5, -.25);
        \coordinate (31) at (1, -.75);
        \coordinate (02) at (0, -.75);
        \coordinate (12) at (.5, -.75);
        \draw[thick, -] (01) -- (02) -- (31) -- (21) -- (01) ;
        \draw[thick, -] (00) -- (20);
        \draw[thick, -] (11) -- (12);
    \end{tikzpicture}
    }}}\right)
    \oplus
    \mathcal{B} \left( \vcenter{\hbox{\text{ 
    \begin{tikzpicture}
        \coordinate (11) at (.5,.25);
        \coordinate (01) at (0,.25);
        \coordinate (00) at (0,-.25);
        \coordinate (10) at (.5,-.25);
        \coordinate (20) at (1,-.25);
        \coordinate (21) at (1, .25);
        \draw[thick, -] (00) -- (20) -- (21) -- (01) -- (00);
        \draw[thick, -] (10) -- (11);
    \end{tikzpicture}
    }}}\right)
    \oplus 
    \mathcal{B} \left( \vcenter{\hbox{\text{ 
    \begin{tikzpicture}
        \coordinate (11) at (.5,.25);
        \coordinate (01) at (0,.25);
        \coordinate (00) at (0,-.25);
        \coordinate (10) at (.5,-.25);
        \coordinate (20) at (1,-.25);
        \coordinate (21) at (1, .25);
        \coordinate (30) at (1.5, -.25);
        \coordinate (31) at (1, -.75);
        \coordinate (02) at (0, -.75);
        \coordinate (12) at (.5, -.75);
        \draw[thick, -] (01) -- (02) -- (12) -- (11) -- (01) ;
        \draw[thick, -] (00) -- (10);
    \end{tikzpicture}
    }}}\right) \oplus \mathcal{B}(\varnothing)$$
    which implies that
     $$W \otimes W \cong V(4L_1) \oplus V(3L_1+L_2) \oplus V(2L_1+2L_2) \oplus V(2L_1) \oplus V(L_1+L_2) \oplus V(0).$$
    As for the equality of dimensions, note that
    \begin{flalign*}
        \mathrm{dim}(W \otimes W) &= 400 \\ 
        &= 105+175+84+20+15+1 \\
        &= \sum_\lambda \mathrm{dim}V(\lambda)
    \end{flalign*}
    using the dimension formula Equation \ref{dimensionformula}. The order of dimensions in the above sum is the same order in which the irreducible representations are listed in the decomposition above.
\end{proof}
We proceed to investigating our next trait of $W \otimes W$.

\subsubsection{Decomposition of \texorpdfstring{$W \otimes W$}{W tensor W} into weight spaces}
By definition, each highest-weight $\U_q(\so_6)$-module has a weight space decomposition. Therefore, we can express $W \otimes W$ as a direct sum of weight spaces using the Young tableaux found in Lemma \ref{lemma-young}. We are most interested in the dimension of each weight space, which will help deduce the shape of $\pi_{W \otimes W}(C)$. \label{subsect-weightspacedecomp}

\begin{lemma}\label{WeightSpaceLemma}
    $W \otimes W$ admits a decomposition into eighty-five weight spaces, one with dimension $22$, twelve with dimension $12$, six with dimension $8$, twenty-four with dimension $4$, twelve with dimension $3$, twenty-four with dimension $2$, and six with dimension $1$.
\end{lemma}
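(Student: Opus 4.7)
The plan is to read off the weight multiplicities of $W$ directly from its explicit basis in Lemma \ref{lemma--Wis20dimensional} and then apply the standard convolution formula
\[
\dim(W \otimes W)[\mu] \;=\; \sum_{\mu_1 + \mu_2 = \mu} \dim W[\mu_1] \cdot \dim W[\mu_2]
\]
to each of the 85 weights enumerated in Proposition \ref{proposition--WtensorWdecomposition}. First I would extract the weights of $e_1, \ldots, e_6$ from the diagonal entries of $q^{H_1}, q^{H_2}, q^{H_3}$ in Definition \ref{definition-fundrep}, obtaining $\mathrm{wt}(e_i) \in \{\pm L_1, \pm L_2, \pm L_3\}$. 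Summing these weights along the tensor-factor pairs appearing in each basis vector of Equation \ref{Wbasis} assigns a weight to each of the 20 basis vectors of $W$: the result is a $2$-dimensional weight space $W[(0,0,0)]$, spanned by the fourth and tenth basis vectors, together with $18$ one-dimensional weight spaces filling out the Weyl orbits of $2L_1$ and $L_1 + L_2$.

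Next I would partition the $85$ weights of $W \otimes W$ into classes on which the multiplicity is constant, which is legitimate because weight multiplicities are Weyl-invariant. Taking representatives $(0,0,0), (1,1,0), (2,0,0), (2,1,1), (2,2,0), (3,1,0), (4,0,0)$ with class sizes $1, 12, 6, 24, 12, 24, 6$, the convolution yields: $(0,0,0) \mapsto 18 \cdot 1 + 2 \cdot 2 = 22$ (eighteen pairings of opposite non-zero weights plus the self-pairing of the doubled zero weight); $(4,0,0) \mapsto 1 \cdot 1 = 1$ via the unique decomposition $(2,0,0)+(2,0,0)$; $(2,0,0) \mapsto 2(1 \cdot 2) + 4(1 \cdot 1) = 8$, where the factor $2$ from $W[(0,0,0)]$ appears in two of the six contributing pairs; and analogously dimensions $12, 4, 3, 2$ for $(1,1,0), (2,1,1), (2,2,0), (3,1,0)$ respectively. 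Multiplying each multiplicity by its class size reproduces the claimed partition $1, 12, 6, 24, 12, 24, 6$ for a total of $85$ weight spaces, and the total dimension $22 + 144 + 48 + 96 + 36 + 48 + 6 = 400 = \dim(W \otimes W)$ serves as a final consistency check.

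The computation is essentially bookkeeping rather than conceptual; the only delicate point is tracking the multiplicity-$2$ zero weight of $W$, which doubles the contribution of every pair of the form $(0, \mu)$ or $(\mu, 0)$ and therefore inflates the multiplicities of weights close to the origin. Organizing the enumeration by Weyl orbit cuts the work from 85 cases to a handful, but care is still required when listing pairs $(\mu_1, \mu_2)$ with $\mu_1 \neq \mu_2$, since both orderings must be counted separately.
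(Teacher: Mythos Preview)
Your approach is essentially identical to the paper's: both read off the weight structure of $W$ (eighteen simple weights plus a two-dimensional zero weight space), then count ordered pairs $(\mu_1,\mu_2)$ summing to each target weight, grouped into the same seven symmetry classes $0,\ L_i+L_j,\ 2L_i,\ 2L_i+L_j+L_k,\ 2L_i+2L_j,\ 3L_i+L_j,\ 4L_i$. One minor caveat in your justification: the class of $(2,1,1)$ is not a single Weyl orbit but two (dominant representatives $\omega_1+2\omega_2$ and $\omega_1+2\omega_3$), so Weyl-invariance alone does not force constant multiplicity there---the extra symmetry you are implicitly using is the $D_3$ diagram automorphism, which fixes $W=V(2\omega_1)$ and hence its weight multiplicities.
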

\begin{proof}
    For ease of notation, define
    \begin{align*}
        L_{-i} \coloneqq -L_i
    \end{align*}
    For $1 \le i \le 3$. Recall that the weights associated with the standard basis vectors of the fundamental representation $\R^6$ of $\so_6$ are as follows:
    \begin{table}[h]
        \centering
        \begin{tabular}{|c|c|}
            \hline
            $e_1$ & $L_1$ \\
            \hline
            $e_2$ & $L_2$ \\
            \hline
            $e_3$ & $L_3$ \\
            \hline
            $e_4$ & $L_{-1}$ \\
            \hline
            $e_5$ & $L_{-2}$ \\
            \hline
            $e_6$ & $L_{-3}$ \\
            \hline
        \end{tabular}
    \end{table}
    
    It is a property of weights that if $v$ and $v'$ are vectors in a representation $V$ with respective weights $\lambda$ and $\lambda'$, then $v \otimes v'$ is a vector in $V \otimes V$ with weight $\lambda + \lambda'$. Thus, we can observe from the crystal graph of $W$ that the weights of basis vectors in $W$ take the following forms, for $i,j \in \{1,2,3,-1,-2,-3\}$ with $i,-i,j,-j$ all unique:
    \begin{align*}
        \{0,2L_i,L_i+L_j\}
    \end{align*}
    Note that two basis vectors in $W$ ($e_1 \otimes e_4$ and $e_2 \otimes e_5$) have weight $0$, and each of the other eighteen basis vectors in $W$ has a unique weight corresponding to one of the forms above. By considering possible sums of the weights above, the weights of vectors in $W \otimes W$ take one of the following forms, for $i,j,k \in \{1,2,3,-1,-2,-3\}$ with $i,-i,j,-j,k,-k$ all unique:
    \begin{align*}
        \{0,L_i + L_j,2L_i,2L_i + L_j + L_k,2L_i + 2L_j,3L_i + L_j,4L_i\}
    \end{align*}
    Now we count the number of weight spaces corresponding to each of the forms above.
    \begin{itemize}
        \item There is evidently one weight space with weight $0$.
        \item There are $6 \cdot 4 = 24$ possible pairs $(i,j)$. Since the pairs $(i,j)$ and $(j,i)$ yield the same weight, there are twelve weight spaces with a weight of the form $L_i + L_j$.
        \item Since there are $6$ possible values for $i$, there are six weight spaces with a weight of the form $2L_i$.
        \item There are $6 \cdot 4 \cdot 2 = 48$ possible triplets $(i,j,k)$. Since the triplets $(i,j,k)$ and $(i,k,j)$ yield the same weight, there are twenty-four weight spaces with a weight of the form $2L_i + L_j + L_k$.
        \item There are $6 \cdot 4 = 24$ possible pairs $(i,j)$. Since the pairs $(i,j)$ and $(j,i)$ yield the same weight, there are twelve weight spaces with a weight of the form $2L_i + 2L_j$.
        \item There are $6 \cdot 4 = 24$ possible pairs $(i,j)$. Since each pair yields a distinct weight, there are twenty-four weight spaces with a weight of the form $3L_i + L_j$.
        \item Since there are $6$ possible values for $i$, there are six weight spaces with a weight of the form $4L_i$.
    \end{itemize}
    Now we count the dimension of each type of weight space. We can do so by counting the number of basis vectors $w_1 \otimes w_2 \in W \otimes W$ such that the sum of the weights corresponding to $w_1$ and $w_2$ in $W$ equals the desired weight. Denote the weights in $W$ corresponding to $w_1$ and $w_2$ by $\lambda_1$ and $\lambda_2$, respectively.
    \begin{itemize}
        \item If $\lambda_1+\lambda_2 = 0$, then either
        \begin{itemize}
            \item[$*$] $\lambda_1 = 0$. Then $\lambda_2 = 0$ as well. There are thus two options for each of $w_1$ and $w_2$, and thus four total possibilities in this case.
            \item[$*$] $\lambda_1 \ne 0$. Then there are eighteen possible values for $w_1$ with nonzero weights, and exactly one possible value for $w_2$ such that $\lambda_2 = -\lambda_1$.
        \end{itemize}
        Thus, the dimension of the weight space of $W \otimes W$ with weight $0$ is twenty-two.
        \item If $\lambda_1+\lambda_2$ takes the form $L_i+L_j$, then either
        \begin{itemize}
            \item[$*$] $\lambda_1 = L_i + L_j$ and $\lambda_2 = 0$; or $\lambda_1 = 0$ and $\lambda_2 = L_i + L_j$. Since the weight space of $W$ with weight $0$ has dimension $2$, there are four total possibilities in this case.
            \item[$*$] $\lambda_1 = L_i - L_j$ and $\lambda_2 = 2L_j$; or $\lambda_1 = 2L_j$ and $\lambda_2 = L_i - L_j$; or $\lambda_1 = -L_i + L_j$ and $\lambda_2 = 2L_i$; or $\lambda_1 = 2L_i$ and $\lambda_2 = -L_i + L_j$. There are four total possibilities in this case.
            \item[$*$] $\lambda_1 = L_i + L_k$ and $\lambda_2 = -L_j - L_k$; or $\lambda_1 = -L_j - L_k$ and $\lambda_2 = L_i + L_k$. Since there are two possibilities for $k$ once $i$ and $j$ are fixed, there are four total possibilities in this case.
        \end{itemize}
        Thus, each weight space of $W \otimes W$ with a weight of the form $L_i + L_j$ has dimension twelve.
        \item If $\lambda_1 + \lambda_2$ takes the form $2L_i$, then either
        \begin{itemize}
            \item[$*$] $\lambda_1 = 2L_i$ and $\lambda_2 = 0$; or $\lambda_1 = 0$ and $\lambda_2 = 2L_i$. Since the weight space of $W$ with weight $0$ has dimension $2$, there are four total possibilities in this case.
            \item[$*$] $\lambda_1 = L_i + L_j$ and $\lambda_2 = L_i - L_j$. Since there are four possibilities for $j$  once $i$ is fixed, there are four total possibilities in this case.
        \end{itemize}
        Thus, each weight space of $W \otimes W$ with a weight of the form $2L_i$ has dimension eight.
        \item If $\lambda_1 + \lambda_2$ takes the form $2L_i + L_j + L_k$, then either
        \begin{itemize}
            \item[$*$] $\lambda_1 = 2L_i$ and $\lambda_2 = L_j + L_k$
            \item[$*$] $\lambda_1 = L_j + L_k$ and $\lambda_2 = 2L_i$
            \item[$*$] $\lambda_1 = L_i + L_j$ and $\lambda_2 = L_i + L_k$
            \item[$*$] $\lambda_1 = L_i + L_k$ and $\lambda_2 = L_i + L_j$
        \end{itemize}
        Thus, each weight space of $W \otimes W$ with a weight of the form $2L_i + L_j + L_k$ has dimension four.
        \item If $\lambda_1 + \lambda_2$ takes the form $2L_i + 2L_j$, then either
        \begin{itemize}
            \item[$*$] $\lambda_1 = 2L_i$ and $\lambda_2 = 2L_j$
            \item[$*$] $\lambda_1 = 2L_j$ and $\lambda_2 = 2L_i$
            \item[$*$] $\lambda_1 = L_i + L_j$ and $\lambda_2 = L_i + L_j$
        \end{itemize}
        Thus, each weight space of $W \otimes W$ with a weight of the form $2L_i + 2L_j$ has dimension three.
        \item If $\lambda_1 + \lambda_2$ takes the form $3L_i + L_j$, then either
        \begin{itemize}
            \item[$*$] $\lambda_1 = 2L_i$ and $\lambda_2 = L_i + L_j$
            \item[$*$] $\lambda_1 = L_i + L_j$ and $\lambda_2 = 2L_i$
        \end{itemize}
        Thus, each weight space of $W \otimes W$ with a weight of the form $3L_i + L_j$ has dimension two.
        \item If $\lambda_1 + \lambda_2$ takes the form $4L_i$, then $\lambda_1 = 2L_i$ and $\lambda_2 = 2L_i$, so each weight space of $W \otimes W$ with a weight of the form $4L_i$ has dimension one.
    \end{itemize}
    
\end{proof}

\begin{proof}[Proof of Proposition \ref{proposition--WtensorWdecomposition}]
    This is a direct consequence of Lemma \ref{lemma-young} and Lemma \ref{WeightSpaceLemma}.
\end{proof}

\subsection{Block form of \texorpdfstring{$\pi_{W \otimes W}(C)$}{pi(W tensor W)(C)}} \label{subsection-piwwc-entries}

\subsubsection{Block sizes of \texorpdfstring{$\pi_{W \otimes W}(C)$}{pi(W tensor W)(C)}}
Since $W \otimes W$ can be decomposed into a direct sum of weight spaces, any matrix in $W \otimes W$ can be expressed as a direct sum of matrices in the weight spaces of $W \otimes W$. This yields Proposition \ref{proposition-Wblocks}, restated as the following corollary of Lemma \ref{WeightSpaceLemma}:
\begin{corollary}
    $\pi_{W \otimes W}(C)$ admits a block diagonal decomposition into a direct sum of one $22 \times 22$ block, twelve $12 \times 12$ blocks, six $8 \times 8$ blocks, twenty-four $4 \times 4$ blocks, twelve $3 \times 3$ blocks, twenty-four $2 \times 2$ blocks, and six $1 \times 1$ blocks. The basis in respect to which results in $\pi_{W \otimes W}(C)$ blocked this way is included in Section \ref{appendix-basisWxW}.
\end{corollary}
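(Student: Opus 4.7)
The plan is to exploit the centrality of $C$ in $\U_q(\so_6)$. By definition, $C$ commutes with every element of $\U_q(\so_6)$; in particular, $\pi_{W\otimes W}(C)$ commutes with $\pi_{W\otimes W}(q^{H_i})$ for $i=1,2,3$. The Cartan generators $q^{H_i}$ act diagonally on any weight space decomposition, and each weight space of $W\otimes W$ is precisely a common eigenspace for $\pi_{W\otimes W}(q^{H_1})$, $\pi_{W\otimes W}(q^{H_2})$, $\pi_{W\otimes W}(q^{H_3})$ with eigenvalues determined by the weight. Since the weights of distinct weight spaces are distinct as triples, commuting with all three Cartan generators forces $\pi_{W\otimes W}(C)$ to preserve each weight space.

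Given this preservation, in any ordered basis of $W\otimes W$ obtained by concatenating bases of the individual weight spaces, $\pi_{W\otimes W}(C)$ takes block diagonal form with one block per weight space, each block of size equal to the dimension of the corresponding weight space. Applying Lemma \ref{WeightSpaceLemma} then gives exactly the block sizes claimed: one block of dimension $22$ (from the zero weight space), twelve blocks of dimension $12$ (weights of the form $L_i+L_j$), six of dimension $8$ (weights $2L_i$), twenty-four of dimension $4$ (weights $2L_i+L_j+L_k$), twelve of dimension $3$ (weights $2L_i+2L_j$), twenty-four of dimension $2$ (weights $3L_i+L_j$), and six of dimension $1$ (weights $4L_i$).

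The hard part is not the argument itself — the block structure is essentially automatic once one has Lemma \ref{WeightSpaceLemma} and the coproduct formula $\Delta(q^{H_i}) = q^{H_i}\otimes q^{H_i}$ (from Definition \ref{definition--coproduct}), which guarantees that the tensor product representation still has a genuine weight space decomposition on which $q^{H_i}$ acts diagonally. The only point requiring care is verifying that the ordered basis listed in Section \ref{appendix-basisWxW} is in fact compatible with the weight space decomposition; this is a bookkeeping task of grouping the $400$ tensor products $v_i\otimes v_j$ (where $v_i$ are the weight-basis vectors of $W$ from Lemma \ref{lemma--Wis20dimensional}) according to their weight sums. Once that ordering is fixed, no further computation with $C$ is required to establish the block decomposition — the shape is forced by the weight space structure alone, and the entries of $\pi_{W\otimes W}(C)$ can be deferred to Section \ref{subsection-MarkovGen}.
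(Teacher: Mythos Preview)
Your proposal is correct and follows essentially the same approach as the paper: both derive the block structure directly from the weight space decomposition of Lemma \ref{WeightSpaceLemma}. If anything, your version is more careful --- you explicitly invoke centrality of $C$ and the diagonal action of the $q^{H_i}$ to justify why $\pi_{W\otimes W}(C)$ preserves each weight space, whereas the paper simply asserts that any matrix in $W\otimes W$ decomposes along the weight spaces and presents the corollary without further argument.
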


\begin{remark}
Note that since $\so_6$ and $\mathfrak{sl}_4$ are isomorphic as Lie algebras, computing the Kostka numbers for $\mathfrak{sl}_4$ would also yield the block sizes found above. For example, the $22\times 22$ block can be calculated by $22=3+6+7+2+3+1$ where each summand appears as $K_{\lambda\mu}$ for the following choices of $\lambda$ and $\mu.$
\begin{align*}
K_{\lambda\mu} = 7, & \quad \lambda = (4,3,1,0), \mu = (2,2,2,2)\\
K_{\lambda\mu} = 6, & \quad \lambda = (4,2,2,0), \mu = (2,2,2,2) \\
K_{\lambda\mu} = 3, & \quad \lambda = (4,4,0,0), \mu = (2,2,2,2) \\
K_{\lambda\mu} = 2, & \quad \lambda = (2,2,0,0), \mu = (1,1,1,1) \\
K_{\lambda\mu} = 3, & \quad \lambda = (2,1,1,0), \mu = (1,1,1,1) \\
K_{\lambda\mu} = 1, & \quad \lambda = (0,0,0,0), \mu = (0,0,0,0) 
\end{align*}
\end{remark}

\subsubsection{Entries of \texorpdfstring{$\pi_{W \otimes W}(C)$}{pi(W tensor W)(C)}}
Since the coproduct $\Delta$ is a homomorphism, $\pi_{\R^6 \otimes \R^6 \otimes \R^6 \otimes \R^6}(C)$ can be computed from Definition \ref{definition-fundrep} by computing $\Delta^3(E_i)$, $\Delta^3(F_i)$, and $\Delta^3(q^{H_i})$ for $1 \le i \le 3$ and substituting them for $E_i$, $F_i$, and $q^{H_i}$, respectively, in the equation in Example \ref{Example--C}. By Definition \ref{definition--coproduct}, for all $1 \le i \le 3$,
\begin{align*}
    &\Delta^3 (E_i) = E_i \otimes 1 \otimes 1 \otimes 1 + q^{H_i} \otimes E_i \otimes 1 \otimes 1 + q^{H_1} \otimes q^{H_1} \otimes E_i \otimes 1 + q^{H_1} \otimes q^{H_1} \otimes q^{H_1} \otimes E_1 \\
    &\Delta^3 (F_i) = 1 \otimes 1 \otimes 1 \otimes F_i + 1 \otimes 1 \otimes F_i \otimes q^{-H_i} + 1 \otimes F_i \otimes q^{-H_i} \otimes q^{-H_i} + F_i \otimes q^{-H_i} \otimes q^{-H_i} \otimes q^{-H_i} \\
    &\Delta^3 (q^{H_i}) = q^{H_i} \otimes q^{H_i} \otimes q^{H_i} \otimes q^{H_i}
\end{align*}
Denote the matrix resulting from the above procedure by $\pi_{\R^{1296}} (C)$.

Denote the basis of $W$ established in the proof of Lemma \ref{lemma--Wis20dimensional} by $B_W \coloneqq (w_1,w_2,\ldots,w_{20})$. Recall the $36 \times 36$ change of basis matrix $Y$ defined in Example \ref{Example--C}, and note that a change of basis by $Y$ maps $w_i$ to the standard basis vector $e_i$ of $\R^6 \otimes \R^6$ for $1 \le i \le 20$. Thus, a change of basis by $Y \otimes Y$ in $\R^6 \otimes \R^6 \otimes \R^6 \otimes \R^6$ maps $B_W \otimes B_W$ to $\{e_i \otimes e_j : 1 \le i,j \le 20\}$. Since for any $1 \le i,j \le 36$, $e_i \otimes e_j$ is the $(36(i-1)+j)$th basis vector of $\R^6 \otimes \R^6 \otimes \R^6 \otimes \R^6$, $\pi_{W \otimes W}(C) = \pi_{\R^{1296}}(C) |_{W \otimes W}$ can be found by taking the $(36(i-1)+j)$th row and column of $(Y^{-1} \otimes Y^{-1})\pi_{\R^{1296}}(C)(Y \otimes Y)$ for all $1 \le i,j \le 20$. Since the $e_i \otimes e_j$th row or column of $\pi_{W \otimes W}(C)$ corresponds directly to the $(36(i-1)+j)$th row or column of $(Y^{-1} \otimes Y^{-1})\pi_{\R^{1296}}(C)(Y \otimes Y)$, we only need to determine a few of the values of $(Y^{-1} \otimes Y^{-1})\pi_{\R^{1296}}(C)(Y \otimes Y)$ to compute some of the blocks of $\pi_{W \otimes W}(C)$. Explicit values of $\pi_{W \otimes W}(E_i),\pi_{W \otimes W}(F_i),\pi_{W \otimes W}(q^{H_i})$ for $1 \le i \le 3$ are given in Section \ref{subsection--EiFiqHi}.

\subsubsection{Eigenvalues of \texorpdfstring{$\pi_{W \otimes W}(C)$}{pi(W tensor W)(C)}}
Given $W \otimes W$'s decomposition into highest weight $\U_q(\so_6)$-modules, we are also able to deduce the eigenvalues of $\pi_{W \otimes W}(C)$.
\begin{corollary}
    $\pi_{W \otimes W}(C)$ has eigenvalues of
    \begin{flalign*}
        &q^{12}+q^2+2+q^{-2}+q^{-12}, \\
        &q^{10}+q^4 + 2 + q^{-4}+q^{-10}, \\
        &q^8 +q^6+2+q^{-6}+q^{-8}, \\
        &q^8+q^2+2+q^{-2}+q^{-8} \\
        &q^6 + q^4 +2 + q^{-4}+q^{-6}, \\
        &q^4+q^2+2+q^{-2}+q^{-4}
    \end{flalign*}
    with respective multiplicities of $105$, $175$, $84$, $20$, $15$, $1$.
\end{corollary}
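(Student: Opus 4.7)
The plan is to invoke Schur's lemma: since $C$ is a central element of $\U_q(\so_6)$, and since Lemma \ref{lemma-young} gives the decomposition
\[
W \otimes W \cong V(4L_1) \oplus V(3L_1+L_2) \oplus V(2L_1+2L_2) \oplus V(2L_1) \oplus V(L_1+L_2) \oplus V(0),
\]
$\pi_{W\otimes W}(C)$ must act as a scalar $c_\lambda$ on each irreducible summand $V(\lambda)$. The eigenvalues are then the six scalars $c_\lambda$, and each multiplicity is $\dim V(\lambda)$, which by Equation \ref{dimensionformula} gives precisely $105,\ 175,\ 84,\ 20,\ 15,\ 1$, matching the claim. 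So all that remains is to compute each $c_\lambda$.

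To compute $c_\lambda$, I would evaluate $C$ on a highest weight vector $v_\lambda$. Every summand of $C$ written out in Example \ref{Example--C} past the first six has the form (stuff) $\cdot q^{(\cdots)} \cdot$ (product of $E_i$'s), and any such product of $E_i$'s kills $v_\lambda$ since $v_\lambda$ is annihilated by $E_1,E_2,E_3$. Thus only the six diagonal Cartan terms
\[
q^{-4-2H_1-H_2-H_3} + q^{-2-H_2-H_3} + q^{H_2-H_3} + q^{H_3-H_2} + q^{2+H_2+H_3} + q^{4+2H_1+H_2+H_3}
\]
contribute, and on a weight vector of weight $\mu$ each $q^{H_i}$ acts as $q^{\langle \alpha_i,\mu\rangle}$ (a sanity check: for $\mu = 2L_1$ this reproduces the scalar $q^8+q^2+2+q^{-2}+q^{-8}$ computed in Example \ref{Example--C}, confirming the normalization).

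The main step is then a routine substitution. For the six highest weights $\mu \in \{4L_1,\ 3L_1+L_2,\ 2L_1+2L_2,\ 2L_1,\ L_1+L_2,\ 0\}$, I would compute the triples $(\langle\alpha_1,\mu\rangle,\langle\alpha_2,\mu\rangle,\langle\alpha_3,\mu\rangle)$ using $\alpha_1 = L_1-L_2$, $\alpha_2 = L_2-L_3$, $\alpha_3 = L_2+L_3$, and plug into the six Cartan terms above. For instance, $\mu = 4L_1$ yields $(4,0,0)$, giving $q^{-12}+q^{-2}+1+1+q^2+q^{12} = q^{12}+q^2+2+q^{-2}+q^{-12}$; $\mu = 3L_1+L_2$ yields $(2,1,1)$, giving $q^{-10}+q^{-4}+1+1+q^4+q^{10}$; and analogous one-line evaluations handle the remaining four highest weights.

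There is no real obstacle here beyond bookkeeping, since the hard work of computing the decomposition of $W\otimes W$ and verifying that $C$ is central has been done upstream. The only care needed is to confirm the normalization convention relating $H_i$ to $\langle\alpha_i,\mu\rangle$; this is fixed by the fundamental representation of Definition \ref{definition-fundrep} and cross-checked against Example \ref{Example--C}. Once that is in hand, the six eigenvalues and their multiplicities follow immediately.
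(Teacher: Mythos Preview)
Your proposal is correct and follows essentially the same approach as the paper: decompose $W\otimes W$ into irreducibles via Lemma \ref{lemma-young}, invoke centrality of $C$ so that it acts as a scalar on each summand, and compute that scalar by applying the Cartan part of $C$ to a highest weight vector (the $E_i$ terms vanishing). The only cosmetic difference is that you phrase the Cartan action via the pairing $q^{\langle\alpha_i,\mu\rangle}$, whereas the paper reads off the relevant diagonal entry of $\pi_{W\otimes W}(q^{H_i})$ directly; these are the same computation.
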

\begin{proof}
     By Lemma \ref{lemma-young},
     $$W \otimes W \cong V(4L_1) \oplus V(3L_1+L_2) \oplus V(2L_1+2L_2) \oplus V(2L_1) \oplus V(L_1+L_2) \oplus V(0)$$
    with respective dimensions of $105$, $175$, $84$, $20$, $15$, $1$. Since each of these representations is irreducible, $$\pi_{W \otimes W}(C)|_{V(\lambda)} = c\mathrm{Id}$$ for some $c \in \R[q,q^{-1}]$. Also, by construction of each $V(\lambda)$, $v_\lambda$ is a basis vector of $V(\lambda)$. Therefore, in changing the basis of $W \otimes W$ to compute $\pi_{W \otimes W}(C)|_{V(\lambda)}$, we will have 
    $$\pi_{W \otimes W}(C)|_{V(\lambda)}(1,1) = \langle v_\lambda|\pi_{W \otimes W}(C)|_{V(\lambda)}|v_\lambda \rangle$$
    since we may express each highest weight vector $v_\lambda$ as being a $400$-dimensional vector with a $1$ in the $i^{th}$ position and $0$s elsewhere. Thus, to find the $c$ associated with each $v_\lambda$, we must only find the $i^{th}$ diagonal entry of $\pi_{W \otimes W}(C)$. We will then have that $c$ is an eigenvalue of multiplicity $\mathrm{dim}(V(\lambda))$ of this matrix with changed basis. Since a change of basis does not impact the eigenvalues of the matrix and $\pi_{W \otimes W}(C)$ is a block matrix by Lemma \ref{WeightSpaceLemma}, $c$ is an eigenvalue of $\pi_{W \otimes W}(C)$ with multiplicity $\mathrm{dim}(V(\lambda))$. 
    Next, as $v_\lambda$ is a highest weight vector of weight $\lambda$, by definition,
    $$E_1v_\lambda = E_2v_\lambda = E_3v_\lambda = 0.$$
    Thus, 
    $$\pi_{W \otimes W}(C)v_\lambda = \pi_{W \otimes W}(q^{-4-2H_1-H_2-H_3}
            +q^{-2-H_2-H_3}
            +q^{H_2-H_3}
            +q^{H_3-H_2}
            +q^{2+H_2+H_3} 
            +q^{4+2H_1+H_2+H_3})v_\lambda.$$
    Since each $q^{H_i}$ is a diagonal matrix, we may use Section \ref{subsection-piwwc-entries} to find the corresponding diagonal entry to each highest weight vector in a feasible way computationally. For example, $v_{4L_1} = e_1^{\otimes 2}$ where $e_1$ is $20$-dimensional, and thus $\pi_{W \otimes W}(C)$ has an eigenvalue of 
    \begin{flalign*}
    c &=  \langle e_1^{\otimes 2}|\pi_{W \otimes W}(C)|e_1^{\otimes 2} \rangle \\
    &=\langle e_1^{\otimes 2}|\pi_{W \otimes W}(q^{-4-2H_1-H_2-H_3}
            +q^{-2-H_2-H_3}
            +q^{H_2-H_3}
            +q^{H_3-H_2}
            +q^{2+H_2+H_3} 
            +q^{4+2H_1+H_2+H_3})| e_1^{\otimes 2}\rangle \\
            &= q^{-4}\cdot q^{-4} \cdot q^{-4} \cdot 1 \cdot 1 + q^{-2}\cdot 1 \cdot 1+1\cdot 1+1\cdot 1+q^2\cdot 1\cdot 1 +q^4\cdot q^4\cdot q^4\cdot 1\cdot 1 \\
        &= q^{12}+q^2+2+q^{-2}+q^{-12}
    \end{flalign*}
    with multiplicity $\mathrm{dim}(V(4L_1)) = 105$. The remaining eigenvalues can be found similarly.
    
\end{proof}

\begin{remark}
    Recall that $\so_6$ is a rank 3 Lie algebra which is isomorphic to the rank 3 Lie algebra $\mathfrak{sl}_4$. In this case, previous research in \cite{HarishChandraComment} allows us to express $C$ using generators from both $\so_6$ and $\mathfrak{sl}_4$ in terms of the above eigenvalues. We choose not to explore this, but someone who wishes to relate \cite{kuan2020interactingparticlesystemstype}'s central element $C$ to the Harish Chandra isomorphism may reference \cite{HarishChandraComment}.    
\end{remark}

\subsection{Type \texorpdfstring{$A$}{A} Ground State Transformation}
Following the path of previous research, we used Section \ref{subsection-groundstatetransformation} as a starting point to form our ground state transformation. We perform a ground state transformation so that the resulting matrix has rows which sum to $0$, allowing us to use this transformed matrix as a generator for a Markov process. We will then compare this generator to that of the probability section. 

\begin{definition}
    \label{defn-groundstate}
    A ground state transformation is a map from a Hamiltonian matrix, $H$, to a matrix whose rows sum to $0$. We define such a map to be of the following form, where $G$ is a diagonal matrix and $a$ is in the underlying field.
    $$G^{-1}HG - a \mathrm{Id}$$
    We call $G$ a ground state transformation matrix of $H$.
\end{definition}

Section \ref{subsection-groundstatetransformation} provides a ground state transformation that worked with Type $A_n$ Lie algebras. However, as can be seen using the respective crystal graphs, Type $D_n$ Lie algebras have more complicated behavior: the crystal graph for a fundamental representation of $\U_q(\mathfrak{sl}_4)$ is shown in Figure \ref{figure-fund-sl4}.
\begin{figure}[h!]
        \centering
        \includegraphics[]{./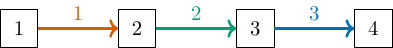}
        \caption{The crystal graph of the fundamental representation of $\U_q(\mathfrak{sl}_4)$} \label{figure-fund-sl4}
    \end{figure}

As a result, the crystal graph for $V \otimes V$ of $\U_q(\mathfrak{sl}_4)$ is the graph shown in Figure \ref{figure-vv-sl4}. Note that the cycle from the highest weight vector to the lowest weight vector is straightforward.
\begin{figure}[ht!]
        \centering
        \includegraphics[]{./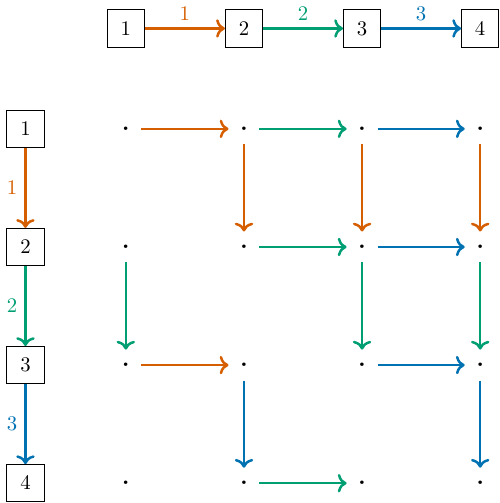}
        \caption{Crystal graph of $\U_q(\mathfrak{sl}_4)$ in $V \otimes V$} \label{figure-vv-sl4}
    \end{figure} 

Instead, in $\U_q(\so_6)$, the crystal graph for $V \otimes V$ shown in Figure \ref{tikz--BigCrystalGraph} was much less simple due to the more complex crystal graph for $V$ shown in Figure \ref{tikz-V}. In this case, the shortest path from the highest weight vector to the lowest weight vector requires eight applications of $F_i$s as shown in the third example within the proof of Lemma \ref{lemma--Wis20dimensional}.

Thus, intuitively, when we apply the tensor product rule again to $W$, the path from the highest weight vector to the lowest weight vector should be even more intricate. Therefore, the ground state transformation in Section \ref{subsection-groundstatetransformation} should include more complicated structures than solely permutations of $F_1^{k_1}F_2^{k_2}F_3^{k_3}$ in order to generate a complete matrix $G$. This result is formalized below.
\propositiongroundstate*
\begin{proof}
    By the crystal graph for $V$ shown in Figure \ref{tikz-V}, which is based on the fundamental representation defined in Definition \ref{definition-fundrep}, we see that we will not be able to compute $e_4$ without applying either $F_1F_2F_3F_1$ or $F_1F_3F_2F_1$.
    
Now, note that $W$ is a subspace of $\mathrm{Sym}^2_q(\R^6)$; in particular, as shown in Lemma \ref{lemma--Wis20dimensional}, $W$ has basis elements of the form:
\begin{flalign*}
    \{ p_i(q)&e_i \otimes e_i, p_i(q)e_i \otimes e_j+p_j(q)e_j \otimes e_i, \mid i \neq j, (i,j) \neq (1,4),(2,5),(3,6), p_j \text{ Laurent polynomial}\} \cup \\
    &\{ -e_1 \otimes e_4 +q^{-1}e_2 \otimes e_5 - q^{-2}e_4 \otimes e_1+q^{-1}e_5 \otimes e_2, e_2 \otimes e_5 - q^{-1}e_3 \otimes e_6+q^{-2}e_5 \otimes e_2 -(q^2+1)e_6 \otimes e_3\}
\end{flalign*}
Thus, applying a sequence of $F_i$s to obtain, for example, 
$$\langle e_1 \otimes e_2| \pi_W(\Delta(F_1))| e_1 \otimes e_1 \rangle = 1$$
will also yield
$$\langle e_2 \otimes e_1| \pi_W(\Delta(F_1))| e_1 \otimes e_1 \rangle = q^{-1}$$
because $e_1 \otimes e_2+qe_2 \otimes e_1$ is a basis vector of $W$.

However, the composition of $F_i$s required to obtain $e_4 \otimes e_4$, $\Pi_i (\Delta F_i)$, will only produce a nonzero 
$$\langle e_4 \otimes e_4| \pi_W(\Pi_i (\Delta(F_i)))| e_1 \otimes e_1 \rangle$$
and result in $0$ when taking, for any other $(i,j) \neq (4,4)$,
$$\langle e_i \otimes e_j| \pi_W(\Pi_i (\Delta(F_i)))| e_1 \otimes e_1 \rangle$$
as $e_4 \otimes e_4$ does not appear in any other basis vector of $W$. Therefore, since $e_4 \otimes e_4$ lies in the bottom right corner of Figure \ref{tikz--BigCrystalGraph}, it is clear that in order to obtain a nonzero value corresponding to $e_4 \otimes e_4$, we must follow the arrows and apply a composition of the form
$$\langle e_4 \otimes e_4| \pi_W(\Delta(F_1) \cdot \Pi_i (\Delta(F_i)) \cdot \Delta(F_1))| e_1 \otimes e_1 \rangle$$
with, in particular, 
$$\langle e_4 \otimes e_4| \pi_W(\Delta(F_1)\Delta(F_1)\Delta(F_3)\Delta(F_3)\Delta(F_2)\Delta(F_2)\Delta(F_1)  \Delta(F_1))| e_1 \otimes e_1 \rangle = q^4 + 4q^2 + 6 + 4q^{-2} + q^{-4}.$$
Finally, since $W$ is a subset of a symmetric tensor, $W \otimes W$ will be as well. Therefore, we must again find a composition $\Pi_i(\Delta (F_i))$ that produces exactly $e_4^{\otimes 4}$, and it will not be enough to produce any other vector. By the tensor product rule, this will yet again result in a nonzero value to an expression only in the following structure
$$\langle e_4^{\otimes 4}| \pi_W(\Delta(F_1) \cdot \Pi_i (\Delta(F_i)) \cdot \Delta(F_1))| e_1^{\otimes 4}\rangle$$
\end{proof}

\subsection{Finding a Markov Process Generator}

\subsubsection{Partial Type \texorpdfstring{$A$}{A} Ground State Transformation}
\begin{definition} \label{definition--PartialTypeAGST}
    We define a partial Type $A$ ground state transformation of $\pi_{W \otimes W}(C)$ to be a map
    \begin{align*}
        \pi_{W \otimes W}(C) \mapsto \left. \left( G^{-1}\pi_{W \otimes W}(C)G - a\mathrm{Id} \right) \right| _K
    \end{align*}
    where $K$ is the vector subspace of $W \otimes W$ which is the span of all basis vectors $u_i \otimes u_j$ for which there exists a triple $(k_1,k_2,k_3)$ of nonnegative integers satisfying
    \begin{align*}
        \left\langle u_i \otimes u_j \middle| \prod_{i=1}^3 \Delta(F_{\ell_i})^{k_{\ell_i}} \middle| e_1 \otimes e_1 \right\rangle \ne 0
    \end{align*}
    where $(\ell_1,\ell_2,\ell_3)$ is some permutation of $(1,2,3)$, and where $G$ is a $400 \times 400$ diagonal matrix such that the diagonal entry of $G$ corresponding to $u_i \otimes u_j$ is the unique nonzero value above.
\end{definition}
Note that by Lemma \ref{propositiongroundstate}, the partial Type $A$ ground state transformation is not unique.

\subsubsection{Selecting a Ground State Transformation}\label{subsubsection--SelectingGST}
We now extend the partial Type $A$ ground state transformation to a ground state transformation of $\pi_{W \otimes W}(C)$. Define $a \coloneqq q^{12} + q^2 + q^{-2} + q^{-12}$.

\begin{lemma}
\label{lemma--GSTeigenvalue}
    Let $G$ be a ground state transformation matrix for $\pi_{W \otimes W}(C)$, i.e., a diagonal matrix such that each row of $a^{-1}G^{-1}\pi_{W \otimes W}(C)G - \mathrm{Id}$ sums to $0$, and let $g_i$ denote the $i$th diagonal entry of $G$ for $1 \le i \le 400$. Then the vector
    \begin{align*}
        \vec{g} \coloneqq
        \begin{pmatrix}
        g_1 \\
        g_2 \\
        \vdots \\
        g_{400}
        \end{pmatrix}
    \end{align*}
    satisfies $\pi_{W \otimes W}(C) \vec{g} = a\vec{g}$. This condition is both sufficient and necessary, i.e., if $\vec{g}$ is an eigenvector of $\pi_{W \otimes W}(G)$ with eigenvalue $a$, the diagonal matrix $G$ with entries corresponding to $\vec{g}$ as above is such that the rows of $a^{-1}G^{-1}\pi_{W \otimes W}G - \mathrm{Id}$ sum to $0$.
\end{lemma}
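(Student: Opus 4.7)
The plan is to reformulate the row-sum condition as a single matrix--vector equation applied to the all-ones vector. Let $\vec{1} \in \R^{400}$ denote the column vector all of whose entries equal $1$. Then for any square matrix $M$, the assertion that every row of $M$ sums to $0$ is precisely the assertion that $M\vec{1} = \vec{0}$. I would apply this observation with $M = a^{-1}G^{-1}\pi_{W \otimes W}(C)G - \mathrm{Id}$, which converts the hypothesis into $a^{-1}G^{-1}\pi_{W \otimes W}(C)G\vec{1} = \vec{1}$.

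From here the calculation is essentially a single substitution. Since $G$ is diagonal with diagonal entries $g_1, \ldots, g_{400}$, one has $G\vec{1} = \vec{g}$, so the condition becomes $a^{-1}G^{-1}\pi_{W \otimes W}(C)\vec{g} = \vec{1}$. Multiplying both sides on the left by $G$ and using $G\vec{1} = \vec{g}$ again yields $a^{-1}\pi_{W \otimes W}(C)\vec{g} = \vec{g}$, which is exactly $\pi_{W \otimes W}(C)\vec{g} = a\vec{g}$.

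Every step in this chain is algebraically reversible: it uses only the invertibility of $G$, which is already required for the expression $G^{-1}\pi_{W \otimes W}(C)G$ to be defined in Definition \ref{defn-groundstate}. This reversibility gives both the sufficient and necessary directions simultaneously, so there is no real obstacle --- the entire content of the lemma is absorbed into the standard observation that row sums of $M$ coincide with the entries of $M\vec{1}$. The lemma therefore reduces the task of locating ground state transformation matrices to the task of identifying eigenvectors of $\pi_{W \otimes W}(C)$ with eigenvalue $a$, which dovetails with the eigenvalue computation in the preceding corollary.
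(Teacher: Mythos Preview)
Your proof is correct and follows essentially the same approach as the paper's: both convert the row-sum condition into the eigenvector equation by exploiting that $G$ is diagonal, with your use of $G\vec{1} = \vec{g}$ being a slightly slicker packaging of the paper's explicit row-by-row computation $\sum_j \pi_{W\otimes W}(C)_{i,j}\,g_j = a g_i$. The reversibility argument for the converse is likewise identical to the paper's.
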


\begin{proof}
    Define the matrix $L$ by
    \begin{align*}
        L \coloneqq a^{-1}G^{-1}\pi_{W \otimes W}(C)G - \mathrm{Id}
    \end{align*}
    By definition, each row of $L$ sums to $0$. Then we have
    \begin{align*}
        G^{-1}(\pi_{W \otimes W}(C))G - a\mathrm{Id} &= aL \\
        G^{-1}(\pi_{W \otimes W}(C))G &= aL + a\mathrm{Id} \\
        \pi_{W \otimes W}(C)G &= aGL + aG
    \end{align*}
    Since $G$ is diagonal and each row of $L$ sums to $0$, each row of $aGL$ sums to $0$. Thus, the sum of the elements in each row of $\pi_{W \otimes W}(C) G$ equals the sum of the elements in the same row of $aG$. Defining $g_i$ as above, we can write the sum of the $i$th row of $\pi_{W \otimes W}(C)G$ as $\sum_{j=1}^{400} \pi_{W \otimes W}(C)_{i,j} g_j$, and the sum of the $i$th row of $aG$ as $ag_i$. Thus, for all $1 \le i \le 400$,
    \begin{align*}
        \sum_{j=1}^{400} \pi_{W \otimes W}(C)_{i,j} g_j &= ag_i
    \end{align*}
    This is equivalent to saying that $\pi_{W \otimes W}(C) \vec{g} = a\vec{g}$.
    
    Reversing the steps above shows that the condition $\pi_{W \otimes W}(C)\vec{g} = a\vec{g}$ is sufficient as well as necessary.
\end{proof}

Thus, any partial Type $A$ ground state transformation matrix $G$ of $\pi_{W \otimes W}(C)$ which satisfies the condition in Lemma \ref{lemma--GSTeigenvalue} is a ground state transformation matrix of $\pi_{W \otimes W}(C)$. Finding diagonal entries of $G$ as described in Definition \ref{definition--PartialTypeAGST}  and solving the condition in Lemma \ref{lemma--GSTeigenvalue} as a system of linear equations yields a solution space $\mathcal{S}$ with $37$ unknowns.

\begin{lemma}
    Given any $G \in \mathcal{S}$,
    \begin{align*}
        a^{-1}G^{-1}HG - \mathrm{Id}
    \end{align*}
    is a generator for a Markov process.
\end{lemma}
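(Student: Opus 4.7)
The plan is to verify the two defining conditions of a continuous-time Markov generator for the matrix $M \coloneqq a^{-1}G^{-1}HG - \mathrm{Id}$, where $H = \pi_{W\otimes W}(C)$: (a) every row of $M$ sums to $0$, and (b) every off-diagonal entry of $M$ is non-negative (from which non-positivity of the diagonal is automatic via (a)).

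Property (a) is immediate from membership in $\mathcal{S}$: by Lemma \ref{lemma--GSTeigenvalue} the defining condition $\pi_{W\otimes W}(C)\vec{g} = a\vec{g}$ cutting out $\mathcal{S}$ is exactly equivalent to the rows of $M$ summing to zero. Thus all the work is in (b). For this I would first reduce to a block-by-block check, using the fact that conjugation by the diagonal matrix $G$ preserves the block-diagonal decomposition of $\pi_{W\otimes W}(C)$ established in Proposition \ref{proposition-Wblocks}, while the $-\mathrm{Id}$ term only shifts the diagonal. Within any one block the off-diagonal $(i,j)$ entry of $M$ equals $a^{-1}(g_j/g_i)H_{ij}$, so non-negativity reduces to (i) verifying that all diagonal entries $g_i$ of $G$ have a common sign (wlog positive) for $q>0$, $q\ne 1$, and (ii) verifying that the off-diagonal entries $H_{ij}$ of $\pi_{W\otimes W}(C)$ are non-negative Laurent polynomials in $q$.

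For (i), I would start from the ``seed'' diagonal entries prescribed by the partial Type $A$ transformation in Definition \ref{definition--PartialTypeAGST}: each such $g_i$ is a value $\langle u_i\otimes u_j|\prod \Delta(F_{\ell_k})^{k_{\ell_k}}|e_1\otimes e_1\rangle$, which by the coproduct formulas in Definition \ref{definition--coproduct} and the explicit action in Definition \ref{definition-fundrep} is a sum of monomials $q^{\pm k}$ with non-negative integer coefficients, hence positive for all $q>0$. I would then propagate positivity to the $37$ undetermined parameters by using the eigenvector relation $\pi_{W\otimes W}(C)\vec{g} = a\vec{g}$: because each coordinate of $\vec{g}$ not fixed by the partial transformation is connected by at least one non-zero row of $H$ to a coordinate that is, the linear equations express the free parameters as positive rational combinations of the seed values, forcing positivity.

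The principal obstacle will be step (ii): showing that the off-diagonal entries of $\pi_{W\otimes W}(C)$ itself are non-negative. The central element of Example \ref{Example--C} contains subtraction-laden factors such as $(qF_{12}-F_{21})$ and $(q^2F_{123}-qF_{213}-qF_{312}+F_{231})$, so individual summands can contribute negative terms; what must be shown is that after summing and applying $\Delta^3$ in the $W\otimes W$ representation, the off-diagonal entries aggregate into positive Laurent polynomials. I would carry this out block by block using the explicit matrices $\pi_{W\otimes W}(E_i),\pi_{W\otimes W}(F_i),\pi_{W\otimes W}(q^{H_i})$ from Section \ref{subsection--EiFiqHi}, with the largest $22\times 22$ zero-weight block being the most delicate case. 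A conceptual sanity check is that $C$ acts on each irreducible summand $V(\lambda)\subseteq W\otimes W$ by a positive scalar (computed in the corollary following Proposition \ref{proposition-Wblocks}), so after the conjugation by $G$ and the shift by $\mathrm{Id}$, the resulting matrix has non-negative spectrum up to sign, consistent with it being a Markov generator.
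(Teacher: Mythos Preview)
Your proof proposal overshoots what the paper actually proves here. The paper's argument is one line: ``Immediate by the definition of $\mathcal{S}$.'' In context, the paper is using ``generator for a Markov process'' to mean only that the rows sum to zero (compare Definition~\ref{defn-groundstate}, where a ground state transformation is required only to produce a matrix with zero row sums). Positivity of off-diagonal entries is \emph{not} part of this lemma; it is handled afterwards in Section~\ref{subsection-MarkovGen} by explicitly deleting any row and column of $a^{-1}G^{-1}\pi_{W\otimes W}(C)G-\mathrm{Id}$ that contains a negative off-diagonal entry or corresponds to $g_i=0$. So your part (a) alone already matches the paper's proof.

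Your attempt at part (b) contains genuine gaps that cannot be repaired. First, your step (ii) is simply false: the off-diagonal entries of $\pi_{W\otimes W}(C)$ are \emph{not} all non-negative Laurent polynomials. The paper acknowledges this directly when it says rows and columns with negative off-diagonal entries must be deleted because they ``lack a probabilistic interpretation.'' Second, your step (i) fails for the $37$ free parameters: since $\mathcal{S}$ is a $37$-dimensional affine solution space, those coordinates are genuinely free and are not forced by the eigenvector relation to be positive rational combinations of the seed entries. Indeed, one may choose some of them to be zero (as the paper does) or negative, so no positivity propagation argument can work uniformly over $\mathcal{S}$. The upshot is that you correctly identified the row-sum condition as tautological from $\mathcal{S}$'s definition, but the additional positivity argument is both unnecessary for this lemma and incorrect as stated.
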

\begin{proof}
    Immediate by the definition of $\mathcal{S}$.
\end{proof}
Note that the proof of Theorem \ref{thm-bigalgebra} is an immediate consequence of $\mathcal{S}$ being nonempty. We give one such generator in Section \ref{subsection-MarkovGen} by setting all unknowns to $0$ for ease of computation. One detail to note is that this generator has dimension $196 = 14^2$, which suggests that, when interpreted as an interacting particle system, the process allows for $14$ different states --- all but four particles of the same class --- at each site.

\subsection{Difference with Probability Generator}
To relate our algebraic findings to those of probability, we end by comparing the two generators. First, note that the $2 \times 2$ and $3 \times 3$ blocks of the algebraically-produced Markov generator in Section \ref{subsection-MarkovGen} are a constant multiple of the $2 \times 2$ and $3 \times 3$ blocks of the probabilistically-produced Markov generator in Section \ref{Presenting_LQ}. However, we notice that any generator matrix given by a ground state transformation as described in Definition \ref{defn-groundstate} will not produce a generator matrix for a Markov process matching the probabilistically-produced generator matrix. We now prove Proposition \ref{prop-weredifferent}.

\begin{proof}
    We will show that no block of any ground state transformation of $\pi_{W \otimes W}(C)$ can be a constant multiple of the $4\times 4$ block of $L_Q$ in Section \ref{Presenting_LQ} corresponding to the communicating class
$$\Bigl( \langle 3 , 0\rangle, \langle1 , 2\rangle, \langle2 , 1\rangle, \langle0 , 3\rangle\Bigl).$$
    Thus, $L_Q$ must be different from any Markov generator produced from $\pi_{W \otimes W}(C)$.

    Note that the above communicating class contains one class 1 particle and one class 2 particle per Section \ref{subsection-commClasses}. As proven in Proposition 2.5 of \cite{kuan2020interactingparticlesystemstype}, this communicating class must correspond to the weight 
    $$2 \lambda_i + \lambda_j +\lambda_k$$
    found in Section \ref{subsect-weightspacedecomp}. As discussed in Lemma \ref{WeightSpaceLemma}, there are $24$ weight spaces of dimension $4$ associated with $2 \lambda_i + \lambda_j +\lambda_k$, and thus $2 \lambda_i + \lambda_j +\lambda_k$ corresponds to the twenty-four $4 \times 4$ blocks in $\pi_{W \otimes W}$. It is therefore sufficient to show that none of the $4 \times 4$ blocks of $\pi_{W \otimes W}(C)$, which are listed in Section \ref{4x4blocks}, can be mapped to the $4 \times 4$ block in $L_Q$ corresponding to the above communicating class.

    By referencing Section \ref{Presenting_LQ}, we notice that each $4 \times 4$ block does not contain any zero entries. However, every $4 \times 4$ block listed in Section \ref{4x4blocks} has an off-diagonal entry which takes the value of $0$.

    Since conjugation by a diagonal matrix will not change a $0$-entry in a matrix and the $0$-entries are off-diagonal, Definition \ref{defn-groundstate} confirms that no ground state transformation of $\pi_{W \otimes W}(C)$ will result in $L_Q$.
\end{proof}

The computationally-heavy verification which involves disregarding the relationship between weights and particles and instead checking $4 \times 4$ blocks after trying all possible ground state transformations on $\pi_{W \otimes W}(C)$ will yield the same result, as this proof is equivalent to the one detailed above.

\section{Appendix} \label{section--Appendix}
\subsection{Code for Probabilistic Sections}
The SymPy code used to verify the generator matrix $L_Q$, limits of component generator matrices, stationary distributions, spectral gaps, and matrices of Markov duality can be found at the authors' \href{https://github.com/evaengel/stochastic-fusion}{GitHub repository}. Because the largest communicating classes only have size 9, the matrices are small enough that the calculations can be done by hand. Those details are omitted from this paper, and the SymPy code only serves as an independent verification.

\subsection{List of Communicating Classes by Size} \label{subsection-commClasses}
The following communicating classes are for the two-$\gamma$ lattice site Type D ASEP.

The ordered 9-state communicating class follows:
$$\Bigl( \langle 3, 3 \rangle,\langle 0, 33 \rangle, \langle 33,0 \rangle,\langle 1,32 \rangle,\langle 32,1 \rangle,\langle 2,31 \rangle, \langle 31,2 \rangle,\langle 11, 22 \rangle,\langle 22, 11 \rangle \Bigl)$$

The ordered 6-state communicating classes follow:
$$\Bigl( \langle 1 , 3\rangle, \langle3 , 1\rangle, \langle0 , 31\rangle, \langle31 , 0\rangle, \langle2 , 11\rangle, \langle11 , 2 \rangle \Bigl) $$
$$\Bigl( \langle 2 , 3\rangle, \langle3 , 2\rangle, \langle0 , 32\rangle, \langle32 , 0\rangle, \langle1 , 22\rangle, \langle22 , 1\rangle \Bigl)$$
$$\Bigl( \langle 3 , 31\rangle, \langle31 , 3\rangle, \langle1 , 33\rangle, \langle33 , 1\rangle, \langle11 , 32\rangle, \langle32 , 11\rangle\Bigl)$$
$$\Bigl( \langle 3 , 32\rangle, \langle32 , 3\rangle, \langle2 , 33\rangle, \langle33 , 2\rangle, \langle22 , 31\rangle, \langle31 , 22\rangle\Bigl)$$

The ordered 4-state communicating classes follow:
$$\Bigl( \langle 3 , 0\rangle, \langle1 , 2\rangle, \langle2 , 1\rangle, \langle0 , 3\rangle\Bigl)$$
$$\Bigl(\langle 3 , 11\rangle, \langle1 , 31\rangle, \langle11 , 3\rangle, \langle31 , 1\rangle\Bigl)$$
$$\Bigl(\langle 3 , 22\rangle, \langle2 , 32\rangle, \langle22 , 3\rangle, \langle32 , 2\rangle\Bigl)$$
$$\Bigl(\langle 33 , 3\rangle, \langle31 , 32\rangle, \langle32 , 31\rangle, \langle3 , 33\rangle\Bigl)$$

The ordered 3-state communicating classes follow:
$$\Bigl(\langle 1 , 1\rangle, \langle0 , 11\rangle, \langle11 , 0\rangle\Bigl)$$
$$\Bigl(\langle 2 , 2\rangle, \langle0 , 22\rangle, \langle22 , 0\rangle\Bigl)$$
$$\Bigl(\langle 31 , 31\rangle, \langle11 , 33\rangle, \langle33 , 11\rangle\Bigl)$$
$$\Bigl(\langle 32 , 32\rangle, \langle22 , 33\rangle, \langle33 , 22\rangle\Bigl)$$

The ordered 2-state communicating classes follow:
$$\Bigl(\langle 0 , 1\rangle, \langle1 , 0\rangle\Bigl)$$
$$\Bigl(\langle 0 , 2\rangle, \langle2 , 0\rangle\Bigl)$$
$$\Bigl(\langle 1 , 11\rangle, \langle11 , 1\rangle\Bigl)$$
$$\Bigl(\langle 2 , 22\rangle, \langle22 , 2\rangle\Bigl)$$
$$\Bigl(\langle 11 , 31\rangle, \langle31 , 11\rangle\Bigl)$$
$$\Bigl(\langle 22 , 32\rangle, \langle32 , 22\rangle\Bigl)$$
$$\Bigl(\langle 31 , 33\rangle, \langle33 , 31\rangle\Bigl)$$
$$\Bigl(\langle 32 , 33\rangle, \langle33 , 32\rangle\Bigl)$$

The ordered 1-state communicating classes follow:
$$\Bigl(\langle 0 , 0\rangle\Bigl)$$
$$\Bigl(\langle 11 , 11\rangle\Bigl)$$
$$\Bigl(\langle 22 , 22\rangle\Bigl)$$
$$\Bigl(\langle 33 , 33\rangle\Bigl)$$

\subsection{Presenting \texorpdfstring{$L_Q$}{LQ}}
\label{Presenting_LQ}
$L_Q$ is the direct sum of one $9 \times 9$ block
\[\mathcal{L}_9 = \]
\[
\left[
\begin{array}{ccccccccc}
* & D_{1} & D_{2} & D_{10} & D_{11} & D_{10} & D_{11} & D_{21} & D_{21}\\
D_{3} & * & 0 & D_{12} & 0 & D_{13} & 0 & 0 & 0 \\
D_{4} & 0 & * & 0 & D_{14} & 0 & D_{14} & 0 & 0 \\
D_{5} & D_{6} & 0 & * & 0 & D_{15} & D_{16} & D_{22} & 0\\
D_{7} & 0 & D_{8} & 0 & * & D_{15} & D_{18} & 0 & D_{23}\\
D_{5} & D_{6} & 0 & D_{15} & D_{16} & * & 0 & 0 & D_{22}\\
D_{7} & 0 & D_{8} & D_{17} & D_{18} & 0 & * & D_{23} & 0\\
D_{9} & 0 & 0 & D_{19} & 0 & 0 & D_{20} & * & 0\\
D_{9} & 0 & 0 & 0 & D_{20} & D_{19} & 0 & 0 & *\\
\end{array}
\right]
\]

where 
$$D_{1} = \frac{q^{4n}+q^4-2q^{2n+2}}{(q^{12}+4q^{10}+6q^{8}+4q^6+q^4)q^{2n}} \quad D_{2} = \frac{q^{12}+q^{4n+8}-2q^{2n+10}}{(q^8+4q^6+6q^4+4q^2+1)q^{2n})}$$

$$D_{3} = \frac{q^{4n}-2q^{2n+2}+q^4}{q^{2n}} \quad D_{4} =\frac{q^{4n}-2q^{2n+2}+q^4}{q^{2n+4}}$$

$$D_{5} = \frac{q^{4n}(2q^2-1)+2q^{2n}(q^6-q^4+q^2)+2q^6-q^8}{(q^4+q^2)q^{2n}} \quad D_{6} = \frac{2q^{2n}-q^2+1}{(q^4+2q^2+1)q^{2n}}$$

$$D_{7} = \frac{q^{4n}(2q^2-1)+2q^{2n}(q^6-q^4+q^2)+2q^6-q^8}{(q^6+q^4)q^{2n}} \quad D_{8} = \frac{q^{2n}(q^4-q^2)+2q^4}{q^4+2q^2+1}$$

$$D_{9} = \frac{q^{4n}-2q^{2n+2}+q^4}{q^{2n+2}} \quad D_{10} = \frac{2q^{2n}(q^6-q^4+q^2)+q^{4n}(2q^2-1)+2q^6-q^8}{(q^{10}+3q^8+3q^6+q^4)q^{2n}}$$

$$D_{11} = \frac{q^{4n}(2q^4-q^2)+2q^{2n}(q^8-q^6+q^4)+2q^8-q^{10}}{(q^6+3q^4+3q^2+1)q^{2n}} \quad D_{12} = \frac{2q^{2n+2}-q^2-q^4}{q^{2n}}$$

$$D_{13} = \frac{2q^{2n+2}-q^4+q^2}{q^{2n}} \quad D_{14} = \frac{q^{2n}(q^2-1)+2q^2}{q^4}$$

$$D_{15} = \frac{q^{4n}-2q^{2n+2}+q^4}{(q^6+2q^4+q^2)q^{2n}} \quad D_{16} = \frac{q^{4n+4}-2q^{2n+6}+q^8}{(q^4+2q^2+1)q^{2n}}$$

$$D_{17} = \frac{q^{4n}-2q^{2n+2}+q^4}{(q^8+2q^6+q^4)q^{2n}} \quad D_{18} = \frac{q^{4n+2}-2q^{2n+4}+q^6}{(q^4+2q^2+1)q^{2n}}$$

$$D_{19} = \frac{2q^{2n}+1-q^2}{q^{2n}} \quad D_{20} = \frac{q^{2n}(q^2-1)+2q^2}{q^2}$$

$$D_{21} = \frac{q^6+q^{4n+2}-2q^{2n+4}}{(q^8+4q^6+6q^4+4q^2+1)q^{2n}} \quad D_{22} = \frac{2q^{2n+6}+q^6-q^8}{(q^4+2q^2+1)q^{2n}}$$

$$D_{23} = \frac{(q^2-1)q^{2n}+2q^2}{q^8+2q^6+q^4}$$

and $\mathcal{L}_9$ corresponds to the 9-state communicating class listed in Section \ref{subsection-commClasses},

four $6 \times 6$ blocks
\[\mathcal{L}_6 = \]
\[
\left[
\begin{array}{cccccc}
    * & D_1 & D_2 & D_{14} & D_{15} & D_{16}\\ 
    D_3 & * & D_4 & D_{17} & D_{18} & D_{19} \\ 
    D_5 & D_6 & * & 0 & D_{20} & 0\\ 
    D_7 & D_8 & 0 & * & 0 & D_{21} \\
    D_9 & D_{10} & D_{11} & 0 & * & 0\\
    D_{12} & D_{13} & 0 & D_{22} & 0 & * \\
\end{array}
\right]
\]

where 

$$D_1 = \frac{2q^6+(2q^2-1)q^{4n}+2q^{2n}(q^6-q^4+q^2)-q^8}{(q^4+2q^2+1)q^{2n}} \quad D_2 = \frac{q^{4n}+2q^{2n+4}+q^4+q^2-q^6}{(q^8+3q^6+3q^4+q^2)q^{2n}}$$

$$D_3 = \frac{2q^6+(2q^2-1)q^{4n}+2q^{2n}(q^6-q^4+q^2)-q^8}{(q^8+2q^6+q^4)q^{2n}} \quad D_4 = \frac{q^4+q^{4n}-2q^{2n+2}}{(q^{10}+3q^8+3q^6+q^4)q^{2n}}$$

$$D_5 = \frac{q^{4n}+2q^{2n+4}+q^4+q^2-q^6}{(q^2+1)q^{2n}} \quad D_6 = \frac{q^6+q^{4n+2}-2q^{2n+4}}{(q^2+1)}$$

$$D_7 = \frac{q^4+q^{4n}-2q^{2n+2}}{(q^6+q^4)q^{2n}} \quad D_8 = \frac{q^6+(q^4+q^2-1)q^{4n}+2q^{2n+2}}{(q^6+q^4)q^{2n}}$$

$$D_9 =  \frac{q^4+q^{4n}-2q^{2n+2}}{(q^4+q^2)q^{2n}} \quad D_{10} = \frac{q^6+(q^4+q^2-1)q^{4n}+2q^{2n+2}}{(q^4+q^2)q^{2n}}$$

$$D_{11} = \frac{2q^{2n}-q^2+1}{(q^2+1)q^{2n}} \quad D_{12} = \frac{q^{4n}+2q^{2n+4}}{(q^4+q^2)q^{2n}}$$

$$D_{13} = \frac{q^{4n}-2q^{2n+2}+q^4}{(q^2+1)q^{2n}} \quad D_{14} = \frac{q^{10}+q^{4n+6}-2q^{2n+8}}{(q^6+3q^4+3q^2+1)q^{2n}}$$

$$D_{15} = \frac{q^4+q^{4n}-2q^{2n+2}}{(q^6+3q^4+3q^2+1)q^{2n}} \quad D_{16} = \frac{q^8+q^6+q^{4n+4}+2q^{2n+8}-q^{10}}{(q^6+3q^4+3q^2+1)q^{2n}}$$

$$D_{17} = \frac{q^8+(q^6+q^4-q^2)q^{4n}+2q^{2n+4}}{(q^6+3q^4+3q^2+1)q^{2n}} \quad D_{18} = \frac{q^6+(q^4+q^2-1)q^{4n}+2q^{2n+2}}{(q^{10}+3q^8+3q^6+q^4)q^{2n})}$$

$$D_{19} = \frac{q^6+q^{4n+2}-2q^{2n+4}}{(q^6+3q^4+3q^2+1)q^{2n}} \quad D_{20} = \frac{2q^{2n+4}+q^4-q^6}{(q^2+1)q^{2n}}$$

$$D_{21} = \frac{(q^2-1)q^{2n}+2q^2}{q^6+q^4} \quad D_{22} = \frac{(q^2-1)q^{2n}+2q^2}{q^2+1}$$

and $\mathcal{L}_6$ corresponds to the 6-state communicating classes listed in Section \ref{subsection-commClasses},

four $4 \times 4$ blocks
\[\mathcal{L}_4 = \frac{1}{(q^4+2q^2+1)q^{2n}}M\]

where 

\[M = \]
\[
\left[
\begin{array}{cccc}
     * & D_1 & D_1 & D_6\\
     D_2 & * & D_3 & D_7\\
     D_2 & D_3 & * & D_7\\
    D_4 & D_5 & D_5 & * \\
\end{array}
\right]
\]

where 

$$D_1 = \frac{q^6+(q^4+q^2-1)q^{4n}+2q^{2n+2}}{q^4} \quad D_2 = q^6+(q^4+q^2-1)q^{4n}+2q^{2n+2}$$

$$D_3 = q^4+q^{4n}-2q^{2n+2} \quad D_4 = q^8+q^{4n+4}-2q^{2n+6}$$

$$D_5 = q^{4n+2}+2q^{2n+6}+q^6+q^4-q^8 \quad D_6 = \frac{q^4+q^{4n}-2q^{2n+2}}{q^4}$$

$$D_7 = \frac{2q^{2n+4}+q^{4n}+q^4+q^2-q^6}{q^2}$$

and $\mathcal{L}_4$ corresponds to the 4-state communicating classes listed in Section \ref{subsection-commClasses},

four $3 \times 3$ blocks
\[\mathcal{L}_3 = \ \begin{bmatrix}
    * & \frac{q^2+q^{4n}}{(q^6+2q^4+q^2)q^{2n}} & \frac{q^6+q^{4n+4}}{(q^4+2q^2+1)q^{2n}} \\
    \frac{q^2+q^{4n}}{q^{2n}} & * & 0 \\
    \frac{q^2+q^{4n}}{q^{2n+2}} & 0 & *
\end{bmatrix}\]
corresponding to the 3-state communicating classes listed in Section \ref{subsection-commClasses}, 
eight $2 \times 2$ blocks
\[\mathcal{L}_2 = \begin{bmatrix}
    * & \frac{q^4+q^{4n+2}}{(q^2+1)q^{2n}} \\
    \frac{q^2+q^{4n}}{(q^4+q^2)q^{2n}} & *
\end{bmatrix}\]
corresponding to the 2-state communicating classes listed in Section \ref{subsection-commClasses},

and four $1 \times 1$ blocks with entry 0 corresponding to the 1-state communicating classes listed in Section \ref{subsection-commClasses}. Here, the diagonal entries are choosen so that the rows sum to 0. To summarize, the generator matrix is
\[L^D_Q = \mathcal{L}_9 \oplus \bigoplus_{i=1}^4 \mathcal{L}_6 \oplus \bigoplus_{i=1}^4 \mathcal{L}_4 \oplus \bigoplus_{i=1}^4 \mathcal{L}_3 \oplus \bigoplus_{i=1}^8 \mathcal{L}_2 \oplus \bigoplus_{i=1}^4 [0] \]
with respect to the ordered basis that is the concatenation of all of the above communicating classes in the order they are presented.

\subsection{Ordering of States and Permutations of 
\texorpdfstring{$L_m$}{Lm} and \texorpdfstring{$L_Q$}{Lq}}
\label{PermutationsAppendix}
\subsubsection{\texorpdfstring{$\Gamma$}{Gamma} State Ordering}
The assumed ordering of states for four $\Gamma$ lattice sites is $\Omega \otimes \Omega$. To be more concise we represent the state $(x_1,x_2,x_3,x_4)$ as $x_1x_2x_3x_4$.

Assumed Ordering:

\Big(3030,   3021,   3003,   3012,   3010,   3001,   3020,   3002,   3031,   3013,   3032,   3023,   3000,   3011,   3022,   3033,   2130,   2121,   2103,   2112,   2110,   2101,   2120,   2102,   2131,   2113,   2132,   2123,   2100,   2111,   2122,   2133,   0330,   0321,   0303,   0312,   0310,   0301,   0320,   0302,   0331,   0313,   0332,   0323,   0300,   0311,   0322,   0333,   1230,   1221,   1203,   1212,   1210,   1201,   1220,   1202,   1231,   1213,   1232,   1223,   1200,   1211,   1222,   1233,   1030,   1021,   1003,   1012,   1010,   1001,   1020,   1002,   1031,   1013,   1032,   1023,   1000,   1011,   1022,   1033,   0130,   0121,   0103,   0112,   0110,   0101,   0120,   0102,   0131,   0113,   0132,   0123,   0100,   0111,   0122,   0133,   2030,   2021,   2003,   2012,   2010,   2001,   2020,   2002,   2031,   2013,   2032,   2023,   2000,   2011,   2022,   2033,   0230,   0221,   0203,   0212,   0210,   0201,   0220,   0202,   0231,   0213,   0232,   0223,   0200,   0211,   0222,   0233,   3130,   3121,   3103,   3112,   3110,   3101,   3120,   3102,   3131,   3113,   3132,   3123,   3100,   3111,   3122,   3133,   1330,   1321,   1303,   1312,   1310,   1301,   1320,   1302,   1331,   1313,   1332,   1323,   1300,   1311,   1322,   1333,   3230,   3221,   3203,   3212,   3210,   3201,   3220,   3202,   3231,   3213,   3232,   3223,   3200,   3211,   3222,   3233,   2330,   2321,   2303,   2312,   2310,   2301,   2320,   2302,   2331,   2313,   2332,   2323,   2300,   2311,   2322,   2333,   0030,   0021,   0003,   0012,   0010,   0001,   0020,   0002,   0031,   0013,   0032,   0023,   0000,   0011,   0022,   0033,   1130,   1121,   1103,   1112,   1110,   1101,   1120,   1102,   1131,   1113,   1132,   1123,   1100,   1111,   1122,   1133,   2230,   2221,   2203,   2212,   2210,   2201,   2220,   2202,   2231,   2213,   2232,   2223,   2200,   2211,   2222,   2233,   3330,   3321,   3303,   3312,   3310,   3301,   3320,   3302,   3331,   3313,   3332,   3323,   3300,   3311,   3322,   3333 \Big)

Diagonal Ordering:

\Big(3300,   3210,   3030,   3120,   3100,   3010,   3200,   3020,   3310,   3130,   3320,   3230,   3000,   3110,   3220,   3330,   2301,   2211,   2031,   2121,   2101,   2011,   2201,   2021,   2311,   2131,   2321,   2231,   2001,   2111,   2221,   2331,   0303,   0213,   0033,   0123,   0103,   0013,   0203,   0023,   0313,   0133,   0323,   0233,   0003,   0113,   0223,   0333,   1302,   1212,   1032,   1122,   1102,   1012,   1202,   1022,   1312,   1132,   1322,   1232,   1002,   1112,   1222,   1332,   1300,   1210,   1030,   1120,   1100,   1010,   1200,   1020,   1310,   1130,   1320,   1230,   1000,   1110,   1220,   1330,   0301,   0211,   0031,   0121,   0101,   0011,   0201,   0021,   0311,   0131,   0321,   0231,   0001,   0111,   0221,   0331,   2300,   2210,   2030,   2120,   2100,   2010,   2200,   2020,   2310,   2130,   2320,   2230,   2000,   2110,   2220,   2330,   0302,   0212,   0032,   0122,   0102,   0012,   0202,   0022,   0312,   0132,   0322,   0232,   0002,   0112,   0222,   0332,   3301,   3211,   3031,   3121,   3101,   3011,   3201,   3021,   3311,   3131,   3321,   3231,   3001,   3111,   3221,   3331,   1303,   1213,   1033,   1123,   1103,   1013,   1203,   1023,   1313,   1133,   1323,   1233,   1003,   1113,   1223,   1333,   3302,   3212,   3032,   3122,   3102,   3012,   3202,   3022,   3312,   3132,   3322,   3232,   3002,   3112,   3222,   3332,   2303,   2213,   2033,   2123,   2103,   2013,   2203,   2023,   2313,   2133,   2323,   2233,   2003,   2113,   2223,   2333,   0300,   0210,   0030,   0120,   0100,   0010,   0200,   0020,   0310,   0130,   0320,   0230,   0000,   0110,   0220,   0330,   1301,   1211,   1031,   1121,   1101,   1011,   1201,   1021,   1311,   1131,   1321,   1231,   1001,   1111,   1221,   1331,   2302,   2212,   2032,   2122,   2102,   2012,   2202,   2022,   2312,   2132,   2322,   2232,   2002,   2112,   2222,   2332,   3303,   3213,   3033,   3123,   3103,   3013,   3203,   3023,   3313,   3133,   3323,   3233,   3003,   3113,   3223,   3333 \Big)

\subsubsection{\texorpdfstring{$\gamma$}{gamma} State Ordering}
The assumed ordering of states for two $\gamma$ lattice sites is $\chi \otimes \chi$. To be more concise we represent the state $\langle x_1,x_2 \rangle$ as $x_1\_x_2$.

Assumed Ordering:

\Big(0\_0,   0\_1,   0\_2,   0\_11,   0\_3,   0\_22,   0\_31,   0\_32,   0\_33,   1\_0,   1\_1,   1\_2,   1\_11,   1\_3,   1\_22,   1\_31,   1\_32,   1\_33,   2\_0,   2\_1,   2\_2,   2\_11,   2\_3,   2\_22,   2\_31,   2\_32,   2\_33,   11\_0,   11\_1,   11\_2,   11\_11,   11\_3,   11\_22,   11\_31,   11\_32,   11\_33,   3\_0,   3\_1,   3\_2,   3\_11,   3\_3,   3\_22,   3\_31,   3\_32,   3\_33,   22\_0,   22\_1,   22\_2,   22\_11,   22\_3,   22\_22,   22\_31,   22\_32,   22\_33,   31\_0,   31\_1,   31\_2,   31\_11,   31\_3,   31\_22,   31\_31,   31\_32,   31\_33,   32\_0,   32\_1,   32\_2,   32\_11,   32\_3,   32\_22,   32\_31,   32\_32,   32\_33,   33\_0,   33\_1,   33\_2,   33\_11,   33\_3,   33\_22,   33\_31,   33\_32,   33\_33 \Big)

Diagonal Ordering:

 \Big(3\_3,   0\_33,   33\_0,   1\_32,   32\_1,   2\_31,   31\_2,   11\_22,   22\_11,   1\_3,   3\_1,   0\_31,   31\_0,   2\_11,   11\_2,   2\_3,   3\_2,   0\_32,   32\_0,   1\_22,   22\_1,   3\_31,   31\_3,   1\_33,   33\_1,   11\_32,   32\_11,   3\_32,   32\_3,   2\_33,   33\_2,   22\_31,   31\_22,   3\_0,   1\_2,   2\_1,   0\_3,   31\_1,   11\_3,   3\_11,   1\_31,   32\_2,   22\_3,   3\_22,   2\_32,   33\_3,   31\_32,   32\_31,   3\_33,   1\_1,   0\_11,   11\_0,   2\_2,   0\_22,   22\_0,   31\_31,   11\_33,   33\_11,   32\_32,   22\_33,   33\_22,   0\_1,   1\_0,   0\_2,   2\_0,   1\_11,   11\_1,   2\_22,   22\_2,   11\_31,   31\_11,   22\_32,   32\_22,   31\_33,   33\_31,   32\_33,   33\_32,   0\_0,   11\_11,   22\_22,   33\_33 \Big)

\subsection{Limits of Components of 
\texorpdfstring{$L_p^{(2)}$}{Lp2}} \label{Lp_for_n}

Recall that $L^{(2)}_p$ is given in section 2.2 ``Probabilistic definitions'' of \cite{rohr2023type} as the direct sum of three distinct block matrices $L_1$, $L_2$, and four $1 \times 1$ zero matrices.

We list the limits for $q^{-2n} L_1$ and $q^{-2n} L_2$ for $q > 1$. The limits follow similarly when multiplying  $L_1$ and $L_2$ by $q^{2n}$ for $0 < q < 1$.

Take $q > 1$.

\[
\lim_{n \to \infty} (q^{-2n} L_1) = 
\begin{bmatrix}
-\frac{2q^2-1}{q^4} & \frac{q^2-1}{q^4}& \frac{1}{q^4} & \frac{q^2-1}{q^4} \\
\frac{q^2 - 1}{q^2} & -1 & 0 & \frac{1}{q^2} \\
1 & 0 & -1 & 0 \\
\frac{q^2 - 1}{q^2} & \frac{1}{q^2} & 0 & -1 \\
\end{bmatrix}
\]
\[\lim_{n \to \infty} (q^{-2n} L_2) = 
\begin{bmatrix}
-\frac{1}{q^2} & \frac{1}{q^2} \\
1 & -1 \\
\end{bmatrix}
\]

\subsection{Diagonalization for Communicating Classes}
\label{DiagonalizationAppendix}
We show that for each communicating class $\mathcal{L}_i$ where $i \in \{3,4,6,9\}$, the matrix $\mathcal{P}_i$ (of right eigenvectors of the diagonalization of $\mathcal{L}_i$) is invertible for $q > 0$ and $q \neq 1$ , and therefore that all the eigenvectors are linearly independent which implies diagonalizability.

Consider  $\mathcal{P}_9$,
\[ \mathcal{P}_9 = \]
\[\begin{bmatrix}
1 & - \frac{q^{4}}{q^{8} + q^{6} + q^{2} + 1} & - \frac{q^{4}}{q^{8} + q^{6} + q^{2} + 1} & \frac{q^{2}}{q^{4} + 2 q^{2} + 1} & \frac{q^{2}}{q^{4} + 2 q^{2} + 1} & \frac{q^{2}}{q^{4} + 2 q^{2} + 1} & \frac{q^{2}}{q^{4} + 2 q^{2} + 1} & \frac{q^{2}}{q^{4} + 2 q^{2} + 1} & \frac{- q^{4} + 2 q^{2} - 1}{q^{4} + 2 q^{2} + 1}\\1 & \frac{q^{6}}{q^{6} + 1} & \frac{q^{6}}{q^{6} + 1} & \frac{q^{4}}{q^{4} - 1} & \frac{q^{4}}{q^{4} - 1} & q^{6} & - \frac{q^{6}}{q^{2} - 1} & - \frac{q^{6}}{q^{2} - 1} & - q^{4}\\1 & \frac{1}{q^{6} + 1} & \frac{1}{q^{6} + 1} & - \frac{1}{q^{4} - 1} & - \frac{1}{q^{4} - 1} & \frac{1}{q^{6}} & \frac{1}{q^{6} - q^{4}} & \frac{1}{q^{6} - q^{4}} & - \frac{1}{q^{4}}\\1 & \frac{q^{10} - q^{6} + q^{4}}{q^{10} - q^{6} + q^{4} - 1} & - \frac{q^{6}}{q^{10} - q^{6} + q^{4} - 1} & \frac{q^{6} + q^{4} - q^{2}}{q^{6} + q^{4} - q^{2} - 1} & \frac{q^{4}}{q^{6} + q^{4} - q^{2} - 1} & - \frac{q^{4}}{q^{2} + 1} & \frac{- q^{8} + q^{6} + q^{4}}{q^{6} + q^{4} - q^{2} - 1} & \frac{q^{4}}{q^{6} + q^{4} - q^{2} - 1} & \frac{- q^{4} + q^{2}}{q^{2} + 1}\\1 & \frac{q^{4}}{q^{10} - q^{6} + q^{4} - 1} & \frac{- q^{6} + q^{4} - 1}{q^{10} - q^{6} + q^{4} - 1} & - \frac{q^{2}}{q^{6} + q^{4} - q^{2} - 1} & \frac{q^{4} - q^{2} - 1}{q^{6} + q^{4} - q^{2} - 1} & - \frac{1}{q^{4} + q^{2}} & - \frac{q^{2}}{q^{6} + q^{4} - q^{2} - 1} & \frac{- q^{4} - q^{2} + 1}{q^{8} + q^{6} - q^{4} - q^{2}} & \frac{q^{2} - 1}{q^{4} + q^{2}}\\1 & - \frac{q^{6}}{q^{10} - q^{6} + q^{4} - 1} & \frac{q^{10} - q^{6} + q^{4}}{q^{10} - q^{6} + q^{4} - 1} & \frac{q^{4}}{q^{6} + q^{4} - q^{2} - 1} & \frac{q^{6} + q^{4} - q^{2}}{q^{6} + q^{4} - q^{2} - 1} & - \frac{q^{4}}{q^{2} + 1} & \frac{q^{4}}{q^{6} + q^{4} - q^{2} - 1} & \frac{- q^{8} + q^{6} + q^{4}}{q^{6} + q^{4} - q^{2} - 1} & \frac{- q^{4} + q^{2}}{q^{2} + 1}\\1 & \frac{- q^{6} + q^{4} - 1}{q^{10} - q^{6} + q^{4} - 1} & \frac{q^{4}}{q^{10} - q^{6} + q^{4} - 1} & \frac{q^{4} - q^{2} - 1}{q^{6} + q^{4} - q^{2} - 1} & - \frac{q^{2}}{q^{6} + q^{4} - q^{2} - 1} & - \frac{1}{q^{4} + q^{2}} & \frac{- q^{4} - q^{2} + 1}{q^{8} + q^{6} - q^{4} - q^{2}} & - \frac{q^{2}}{q^{6} + q^{4} - q^{2} - 1} & \frac{q^{2} - 1}{q^{4} + q^{2}}\\1 & 1 & 0 & 1 & 0 & 1 & 1 & 0 & 1\\1 & 0 & 1 & 0 & 1 & 1 & 0 & 1 & 1
\end{bmatrix}
\]

It follows that 
\[\det(\mathcal{P}_9) = \frac{\left(q^{4} + 1\right)^{5} \left(q^{2} - q + 1\right)^{5} \left(q^{2} + q + 1\right)^{5}}{q^{6} \left(q - 1\right)^{3} \left(q + 1\right)^{3} \left(q^{2} + 1\right)^{9} \left(q^{4} - q^{2} + 1\right)
}\] 
For $q > 0$ and $q \neq 1$, the $\det(\mathcal{P}_9)$ is not $0$ and is defined. Therefore, $\mathcal{P}_9$ is invertible for $q > 0$ and $q \neq 1$.

Consider  $\mathcal{P}_6$,
\[
 \mathcal{P}_6 = \begin{bmatrix}
1 & - \frac{q^{4}}{q^{2} + 1} & \frac{q^{4}}{q^{6} + q^{4} - q^{2} - 1} & \frac{q^{6} + q^{4} - q^{2}}{q^{6} + q^{4} - q^{2} - 1} & - \frac{q^{4}}{q^{2} + 1} & \frac{- q^{4} + q^{2}}{q^{2} + 1}\\1 & - \frac{q^{4}}{q^{2} + 1} & \frac{q^{4} - q^{2} - 1}{q^{6} + q^{4} - q^{2} - 1} & - \frac{q^{2}}{q^{6} + q^{4} - q^{2} - 1} & \frac{1}{q^{2} + 1} & \frac{q^{2} - 1}{q^{4} + q^{2}}\\1 & q^{6} & \frac{q^{4}}{q^{4} - 1} & \frac{q^{4}}{q^{4} - 1} & q^{6} & - q^{4}\\1 & 1 & - \frac{1}{q^{4} - 1} & - \frac{1}{q^{4} - 1} & - \frac{1}{q^{4}} & - \frac{1}{q^{4}}\\1 & q^{6} & 1 & 0 & - q^{2} & 1\\1 & 1 & 0 & 1 & 1 & 1
\end{bmatrix}
\]

It follows that 
\[\det(\mathcal{P}_6) = -\frac{(q^4+1)^4(q^2-q+1)^2(q^2+q+1)^2}{q^4(q-1)(q+1)(q^2+1)^3}\] 
For $q > 0$ and $q \neq 1$, the $\det(\mathcal{P}_6)$ is not $0$ and is defined. Therefore, $\mathcal{P}_6$ is invertible for $q > 0$ and $q \neq 1$.

Consider $\mathcal{P}_4$,
\[
\mathcal{P}_4 = \begin{bmatrix}
    1 & 0 & -\frac{1}{q^4} & \frac{1}{q^8} \\
    1 & -1 & \frac{q^4-1}{q^4} & -\frac{1}{q^4} \\
    1 & 1 & 0 & -\frac{1}{q^4} \\
    1 & 0 & 1 & 1
\end{bmatrix}
\]
It follows that 
\[\det(\mathcal{P}_4) = - \frac{q^{12}+3q^8+3q^4+1}{q^{12}}\] 
For $q > 0$, the $\det(\mathcal{P}_4)$ is not $0$. Therefore, $\mathcal{P}_4$ is invertible for $q > 0$.

Consider $\mathcal{P}_3$,
\[
\mathcal{P}_3 = \begin{bmatrix}
    1 & -\frac{q^4}{q^2+1} & \frac{q^2-q^4}{q^2+1} \\
    1 & q^6 & -q^4 \\
    1 & 1 & 1
\end{bmatrix}
\]
It follows that 
\[\det(\mathcal{P}_3) = \frac{q^{10}+q^8+2q^6+q^4+q^2}{q^2+1}\] 
For $q > 0$, the $\det(\mathcal{P}_3)$ is not $0$. Therefore, $\mathcal{P}_3$ is invertible for $q > 0$.

\subsection{Ordering of Basis Vectors to Block \texorpdfstring{$\pi_{W\otimes W} (C)$}{pi(W tensor W)(C)}} 
Denote $v_i$ to be the $i^{th}$ basis vector of $W$. Then, the blocked matrix $\pi_{W \otimes W}(C)$ is with respect to ordered basis: \label{appendix-basisWxW}
\begingroup
\allowdisplaybreaks
\begin{flalign*}
    &v_{1} \otimes v_{1},
v_{1} \otimes v_{20},
v_{2} \otimes v_{19},
v_{4} \otimes v_{17},
v_{5} \otimes v_{16},
v_{8} \otimes v_{13},
v_{9} \otimes v_{9},
v_{3} \otimes v_{18},
v_{6} \otimes v_{15}, \\ 
&v_{7} \otimes v_{14},
v_{9} \otimes v_{11},
v_{11} \otimes v_{9},
v_{11} \otimes v_{11},
v_{13} \otimes v_{8},
v_{10} \otimes v_{12},
v_{14} \otimes v_{7},
v_{16} \otimes v_{5},
v_{12} \otimes v_{10}, \\ 
&v_{15} \otimes v_{6},
v_{17} \otimes v_{4},
v_{18} \otimes v_{3},
v_{19} \otimes v_{2},
v_{20} \otimes v_{1},
v_{1} \otimes v_{13},
v_{2} \otimes v_{9},
v_{2} \otimes v_{11},
v_{4} \otimes v_{7}, \\ 
&v_{5} \otimes v_{6},
v_{8} \otimes v_{3},
v_{9} \otimes v_{2},
v_{3} \otimes v_{8},
v_{6} \otimes v_{5},
v_{7} \otimes v_{4},
v_{11} \otimes v_{2},
v_{13} \otimes v_{1},
v_{1} \otimes v_{16}, \\ 
&v_{2} \otimes v_{14},
v_{4} \otimes v_{9},
v_{4} \otimes v_{11},
v_{5} \otimes v_{10},
v_{8} \otimes v_{6},
v_{9} \otimes v_{4},
v_{6} \otimes v_{8},
v_{11} \otimes v_{4},
v_{10} \otimes v_{5}, \\ 
&v_{14} \otimes v_{2},
v_{16} \otimes v_{1},
v_{1} \otimes v_{17},
v_{2} \otimes v_{15},
v_{4} \otimes v_{12},
v_{5} \otimes v_{9},
v_{5} \otimes v_{11},
v_{8} \otimes v_{7},
v_{9} \otimes v_{5}, \\ 
&v_{7} \otimes v_{8},
v_{11} \otimes v_{5},
v_{12} \otimes v_{4},
v_{15} \otimes v_{2},
v_{17} \otimes v_{1},
v_{1} \otimes v_{19},
v_{2} \otimes v_{18},
v_{4} \otimes v_{15},
v_{5} \otimes v_{14}, \\ 
&v_{8} \otimes v_{9},
v_{8} \otimes v_{11},
v_{9} \otimes v_{8},
v_{11} \otimes v_{8},
v_{14} \otimes v_{5},
v_{15} \otimes v_{4},
v_{18} \otimes v_{2},
v_{19} \otimes v_{1},
v_{2} \otimes v_{16}, \\ 
&v_{3} \otimes v_{14},
v_{4} \otimes v_{13},
v_{6} \otimes v_{9},
v_{6} \otimes v_{11},
v_{7} \otimes v_{10},
v_{9} \otimes v_{6},
v_{11} \otimes v_{6},
v_{13} \otimes v_{4},
v_{14} \otimes v_{3}, \\ 
&v_{16} \otimes v_{2},
v_{10} \otimes v_{7},
v_{2} \otimes v_{17},
v_{3} \otimes v_{15},
v_{5} \otimes v_{13},
v_{6} \otimes v_{12},
v_{7} \otimes v_{9},
v_{7} \otimes v_{11},
v_{9} \otimes v_{7}, \\ 
&v_{11} \otimes v_{7},
v_{13} \otimes v_{5},
v_{12} \otimes v_{6},
v_{15} \otimes v_{3},
v_{17} \otimes v_{2},
v_{2} \otimes v_{20},
v_{3} \otimes v_{19},
v_{6} \otimes v_{17},
v_{7} \otimes v_{16}, \\ 
&v_{9} \otimes v_{13},
v_{11} \otimes v_{13},
v_{13} \otimes v_{9},
v_{13} \otimes v_{11},
v_{16} \otimes v_{7},
v_{17} \otimes v_{6},
v_{19} \otimes v_{3},
v_{20} \otimes v_{2},
v_{4} \otimes v_{19}, \\ 
&v_{6} \otimes v_{18},
v_{8} \otimes v_{16},
v_{9} \otimes v_{14},
v_{10} \otimes v_{15},
v_{11} \otimes v_{14},
v_{14} \otimes v_{9},
v_{14} \otimes v_{11},
v_{16} \otimes v_{8},
v_{18} \otimes v_{6}, \\ 
&v_{19} \otimes v_{4},
v_{15} \otimes v_{10},
v_{4} \otimes v_{20},
v_{6} \otimes v_{19},
v_{9} \otimes v_{16},
v_{10} \otimes v_{17},
v_{11} \otimes v_{16},
v_{14} \otimes v_{13},
v_{16} \otimes v_{9}, \\ 
&v_{13} \otimes v_{14},
v_{16} \otimes v_{11},
v_{17} \otimes v_{10},
v_{19} \otimes v_{6},
v_{20} \otimes v_{4},
v_{5} \otimes v_{19},
v_{7} \otimes v_{18},
v_{8} \otimes v_{17},
v_{9} \otimes v_{15}, \\ 
&v_{11} \otimes v_{15},
v_{12} \otimes v_{14},
v_{15} \otimes v_{9},
v_{15} \otimes v_{11},
v_{17} \otimes v_{8},
v_{14} \otimes v_{12},
v_{18} \otimes v_{7},
v_{19} \otimes v_{5},
v_{5} \otimes v_{20}, \\ 
&v_{7} \otimes v_{19},
v_{9} \otimes v_{17},
v_{11} \otimes v_{17},
v_{12} \otimes v_{16},
v_{15} \otimes v_{13},
v_{17} \otimes v_{9},
v_{13} \otimes v_{15},
v_{17} \otimes v_{11},
v_{16} \otimes v_{12}, \\ 
&v_{19} \otimes v_{7},
v_{20} \otimes v_{5},
v_{8} \otimes v_{20},
v_{9} \otimes v_{19},
v_{11} \otimes v_{19},
v_{14} \otimes v_{17},
v_{15} \otimes v_{16},
v_{18} \otimes v_{13},
v_{19} \otimes v_{9}, \\ 
&v_{13} \otimes v_{18},
v_{16} \otimes v_{15},
v_{17} \otimes v_{14},
v_{19} \otimes v_{11},
v_{20} \otimes v_{8},
v_{1} \otimes v_{9},
v_{2} \otimes v_{8},
v_{4} \otimes v_{5},
v_{5} \otimes v_{4}, \\ 
&v_{8} \otimes v_{2},
v_{9} \otimes v_{1},
v_{1} \otimes v_{11},
v_{11} \otimes v_{1},
v_{2} \otimes v_{13},
v_{3} \otimes v_{9},
v_{3} \otimes v_{11},
v_{6} \otimes v_{7},
v_{7} \otimes v_{6}, \\ 
&v_{9} \otimes v_{3},
v_{11} \otimes v_{3},
v_{13} \otimes v_{2},
v_{4} \otimes v_{16},
v_{6} \otimes v_{14},
v_{9} \otimes v_{10},
v_{10} \otimes v_{9},
v_{10} \otimes v_{11},
v_{11} \otimes v_{10}, \\ 
&v_{14} \otimes v_{6},
v_{16} \otimes v_{4},
v_{5} \otimes v_{17},
v_{7} \otimes v_{15},
v_{9} \otimes v_{12},
v_{11} \otimes v_{12},
v_{12} \otimes v_{9},
v_{12} \otimes v_{11},
v_{15} \otimes v_{7}, \\ 
&v_{17} \otimes v_{5},
v_{8} \otimes v_{19},
v_{9} \otimes v_{18},
v_{11} \otimes v_{18},
v_{14} \otimes v_{15},
v_{15} \otimes v_{14},
v_{18} \otimes v_{9},
v_{18} \otimes v_{11},
v_{19} \otimes v_{8}, \\ 
&v_{9} \otimes v_{20},
v_{13} \otimes v_{19},
v_{16} \otimes v_{17},
v_{17} \otimes v_{16},
v_{19} \otimes v_{13},
v_{20} \otimes v_{9},
v_{11} \otimes v_{20},
v_{20} \otimes v_{11},
v_{1} \otimes v_{6}, \\ 
&v_{2} \otimes v_{4},
v_{4} \otimes v_{2},
v_{6} \otimes v_{1},
v_{1} \otimes v_{7},
v_{2} \otimes v_{5},
v_{5} \otimes v_{2},
v_{7} \otimes v_{1},
v_{1} \otimes v_{14},
v_{4} \otimes v_{8}, \\ 
&v_{8} \otimes v_{4},
v_{14} \otimes v_{1},
v_{1} \otimes v_{15},
v_{5} \otimes v_{8},
v_{8} \otimes v_{5},
v_{15} \otimes v_{1},
v_{2} \otimes v_{6},
v_{3} \otimes v_{4},
v_{4} \otimes v_{3}, \\ 
&v_{6} \otimes v_{2},
v_{2} \otimes v_{7},
v_{3} \otimes v_{5},
v_{5} \otimes v_{3},
v_{7} \otimes v_{2},
v_{2} \otimes v_{10},
v_{4} \otimes v_{6},
v_{6} \otimes v_{4},
v_{10} \otimes v_{2}, \\ 
&v_{2} \otimes v_{12},
v_{5} \otimes v_{7},
v_{7} \otimes v_{5},
v_{12} \otimes v_{2},
v_{3} \otimes v_{16},
v_{6} \otimes v_{13},
v_{13} \otimes v_{6},
v_{16} \otimes v_{3},
v_{3} \otimes v_{17}, \\ 
&v_{7} \otimes v_{13},
v_{13} \otimes v_{7},
v_{17} \otimes v_{3},
v_{4} \otimes v_{14},
v_{8} \otimes v_{10},
v_{10} \otimes v_{8},
v_{14} \otimes v_{4},
v_{4} \otimes v_{18},
v_{8} \otimes v_{14}, \\ 
&v_{14} \otimes v_{8},
v_{18} \otimes v_{4},
v_{5} \otimes v_{15},
v_{8} \otimes v_{12},
v_{12} \otimes v_{8},
v_{15} \otimes v_{5},
v_{5} \otimes v_{18},
v_{8} \otimes v_{15},
v_{15} \otimes v_{8}, \\ 
&v_{18} \otimes v_{5},
v_{6} \otimes v_{16},
v_{10} \otimes v_{13},
v_{13} \otimes v_{10},
v_{16} \otimes v_{6},
v_{6} \otimes v_{20},
v_{13} \otimes v_{16},
v_{16} \otimes v_{13},
v_{20} \otimes v_{6}, \\ 
&v_{7} \otimes v_{17},
v_{12} \otimes v_{13},
v_{13} \otimes v_{12},
v_{17} \otimes v_{7},
v_{7} \otimes v_{20},
v_{13} \otimes v_{17},
v_{17} \otimes v_{13},
v_{20} \otimes v_{7},
v_{10} \otimes v_{19}, \\ 
&v_{14} \otimes v_{16},
v_{16} \otimes v_{14},
v_{19} \otimes v_{10},
v_{12} \otimes v_{19},
v_{15} \otimes v_{17},
v_{17} \otimes v_{15},
v_{19} \otimes v_{12},
v_{14} \otimes v_{19},
v_{16} \otimes v_{18}, \\ 
&v_{18} \otimes v_{16},
v_{19} \otimes v_{14},
v_{14} \otimes v_{20},
v_{16} \otimes v_{19},
v_{19} \otimes v_{16},
v_{20} \otimes v_{14},
v_{15} \otimes v_{19},
v_{17} \otimes v_{18},
v_{18} \otimes v_{17}, \\ 
&v_{19} \otimes v_{15},
v_{15} \otimes v_{20},
v_{17} \otimes v_{19},
v_{19} \otimes v_{17},
v_{20} \otimes v_{15},
v_{1} \otimes v_{3},
v_{2} \otimes v_{2},
v_{3} \otimes v_{1},
v_{1} \otimes v_{10}, \\ 
&v_{4} \otimes v_{4},
v_{10} \otimes v_{1},
v_{1} \otimes v_{12},
v_{5} \otimes v_{5},
v_{12} \otimes v_{1},
v_{1} \otimes v_{18},
v_{8} \otimes v_{8},
v_{18} \otimes v_{1},
v_{3} \otimes v_{10}, \\ 
&v_{6} \otimes v_{6},
v_{10} \otimes v_{3},
v_{3} \otimes v_{12},
v_{7} \otimes v_{7},
v_{12} \otimes v_{3},
v_{3} \otimes v_{20},
v_{13} \otimes v_{13},
v_{20} \otimes v_{3},
v_{10} \otimes v_{18}, \\ 
&v_{14} \otimes v_{14},
v_{18} \otimes v_{10},
v_{10} \otimes v_{20},
v_{16} \otimes v_{16},
v_{20} \otimes v_{10},
v_{12} \otimes v_{18},
v_{15} \otimes v_{15},
v_{18} \otimes v_{12},
v_{12} \otimes v_{20}, \\ 
&v_{17} \otimes v_{17},
v_{20} \otimes v_{12},
v_{18} \otimes v_{20},
v_{19} \otimes v_{19},
v_{20} \otimes v_{18},
v_{1} \otimes v_{2},
v_{2} \otimes v_{1},
v_{1} \otimes v_{4},
v_{4} \otimes v_{1}, \\ 
&v_{1} \otimes v_{5},
v_{5} \otimes v_{1},
v_{1} \otimes v_{8},
v_{8} \otimes v_{1},
v_{2} \otimes v_{3},
v_{3} \otimes v_{2},
v_{3} \otimes v_{6},
v_{6} \otimes v_{3},
v_{3} \otimes v_{7}, \\ 
&v_{7} \otimes v_{3},
v_{3} \otimes v_{13},
v_{13} \otimes v_{3},
v_{4} \otimes v_{10},
v_{10} \otimes v_{4},
v_{5} \otimes v_{12},
v_{12} \otimes v_{5},
v_{6} \otimes v_{10},
v_{10} \otimes v_{6}, \\ 
&v_{7} \otimes v_{12},
v_{12} \otimes v_{7},
v_{8} \otimes v_{18},
v_{18} \otimes v_{8},
v_{10} \otimes v_{14},
v_{14} \otimes v_{10},
v_{10} \otimes v_{16},
v_{16} \otimes v_{10},
v_{12} \otimes v_{15}, \\ 
&v_{15} \otimes v_{12},
v_{12} \otimes v_{17},
v_{17} \otimes v_{12},
v_{13} \otimes v_{20},
v_{20} \otimes v_{13},
v_{14} \otimes v_{18},
v_{18} \otimes v_{14},
v_{15} \otimes v_{18},
v_{18} \otimes v_{15}, \\ 
&v_{16} \otimes v_{20},
v_{20} \otimes v_{16},
v_{17} \otimes v_{20},
v_{20} \otimes v_{17},
v_{18} \otimes v_{19},
v_{19} \otimes v_{18},
v_{19} \otimes v_{20},
v_{20} \otimes v_{19},
v_{3} \otimes v_{3}, \\ 
&v_{10} \otimes v_{10},
v_{12} \otimes v_{12},
v_{18} \otimes v_{18},
v_{20} \otimes v_{20}
\end{flalign*}
\endgroup 

\subsection{Algebraically-produced Markov Generator} \label{subsection-MarkovGen}
This generator is produced as described in Section \ref{subsubsection--SelectingGST}, with the additional step of deleting the $i$th row and column of the matrix $a^{-1}G^{-1}\pi_{W \otimes W}(C)G - \textrm{Id}$ if $g_i = 0$, or if row $i$ or column $i$ of the matrix contains a negative off-diagonal entry, as those lack a probabilistic interpretation.

The authors note that the polynomials below may be expressible as $q$-hypergeometric series, but we do not investigate this possibility in this paper. Recall that $a = q^{12} + q^2 + q^{-2} + q^{-12}$, and let the matrix entry $*$ denote, as in Section \ref{Presenting_LQ}, $-1$ times the sum of the other elements in its row. We may express the Markov generator produced algebraically as the direct sum of one $9 \times 9$ block:
\begingroup
\allowdisplaybreaks
\begin{align*}
   a^{-2} \begin{bmatrix}
        * & B_1 & B_1 & B_2 & 0 & 0 & 0 & 0 & 0 \\
        B_3 & * & B_4 & B_5 & B_6 & B_7 & 0 & 0 & 0 \\
        B_3 & B_4 & * & B_5 & 0 & 0 & B_6 & B_7 & 0 \\
        B_8 & B_9 & B_9 & * & B_{10} & B_{11} & B_{10} & B_{11} & B_{12} \\
        0 & B_{13} & 0 & B_{14} & * & B_{15} & 0 & 0 & 0 \\
        0 & B_{16} & 0 & B_{17} & B_{18} & * & 0 & B_{19} & B_{20} \\
        0 & 0 & B_{13} & B_{14} & 0 & 0 & * & B_{15} & 0 \\
        0 & 0 & B_{16} & B_{17} & 0 & B_{19} & B_{18} & * & B_{20} \\
        0 & 0 & 0 & B_{21} & 0 & B_{22} & 0 & B_{22} & *
    \end{bmatrix}
\end{align*}
where
\begin{align*}
    B_1 &=
\flac{q^{22}}{a^2}-\flac{q^{20}}{a^2}-\flac{2q^{18}}{a^2}+\flac{2q^{16}}{a^2}+\flac{3q^{14}}{a^2}-\flac{q^{12}}{a^2}-\flac{4q^{10}}{a^2}+\flac{2q^{6}}{a^2} \\
    B_2 &=
\flac{q^{20}}{a^2}-\flac{2q^{16}}{a^2}-\flac{2q^{14}}{a^2}+\flac{q^{12}}{a^2}+\flac{4q^{10}}{a^2}+\flac{q^{8}}{a^2}-\flac{2q^{6}}{a^2}-\flac{2q^{4}}{a^2}+\flac{1}{a^2} \\
    B_3 &=
\flac{q^{24}}{a^2}-\flac{3q^{22}}{a^2}+\flac{3q^{20}}{a^2}-\flac{q^{18}}{a^2}+\flac{2q^{16}}{a^2}-\flac{4q^{14}}{a^2}+\flac{2q^{12}}{a^2} \\
    B_4 &=
\flac{q^{22}}{a^2}-\flac{2q^{20}}{a^2}+\flac{q^{18}}{a^2}-\flac{2q^{16}}{a^2}+\flac{4q^{14}}{a^2}-\flac{2q^{12}}{a^2}+\flac{q^{10}}{a^2}-\flac{2q^{8}}{a^2}+\flac{q^{6}}{a^2} \\
    B_5 &=
a^2 \cdot \frac{2q^{16}-3q^{14}-q^{12}+2q^{10}+2q^{8}-q^{6}-3q^{4}+2q^{2}}{q^{20}-q^{16}+q^{12}+q^{10}+q^{8}-q^{4}+1} \\
    B_6 &=
\flac{q^{18}}{a^2}-\flac{3q^{16}}{a^2}+\flac{3q^{14}}{a^2}-\flac{q^{12}}{a^2}+\flac{2q^{10}}{a^2}-\flac{4q^{8}}{a^2}+\flac{2q^{6}}{a^2} \\
    B_7 &=
\flac{q^{16}}{a^2}-\flac{2q^{14}}{a^2}+\flac{q^{12}}{a^2}-\flac{2q^{10}}{a^2}+\flac{4q^{8}}{a^2}-\flac{2q^{6}}{a^2}+\flac{q^{4}}{a^2}-\flac{2q^{2}}{a^2}+\flac{1}{a^2} \\
    B_8 &=
\flac{q^{24}}{a^2}-\flac{4q^{22}}{a^2}+\flac{6q^{20}}{a^2}-\flac{4q^{18}}{a^2}+\flac{q^{16}}{a^2} \\
    B_9 &=
\flac{2q^{22}}{a^2}-\flac{5q^{20}}{a^2}+\flac{3q^{18}}{a^2}+\flac{q^{16}}{a^2}+\flac{q^{14}}{a^2}-\flac{4q^{12}}{a^2}+\flac{2q^{10}}{a^2} \\
    B_{10} &=
\flac{q^{18}}{a^2}-\flac{2q^{16}}{a^2}-\flac{q^{14}}{a^2}+\flac{4q^{12}}{a^2}-\flac{q^{10}}{a^2}-\flac{2q^{8}}{a^2}+\flac{q^{6}}{a^2} \\
    B_{11} &=
\flac{2q^{14}}{a^2}-\flac{4q^{12}}{a^2}+\flac{q^{10}}{a^2}+\flac{q^{8}}{a^2}+\flac{3q^{6}}{a^2}-\flac{5q^{4}}{a^2}+\flac{2q^{2}}{a^2} \\
    B_{12} &=
\flac{q^{8}}{a^2}-\flac{4q^{6}}{a^2}+\flac{6q^{4}}{a^2}-\flac{4q^{2}}{a^2}+\flac{1}{a^2} \\
    B_{13} &=
\flac{q^{24}}{a^2}-\flac{q^{22}}{a^2}-\flac{2q^{20}}{a^2}+\flac{2q^{18}}{a^2}+\flac{3q^{16}}{a^2}-\flac{q^{14}}{a^2}-\flac{4q^{12}}{a^2}+\flac{2q^{8}}{a^2} \\
    B_{14} &=
\flac{q^{22}}{a^2}-\flac{2q^{18}}{a^2}-\flac{2q^{16}}{a^2}+\flac{q^{14}}{a^2}+\flac{4q^{12}}{a^2}+\flac{q^{10}}{a^2}-\flac{2q^{8}}{a^2}-\flac{2q^{6}}{a^2}+\flac{q^{2}}{a^2} \\
    B_{15} &=
\flac{2q^{16}}{a^2}-\flac{4q^{12}}{a^2}-\flac{q^{10}}{a^2}+\flac{3q^{8}}{a^2}+\flac{2q^{6}}{a^2}-\flac{2q^{4}}{a^2}-\flac{q^{2}}{a^2}+\flac{1}{a^2} \\
    B_{16} &=
\flac{q^{24}}{a^2}-\flac{2q^{22}}{a^2}+\flac{q^{20}}{a^2}-\flac{2q^{18}}{a^2}+\flac{4q^{16}}{a^2}-\flac{2q^{14}}{a^2}+\flac{q^{12}}{a^2}-\flac{2q^{10}}{a^2}+\flac{q^{8}}{a^2} \\
    B_{17} &=
a^2 \cdot \frac{2q^{18}-3q^{16}-q^{14}+2q^{12}+2q^{10}-q^{8}-3q^{6}+2q^{4}}{q^{20}-q^{16}+q^{12}+q^{10}+q^{8}-q^{4}+1}\\
    B_{18} &=
\flac{2q^{18}}{a^2}-\flac{4q^{16}}{a^2}+\flac{2q^{14}}{a^2}-\flac{q^{12}}{a^2}+\flac{3q^{10}}{a^2}-\flac{3q^{8}}{a^2}+\flac{q^{6}}{a^2} \\
    B_{19} &=
\flac{q^{18}}{a^2}-\flac{2q^{16}}{a^2}+\flac{q^{14}}{a^2}-\flac{2q^{12}}{a^2}+\flac{4q^{10}}{a^2}-\flac{2q^{8}}{a^2}+\flac{q^{6}}{a^2}-\flac{2q^{4}}{a^2}+\flac{q^{2}}{a^2} \\
    B_{20} &=
\flac{2q^{12}}{a^2}-\flac{4q^{10}}{a^2}+\flac{2q^{8}}{a^2}-\flac{q^{6}}{a^2}+\flac{3q^{4}}{a^2}-\flac{3q^{2}}{a^2}+\flac{1}{a^2} \\
    B_{21} &=
\flac{q^{24}}{a^2}-\flac{2q^{20}}{a^2}-\flac{2q^{18}}{a^2}+\flac{q^{16}}{a^2}+\flac{4q^{14}}{a^2}+\flac{q^{12}}{a^2}-\flac{2q^{10}}{a^2}-\flac{2q^{8}}{a^2}+\flac{q^{4}}{a^2} \\
    B_{22} &=
\flac{2q^{18}}{a^2}-\flac{4q^{14}}{a^2}-\flac{q^{12}}{a^2}+\flac{3q^{10}}{a^2}+\flac{2q^{8}}{a^2}-\flac{2q^{6}}{a^2}-\flac{q^{4}}{a^2}+\flac{q^{2}}{a^2} \\
\end{align*}
a $6 \times 6$ block:
\begin{align*}
    a^{-2}\begin{bmatrix}
        * & B_1 & B_2 & B_3 & 0 & B_4 \\
        B_1 & * & B_2 & 0 & B_3 & B_4 \\
        B_5 & B_5 & * & 0 & 0 & B_6 \\
        B_7 & 0 & 0 & * & B_1 & B_8 \\
        0 & B_7 & 0 & B_1 & * & B_8 \\
        B_9 & B_9 & B_{10} & B_{11} & B_{11} & *
    \end{bmatrix}
\end{align*}
where
\begin{align*}
    B_1 &=
\flac{q^{20}}{a^2}-\flac{2q^{18}}{a^2}+\flac{q^{16}}{a^2}-\flac{2q^{14}}{a^2}+\flac{4q^{12}}{a^2}-\flac{2q^{10}}{a^2}+\flac{q^{8}}{a^2}-\flac{2q^{6}}{a^2}+\flac{q^{4}}{a^2} \\
    B_2 &=
\flac{2q^{14}}{a^2}-\flac{4q^{12}}{a^2}+\flac{2q^{10}}{a^2}-\flac{q^{8}}{a^2}+\flac{3q^{6}}{a^2}-\flac{3q^{4}}{a^2}+\flac{q^{2}}{a^2} \\
    B_3 &=
\flac{q^{20}}{a^2}-\flac{2q^{18}}{a^2}+\flac{q^{16}}{a^2}+\flac{q^{6}}{a^2}-\flac{2q^{4}}{a^2}+\flac{q^{2}}{a^2} \\
    B_4 &=
\flac{q^{18}}{a^2}-\flac{2q^{16}}{a^2}+\flac{q^{14}}{a^2}+\flac{q^{4}}{a^2}-\flac{2q^{2}}{a^2}+\flac{1}{a^2} \\
    B_5 &=
\flac{2q^{18}}{a^2}-\flac{2q^{16}}{a^2}-\flac{2q^{14}}{a^2}+\flac{q^{12}}{a^2}+\flac{2q^{10}}{a^2}-\flac{2q^{6}}{a^2}+\flac{q^{4}}{a^2} \\
    B_6 &=
\flac{q^{20}}{a^2}-\flac{q^{18}}{a^2}-\flac{q^{16}}{a^2}+\flac{q^{14}}{a^2}+\flac{q^{6}}{a^2}-\flac{q^{4}}{a^2}-\flac{q^{2}}{a^2}+\flac{1}{a^2} \\
    B_7 &=
\flac{q^{22}}{a^2}-\flac{2q^{20}}{a^2}+\flac{q^{18}}{a^2}+\flac{q^{8}}{a^2}-\flac{2q^{6}}{a^2}+\flac{q^{4}}{a^2} \\
    B_8 &=
\flac{q^{18}}{a^2}-\flac{2q^{16}}{a^2}+\flac{3q^{14}}{a^2}-\flac{4q^{12}}{a^2}+\flac{2q^{10}}{a^2}-\flac{q^{8}}{a^2}+\flac{3q^{6}}{a^2}-\flac{2q^{4}}{a^2}-\flac{q^{2}}{a^2}+\flac{1}{a^2} \\
    B_9 &=
\flac{q^{22}}{a^2}-\flac{2q^{20}}{a^2}+\flac{2q^{18}}{a^2}-\flac{3q^{16}}{a^2}+\flac{3q^{14}}{a^2}-\flac{q^{12}}{a^2}+\flac{q^{10}}{a^2}-\flac{2q^{8}}{a^2}+\flac{q^{6}}{a^2} \\
    B_{10} &=
-\flac{q^{18}}{a^2}+\flac{3q^{16}}{a^2}-\flac{3q^{14}}{a^2}+\flac{q^{12}}{a^2}-\flac{q^{10}}{a^2}+\flac{3q^{8}}{a^2}-\flac{3q^{6}}{a^2}+\flac{q^{4}}{a^2} \\
    B_{11} &=
\flac{q^{20}}{a^2}-\flac{2q^{18}}{a^2}+\flac{2q^{16}}{a^2}-\flac{3q^{14}}{a^2}+\flac{3q^{12}}{a^2}-\flac{q^{10}}{a^2}+\flac{q^{8}}{a^2}-\flac{2q^{6}}{a^2}+\flac{q^{4}}{a^2} \\
\end{align*}
two $6 \times 6$ blocks:
\begin{align*}
   a^{-2} \begin{bmatrix}
        * & B_1 & B_2 & B_3 & 0 & 0 \\
        B_4 & * & B_5 & B_6 & 0 & 0 \\
        B_7 & B_8 & * & B_9 & 0 & B_3 \\
        B_{10} & B_{11} & B_{12} & * & B_{13} & B_{14} \\
        0 & 0 & 0 & B_{15} & * & B_{16} \\
        0 & 0 & B_{17} & B_{18} & B_{19} & *
    \end{bmatrix}
\end{align*}
where
\begin{align*}
    B_1 &=
\flac{q^{20}}{a^2}-\flac{3q^{18}}{a^2}+\flac{3q^{16}}{a^2}-\flac{q^{14}}{a^2}+\flac{2q^{12}}{a^2}-\flac{4q^{10}}{a^2}+\flac{2q^{8}}{a^2} \\
    B_2 &=
\flac{q^{18}}{a^2}-\flac{2q^{16}}{a^2}+\flac{q^{14}}{a^2}-\flac{2q^{12}}{a^2}+\flac{4q^{10}}{a^2}-\flac{2q^{8}}{a^2}+\flac{q^{6}}{a^2}-\flac{2q^{4}}{a^2}+\flac{q^{2}}{a^2} \\
    B_3 &=
\flac{q^{18}}{a^2}-\flac{2q^{16}}{a^2}+\flac{q^{14}}{a^2}+\flac{q^{4}}{a^2}-\flac{2q^{2}}{a^2}+\flac{1}{a^2} \\
    B_4 &=
\flac{q^{24}}{a^2}-\flac{2q^{22}}{a^2}+\flac{2q^{18}}{a^2}+\flac{q^{16}}{a^2}-\flac{2q^{14}}{a^2}-\flac{2q^{12}}{a^2}+\flac{2q^{10}}{a^2} \\
    B_5 &=
\flac{2q^{16}}{a^2}-\flac{2q^{14}}{a^2}-\flac{2q^{12}}{a^2}+\flac{q^{10}}{a^2}+\flac{2q^{8}}{a^2}-\flac{2q^{4}}{a^2}+\flac{q^{2}}{a^2} \\
    B_6 &=
\flac{q^{20}}{a^2}-\flac{q^{18}}{a^2}-\flac{q^{16}}{a^2}+\flac{q^{14}}{a^2}+\flac{q^{6}}{a^2}-\flac{q^{4}}{a^2}-\flac{q^{2}}{a^2}+\flac{1}{a^2} \\
    B_7 &=
\flac{q^{24}}{a^2}-\flac{2q^{22}}{a^2}+\flac{q^{20}}{a^2}-\flac{2q^{18}}{a^2}+\flac{4q^{16}}{a^2}-\flac{2q^{14}}{a^2}+\flac{q^{12}}{a^2}-\flac{2q^{10}}{a^2}+\flac{q^{8}}{a^2} \\
    B_8 &=
\flac{2q^{18}}{a^2}-\flac{4q^{16}}{a^2}+\flac{2q^{14}}{a^2}-\flac{q^{12}}{a^2}+\flac{3q^{10}}{a^2}-\flac{3q^{8}}{a^2}+\flac{q^{6}}{a^2} \\
    B_9 &=
\flac{q^{20}}{a^2}-\flac{2q^{18}}{a^2}+\flac{q^{16}}{a^2}+\flac{q^{6}}{a^2}-\flac{2q^{4}}{a^2}+\flac{q^{2}}{a^2} \\
    B_{10} &=
\flac{q^{24}}{a^2}-\flac{3q^{22}}{a^2}+\flac{3q^{20}}{a^2}-\flac{2q^{18}}{a^2}+\flac{3q^{16}}{a^2}-\flac{3q^{14}}{a^2}+\flac{q^{12}}{a^2} \\
    B_{11} &=
\flac{q^{22}}{a^2}-\flac{2q^{20}}{a^2}+\flac{2q^{18}}{a^2}-\flac{3q^{16}}{a^2}+\flac{3q^{14}}{a^2}-\flac{q^{12}}{a^2}+\flac{q^{10}}{a^2}-\flac{2q^{8}}{a^2}+\flac{q^{6}}{a^2} \\
    B_{12} &=
-\flac{q^{18}}{a^2}+\flac{3q^{16}}{a^2}-\flac{3q^{14}}{a^2}+\flac{q^{12}}{a^2}-\flac{q^{10}}{a^2}+\flac{3q^{8}}{a^2}-\flac{3q^{6}}{a^2}+\flac{q^{4}}{a^2} \\
    B_{13} &=
\flac{q^{18}}{a^2}-\flac{2q^{16}}{a^2}+\flac{q^{12}}{a^2}+\flac{q^{10}}{a^2}-\flac{2q^{6}}{a^2}+\flac{q^{4}}{a^2} \\
    B_{14} &=
\flac{q^{14}}{a^2}-\flac{q^{12}}{a^2}-\flac{q^{10}}{a^2}+\flac{2q^{6}}{a^2}-\flac{2q^{2}}{a^2}+\flac{1}{a^2} \\
    B_{15} &=
\flac{q^{20}}{a^2}-\flac{q^{18}}{a^2}-\flac{q^{16}}{a^2}-\flac{q^{14}}{a^2}+\flac{2q^{12}}{a^2}+\flac{2q^{10}}{a^2}-\flac{q^{8}}{a^2}-\flac{q^{6}}{a^2}-\flac{q^{4}}{a^2}+\flac{q^{2}}{a^2} \\
    B_{16} &=
\flac{2q^{14}}{a^2}-\flac{2q^{12}}{a^2}-\flac{2q^{10}}{a^2}+\flac{q^{8}}{a^2}+\flac{2q^{6}}{a^2}-\flac{2q^{2}}{a^2}+\flac{1}{a^2} \\
    B_{17} &=
\flac{q^{22}}{a^2}-\flac{2q^{20}}{a^2}+\flac{q^{18}}{a^2}+\flac{q^{8}}{a^2}-\flac{2q^{6}}{a^2}+\flac{q^{4}}{a^2} \\
    B_{18} &=
\flac{q^{20}}{a^2}-\flac{2q^{18}}{a^2}+\flac{3q^{16}}{a^2}-\flac{4q^{14}}{a^2}+\flac{2q^{12}}{a^2}-\flac{q^{10}}{a^2}+\flac{3q^{8}}{a^2}-\flac{2q^{6}}{a^2}-\flac{q^{4}}{a^2}+\flac{q^{2}}{a^2} \\
    B_{19} &=
\flac{2q^{16}}{a^2}-\flac{4q^{14}}{a^2}+\flac{2q^{12}}{a^2}-\flac{q^{10}}{a^2}+\flac{3q^{8}}{a^2}-\flac{3q^{6}}{a^2}+\flac{q^{4}}{a^2}
\end{align*}
a $6 \times 6$ block:
\begin{align*}
a^{-2}
    \begin{bmatrix}
        * & B_1 & B_1 & 0 & 0 & 0 \\
        B_2 & * & B_3 & B_4 & 0 & 0 \\
        B_2 & B_5 & * & 0 & B_4 & 0 \\
        0 & B_6 & 0 & * & B_7 & B_8 \\
        0 & 0 & B_6 & B_7 & * & B_8 \\
        0 & 0 & 0 & B_9 & B_9 & *
    \end{bmatrix}
\end{align*}
where
\begin{align*}
    B_1 &=
\flac{q^{22}}{a^2}-\flac{2q^{20}}{a^2}+\flac{2q^{16}}{a^2}+\flac{q^{14}}{a^2}-\flac{2q^{12}}{a^2}-\flac{2q^{10}}{a^2}+\flac{2q^{8}}{a^2} \\
    B_2 &=
\flac{q^{24}}{a^2}-\flac{3q^{22}}{a^2}+\flac{3q^{20}}{a^2}-\flac{q^{18}}{a^2}+\flac{2q^{16}}{a^2}-\flac{4q^{14}}{a^2}+\flac{2q^{12}}{a^2} \\
    B_3 &=
\flac{q^{22}}{a^2}-\flac{2q^{20}}{a^2}+\flac{q^{18}}{a^2}-\flac{2q^{16}}{a^2}+\flac{4q^{14}}{a^2}-\flac{2q^{12}}{a^2}+\flac{q^{10}}{a^2}-\flac{2q^{8}}{a^2}+\flac{q^{6}}{a^2} \\
    B_4 &=
\flac{q^{18}}{a^2}-\flac{2q^{16}}{a^2}+\flac{q^{14}}{a^2}+\flac{q^{4}}{a^2}-\flac{2q^{2}}{a^2}+\flac{1}{a^2} \\
    B_5 &=
\flac{q^{22}}{a^2}-\flac{2q^{20}}{a^2}+\flac{q^{18}}{a^2}-\flac{2q^{16}}{a^2}+\flac{4q^{14}}{a^2}-\flac{2q^{12}}{a^2}+\flac{q^{10}}{a^2}-\flac{2q^{8}}{a^2}+\flac{q^{6}}{a^2} \\
    B_6 &=
\flac{q^{24}}{a^2}-\flac{2q^{22}}{a^2}+\flac{q^{20}}{a^2}+\flac{q^{10}}{a^2}-\flac{2q^{8}}{a^2}+\flac{q^{6}}{a^2} \\
    B_7 &=
\flac{q^{18}}{a^2}-\flac{2q^{16}}{a^2}+\flac{q^{14}}{a^2}-\flac{2q^{12}}{a^2}+\flac{4q^{10}}{a^2}-\flac{2q^{8}}{a^2}+\flac{q^{6}}{a^2}-\flac{2q^{4}}{a^2}+\flac{q^{2}}{a^2} \\
    B_8 &=
\flac{2q^{12}}{a^2}-\flac{4q^{10}}{a^2}+\flac{2q^{8}}{a^2}-\flac{q^{6}}{a^2}+\flac{3q^{4}}{a^2}-\flac{3q^{2}}{a^2}+\flac{1}{a^2} \\
    B_9 &=
\flac{2q^{16}}{a^2}-\flac{2q^{14}}{a^2}-\flac{2q^{12}}{a^2}+\flac{q^{10}}{a^2}+\flac{2q^{8}}{a^2}-\flac{2q^{4}}{a^2}+\flac{q^{2}}{a^2} \\
\end{align*}
four $6 \times 6$ blocks:
\begin{align*}
   a^{-2} \begin{bmatrix}
        * & B_1 & B_2 & B_3 & 0 & 0 \\
        B_4 & * & B_5 & B_6 & 0 & 0 \\
        B_7 & B_8 & * & B_9 & B_{10} & B_{11} \\
        B_{12} & B_{13} & B_{14} & * & B_{15} & B_{16} \\
        0 & 0 & B_{17} & B_{18} & * & B_{19} \\
        0 & 0 & B_{20} & B_{21} & B_{22} & *
    \end{bmatrix}
\end{align*}
where
\begin{align*}
    B_1 &=
\flac{q^{20}}{a^2}-\flac{2q^{18}}{a^2}+\flac{2q^{14}}{a^2}+\flac{q^{12}}{a^2}-\flac{2q^{10}}{a^2}-\flac{2q^{8}}{a^2}+\flac{2q^{6}}{a^2} \\
    B_2 &=
\flac{q^{22}}{a^2}-\flac{3q^{18}}{a^2}+\flac{q^{16}}{a^2}+\flac{2q^{14}}{a^2}+\flac{q^{12}}{a^2}-\flac{2q^{10}}{a^2}-\flac{q^{8}}{a^2}+\flac{q^{6}}{a^2}-\flac{q^{4}}{a^2}+\flac{q^{2}}{a^2} \\
    B_3 &=
\flac{q^{18}}{a^2}-\flac{q^{16}}{a^2}-\flac{q^{14}}{a^2}-\flac{q^{12}}{a^2}+\flac{2q^{10}}{a^2}+\flac{2q^{8}}{a^2}-\flac{q^{6}}{a^2}-\flac{q^{4}}{a^2}-\flac{q^{2}}{a^2}+\flac{1}{a^2} \\
    B_4 &=
\flac{q^{24}}{a^2}-\flac{2q^{22}}{a^2}+\flac{2q^{18}}{a^2}+\flac{q^{16}}{a^2}-\flac{2q^{14}}{a^2}-\flac{2q^{12}}{a^2}+\flac{2q^{10}}{a^2} \\
    B_5 &=
\flac{q^{22}}{a^2}-\flac{q^{20}}{a^2}-\flac{q^{18}}{a^2}-\flac{q^{16}}{a^2}+\flac{2q^{14}}{a^2}+\flac{2q^{12}}{a^2}-\flac{q^{10}}{a^2}-\flac{q^{8}}{a^2}-\flac{q^{6}}{a^2}+\flac{q^{4}}{a^2} \\
    B_6 &=
\flac{q^{20}}{a^2}-\flac{q^{18}}{a^2}+\flac{q^{16}}{a^2}-\flac{q^{14}}{a^2}-\flac{2q^{12}}{a^2}+\flac{q^{10}}{a^2}+\flac{2q^{8}}{a^2}+\flac{q^{6}}{a^2}-\flac{3q^{4}}{a^2}+\flac{1}{a^2} \\
    B_7 &=
\flac{q^{24}}{a^2}-\flac{2q^{22}}{a^2}+\flac{2q^{18}}{a^2}-\flac{q^{14}}{a^2}-\flac{q^{12}}{a^2}+\flac{q^{10}}{a^2} \\
    B_8 &=
\flac{q^{20}}{a^2}-\flac{2q^{18}}{a^2}+\flac{q^{14}}{a^2}+\flac{q^{12}}{a^2}-\flac{2q^{8}}{a^2}+\flac{q^{6}}{a^2} \\
    B_9 &=
\flac{q^{18}}{a^2}-\flac{q^{16}}{a^2}-\flac{2q^{12}}{a^2}+\flac{3q^{10}}{a^2}-\flac{q^{8}}{a^2}+\flac{2q^{6}}{a^2}-\flac{4q^{4}}{a^2}+\flac{2q^{2}}{a^2} \\
    B_{10} &=
\flac{q^{18}}{a^2}-\flac{2q^{16}}{a^2}+\flac{q^{14}}{a^2}-\flac{q^{12}}{a^2}+\flac{3q^{10}}{a^2}-\flac{3q^{8}}{a^2}+\flac{2q^{6}}{a^2}-\flac{2q^{4}}{a^2}+\flac{q^{2}}{a^2} \\
    B_{11} &=
\flac{q^{12}}{a^2}-\flac{3q^{10}}{a^2}+\flac{3q^{8}}{a^2}-\flac{2q^{6}}{a^2}+\flac{3q^{4}}{a^2}-\flac{3q^{2}}{a^2}+\flac{1}{a^2} \\
    B_{12} &=
\flac{q^{24}}{a^2}-\flac{3q^{22}}{a^2}+\flac{3q^{20}}{a^2}-\flac{2q^{18}}{a^2}+\flac{3q^{16}}{a^2}-\flac{3q^{14}}{a^2}+\flac{q^{12}}{a^2} \\
    B_{13} &=
\flac{q^{22}}{a^2}-\flac{2q^{20}}{a^2}+\flac{2q^{18}}{a^2}-\flac{3q^{16}}{a^2}+\flac{3q^{14}}{a^2}-\flac{q^{12}}{a^2}+\flac{q^{10}}{a^2}-\flac{2q^{8}}{a^2}+\flac{q^{6}}{a^2} \\
    B_{14} &=
\flac{2q^{22}}{a^2}-\flac{4q^{20}}{a^2}+\flac{2q^{18}}{a^2}-\flac{q^{16}}{a^2}+\flac{3q^{14}}{a^2}-\flac{2q^{12}}{a^2}-\flac{q^{8}}{a^2}+\flac{q^{6}}{a^2} \\
    B_{15} &=
\flac{q^{18}}{a^2}-\flac{2q^{16}}{a^2}+\flac{q^{12}}{a^2}+\flac{q^{10}}{a^2}-\flac{2q^{6}}{a^2}+\flac{q^{4}}{a^2} \\
    B_{16} &=
\flac{q^{14}}{a^2}-\flac{q^{12}}{a^2}-\flac{q^{10}}{a^2}+\flac{2q^{6}}{a^2}-\flac{2q^{2}}{a^2}+\flac{1}{a^2} \\
    B_{17} &=
\flac{q^{24}}{a^2}-\flac{3q^{20}}{a^2}+\flac{q^{18}}{a^2}+\flac{2q^{16}}{a^2}+\flac{q^{14}}{a^2}-\flac{2q^{12}}{a^2}-\flac{q^{10}}{a^2}+\flac{q^{8}}{a^2}-\flac{q^{6}}{a^2}+\flac{q^{4}}{a^2} \\
    B_{18} &=
\flac{q^{20}}{a^2}-\flac{q^{18}}{a^2}-\flac{q^{16}}{a^2}-\flac{q^{14}}{a^2}+\flac{2q^{12}}{a^2}+\flac{2q^{10}}{a^2}-\flac{q^{8}}{a^2}-\flac{q^{6}}{a^2}-\flac{q^{4}}{a^2}+\flac{q^{2}}{a^2} \\
    B_{19} &=
\flac{2q^{14}}{a^2}-\flac{2q^{12}}{a^2}-\flac{2q^{10}}{a^2}+\flac{q^{8}}{a^2}+\flac{2q^{6}}{a^2}-\flac{2q^{2}}{a^2}+\flac{1}{a^2} \\
    B_{20} &=
\flac{q^{24}}{a^2}-\flac{q^{22}}{a^2}-\flac{q^{20}}{a^2}-\flac{q^{18}}{a^2}+\flac{2q^{16}}{a^2}+\flac{2q^{14}}{a^2}-\flac{q^{12}}{a^2}-\flac{q^{10}}{a^2}-\flac{q^{8}}{a^2}+\flac{q^{6}}{a^2} \\
    B_{21} &=
\flac{q^{22}}{a^2}-\flac{q^{20}}{a^2}+\flac{q^{18}}{a^2}-\flac{q^{16}}{a^2}-\flac{2q^{14}}{a^2}+\flac{q^{12}}{a^2}+\flac{2q^{10}}{a^2}+\flac{q^{8}}{a^2}-\flac{3q^{6}}{a^2}+\flac{q^{2}}{a^2} \\
    B_{22} &=
\flac{2q^{18}}{a^2}-\flac{2q^{16}}{a^2}-\flac{2q^{14}}{a^2}+\flac{q^{12}}{a^2}+\flac{2q^{10}}{a^2}-\flac{2q^{6}}{a^2}+\flac{q^{4}}{a^2}
\end{align*}
a $4 \times 4$ block:
\begin{align*}
   a^{-2} \begin{bmatrix}
        * & B_1 & B_2 & B_3 \\
        B_1 & * & B_2 & B_3 \\
        B_4 & B_4 & * & B_3 \\
        B_5 & B_5 & B_6 & *
    \end{bmatrix}
\end{align*}
where
\begin{align*}
    B_1 &=
\flac{q^{20}}{a^2}-\flac{2q^{18}}{a^2}+\flac{q^{16}}{a^2}-\flac{2q^{14}}{a^2}+\flac{4q^{12}}{a^2}-\flac{2q^{10}}{a^2}+\flac{q^{8}}{a^2}-\flac{2q^{6}}{a^2}+\flac{q^{4}}{a^2} \\
    B_2 &=
\flac{2q^{14}}{a^2}-\flac{4q^{12}}{a^2}+\flac{2q^{10}}{a^2}-\flac{q^{8}}{a^2}+\flac{3q^{6}}{a^2}-\flac{3q^{4}}{a^2}+\flac{q^{2}}{a^2} \\
    B_3 &=
\flac{q^{18}}{a^2}-\flac{2q^{16}}{a^2}+\flac{q^{14}}{a^2}+\flac{q^{4}}{a^2}-\flac{2q^{2}}{a^2}+\flac{1}{a^2} \\
    B_4 &=
\flac{2q^{16}}{a^2}-\flac{4q^{14}}{a^2}+\flac{2q^{12}}{a^2}-\flac{q^{10}}{a^2}+\flac{3q^{8}}{a^2}-\flac{3q^{6}}{a^2}+\flac{q^{4}}{a^2} \\
    B_5 &=
\flac{q^{22}}{a^2}-\flac{q^{20}}{a^2}-\flac{q^{16}}{a^2}+\flac{2q^{12}}{a^2}-\flac{q^{8}}{a^2}-\flac{q^{6}}{a^2}+\flac{q^{4}}{a^2} \\
    B_6 &=
-\flac{q^{18}}{a^2}+\flac{2q^{16}}{a^2}-\flac{2q^{12}}{a^2}+\flac{2q^{8}}{a^2}-\flac{2q^{4}}{a^2}+\flac{q^{2}}{a^2}
\end{align*}
two $4 \times 4$ blocks:
\begin{align*}
   a^{-2} \begin{bmatrix}
        * & B_1 & B_1 & B_2 \\
        B_3 & * & B_4 & B_5 \\
        B_3 & B_4 & * & B_5 \\
        B_6 & B_7 & B_7 & *
    \end{bmatrix}
\end{align*}
where
\begin{align*}
    B_1 &=
\flac{q^{20}}{a^2}-\flac{q^{18}}{a^2}-\flac{q^{16}}{a^2}+\flac{2q^{12}}{a^2}-\flac{q^{8}}{a^2}-\flac{q^{4}}{a^2}+\flac{q^{2}}{a^2} \\
    B_2 &=
\flac{q^{14}}{a^2}-\flac{2q^{12}}{a^2}+\flac{q^{8}}{a^2}+\flac{q^{6}}{a^2}-\flac{2q^{2}}{a^2}+\flac{1}{a^2} \\
    B_3 &=
\flac{q^{24}}{a^2}-\flac{q^{22}}{a^2}-\flac{2q^{20}}{a^2}+\flac{3q^{18}}{a^2}-\flac{q^{16}}{a^2}+\flac{2q^{14}}{a^2}-\flac{4q^{12}}{a^2}+\flac{3q^{10}}{a^2}-\flac{2q^{8}}{a^2}+\flac{q^{6}}{a^2} \\
    B_4 &=
\flac{q^{20}}{a^2}-\flac{2q^{18}}{a^2}+\flac{q^{16}}{a^2}-\flac{2q^{14}}{a^2}+\flac{4q^{12}}{a^2}-\flac{2q^{10}}{a^2}+\flac{q^{8}}{a^2}-\flac{2q^{6}}{a^2}+\flac{q^{4}}{a^2} \\
    B_5 &=
\flac{q^{18}}{a^2}-\flac{2q^{16}}{a^2}+\flac{3q^{14}}{a^2}-\flac{4q^{12}}{a^2}+\flac{2q^{10}}{a^2}-\flac{q^{8}}{a^2}+\flac{3q^{6}}{a^2}-\flac{2q^{4}}{a^2}-\flac{q^{2}}{a^2}+\flac{1}{a^2} \\
    B_6 &=
\flac{q^{24}}{a^2}-\flac{2q^{22}}{a^2}+\flac{q^{18}}{a^2}+\flac{q^{16}}{a^2}-\flac{2q^{12}}{a^2}+\flac{q^{10}}{a^2} \\
    B_7 &=
\flac{q^{22}}{a^2}-\flac{q^{20}}{a^2}-\flac{q^{16}}{a^2}+\flac{2q^{12}}{a^2}-\flac{q^{8}}{a^2}-\flac{q^{6}}{a^2}+\flac{q^{4}}{a^2}
\end{align*}
two $4 \times 4$ blocks:
\begin{align*}
    a^{-2}\begin{bmatrix}
        * & B_1 & B_1 & B_2 \\
        B_3 & * & B_4 & B_5 \\
        B_3 & B_4 & * & B_5 \\
        B_6 & B_7 & B_7 & *
    \end{bmatrix}
\end{align*}
where
\begin{align*}
    B_1 &= \flac{q^{20}}{a^2}-\flac{q^{18}}{a^2}-\flac{2q^{16}}{a^2}+\flac{3q^{14}}{a^2}-\flac{q^{12}}{a^2}+\flac{2q^{10}}{a^2}-\flac{4q^{8}}{a^2}+\flac{3q^{6}}{a^2}-\flac{2q^{4}}{a^2}+\flac{q^{2}}{a^2} \\
    B_2 &= \flac{q^{16}}{a^2}-\flac{2q^{14}}{a^2}+\flac{q^{12}}{a^2}-\flac{2q^{10}}{a^2}+\flac{4q^{8}}{a^2}-\flac{2q^{6}}{a^2}+\flac{q^{4}}{a^2}-\flac{2q^{2}}{a^2}+\flac{1}{a^2} \\
    B_3 &= \flac{q^{24}}{a^2}-\flac{q^{22}}{a^2}-\flac{2q^{20}}{a^2}+\flac{2q^{18}}{a^2}+\flac{2q^{16}}{a^2}-\flac{q^{14}}{a^2}-\flac{2q^{12}}{a^2}+\flac{q^{8}}{a^2} \\
    B_4 &= \flac{q^{20}}{a^2}-\flac{q^{18}}{a^2}-\flac{2q^{16}}{a^2}+\flac{q^{14}}{a^2}+\flac{2q^{12}}{a^2}+\flac{q^{10}}{a^2}-\flac{2q^{8}}{a^2}-\flac{q^{6}}{a^2}+\flac{q^{4}}{a^2} \\
    B_5 &= \flac{q^{16}}{a^2}-\flac{2q^{12}}{a^2}-\flac{q^{10}}{a^2}+\flac{2q^{8}}{a^2}+\flac{2q^{6}}{a^2}-\flac{2q^{4}}{a^2}-\flac{q^{2}}{a^2}+\flac{1}{a^2} \\
    B_6 &= \flac{q^{24}}{a^2}-\flac{2q^{22}}{a^2}+\flac{q^{20}}{a^2}-\flac{2q^{18}}{a^2}+\flac{4q^{16}}{a^2}-\flac{2q^{14}}{a^2}+\flac{q^{12}}{a^2}-\flac{2q^{10}}{a^2}+\flac{q^{8}}{a^2} \\
    B_7 &= \flac{q^{22}}{a^2}-\flac{2q^{20}}{a^2}+\flac{3q^{18}}{a^2}-\flac{4q^{16}}{a^2}+\flac{2q^{14}}{a^2}-\flac{q^{12}}{a^2}+\flac{3q^{10}}{a^2}-\flac{2q^{8}}{a^2}-\flac{q^{6}}{a^2}+\flac{q^{4}}{a^2}
\end{align*}
eight $4 \times 4$ blocks:
\begin{align*}
   a^{-2} \begin{bmatrix}
        * & B_1 & B_2 & 0 \\
        B_3 & * & B_4 & B_5 \\
        B_3 & B_6 & * & B_5 \\
        0 & B_7 & B_1 & *
    \end{bmatrix}
\end{align*}
where
\begin{align*}
    B_1 &= \flac{q^{22}}{a^2}-\flac{q^{20}}{a^2}-\flac{q^{18}}{a^2}+\flac{q^{16}}{a^2}+\flac{q^{8}}{a^2}-\flac{q^{6}}{a^2}-\flac{q^{4}}{a^2}+\flac{q^{2}}{a^2} \\
    B_2 &= \flac{q^{20}}{a^2}-\flac{q^{18}}{a^2}-\flac{q^{16}}{a^2}+\flac{q^{14}}{a^2}+\flac{q^{6}}{a^2}-\flac{q^{4}}{a^2}-\flac{q^{2}}{a^2}+\flac{1}{a^2} \\
    B_3 &= \flac{q^{24}}{a^2}-\flac{2q^{22}}{a^2}+\flac{q^{20}}{a^2}+\flac{q^{10}}{a^2}-\flac{2q^{8}}{a^2}+\flac{q^{6}}{a^2} \\
    B_4 &= \flac{q^{20}}{a^2}-\flac{2q^{18}}{a^2}+\flac{q^{16}}{a^2}+\flac{q^{6}}{a^2}-\flac{2q^{4}}{a^2}+\flac{q^{2}}{a^2} \\
    B_5 &= \flac{q^{18}}{a^2}-\flac{2q^{16}}{a^2}+\flac{q^{14}}{a^2}+\flac{q^{4}}{a^2}-\flac{2q^{2}}{a^2}+\flac{1}{a^2} \\
    B_6 &= \flac{q^{22}}{a^2}-\flac{2q^{20}}{a^2}+\flac{q^{18}}{a^2}+\flac{q^{8}}{a^2}-\flac{2q^{6}}{a^2}+\flac{q^{4}}{a^2} \\
    B_7 &= \flac{q^{24}}{a^2}-\flac{q^{22}}{a^2}-\flac{q^{20}}{a^2}+\flac{q^{18}}{a^2}+\flac{q^{10}}{a^2}-\flac{q^{8}}{a^2}-\flac{q^{6}}{a^2}+\flac{q^{4}}{a^2}
\end{align*}
four $4 \times 4$ blocks:
\begin{align*}
    a^{-2}\begin{bmatrix}
        * & B_1 & B_1 & B_2 \\
        B_3 & * & 0 & B_4 \\
        B_3 & 0 & * & B_4 \\
        B_5 & B_6 & B_6 & *
    \end{bmatrix}
\end{align*}
where
\begin{align*}
    B_1 &= \flac{q^{20}}{a^2}-\flac{2q^{18}}{a^2}+\flac{q^{16}}{a^2}+\flac{q^{6}}{a^2}-\flac{2q^{4}}{a^2}+\flac{q^{2}}{a^2} \\
    B_2 &= \flac{q^{18}}{a^2}-\flac{2q^{16}}{a^2}+\flac{q^{14}}{a^2}+\flac{q^{4}}{a^2}-\flac{2q^{2}}{a^2}+\flac{1}{a^2} \\
    B_3 &= \flac{q^{24}}{a^2}-\flac{q^{22}}{a^2}-\flac{q^{20}}{a^2}+\flac{q^{18}}{a^2}+\flac{q^{10}}{a^2}-\flac{q^{8}}{a^2}-\flac{q^{6}}{a^2}+\flac{q^{4}}{a^2} \\
    B_4 &= \flac{q^{20}}{a^2}-\flac{q^{18}}{a^2}-\flac{q^{16}}{a^2}+\flac{q^{14}}{a^2}+\flac{q^{6}}{a^2}-\flac{q^{4}}{a^2}-\flac{q^{2}}{a^2}+\flac{1}{a^2} \\
    B_5 &= \flac{q^{24}}{a^2}-\flac{2q^{22}}{a^2}+\flac{q^{20}}{a^2}+\flac{q^{10}}{a^2}-\flac{2q^{8}}{a^2}+\flac{q^{6}}{a^2} \\
    B_6 &= \flac{q^{22}}{a^2}-\flac{2q^{20}}{a^2}+\flac{q^{18}}{a^2}+\flac{q^{8}}{a^2}-\flac{2q^{6}}{a^2}+\flac{q^{4}}{a^2}
\end{align*}
eight $3 \times 3$ blocks:
\begin{align*}
   a^{-2} \begin{bmatrix}
        B_1 & -B_1 & 0 \\
        B_2 & B_3 & B_4 \\
        0 & B_5 & -B_5
    \end{bmatrix}
\end{align*}
where
\begin{align*}
    B_1 &= -\flac{q^{22}}{a^2}+\flac{2q^{18}}{a^2}-\flac{q^{14}}{a^2}-\flac{q^{8}}{a^2}+\flac{2q^{4}}{a^2}-\flac{1}{a^2} \\
    B_2 &= \flac{q^{24}}{a^2}-\flac{2q^{22}}{a^2}+\flac{q^{20}}{a^2}+\flac{q^{10}}{a^2}-\flac{2q^{8}}{a^2}+\flac{q^{6}}{a^2} \\
    B_3 &= -\flac{q^{24}}{a^2}+\flac{2q^{22}}{a^2}-\flac{q^{20}}{a^2}-\flac{q^{18}}{a^2}+\flac{2q^{16}}{a^2}-\flac{q^{14}}{a^2}-\flac{q^{10}}{a^2}+\flac{2q^{8}}{a^2}-\flac{q^{6}}{a^2}-\flac{q^{4}}{a^2}+\flac{2q^{2}}{a^2}-\flac{1}{a^2} \\
    B_4 &= \flac{q^{18}}{a^2}-\flac{2q^{16}}{a^2}+\flac{q^{14}}{a^2}+\flac{q^{4}}{a^2}-\flac{2q^{2}}{a^2}+\flac{1}{a^2} \\
    B_5 &= \flac{q^{24}}{a^2}-\flac{2q^{20}}{a^2}+\flac{q^{16}}{a^2}+\flac{q^{10}}{a^2}-\flac{2q^{6}}{a^2}+\flac{q^{2}}{a^2}
\end{align*}
sixteen $2 \times 2$ blocks:
\begin{align*}
   a^{-2} \begin{bmatrix}
        B_1 & -B_1 \\
        B_2 & -B_2
    \end{bmatrix}
\end{align*}
where
\begin{align*}
    B_1 &= -\flac{q^{20}}{a^2}+\flac{q^{18}}{a^2}+\flac{q^{16}}{a^2}-\flac{q^{14}}{a^2}-\flac{q^{6}}{a^2}+\flac{q^{4}}{a^2}+\flac{q^{2}}{a^2}-\flac{1}{a^2} \\
    B_2 &= \flac{q^{24}}{a^2}-\flac{q^{22}}{a^2}-\flac{q^{20}}{a^2}+\flac{q^{18}}{a^2}+\flac{q^{10}}{a^2}-\flac{q^{8}}{a^2}-\flac{q^{6}}{a^2}+\flac{q^{4}}{a^2}
\end{align*}
and five $1 \times 1$ blocks with entry $0$.

Note that multiplying the $3 \times 3$ and $2 \times 2$ blocks of the algebraic generator by the scalar
\begin{align*}
    \frac{a^2}{q^{12}} \cdot \frac{q^2-1+q^{-2}}{(q-q^{-1})^2 (q+q^{-1}) (q^7+q^{-7})}
\end{align*}
yields the $3 \times 3$ and $2 \times 2$ blocks of $L_Q$ (as described in Section \ref{Presenting_LQ}) when $n=2$. Note additionally that no larger block of $L_Q$ (with $n=2$) is a constant multiple of a block of the algebraic generator.
\endgroup 

\subsection{Type \texorpdfstring{$A$}{A} Ground State Transformation} \label{subsection-groundstatetransformation}
The following notes were provided to us by Dr. Jeffrey Kuan to aid in our calculation of the ground state transformation.

Suppose $C$ is a central element and $H= (\pi_V \otimes \pi_V) (\Delta (C))$. Let $v \in V$ be a highest-weight vector of a highest-weight representation $V$. Then,
$$H(v \otimes v) = av\otimes v$$
for some constant $a$. By the definition of a central element,
$$\Delta(F_j^k) C = C \Delta(F_j^k).$$
Given unit basis vectors $u_i \otimes u_j \in V \otimes V$, let $(k_1, k_2, k_3) \in \mathbb{Z}_{\geq 0}^3$ such that
$$\langle u_i \otimes u_j \mid \Delta(F_3^{k_3}) \cdot \Delta(F_2^{k_2}) \cdot \Delta(F_1^{k_1}) \mid v \otimes v \rangle \neq 0.$$
$F_3, F_2, F_1$ may need to be in a different order, but the values of \textbf{k}$= (k_1, k_2, k_3)$ should be the same. 

Set $G: V \otimes V \to V \otimes V$ by 
$$G(u_i \otimes u_j) =\langle u_i \otimes u_j \mid \Delta(F_3^{k_3}) \cdot \Delta(F_2^{k_2}) \cdot \Delta(F_1^{k_1}) \mid v \otimes v \rangle u_i \otimes u_j $$
where \textbf{k} is the above term. 

Now, set $S = G^{-1}HG$. 
\begin{theorem}
    $a^{-1}S-\mathrm{Id}$ has rows that sum to $0$.
\end{theorem}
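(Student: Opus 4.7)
The claim that every row of $a^{-1}S - \mathrm{Id}$ sums to $0$ is equivalent, after applying both sides to the all-ones vector $\mathbf{1}$, to $S\mathbf{1} = a\mathbf{1}$. Unwinding $S = G^{-1}HG$ and multiplying through by $G$, this reduces to showing $HG\mathbf{1} = aG\mathbf{1}$; that is, to exhibiting $G\mathbf{1}$ as an eigenvector of $H$ with eigenvalue $a$. The plan is to identify $G\mathbf{1}$ as a sum of vectors of the form $\Delta(F_3^{k_3})\Delta(F_2^{k_2})\Delta(F_1^{k_1})(v \otimes v)$, one per weight in the highest-weight submodule generated by $v \otimes v$, and to verify that each such vector is an $H$-eigenvector with eigenvalue $a$.

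The eigenvalue claim is a direct consequence of the centrality of $C$. Since $C$ commutes with $F_j^{k_j}$ in the quantum group, passing to the representation shows that $H$ commutes with $\Delta(F_j^{k_j})$ for every $j$ and every $k_j \geq 0$. Combined with $H(v \otimes v) = a(v \otimes v)$, this yields, for every tuple $\mathbf{k} = (k_1, k_2, k_3)$,
\[ H \cdot \Delta(F_3^{k_3})\Delta(F_2^{k_2})\Delta(F_1^{k_1})(v \otimes v) = a \cdot \Delta(F_3^{k_3})\Delta(F_2^{k_2})\Delta(F_1^{k_1})(v \otimes v). \]

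For the identification of $G\mathbf{1}$, I would argue that for each basis vector $u_i \otimes u_j$, the tuple $\mathbf{k}(i,j)$ chosen in the definition of $G$ is determined uniquely by the weight $\mu_{ij}$ of $u_i \otimes u_j$. Indeed, if the matrix element $\langle u_i \otimes u_j | \Delta(F_3^{k_3})\Delta(F_2^{k_2})\Delta(F_1^{k_1}) | v \otimes v \rangle$ is nonzero, then $\mu_{ij} = 2\lambda - k_1\alpha_1 - k_2\alpha_2 - k_3\alpha_3$, where $\lambda$ is the weight of $v$; this determines $\mathbf{k}(i,j)$ from $\mu_{ij}$. Writing $\mathbf{k}(\mu)$ for the unique tuple associated to weight $\mu$, one then expands $\Delta(F_3^{k_3(\mu)})\Delta(F_2^{k_2(\mu)})\Delta(F_1^{k_1(\mu)})(v \otimes v)$ in the $\{u_i \otimes u_j\}$ basis to obtain exactly the coefficients $g_{ij}$ on vectors of weight $\mu$. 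Summing over all weights $\mu$ appearing in the submodule generated by $v \otimes v$ gives
\[ G\mathbf{1} = \sum_{\mu} \Delta(F_3^{k_3(\mu)})\Delta(F_2^{k_2(\mu)})\Delta(F_1^{k_1(\mu)})(v \otimes v), \]
which together with the previous display yields $HG\mathbf{1} = aG\mathbf{1}$, and hence $a^{-1}S\mathbf{1} = \mathbf{1}$.

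The main obstacle is the uniqueness-per-weight-space claim used above. In type $A_n$, the chain structure of the crystal graph of $V$ ensures that each weight appearing in the highest-weight submodule is reached by a unique monomial $F_3^{k_3}F_2^{k_2}F_1^{k_1}$ (with the $k_\ell$ independent of the order in which the $F_\ell$'s are applied), so the definition of $G$ is unambiguous and the decomposition above is valid. Proposition \ref{propositiongroundstate} demonstrates precisely why this approach breaks in type $D$: the richer crystal graph of $\mathcal{U}_q(\mathfrak{so}_6)$ forces reaching the lowest weight vector through more complicated sequences of $F_j$'s, which is why the authors must introduce the partial ground state transformation of Section \ref{subsubsection--SelectingGST} rather than apply this type $A$ construction directly.
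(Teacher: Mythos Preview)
Your proof is correct and follows the same core idea as the paper's: both use centrality of $C$ to commute $H$ past the $\Delta(F_j^{k_j})$ factors and land on the eigenvalue equation $H(v\otimes v)=a(v\otimes v)$. The only difference is packaging: the paper computes the row sum for a fixed $u_i\otimes u_j$ componentwise and introduces a formal sum over all $\mathbf{k}$ (with only one term nonzero per basis vector), whereas you reformulate the statement globally as $HG\mathbf{1}=aG\mathbf{1}$ and identify $G\mathbf{1}$ with a sum of lowering-operator images of $v\otimes v$. Your weight-uniqueness observation (that $\mathbf{k}$ is determined by $\mu_{ij}=2\lambda-\sum k_\ell\alpha_\ell$) is exactly what justifies the paper's passage from a single diagonal entry of $G$ to the sum over $\mathbf{k}$, and your closing remarks about why this fails in type $D$ align precisely with the paper's Proposition~\ref{propositiongroundstate}.
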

\begin{proof}
    For fixed $u_i \otimes u_j$, we have that the sum of the row indexed by $u_i \otimes u_j$ equals
    $$\sum_{w_1,w_2} \frac{\langle u_i \otimes u_j \mid H \mid w_1 \otimes w_2 \rangle \langle w_1 \otimes w_2 \mid G \mid w_1 \otimes w_2 \rangle}{\langle u_i \otimes u_j \mid G \mid u_i \otimes u_j \rangle}$$
    which then equals
    $$\sum_{w_1,w_2} \left[ \frac{\langle u_i \otimes u_j \mid H \mid w_1 \otimes w_2 \rangle }{\langle u_i \otimes u_j \mid G \mid u_i \otimes u_j \rangle} \sum_{\text{\textbf{k}}} \langle w_1 \otimes w_2 \mid \Delta(F_3^{k_3}) \cdot \Delta(F_2^{k_2}) \cdot \Delta(F_1^{k_1}) \mid v \otimes v \rangle \right]$$
    By commutation between the coproduct and $H$, this must then equal
    $$\sum_{\text{\textbf{k}}} \frac{\langle u_i \otimes u_j \mid \Delta(F_3^{k_3}) \cdot \Delta(F_2^{k_2}) \cdot \Delta(F_1^{k_1})H \mid v \otimes v \rangle}{\langle u_i \otimes u_j \mid G \mid u_i \otimes u_j \rangle}$$
    Finally, this equals
    $$\sum_{\text{\textbf{k}}} \frac{a\langle u_i \otimes u_j \mid \Delta(F_3^{k_3}) \cdot \Delta(F_2^{k_2}) \cdot \Delta(F_1^{k_1}) \mid v \otimes v \rangle}{\langle u_i \otimes u_j \mid G \mid u_i \otimes u_j \rangle} = a$$
    Multiplying by $a^{-1}$ and subtracting the identity, we obtain a matrix whose rows sum to $0$.
\end{proof}
\subsection{Explicit values of \texorpdfstring{$E_i,F_i,q^{H_i}$}{Ei,Fi,qHi} in \texorpdfstring{$W \otimes W$}{W tensor W}}
\label{subsection--EiFiqHi}
Now we give explicitly the values of $\pi_{W \otimes W}(E_i), \pi_{W \otimes W}(F_i), \pi_{W \otimes W}(q^{H_i})$ for $1 \le i \le 3$. When performing the calculations, the decomposition of $W\otimes W$ into irreducible representations and weight spaces significantly reduces the computational effort, allowing all computations to be done by hand. The matrices below are given for references only, and are strictly speaking not necessary for the proof. In the expressions below, $E_{i,j}$ denotes the $400 \times 400$ matrix with all entries $0$ except for the $(i,j)$th entry, which is a $1$.

\begingroup
\allowdisplaybreaks
    
\begin{align*}
    \pi_{W \otimes W}(E_1) = &(q^3+q)(E_{1,2}+E_{2,3}+E_{8,9}+E_{19,20}+E_{141,142}+E_{142,143}+E_{148,149}+E_{159,160}+E_{341,342}\\
    &+E_{342,343}+E_{348,349}+E_{359,360}) \\
    +&(q^2+1)(E_{61,62}+E_{62,63}+E_{68,69}+E_{79,80}+E_{81,82}+E_{82,83}+E_{88,89}+E_{99,100}+E_{261,262} \\
    &+E_{262,263}+E_{268,269}+E_{279,280}+E_{281,282}+E_{282,283}+E_{288,289}+E_{299,300}) \\
    +&(q^2)(E_{4,6}+E_{5,7}+E_{8,11}+E_{9,13}+E_{14,16}+E_{15,17}+E_{18,19}+E_{144,146}+E_{145,147}+E_{148,151}\\
    &+E_{149,153}+E_{154,156}+E_{155,157}+E_{158,159}+E_{344,346}+E_{345,347}+E_{348,351}+E_{349,353}\\
    &+E_{354,356}+E_{355,357}+E_{358,359})\\
    +&(q)(E_{64,66}+E_{65,67}+E_{68,71}+E_{69,73}+E_{74,76}+E_{75,77}+E_{78,79}+E_{84,86}+E_{85,87}+E_{88,91} \\
    &+E_{89,93}+E_{94,96}+E_{95,97}+E_{98,99}+E_{264,266}+E_{265,267}+E_{268,271}+E_{269,273}+E_{274,276} \\
    &+E_{275,277}+E_{278,279}+E_{284,286}+E_{285,287}+E_{288,291}+E_{289,293}+E_{294,296}+E_{295,297}+E_{298,299}) \\
    +&(q+q^{-1})\left( \sum_{i=1}^{40} E_{i,i+20} + \sum_{i=141}^{160} E_{i,i+20} +\sum_{361}^{400}E_{i,i+20} + E_{22,23}+E_{28,29}+E_{39,40}+E_{362,363}+E_{379,380} \right. \\
    &+E_{162,163}+E_{168,169}+E_{179,180}+E_{181,182}+E_{182,183}+E_{188,189}+E_{199,200}+E_{201,202}\\
    & \left. +E_{202,203}+E_{208,209}+E_{219,220}+E_{221,222}+E_{222,223}+E_{228,229}+E_{239,240}+E_{361,362} \right) \\
    +&(1) \left( \sum_{i=61}^{100} E_{i,i+40} + \sum_{i=141}^{160} E_{i,i+60}+\sum_{i=161}^{180} E_{i,i+80} +\sum_{i=261}^{300} E_{i,i+40} +\sum_{i=341}^{360}E_{i,i+20}\right. \\
    &+E_{24,26}+E_{25,27}+E_{28,31}+E_{29,33}+E_{34,36}+E_{35,37}+E_{38,39}\\    
    &+E_{164,166}+E_{165,167}+E_{169,173}+E_{174,176}+E_{175,177}+E_{178,179}\\
    &+E_{184,186}+E_{185,187}+E_{188,191}+E_{189,193}+E_{194,196}+E_{195,197}+E_{198,199}+E_{204,206}\\
    &+E_{205,207}+E_{208,211}+E_{209,213}+E_{214,216}+E_{215,217}+E_{218,219}+E_{224,226}+E_{225,227} \\
    &+E_{228,231}+E_{229,233}+E_{234,236}+E_{235,237}+E_{238,239} \\
    &\left.+E_{364,366}+E_{365,367}+E_{368,371}+E_{369,373}+E_{374,376}+E_{375,377}+E_{378,379})\right) \\
    +&(q^{-1})(E_{104,106}+E_{105,107}+E_{108,111}+E_{109,113}+E_{114,116}+E_{115,117}+E_{118,119}+E_{124,126} \\
    &+E_{125,127}+E_{128,131}+E_{129,133}+E_{134,136}+E_{135,137}+E_{138,139}+E_{304,306}+E_{305,307} \\
    &+E_{308,311}+E_{309,313}+E_{314,316}+E_{315,317}+E_{318,319}+E_{324,326}+E_{325,327}+E_{328,331} \\
    &+E_{329,333}+E_{334,336}+E_{335,337}+E_{338,339}) \\
    +&(q^{-2})(E_{44,46}+E_{45,47}+E_{48,51}+E_{49,53}+E_{54,56}+E_{55,57}+E_{58,59}+E_{244,246}+E_{245,247} \\
    &+E_{248,251}+E_{249,253}+E_{254,256}+E_{255,257}+E_{258,259}+E_{384,386}+E_{385,387}+E_{388,391} \\
    &+E_{389,393}+E_{394,396}+E_{395,397}+E_{398,399}) \\
    +&(1+q^{-2})(E_{101,102}+E_{102,103}+E_{108,109}+E_{119,120}+E_{121,122}+E_{122,123}+E_{128,129}+E_{139,140} \\
    &+E_{301,302}+E_{302,303}+E_{308,309}+E_{319,320}+E_{321,322}+E_{322,323}+E_{328,329}+E_{339,340}) \\
    +&(q^{-1}+q^{-3})(E_{41,42}+E_{42,43}+E_{48,49}+E_{59,60}+E_{241,242}+E_{242,243}+E_{248,249}+E_{259,260} \\
    &+E_{381,382}+E_{382,383}+E_{388,389}+E_{399,400})
\end{align*}

\endgroup
\begingroup
\allowdisplaybreaks
\begin{align*}
    \pi_{W \otimes W}(E_2) &=
(q^3 + q)(E_{46,50}+E_{47,51}+E_{51,54}+E_{52,55}+E_{53,56}+E_{55,58}+E_{126,130}+E_{127,131}+E_{131,134}\\
&+E_{132,135}+E_{133,136}+E_{135,138}+E_{226,230}+E_{227,231}+E_{231,234}+E_{232,235}+E_{233,236}+E_{235,238})\\
+&(q^2 + 1)(E_{26,30}+E_{27,31}+E_{31,34}+E_{32,35}+E_{33,36}+E_{35,38}+E_{86,90}+E_{87,91}+E_{91,94}+E_{92,95}\\
&+E_{93,96}+E_{95,98}+E_{246,250}+E_{247,251}+E_{251,254}+E_{252,255}+E_{253,256}+E_{255,258}+E_{326,330}\\
&+E_{327,331}+E_{331,334}+E_{332,335}+E_{333,336}+E_{335,338})\\
+&(q^2)(E_{42,44}+E_{43,46}+E_{45,48}+E_{47,49}+E_{57,59}+E_{122,124}+E_{123,126}+E_{125,128}+E_{127,129}\\
&+E_{137,139}+E_{222,224}+E_{223,226}+E_{225,228}+E_{227,229}+E_{237,239})\\
+&(q)(E_{22,24}+E_{23,26}+E_{25,28}+E_{27,29}+E_{37,39}+E_{82,84}+E_{83,86}+E_{85,88}+E_{87,89}+E_{97,99}\\
&+E_{242,244}+E_{243,246}+E_{245,248}+E_{247,249}+E_{257,259}+E_{322,324}+E_{323,326}+E_{325,328}\\
&+E_{327,329}+E_{337,339})\\
+&(q + q^{-1})\left(\sum_{i=101}^{140} E_{i, i+80} +\sum_{i=201}^{260}  E_{i,i+60} + \sum_{i=281} E_{i,i+60} + E_{6,10}+E_{7,11}+E_{11,14}+E_{12,15}+E_{13,16}\right.\\
&+E_{15,18}+E_{106,110}+E_{107,111}+E_{111,114}+E_{112,115}+E_{113,116}+E_{115,118}+E_{166,170}+E_{167,171}\\
&+E_{171,174}+E_{172,175}+E_{173,176} +E_{175,178}+E_{206,210}
+E_{207,211}+E_{211,214}+E_{212,215}\\
&+E_{213,216}+E_{215,218}+E_{286,290}+E_{287,291}+E_{291,294}+E_{292,295}+E_{293,296}+E_{295,298}\\
&\left.+E_{386,390}+E_{387,391}+E_{391,394}+E_{392,395}+E_{393,396}+E_{395,398}\right)\\
+&(1)\left(\sum_{i=21}^{40}E_{i,i+40} +\sum_{i=41}^{60}E_{i,i+60} \sum_{i=81}^{100}E_{i,i+60} + \sum_{i=121}^{140}E_{i,i+40} +\sum_{i=321}^{340} E_{i,i+40} +E_{2,4}+E_{3,6}+E_{5,8}\right. \\
&+E_{7,9}+E_{17,19}+E_{102,104}+E_{103,106}+E_{105,108}+E_{107,109}+E_{117,119}+E_{162,164}+E_{163,166} \\
&+E_{165,168}+E_{167,169}+E_{177,179}+E_{202,204}+E_{203,206}+E_{205,208} +E_{207,209}+E_{217,219}+E_{282,284}\\
&+E_{283,286}+E_{285,288}+E_{287,289}+E_{297,299}+E_{382,384}+E_{383,386} +E_{385,388}+E_{387,389}+E_{397,399})\\
+&(q^{-1})(E_{62,64}+E_{63,66}+E_{65,68}+E_{67,69}+E_{77,79}+E_{142,144}+E_{143,146}+E_{145,148}+E_{147,149}\\
&+E_{157,159}+E_{302,304}+E_{303,306}+E_{305,308}+E_{307,309}+E_{317,319}+E_{362,364}+E_{363,366}\\
&+E_{365,368}+E_{367,369}+E_{377,379})\\
+&(q^{-2})(E_{182,184}+E_{183,186}+E_{185,188}+E_{187,189}+E_{197,199}+E_{262,264}+E_{263,266}+E_{265,268}\\
&+E_{267,269}+E_{277,279}+E_{342,344}+E_{343,346}+E_{345,348}+E_{347,349}+E_{357,359})\\
+&(1 + q^{-2})(E_{66,70}+E_{67,71}+E_{71,74}+E_{72,75}+E_{73,76}+E_{75,78}+E_{146,150}+E_{147,151}+E_{151,154}\\
&+E_{152,155}+E_{153,156}+E_{155,158}+E_{306,310}+E_{307,311}+E_{311,314}+E_{312,315}+E_{313,316}\\
&+E_{315,318}+E_{366,370}+E_{367,371}+E_{371,374}+E_{372,375}+E_{373,376}+E_{375,378})\\
+&(q^{-1} + q^{-3})(E_{186,190}+E_{187,191}+E_{191,194}+E_{192,195}+E_{193,196}+E_{195,198}+E_{266,270}+E_{267,271}\\
&+E_{271,274}+E_{272,275}+E_{273,276}+E_{275,278}+E_{346,350}+E_{347,351}+E_{351,354}+E_{352,355}\\
&+E_{353,356}+E_{355,358})
\end{align*}

\begin{align*}
    \pi_{W \otimes W}(E_3) &=
(-q^3 - q)(E_{46,51}+E_{47,52}+E_{50,54}+E_{51,55}+E_{53,57}+E_{54,58}+E_{106,111}+E_{107,112}+E_{110,114}\\
&+E_{111,115}+E_{113,117}+E_{114,118}+E_{186,191}+E_{187,192}+E_{190,194}+E_{191,195}+E_{193,197}+E_{194,198})\\
+&(-q^2 - 1)(E_{26,31}+E_{27,32}+E_{30,34}+E_{31,35}+E_{33,37}+E_{34,38}+E_{66,71}+E_{67,72}+E_{70,74}+E_{71,75}\\
&+E_{73,77}+E_{74,78}+E_{246,251}+E_{247,252}+E_{250,254}+E_{251,255}+E_{253,257}+E_{254,258}+E_{306,311}\\
&+E_{307,312}+E_{310,314}+E_{311,315}+E_{313,317}+E_{314,318})\\
+&(-q^2)(E_{42,45}+E_{43,47}+E_{44,48}+E_{46,49}+E_{56,59}+E_{102,105}+E_{103,107}+E_{104,108}+E_{106,109}\\
&+E_{116,119}+E_{182,185}+E_{183,187}+E_{184,188}+E_{186,189}+E_{196,199})\\
+&(-q)(E_{22,25}+E_{23,27}+E_{24,28}+E_{26,29}+E_{36,39}+E_{62,65}+E_{63,67}+E_{64,68}+E_{66,69}+E_{76,79}\\
&+E_{242,245}+E_{243,247}+E_{244,248}+E_{246,249}+E_{256,259}+E_{302,305}+E_{303,307}+E_{304,308}\\
&+E_{306,309}+E_{316,319})\\
+&(-q - q^{-1})\left(\sum_{i=101}^{140}E_{i,i+100}+\sum_{i=181}^{220}E_{i,i+80}+ \sum_{i=241}^{280}E_{i,i+80}+E_{6,11}+E_{7,12}+E_{10,14}+E_{11,15}\right.\\
&+E_{13,17}+E_{14,18}+E_{126,131}+E_{127,132}+E_{130,134}+E_{131,135}+E_{133,137}+E_{134,138}+E_{166,171}\\
&+E_{167,172}+E_{170,174}+E_{171,175}+E_{173,177}+E_{174,178}+E_{206,211}+E_{207,212}+E_{210,214}+E_{211,215}\\
&+E_{213,217}+E_{214,218}+E_{267,272}+E_{270,274}+E_{271,275}+E_{273,277}+E_{274,278}+E_{386,391}+E_{387,392} \\
&+E_{390,394}+E_{391,395}+E_{393,397}+E_{394,398})\\
+&(-1)\left(\sum_{i=21}^{40} + \sum_{i=41} E_{i,i+80}+E_{i,i+60}^{80}+\sum_{i=101}^{120}E_{i,i+60}+\sum_{i=301}^{320}E_{i,i+60}+E_{2,5}+E_{3,7}+E_{4,8}+E_{6,9}\right. \\
&++E_{16,19}+E_{122,125}+E_{123,127}+E_{124,128}+E_{126,129}+E_{136,139}+E_{162,165}+E_{163,167}\\
&+E_{164,168}+E_{166,169}+E_{176,179}+E_{202,205}+E_{203,207}+E_{204,208}+E_{206,209}+E_{216,219}+E_{262,265}\\
&+E_{263,267}+E_{264,268}+E_{266,269}+E_{276,279}+E_{382,385}+E_{383,387}+E_{384,388}+E_{386,389}+E_{396,399})\\
+&(-q^{-1})(E_{82,85}+E_{83,87}+E_{84,88}+E_{86,89}+E_{96,99}+E_{142,145}+E_{143,147}+E_{144,148}+E_{146,149}\\
&+E_{156,159}+E_{322,325}+E_{323,327}+E_{324,328}+E_{326,329}+E_{336,339}+E_{362,365}+E_{363,367}\\
&+E_{364,368}+E_{366,369}+E_{376,379})\\
+&(-q^{-2})(E_{222,225}+E_{223,227}+E_{224,228}+E_{226,229}+E_{236,239}+E_{282,285}+E_{283,287}+E_{284,288}\\
&+E_{286,289}+E_{296,299}+E_{342,345}+E_{343,347}+E_{344,348}+E_{346,349}+E_{356,359})\\
+&(-1 - q^{-2})(E_{86,91}+E_{87,92}+E_{90,94}+E_{91,95}+E_{93,97}+E_{94,98}+E_{146,151}+E_{147,152}+E_{150,154}\\
&+E_{151,155}+E_{153,157}+E_{154,158}+E_{326,331}+E_{327,332}+E_{330,334}+E_{331,335}+E_{333,337}\\
&+E_{334,338}+E_{366,371}+E_{367,372}+E_{370,374}+E_{371,375}+E_{373,377}+E_{374,378})\\
+&(-q^{-1} - q^{-3})(E_{226,231}+E_{227,232}+E_{230,234}+E_{231,235}+E_{233,237}+E_{234,238}+E_{286,291}+E_{287,292}\\
&+E_{290,294}+E_{291,295}+E_{293,297}+E_{294,298}+E_{346,351}+E_{347,352}+E_{350,354}+E_{351,355}\\
&+E_{353,357}+E_{354,358})
\end{align*}

\begin{align*}
    \pi_{W \otimes W}(F_1) =
&(q^3 + q)(E_{243,163}+E_{253,173}+E_{260,180}+E_{363,343}+E_{373,353}+E_{380,360})\\
+&(q^2 + 1)(E_{246,166}+E_{247,167}+E_{256,176}+E_{257,177}+E_{366,346}+E_{367,347}+E_{376,356}+E_{377,357})\\
+&(q^2)(E_{23,3}+E_{33,13}+E_{40,20}+E_{43,23}+E_{53,33}+E_{60,40}+E_{103,63}+E_{113,73}+E_{120,80}+E_{123,83}\\
&+E_{133,93}+E_{140,100}+E_{163,143}+E_{173,153}+E_{180,160}+E_{243,203}+E_{253,213}+E_{260,220}+E_{303,263}\\
&+E_{313,273}+E_{320,280}+E_{323,283}+E_{333,293}+E_{340,300}+E_{383,363}+E_{393,373}+E_{400,380})\\
+&(q)\left(\sum_{i=1}^{4} E_{10i+16,10i-4} + E_{10i+17,10i-3} \right. \\
&\left. + E_{106,66} + E_{107,67} + E_{116,76} \vphantom{\sum_{i=1}^4}\right) \\
+&(q + q^{-1})\left(\sum_{i=1}^{20} E_{20i-7,20i-11} + E_{20i-1,20i-2} \right. \\
&+ \sum_{i=1}^2 E_{120i+122,180i-18} + E_{120i+129,180i-11} + E_{120i+130,180i-10} + E_{120i+131,180i-9} \\
&\hspace{5em}\left. + E_{120i+132,180i-8} + E_{120i+139,180i-1} \vphantom{\sum_{i=1}^{20}}\right) \\
+&(1)\left(\sum_{i=1}^{20} E_{20i-18,20i-19} + E_{20i-17,20i-18} + E_{20i-11,20i-12} +E_{20i,20i-1} + E_{20i-14,20i-16} \right. \\
&\hspace{5em}+ E_{20i-13,20i-15} + E_{20i-7,20i-9} + E_{20i-4,20i-6} + E_{20i-3,20i-5} \\
&+ \sum_{i \in \{1,2,8,19\}} E_{20i+2,20i-18} + E_{20i+9,20i-11} + E_{20i+10,20i-10} + E_{20i+11,20i-9} + E_{20i+12,20i-8} \\
&\hspace{5em}+ E_{20i+19,20i-1} \\
&+ \sum_{i \in \{1,2,8,11,12\}} E_{20i+82,20i+42} + E_{20i+89,20i+49} + E_{20i+90,20i+50} + E_{20i+91,20i+51} + E_{20i+92,20i+52} \\
&\hspace{5em}\left. + E_{20i+99,20i+59} \vphantom{\sum_{i=1}^{20}}\right) \\
+&(q^{-1})\left(\sum_{i \in \{1,2,3,4,15,16,37,38\}} E_{10i+14,10i-6} + E_{10i+15,10i-5} \right. \\
&\left. + \sum_{i \in \{1,2,3,4,15,16,21,22,23,24\}} E_{10i+94,10i+54} + E_{10i+95,10i+55} \right) \\
+&(q^{-2})(E_{21,1}+E_{28,8}+E_{38,18}+E_{41,21}+E_{48,28}+E_{58,38}+E_{101,61}+E_{108,68}+E_{118,78}+E_{121,81}\\
&+E_{128,88}+E_{138,98}+E_{161,141}+E_{168,148}+E_{178,158}+E_{241,201}+E_{248,208}+E_{258,218}+E_{301,261}\\
&+E_{308,268}+E_{318,278}+E_{321,281}+E_{328,288}+E_{338,298}+E_{381,361}+E_{388,368}+E_{398,378})\\
+&(1 + q^{-2})(E_{244,164}+E_{245,165}+E_{254,174}+E_{255,175}+E_{364,344}+E_{365,345}+E_{374,354}+E_{375,355})\\
+&(q^{-1} + q^{-3})(E_{241,161}+E_{248,168}+E_{258,178}+E_{361,341}+E_{368,348}+E_{378,358})
\end{align*}

\begin{align*}
    \pi_{W \otimes W}(F_2) =
&(q^3 + q)(E_{110,50}+E_{114,54}+E_{118,58})\\
+&(q^2 + 1)(E_{104,44}+E_{108,48}+E_{116,56}+E_{119,59})\\
+&(q^2)\left(\sum_{i \in \{1,16\}} E_{20i+50,20i+10} + E_{20i+54,20i+14} + E_{20i+58,20i+18} \right.\\
&+\sum_{i \in \{4,10,11,14\}} E_{20i+70,20i+10} + E_{20i+74,20i+14} + E_{20i+78,20i+18} \\
&\left.+\sum_{i \in \{5,6\}} E_{20i+90,20i+10} + E_{20i+94,20i+14} + E_{20i+98,20i+18} \right) \\
+&(q)\left(\sum_{i \in \{1,16\}} E_{20i+44,20i+4} + E_{20i+48,20i+8} + E_{20i+56,20i+16} + E_{E_{20i+59,20i+19}} \right.\\
&+\sum_{i \in \{4,10,11,14\}} E_{20i+64,20i+4} + E_{20i+68,20i+8} + E_{20i+76,20i+16} + E_{E_{20i+79,20i+19}} \\
&\left. + \sum_{i \in \{5,6\}} E_{20i+84,20i+4} + E_{20i+88,20i+8} + E_{20i+96,20i+16} + E_{E_{20i+99,20i+19}} \right)\\
+&(q+q^{-1}) \left( \sum_{i=1}^{20} E_{20i-14,20i-17} \vphantom{\sum_{j \in \{1,6,9,11,15,20\}}}\right.\\
&\left. + \sum_{j \in \{1,6,9,11,15,20\}} E_{j+100,j+40} \right)\\
+&(1)\left( \sum_{i=1}^{20} E_{20i-16,20i-18} + E_{20i-12,20i-15} + E_{20i-10,20i-14} + E_{20i-9,20i-13} + E_{20i-6,20i-9} \right. \\
&\hspace{5em}+ E_{20i-5,20i-8} + E_{20i-2,20i-5} +  E_{20i-1,20i-3}\\
&+\sum_{i \in \{1,16\}} E_{20i+41,20i+1} + E_{20i+46,20i+6} + E_{20i+49,20i+9} + E_{20i+51,20i+11} + E_{20i+55,20i+15} \\
&\hspace{5em}+E_{20i+60,20i+20} \\
&+\sum_{i \in \{4,10,11,14\}} E_{20i+61,20i+1} + E_{20i+66,20i+6} + E_{20i+69,20i+9} + E_{20i+71,20i+11} + E_{20i+75,20i+15} \\
&\hspace{5em}+E_{20i+80,20i+20} \\
&+\sum_{i \in \{5,6\}} E_{20i+81,20i+1} + E_{20i+86,20i+6} + E_{20i+89,20i+9} + E_{20i+91,20i+11} + E_{20i+95,20i+15} \\
&\hspace{5em}\left.+E_{20i+100,20i+20} \vphantom{\sum_{i=1}^{20}}\right) \\
+&(q^{-1})\left( \sum_{i \in \{1,16\}} E_{20i+42,20i+2} + E_{20i+45,20i+5} + E_{20i+53,20i+13} + E_{20i+57,20i+17} \right.\\
&+\sum_{i \in \{4,10,11,14\}} E_{20i+62,20i+2} + E_{20i+65,20i+5} + E_{20i+73,20i+13} + E_{20i+77,20i+17} \\
&\left. +\sum_{i \in \{5,6\}} E_{20i+82,20i+2} + E_{20i+85,20i+5} + E_{20i+93,20i+13} + E_{20i+97,20i+17} \right) \\
+&(q^{-2})\left( \sum_{i \in \{1,16\}} E_{20i+43,20i+3} + E_{20i+47,20i+7} + E_{20i+52,20i+12}\right.\\
&+\sum_{i \in \{4,10,11,14\}} E_{20i+63,20i+3} + E_{20i+67,20i+7} + E_{20i+72,20i+12} \\
&\left. +\sum_{i \in \{5,6\}} E_{20i+83,20i+3} + E_{20i+87,20i+7} + E_{20i+92,20i+12} \right) \\
+&(1 + q^{-2})(E_{102,42}+E_{105,45}+E_{113,53}+E_{117,57})\\
+&(q^{-1} + q^{-3})(E_{103,43}+E_{107,47}+E_{112,52})\\
+&\left(\frac{q^2}{q + q^{-1}}\right)(E_{270,170}+E_{274,174}+E_{278,178}+E_{310,250}+E_{314,254}+E_{318,258})\\
+&\left(\frac{q}{q + q^{-1}}\right)(E_{264,164}+E_{268,168}+E_{276,176}+E_{279,179}+E_{304,244}+E_{308,248}+E_{316,256}+E_{319,259})\\
+&\left(\frac{1}{q + q^{-1}}\right)\left( \sum_{i=1}^{20} E_{20i-6,20i-11} + E_{20i-4,20i-7} \vphantom{\sum_{j \in \{1,6,9,11,15,20\}}}\right.\\
&\left. + \sum_{j \in \{1,6,9,11,15,20\}} E_{j+300,j+240} + E_{j+260,j+160} \right) \\
+&\left(\frac{q^{-1}}{q + q^{-1}}\right)(E_{262,162}+E_{265,165}+E_{273,173}+E_{277,177}+E_{302,242}+E_{305,245}+E_{313,253}+E_{317,257})\\
+&\left(\frac{q^{-2}}{q + q^{-1}}\right)(E_{263,163}+E_{267,167}+E_{272,172}+E_{303,243}+E_{307,247}+E_{312,252})
\end{align*}

\begin{align*}
    \pi_{W \otimes W}(F_3) =
-&(q^3 + q)(E_{132,52}+E_{135,55}+E_{138,58})\\
-&(q^2 + 1)(E_{125,45}+E_{128,48}+E_{137,57}+E_{139,59})\\
-&(q^2)\left( \sum_{i \in \{1,15\}} E_{20i+72,20i+12} + E_{20i+75,20i+15} + E_{20i+78,20i+18} \right.\\
&+ \sum_{i \in \{3,9,10,13\}} E_{20i+92,20i+12} + E_{20i+95,20i+15} + E_{20i+98,20i+18} \\
&\left.+ \sum_{i \in \{5,6\}} E_{20i+112,20i+12} + E_{20i+115,20i+15} + E_{20i+118,20i+18} \right)\\
-&(q)\left( \sum_{i \in \{1,15\}} E_{20i+65,20i+5} + E_{20i+68,20i+8} + E_{20i+77,20i+17} + E_{20i+79,20i+19} \right.\\
&+ \sum_{i \in \{3,9,10,13\}} E_{20i+85,20i+5} + E_{20i+88,20i+8} + E_{20i+97,20i+17} + E_{20i+99,20i+19} \\
&\left. + \sum_{i \in \{5,6\}} E_{20i+105,20i+5} + E_{20i+108,20i+8} + E_{20i+117,20i+17} + E_{20i+119,20i+19} \right)\\
-&(q+q^{-1})\left( \sum_{i=1}^{20} E_{20i-13,20i-17} \vphantom{\sum_{j \in \{1,7,9,11,14,20\}}}\right.\\
&\left. + \sum_{j \in \{1,7,9,11,14,20\}} E_{j+120,j+40} \right)\\
&-(1)\left( \sum_{i=1}^{20} E_{20i-15,20i-18} + E_{20i-12,20i-16} + E_{20i-9,20i-14} + E_{20i-8,20i-13} + E_{20i-6,20i-10} \right.\\
&\hspace{5em} + E_{20i-5,20i-9} + E_{20i-2,20i-6} + E_{20i-1,20i-4} \\
&+ \sum_{i \in \{1,15\}} E_{20i+61,20i+1} + E_{20i+67,20i+7} + E_{20i+69,20i+9} + E_{20i+71,20i+11} + E_{20i+74,20i+14} \\
&\hspace{5em}+ E_{20i+80,20i+20} \\
&+ \sum_{i \in \{3,9,10,13\}} E_{20i+81,20i+1} + E_{20i+87,20i+7} + E_{20i+89,20i+9} + E_{20i+91,20i+11} + E_{20i+94,20i+14} \\
&\hspace{5em}+ E_{20i+80,20i+20} \\
&+ \sum_{i \in \{5,6\}} E_{20i+101,20i+1} + E_{20i+107,20i+7} + E_{20i+109,20i+9} + E_{20i+111,20i+11} + E_{20i+114,20i+14} \\
&\left.\hspace{5em}+ E_{20i+120,20i+20} \vphantom{\sum_{i \in \{5,6\}}}\right)\\
-&(q^{-1})\left( \sum_{i \in \{1,15\}} E_{20i+62,20i+2} + E_{20i+64,20i+4} + E_{20i+73,20i+13} + E_{20i+76,20i+16} \right.\\
&+ \sum_{i \in \{3,9,10,13\}} E_{20i+82,20i+2} + E_{20i+84,20i+4} + E_{20i+93,20i+13} + E_{20i+96,20i+16} \\
&\left. + \sum_{i \in \{5,6\}} E_{20i+102,20i+2} + E_{20i+104,20i+4} + E_{20i+113,20i+13} + E_{20i+116,20i+16} \right)\\
-&(q^{-2})\left( \sum_{i \in \{1,15\}} E_{20i+63,20i+3} + E_{20i+66,20i+6} + E_{20i+70,20i+10}\right.\\
&+ \sum_{i \in \{3,9,10,13\}} E_{20i+83,20i+3} + E_{20i+86,20i+6} + E_{20i+90,20i+10} \\
&\left. + \sum_{i \in \{1,15\}} E_{20i+103,20i+3} + E_{20i+106,20i+6} + E_{20i+110,20i+10} \right)\\
-&(1 + q^{-2})(E_{122,42}+E_{124,44}+E_{133,53}+E_{136,56})\\
-&(q^{-1} + q^{-3})(E_{123,43}+E_{126,46}+E_{130,50})\\
-&\left(\frac{q^2}{q + q^{-1}}\right)(E_{292,172}+E_{295,175}+E_{298,178}+E_{332,252}+E_{335,255}+E_{338,258})\\
-&\left(\frac{q}{q + q^{-1}}\right)(E_{285,165}+E_{288,168}+E_{297,177}+E_{299,179}+E_{325,245}+E_{328,248}+E_{337,257}+E_{339,259})\\
-&\left(\frac{1}{q + q^{-1}}\right)\left( \sum_{i=1}^{20} E_{20i-5,20i-11} + E_{20i-3,20i-7} \vphantom{\sum_{j \in \{1,7,9,11,14,20\}}}\right.\\
&\left. + \sum_{j \in \{1,7,9,11,14,20\}} E_{j+280,j+160} + E_{j+320,j+240} \right) \\
-&\left(\frac{q^{-1}}{q + q^{-1}}\right)(E_{282,162}+E_{284,164}+E_{293,173}+E_{296,176}+E_{322,242}+E_{324,244}+E_{333,253}+E_{336,256})\\
-&\left(\frac{q^{-2}}{q + q^{-1}}\right)(E_{283,163}+E_{286,166}+E_{290,170}+E_{323,243}+E_{326,246}+E_{330,250})\\
\end{align*}

\begin{align*}
    \pi_{W \otimes W}(q^{H_1}) &=
(q^4)(E_{1,1}+E_{8,8}+E_{18,18}+E_{141,141}+E_{148,148}+E_{158,158}+E_{341,341}+E_{348,348}+E_{358,358})\\
+&(q^3)(E_{4,4}+E_{5,5}+E_{14,14}+E_{15,15}+E_{61,61}+E_{68,68}+E_{78,78}+E_{81,81}+E_{88,88}+E_{98,98}+E_{144,144}\\
&+E_{145,145}+E_{154,154}+E_{155,155}+E_{261,261}+E_{268,268}+E_{278,278}+E_{281,281}+E_{288,288}+E_{298,298}\\
&+E_{344,344}+E_{345,345}+E_{354,354}+E_{355,355})\\
+&(q^2)(E_{2,2}+E_{9,9}+E_{10,10}+E_{11,11}+E_{12,12}+E_{19,19}+E_{21,21}+E_{28,28}+E_{38,38}+E_{64,64}+E_{65,65}\\
&+E_{74,74}+E_{75,75}+E_{84,84}+E_{85,85}+E_{94,94}+E_{95,95}+E_{142,142}+E_{149,149}+E_{150,150}+E_{151,151}\\
&+E_{152,152}+E_{159,159}+E_{161,161}+E_{168,168}+E_{178,178}+E_{181,181}+E_{188,188}+E_{198,198}+E_{201,201}\\
&+E_{208,208}+E_{218,218}+E_{221,221}+E_{228,228}+E_{238,238}+E_{264,264}+E_{265,265}+E_{274,274}+E_{275,275}\\
&+E_{284,284}+E_{285,285}+E_{294,294}+E_{295,295}+E_{342,342}+E_{349,349}+E_{350,350}+E_{351,351}+E_{352,352}\\
&+E_{359,359}+E_{361,361}+E_{368,368}+E_{378,378})\\
+&(q)(E_{6,6}+E_{7,7}+E_{16,16}+E_{17,17}+E_{24,24}+E_{25,25}+E_{34,34}+E_{35,35}+E_{62,62}+E_{69,69}+E_{70,70}\\
&+E_{71,71}+E_{72,72}+E_{79,79}+E_{82,82}+E_{89,89}+E_{90,90}+E_{91,91}+E_{92,92}+E_{99,99}+E_{101,101}\\
&+E_{108,108}+E_{118,118}+E_{121,121}+E_{128,128}+E_{138,138}+E_{146,146}+E_{147,147}+E_{156,156}+E_{157,157}\\
&+E_{164,164}+E_{165,165}+E_{174,174}+E_{175,175}+E_{184,184}+E_{185,185}+E_{194,194}+E_{195,195}+E_{204,204}\\
&+E_{205,205}+E_{214,214}+E_{215,215}+E_{224,224}+E_{225,225}+E_{234,234}+E_{235,235}+E_{262,262}+E_{269,269}\\
&+E_{270,270}+E_{271,271}+E_{272,272}+E_{279,279}+E_{282,282}+E_{289,289}+E_{290,290}+E_{291,291}+E_{292,292}\\
&+E_{299,299}+E_{301,301}+E_{308,308}+E_{318,318}+E_{321,321}+E_{328,328}+E_{338,338}+E_{346,346}+E_{347,347}\\
&+E_{356,356}+E_{357,357}+E_{364,364}+E_{365,365}+E_{374,374}+E_{375,375})\\
+&(1)(E_{3,3}+E_{13,13}+E_{20,20}+E_{22,22}+E_{29,29}+E_{30,30}+E_{31,31}+E_{32,32}+E_{39,39}+E_{41,41}+E_{48,48}\\
&+E_{58,58}+E_{66,66}+E_{67,67}+E_{76,76}+E_{77,77}+E_{86,86}+E_{87,87}+E_{96,96}+E_{97,97}+E_{104,104}\\
&+E_{105,105}+E_{114,114}+E_{115,115}+E_{124,124}+E_{125,125}+E_{134,134}+E_{135,135}+E_{143,143}+E_{153,153}\\
&+E_{160,160}+E_{162,162}+E_{169,169}+E_{170,170}+E_{171,171}+E_{172,172}+E_{179,179}+E_{182,182}+E_{189,189}\\
&+E_{190,190}+E_{191,191}+E_{192,192}+E_{199,199}+E_{202,202}+E_{209,209}+E_{210,210}+E_{211,211}+E_{212,212}\\
&+E_{219,219}+E_{222,222}+E_{229,229}+E_{230,230}+E_{231,231}+E_{232,232}+E_{239,239}+E_{241,241}+E_{248,248}\\
&+E_{258,258}+E_{266,266}+E_{267,267}+E_{276,276}+E_{277,277}+E_{286,286}+E_{287,287}+E_{296,296}+E_{297,297}\\
&+E_{304,304}+E_{305,305}+E_{314,314}+E_{315,315}+E_{324,324}+E_{325,325}+E_{334,334}+E_{335,335}+E_{343,343}\\
&+E_{353,353}+E_{360,360}+E_{362,362}+E_{369,369}+E_{370,370}+E_{371,371}+E_{372,372}+E_{379,379}+E_{381,381}\\
&+E_{388,388}+E_{398,398})\\
+&(q^{-1})(E_{26,26}+E_{27,27}+E_{36,36}+E_{37,37}+E_{44,44}+E_{45,45}+E_{54,54}+E_{55,55}+E_{63,63}+E_{73,73}\\
&+E_{80,80}+E_{83,83}+E_{93,93}+E_{100,100}+E_{102,102}+E_{109,109}+E_{110,110}+E_{111,111}+E_{112,112}\\
&+E_{119,119}+E_{122,122}+E_{129,129}+E_{130,130}+E_{131,131}+E_{132,132}+E_{139,139}+E_{166,166}+E_{167,167}\\
&+E_{176,176}+E_{177,177}+E_{186,186}+E_{187,187}+E_{196,196}+E_{197,197}+E_{206,206}+E_{207,207}+E_{216,216}\\
&+E_{217,217}+E_{226,226}+E_{227,227}+E_{236,236}+E_{237,237}+E_{244,244}+E_{245,245}+E_{254,254}+E_{255,255}\\
&+E_{263,263}+E_{273,273}+E_{280,280}+E_{283,283}+E_{293,293}+E_{300,300}+E_{302,302}+E_{309,309}+E_{310,310}\\
&+E_{311,311}+E_{312,312}+E_{319,319}+E_{322,322}+E_{329,329}+E_{330,330}+E_{331,331}+E_{332,332}+E_{339,339}\\
&+E_{366,366}+E_{367,367}+E_{376,376}+E_{377,377}+E_{384,384}+E_{385,385}+E_{394,394}+E_{395,395})\\
+&(q^{-2})(E_{23,23}+E_{33,33}+E_{40,40}+E_{42,42}+E_{49,49}+E_{50,50}+E_{51,51}+E_{52,52}+E_{59,59}+E_{106,106}\\
&+E_{107,107}+E_{116,116}+E_{117,117}+E_{126,126}+E_{127,127}+E_{136,136}+E_{137,137}+E_{163,163}+E_{173,173}\\
&+E_{180,180}+E_{183,183}+E_{193,193}+E_{200,200}+E_{203,203}+E_{213,213}+E_{220,220}+E_{223,223}+E_{233,233}\\
&+E_{240,240}+E_{242,242}+E_{249,249}+E_{250,250}+E_{251,251}+E_{252,252}+E_{259,259}+E_{306,306}+E_{307,307}\\
&+E_{316,316}+E_{317,317}+E_{326,326}+E_{327,327}+E_{336,336}+E_{337,337}+E_{363,363}+E_{373,373}+E_{380,380}\\
&+E_{382,382}+E_{389,389}+E_{390,390}+E_{391,391}+E_{392,392}+E_{399,399})\\
+&(q^{-3})(E_{46,46}+E_{47,47}+E_{56,56}+E_{57,57}+E_{103,103}+E_{113,113}+E_{120,120}+E_{123,123}+E_{133,133}\\
&+E_{140,140}+E_{246,246}+E_{247,247}+E_{256,256}+E_{257,257}+E_{303,303}+E_{313,313}+E_{320,320}\\
&+E_{323,323}+E_{333,333}+E_{340,340}+E_{386,386}+E_{387,387}+E_{396,396}+E_{397,397})\\
+&(q^{-4})(E_{43,43}+E_{53,53}+E_{60,60}+E_{243,243}+E_{253,253}+E_{260,260}+E_{383,383}+E_{393,393}+E_{400,400})
\end{align*}

\begin{align*}
    \pi_{W \otimes W}(q^{H_2}) &=
(q^4)(E_{43,43}+E_{47,47}+E_{52,52}+E_{123,123}+E_{127,127}+E_{132,132}+E_{223,223}+E_{227,227}+E_{232,232})\\
+&(q^3)(E_{23,23}+E_{27,27}+E_{32,32}+E_{42,42}+E_{45,45}+E_{53,53}+E_{57,57}+E_{83,83}+E_{87,87}+E_{92,92}+E_{122,122}\\
&+E_{125,125}+E_{133,133}+E_{137,137}+E_{222,222}+E_{225,225}+E_{233,233}+E_{237,237}+E_{243,243}+E_{247,247}\\
&+E_{252,252}+E_{323,323}+E_{327,327}+E_{332,332})\\
+&(q^2)(E_{3,3}+E_{7,7}+E_{12,12}+E_{22,22}+E_{25,25}+E_{33,33}+E_{37,37}+E_{41,41}+E_{46,46}+E_{49,49}+E_{51,51}\\
&+E_{55,55}+E_{60,60}+E_{82,82}+E_{85,85}+E_{93,93}+E_{97,97}+E_{103,103}+E_{107,107}+E_{112,112}+E_{121,121}\\
&+E_{126,126}+E_{129,129}+E_{131,131}+E_{135,135}+E_{140,140}+E_{163,163}+E_{167,167}+E_{172,172}+E_{203,203}\\
&+E_{207,207}+E_{212,212}+E_{221,221}+E_{226,226}+E_{229,229}+E_{231,231}+E_{235,235}+E_{240,240}+E_{242,242}\\
&+E_{245,245}+E_{253,253}+E_{257,257}+E_{283,283}+E_{287,287}+E_{292,292}+E_{322,322}+E_{325,325}+E_{333,333}\\
&+E_{337,337}+E_{383,383}+E_{387,387}+E_{392,392})\\
+&(q)(E_{2,2}+E_{5,5}+E_{13,13}+E_{17,17}+E_{21,21}+E_{26,26}+E_{29,29}+E_{31,31}+E_{35,35}+E_{40,40}+E_{44,44}\\
&+E_{48,48}+E_{56,56}+E_{59,59}+E_{63,63}+E_{67,67}+E_{72,72}+E_{81,81}+E_{86,86}+E_{89,89}+E_{91,91}\\
&+E_{95,95}+E_{100,100}+E_{102,102}+E_{105,105}+E_{113,113}+E_{117,117}+E_{124,124}+E_{128,128}+E_{136,136}\\
&+E_{139,139}+E_{143,143}+E_{147,147}+E_{152,152}+E_{162,162}+E_{165,165}+E_{173,173}+E_{177,177}+E_{202,202}\\
&+E_{205,205}+E_{213,213}+E_{217,217}+E_{224,224}+E_{228,228}+E_{236,236}+E_{239,239}+E_{241,241}+E_{246,246}\\
&+E_{249,249}+E_{251,251}+E_{255,255}+E_{260,260}+E_{282,282}+E_{285,285}+E_{293,293}+E_{297,297}+E_{303,303}\\
&+E_{307,307}+E_{312,312}+E_{321,321}+E_{326,326}+E_{329,329}+E_{331,331}+E_{335,335}+E_{340,340}+E_{363,363}\\
&+E_{367,367}+E_{372,372}+E_{382,382}+E_{385,385}+E_{393,393}+E_{397,397})\\
+&(1)(E_{1,1}+E_{6,6}+E_{9,9}+E_{11,11}+E_{15,15}+E_{20,20}+E_{24,24}+E_{28,28}+E_{36,36}+E_{39,39}+E_{50,50}\\
&+E_{54,54}+E_{58,58}+E_{62,62}+E_{65,65}+E_{73,73}+E_{77,77}+E_{84,84}+E_{88,88}+E_{96,96}+E_{99,99}\\
&+E_{101,101}+E_{106,106}+E_{109,109}+E_{111,111}+E_{115,115}+E_{120,120}+E_{130,130}+E_{134,134}+E_{138,138}\\
&+E_{142,142}+E_{145,145}+E_{153,153}+E_{157,157}+E_{161,161}+E_{166,166}+E_{169,169}+E_{171,171}+E_{175,175}\\
&+E_{180,180}+E_{183,183}+E_{187,187}+E_{192,192}+E_{201,201}+E_{206,206}+E_{209,209}+E_{211,211}+E_{215,215}\\
&+E_{220,220}+E_{230,230}+E_{234,234}+E_{238,238}+E_{244,244}+E_{248,248}+E_{256,256}+E_{259,259}+E_{263,263}\\
&+E_{267,267}+E_{272,272}+E_{281,281}+E_{286,286}+E_{289,289}+E_{291,291}+E_{295,295}+E_{300,300}+E_{302,302}\\
&+E_{305,305}+E_{313,313}+E_{317,317}+E_{324,324}+E_{328,328}+E_{336,336}+E_{339,339}+E_{343,343}+E_{347,347}\\
&+E_{352,352}+E_{362,362}+E_{365,365}+E_{373,373}+E_{377,377}+E_{381,381}+E_{386,386}+E_{389,389}+E_{391,391}\\
&+E_{395,395}+E_{400,400})\\
+&(q^{-1})(E_{4,4}+E_{8,8}+E_{16,16}+E_{19,19}+E_{30,30}+E_{34,34}+E_{38,38}+E_{61,61}+E_{66,66}+E_{69,69}+E_{71,71}\\
&+E_{75,75}+E_{80,80}+E_{90,90}+E_{94,94}+E_{98,98}+E_{104,104}+E_{108,108}+E_{116,116}+E_{119,119}+E_{141,141}\\
&+E_{146,146}+E_{149,149}+E_{151,151}+E_{155,155}+E_{160,160}+E_{164,164}+E_{168,168}+E_{176,176}+E_{179,179}\\
&+E_{182,182}+E_{185,185}+E_{193,193}+E_{197,197}+E_{204,204}+E_{208,208}+E_{216,216}+E_{219,219}+E_{250,250}\\
&+E_{254,254}+E_{258,258}+E_{262,262}+E_{265,265}+E_{273,273}+E_{277,277}+E_{284,284}+E_{288,288}+E_{296,296}\\
&+E_{299,299}+E_{301,301}+E_{306,306}+E_{309,309}+E_{311,311}+E_{315,315}+E_{320,320}+E_{330,330}+E_{334,334}\\
&+E_{338,338}+E_{342,342}+E_{345,345}+E_{353,353}+E_{357,357}+E_{361,361}+E_{366,366}+E_{369,369}+E_{371,371}\\
&+E_{375,375}+E_{380,380}+E_{384,384}+E_{388,388}+E_{396,396}+E_{399,399})\\
+&(q^{-2})(E_{10,10}+E_{14,14}+E_{18,18}+E_{64,64}+E_{68,68}+E_{76,76}+E_{79,79}+E_{110,110}+E_{114,114}+E_{118,118}\\
&+E_{144,144}+E_{148,148}+E_{156,156}+E_{159,159}+E_{170,170}+E_{174,174}+E_{178,178}+E_{181,181}+E_{186,186}\\
&+E_{189,189}+E_{191,191}+E_{195,195}+E_{200,200}+E_{210,210}+E_{214,214}+E_{218,218}+E_{261,261}+E_{266,266}\\
&+E_{269,269}+E_{271,271}+E_{275,275}+E_{280,280}+E_{290,290}+E_{294,294}+E_{298,298}+E_{304,304}+E_{308,308}\\
&+E_{316,316}+E_{319,319}+E_{341,341}+E_{346,346}+E_{349,349}+E_{351,351}+E_{355,355}+E_{360,360}+E_{364,364}\\
&+E_{368,368}+E_{376,376}+E_{379,379}+E_{390,390}+E_{394,394}+E_{398,398})\\
+&(q^{-3})(E_{70,70}+E_{74,74}+E_{78,78}+E_{150,150}+E_{154,154}+E_{158,158}+E_{184,184}+E_{188,188}+E_{196,196}\\
&+E_{199,199}+E_{264,264}+E_{268,268}+E_{276,276}+E_{279,279}+E_{310,310}+E_{314,314}+E_{318,318}+E_{344,344}\\
&+E_{348,348}+E_{356,356}+E_{359,359}+E_{370,370}+E_{374,374}+E_{378,378})\\
+&(q^{-4})(E_{190,190}+E_{194,194}+E_{198,198}+E_{270,270}+E_{274,274}+E_{278,278}+E_{350,350}+E_{354,354}+E_{358,358})
\end{align*}

\begin{align*}
    \pi_{W \otimes W}(q^{H_3}) &=
(q^4)(E_{43,43}+E_{46,46}+E_{50,50}+E_{103,103}+E_{106,106}+E_{110,110}+E_{183,183}+E_{186,186}+E_{190,190})\\
+&(q^3)(E_{23,23}+E_{26,26}+E_{30,30}+E_{42,42}+E_{44,44}+E_{53,53}+E_{56,56}+E_{63,63}+E_{66,66}+E_{70,70}+E_{102,102}\\
&+E_{104,104}+E_{113,113}+E_{116,116}+E_{182,182}+E_{184,184}+E_{193,193}+E_{196,196}+E_{243,243}+E_{246,246}\\
&+E_{250,250}+E_{303,303}+E_{306,306}+E_{310,310})\\
+&(q^2)(E_{3,3}+E_{6,6}+E_{10,10}+E_{22,22}+E_{24,24}+E_{33,33}+E_{36,36}+E_{41,41}+E_{47,47}+E_{49,49}+E_{51,51}\\
&+E_{54,54}+E_{60,60}+E_{62,62}+E_{64,64}+E_{73,73}+E_{76,76}+E_{101,101}+E_{107,107}+E_{109,109}+E_{111,111}\\
&+E_{114,114}+E_{120,120}+E_{123,123}+E_{126,126}+E_{130,130}+E_{163,163}+E_{166,166}+E_{170,170}+E_{181,181}\\
&+E_{187,187}+E_{189,189}+E_{191,191}+E_{194,194}+E_{200,200}+E_{203,203}+E_{206,206}+E_{210,210}+E_{242,242}\\
&+E_{244,244}+E_{253,253}+E_{256,256}+E_{263,263}+E_{266,266}+E_{270,270}+E_{302,302}+E_{304,304}+E_{313,313}\\
&+E_{316,316}+E_{383,383}+E_{386,386}+E_{390,390})\\
+&(q)(E_{2,2}+E_{4,4}+E_{13,13}+E_{16,16}+E_{21,21}+E_{27,27}+E_{29,29}+E_{31,31}+E_{34,34}+E_{40,40}+E_{45,45}\\
&+E_{48,48}+E_{57,57}+E_{59,59}+E_{61,61}+E_{67,67}+E_{69,69}+E_{71,71}+E_{74,74}+E_{80,80}+E_{83,83}\\
&+E_{86,86}+E_{90,90}+E_{105,105}+E_{108,108}+E_{117,117}+E_{119,119}+E_{122,122}+E_{124,124}+E_{133,133}\\
&+E_{136,136}+E_{143,143}+E_{146,146}+E_{150,150}+E_{162,162}+E_{164,164}+E_{173,173}+E_{176,176}+E_{185,185}\\
&+E_{188,188}+E_{197,197}+E_{199,199}+E_{202,202}+E_{204,204}+E_{213,213}+E_{216,216}+E_{241,241}+E_{247,247}\\
&+E_{249,249}+E_{251,251}+E_{254,254}+E_{260,260}+E_{262,262}+E_{264,264}+E_{273,273}+E_{276,276}+E_{301,301}\\
&+E_{307,307}+E_{309,309}+E_{311,311}+E_{314,314}+E_{320,320}+E_{323,323}+E_{326,326}+E_{330,330}+E_{363,363}\\
&+E_{366,366}+E_{370,370}+E_{382,382}+E_{384,384}+E_{393,393}+E_{396,396})\\
+&(1)(E_{1,1}+E_{7,7}+E_{9,9}+E_{11,11}+E_{14,14}+E_{20,20}+E_{25,25}+E_{28,28}+E_{37,37}+E_{39,39}+E_{52,52}\\
&+E_{55,55}+E_{58,58}+E_{65,65}+E_{68,68}+E_{77,77}+E_{79,79}+E_{82,82}+E_{84,84}+E_{93,93}+E_{96,96}\\
&+E_{112,112}+E_{115,115}+E_{118,118}+E_{121,121}+E_{127,127}+E_{129,129}+E_{131,131}+E_{134,134}+E_{140,140}\\
&+E_{142,142}+E_{144,144}+E_{153,153}+E_{156,156}+E_{161,161}+E_{167,167}+E_{169,169}+E_{171,171}+E_{174,174}\\
&+E_{180,180}+E_{192,192}+E_{195,195}+E_{198,198}+E_{201,201}+E_{207,207}+E_{209,209}+E_{211,211}+E_{214,214}\\
&+E_{220,220}+E_{223,223}+E_{226,226}+E_{230,230}+E_{245,245}+E_{248,248}+E_{257,257}+E_{259,259}+E_{261,261}\\
&+E_{267,267}+E_{269,269}+E_{271,271}+E_{274,274}+E_{280,280}+E_{283,283}+E_{286,286}+E_{290,290}+E_{305,305}\\
&+E_{308,308}+E_{317,317}+E_{319,319}+E_{322,322}+E_{324,324}+E_{333,333}+E_{336,336}+E_{343,343}+E_{346,346}\\
&+E_{350,350}+E_{362,362}+E_{364,364}+E_{373,373}+E_{376,376}+E_{381,381}+E_{387,387}+E_{389,389}+E_{391,391}\\
&+E_{394,394}+E_{400,400})\\
+&(q^{-1})(E_{5,5}+E_{8,8}+E_{17,17}+E_{19,19}+E_{32,32}+E_{35,35}+E_{38,38}+E_{72,72}+E_{75,75}+E_{78,78}+E_{81,81}\\
&+E_{87,87}+E_{89,89}+E_{91,91}+E_{94,94}+E_{100,100}+E_{125,125}+E_{128,128}+E_{137,137}+E_{139,139}+E_{141,141}\\
&+E_{147,147}+E_{149,149}+E_{151,151}+E_{154,154}+E_{160,160}+E_{165,165}+E_{168,168}+E_{177,177}+E_{179,179}\\
&+E_{205,205}+E_{208,208}+E_{217,217}+E_{219,219}+E_{222,222}+E_{224,224}+E_{233,233}+E_{236,236}+E_{252,252}\\
&+E_{255,255}+E_{258,258}+E_{265,265}+E_{268,268}+E_{277,277}+E_{279,279}+E_{282,282}+E_{284,284}+E_{293,293}\\
&+E_{296,296}+E_{312,312}+E_{315,315}+E_{318,318}+E_{321,321}+E_{327,327}+E_{329,329}+E_{331,331}+E_{334,334}\\
&+E_{340,340}+E_{342,342}+E_{344,344}+E_{353,353}+E_{356,356}+E_{361,361}+E_{367,367}+E_{369,369}+E_{371,371}\\
&+E_{374,374}+E_{380,380}+E_{385,385}+E_{388,388}+E_{397,397}+E_{399,399})\\
+&(q^{-2})(E_{12,12}+E_{15,15}+E_{18,18}+E_{85,85}+E_{88,88}+E_{97,97}+E_{99,99}+E_{132,132}+E_{135,135}+E_{138,138}\\
&+E_{145,145}+E_{148,148}+E_{157,157}+E_{159,159}+E_{172,172}+E_{175,175}+E_{178,178}+E_{212,212}+E_{215,215}\\
&+E_{218,218}+E_{221,221}+E_{227,227}+E_{229,229}+E_{231,231}+E_{234,234}+E_{240,240}+E_{272,272}+E_{275,275}\\
&+E_{278,278}+E_{281,281}+E_{287,287}+E_{289,289}+E_{291,291}+E_{294,294}+E_{300,300}+E_{325,325}+E_{328,328}\\
&+E_{337,337}+E_{339,339}+E_{341,341}+E_{347,347}+E_{349,349}+E_{351,351}+E_{354,354}+E_{360,360}+E_{365,365}\\
&+E_{368,368}+E_{377,377}+E_{379,379}+E_{392,392}+E_{395,395}+E_{398,398})\\
+&(q^{-3})(E_{92,92}+E_{95,95}+E_{98,98}+E_{152,152}+E_{155,155}+E_{158,158}+E_{225,225}+E_{228,228}+E_{237,237}\\
&+E_{239,239}+E_{285,285}+E_{288,288}+E_{297,297}+E_{299,299}+E_{332,332}+E_{335,335}+E_{338,338}+E_{345,345}\\
&+E_{348,348}+E_{357,357}+E_{359,359}+E_{372,372}+E_{375,375}+E_{378,378})\\
+&(q^{-4})(E_{232,232}+E_{235,235}+E_{238,238}+E_{292,292}+E_{295,295}+E_{298,298}+E_{352,352}+E_{355,355}+E_{358,358})
\end{align*}

\subsection{Listing \texorpdfstring{$4 \times 4$}{4x4} Blocks of \texorpdfstring{$\pi_{W \otimes W}(C)$}{pi(W tensor W)(C)}}
\label{4x4blocks}
The twenty-four $4 \times 4$ blocks of $\pi_{W \otimes W}(C)$ take the following forms:

Ten blocks take the form
\begin{align*}
    \begin{bmatrix}
B_1 & B_2 & B_3 & 0\\
B_4 & B_5 & B_3 & B_6\\
B_7 & B_3 & B_8 & B_9\\
0 & B_{10} & B_{11} & B_{12}
    \end{bmatrix}
\end{align*}
Four blocks take the form
\begin{align*}
    \begin{bmatrix}
B_1 & B_7 & B_6 & 0\\
B_{13} & B_5 & B_3 & B_3\\
B_2 & B_3 & B_8 & B_{10}\\
0 & B_9 & B_{14} & B_{12}
    \end{bmatrix}
\end{align*}
Four blocks take the form
\begin{align*}
    \begin{bmatrix}
B_{15} & B_7 & B_7 & B_3\\
B_3 & B_{16} & 0 & B_{10}\\
B_3 & 0 & B_{16} & B_{10}\\
B_3 & B_6 & B_6 & B_{17}
    \end{bmatrix}
\end{align*}
Four blocks take the form
\begin{align*}
    \begin{bmatrix}
B_{15} & B_2 & B_2 & B_3\\
B_6 & B_{16} & 0 & B_9\\
B_6 & 0 & B_{16} & B_9\\
B_3 & B_3 & B_3 & B_{17}
    \end{bmatrix}
\end{align*}
And two blocks take the form
\begin{align*}
    \begin{bmatrix}
B_1 & B_{18} & B_{19} & 0\\
B_{20} & B_5 & B_3 & B_{21}\\
B_{22} & B_3 & B_8 & B_{23}\\
0 & B_{24} & B_{25} & B_{12}
    \end{bmatrix}
\end{align*}
Where
\begin{align*}
    B_1 &= q^{12}-q^{10}+2q^{6}+2+q^{-2}-q^{-4}+2q^{-8} \\
    B_2 &= q^{10}-2q^{8}+q^{6}+q^{-4}-2q^{-6}+q^{-8} \\
    B_3 &= q^{9}-2q^{7}+q^{5}+q^{-5}-2q^{-7}+q^{-9} \\
    B_4 &= q^{12}-q^{10}-q^{8}+q^{6}+q^{-2}-q^{-4}-q^{-6}+q^{-8} \\
    B_5 &= 2q^{10}-2q^{8}+q^{6}+q^{4}+2+2q^{-4}-2q^{-6}+q^{-8}+q^{-10} \\
    B_6 &= q^{10}-q^{8}-q^{6}+q^{4}+q^{-4}-q^{-6}-q^{-8}+q^{-10} \\
    B_7 &= q^{11}-q^{9}-q^{7}+q^{5}+q^{-3}-q^{-5}-q^{-7}+q^{-9} \\
    B_8 &= q^{10}+q^{8}-2q^{6}+2q^{4}+2+q^{-4}+q^{-6}-2q^{-8}+2q^{-10} \\
    B_9 &= q^{9}-q^{7}-q^{5}+q^{3}+q^{-5}-q^{-7}-q^{-9}+q^{-11} \\
    B_{10} &= q^{8}-2q^{6}+q^{4}+q^{-6}-2q^{-8}+q^{-10} \\
    B_{11} &= q^{7}-2q^{5}+q^{3}+q^{-7}-2q^{-9}+q^{-11} \\
    B_{12} &= 2q^{8}-q^{4}+q^{2}+2+2q^{-6}-q^{-10}+q^{-12} \\
    B_{13} &= q^{11}-2q^{9}+q^{7}+q^{-3}-2q^{-5}+q^{-7} \\
    B_{14} &= q^{8}-q^{6}-q^{4}+q^{2}+q^{-6}-q^{-8}-q^{-10}+q^{-12} \\
    B_{15} &= q^{12}-2q^{8}+3q^{6}+2+q^{-2}-2q^{-6}+3q^{-8} \\
    B_{16} &= q^{10}+q^{4}+2+q^{-4}+q^{-10} \\
    B_{17} &= 3q^{8}-2q^{6}+q^{2}+2+3q^{-6}-2q^{-8}+q^{-12} \\
    B_{18} &= q^{9}-3q^{7}+4q^{5}-4q^{3}+4q-4q^{-4/q^{-}}-3q^{-5}+q^{-7} \\
    B_{19} &= q^{8}-3q^{6}+4q^{4}-4q^{2}+4-4q^{-2}+4q^{-4}-3q^{-6}+q^{-8} \\
    B_{20} &= q^{13}-2q^{9}+q^{5}+q^{-2/q^{-}}+q^{-9} \\
    B_{21} &= q^{11}-2q^{7}+q^{3}+q^{-3}-2q^{-7}+q^{-11} \\
    B_{22} &= q^{12}-2q^{8}+q^{4}+q^{-2}-2q^{-6}+q^{-10} \\
    B_{23} &= q^{10}-2q^{6}+q^{2}+q^{-4}-2q^{-8}+q^{-12} \\
    B_{24} &= q^{7}-3q^{5}+4q^{3}-4q+4q^{-4/q^{-}}+4q^{-5}-3q^{-7}+q^{-9} \\
    B_{25} &= q^{6}-3q^{4}+4q^{2}-4+4q^{-2}-4q^{-4}+4q^{-6}-3q^{-8}+q^{-10}
\end{align*}

\endgroup 

\label{section--References}
\printbibliography

@article{Jimbo85,
    author = "Jimbo, Michio",
    title = "{A q difference analog of U(g) and the Yang-Baxter equation}",
    doi = "10.1007/BF00704588",
    journal = "Lett. Math. Phys.",
    volume = "10",
    pages = "63--69",
    year = "1985"
}

@article{Drinfeld85,
    author = "V. G. Drinfeld",
    title = "Hopf algebras and the quantum Yang--Baxter equation",
    journal = "Dokl. Akad. Nauk SSSR",
    year = "1985",
    volume = "283",
    pages = "1060--1064"
}

@article{kuan2019stochastic,
      title={Stochastic Fusion of Interacting Particle Systems and Duality Functions}, 
    journal = {Annales de l'Institut Henri Poincare (B) Probability and Statistics},
      author={Jeffrey Kuan},
      year={2019},
      eprint={1908.02359},
      archivePrefix={arXiv}
}

@article{SPITZER1970246,
title = {Interaction of Markov processes},
journal = {Advances in Mathematics},
volume = {5},
number = {2},
pages = {246-290},
year = {1970},
issn = {0001-8708},
doi = {https://doi.org/10.1016/0001-8708(70)90034-4},
url = {https://www.sciencedirect.com/science/article/pii/0001870870900344},
author = {Frank Spitzer}
}

@misc{rohr2023type,
      title={A Type D Asymmetric Simple Exclusion Process Generated by an Explicit Central Element of $\mathcal{U}_q(\mathfrak{so}_{10})$}, 
      author={Eddie Rohr and Karthik Sellakumaran Latha and Amanda Yin},
      year={2023},
      eprint={2307.15660},
      archivePrefix={arXiv}
}

@article{kuan2020interactingparticlesystemstype,
      title={Interacting particle systems with type D symmetry and duality}, 
      author={Jeffrey Kuan and Mark Landry and Andrew Lin and Andrew Park and Zhengye Zhou},
    volume = {48},
    journal = {Houston Journal of Mathematics},
    number = {3},
    pages = {499-538},
      year={2022},
      eprint={2011.13473},
      archivePrefix={arXiv},
      primaryClass={math.PR},
      url={https://arxiv.org/abs/2011.13473}, 
}

@misc{koekoek1996askeyschemehypergeometricorthogonalpolynomials,
      title={The Askey-scheme of hypergeometric orthogonal polynomials and its q-analogue}, 
      author={Roelof Koekoek and René F. Swarttouw},
      year={1996},
      eprint={math/9602214},
      archivePrefix={arXiv},
      primaryClass={math.CA},
      url={https://arxiv.org/abs/math/9602214}, 
}

@article{10.1214/aop/1176990628,
author = {Persi Diaconis and James Allen Fill},
title = {{Strong Stationary Times Via a New Form of Duality}},
volume = {18},
journal = {The Annals of Probability},
number = {4},
publisher = {Institute of Mathematical Statistics},
pages = {1483 -- 1522},
keywords = {birth and death chains, Doob $H$ transform, Dual processes, Ehrenfest chain, Eigenvalues, Markov chains, monotone likelihood ratio, Random walk, rates of convergence, Siegmund duality, Stochastic monotonicity, strong stationary duality, Time reversal, Total variation},
year = {1990},
doi = {10.1214/aop/1176990628},
URL = {https://doi.org/10.1214/aop/1176990628}
}

@article{Blyschak_2023,
   title={Orthogonal Polynomial Duality of a Two-Species Asymmetric Exclusion Process},
   volume={190},
   ISSN={1572-9613},
   url={http://dx.doi.org/10.1007/s10955-023-03100-y},
   DOI={10.1007/s10955-023-03100-y},
   number={5},
   journal={Journal of Statistical Physics},
   publisher={Springer Science and Business Media LLC},
   author={Blyschak, Danyil and Burke, Olivia and Kuan, Jeffrey and Li, Dennis and Ustilovsky, Sasha and Zhou, Zhengye},
   year={2023},
   month=may }

@misc{franceschini2024orthogonalpolynomialdualityunitary,
      title={Orthogonal polynomial duality and unitary symmetries of multi--species ASEP$(q,\boldsymbol{\theta})$ and higher--spin vertex models via $^*$--bialgebra structure of higher rank quantum groups}, 
      author={Chiara Franceschini and Jeffrey Kuan and Zhengye Zhou},
      year={2024},
      eprint={2209.03531},
      archivePrefix={arXiv},
      primaryClass={math.PR},
      url={https://arxiv.org/abs/2209.03531}, 
}

@misc{belitskyschutz15,
      title={Self-duality for the two-component asymmetric simple exclusion process}, 
      author={V. Belitsky and G. M. Schütz},
      year={2015},
      url={https://doi.org/10.1063/1.4929663}, 
}

@article{Kuan_2016,
   title={Stochastic duality of ASEP with two particle types via symmetry of quantum groups of rank two},
   volume={49},
   ISSN={1751-8121},
   url={http://dx.doi.org/10.1088/1751-8113/49/11/115002},
   DOI={10.1088/1751-8113/49/11/115002},
   number={11},
   journal={Journal of Physics A: Mathematical and Theoretical},
   publisher={IOP Publishing},
   author={Kuan, Jeffrey},
   year={2016},
   month=feb, pages={115002} }

@article{Kuan_2017,
   title={A Multi-species ASEP $\boldsymbol{(q,\,j)}$ and $\boldsymbol{q}$-TAZRP with Stochastic Duality},
   volume={2018},
   ISSN={1073-7928},
   url={http://dx.doi.org/10.1093/imrn/rnx034},
   DOI={10.1093/imrn/rnx034},
   number={17},
   journal={International Mathematics Research Notices},
   publisher={Oxford University Press (OUP)},
   author={Kuan, Jeffrey},
   year={2017},
   month=mar, pages={5378–5416} }

@article{Kuan_2018,
   title={An Algebraic Construction of Duality Functions for the Stochastic ${\mathcal{U}_q( A_n^{(1)})}$ Vertex Model and Its Degenerations},
   volume={359},
   ISSN={1432-0916},
   url={http://dx.doi.org/10.1007/s00220-018-3108-x},
   DOI={10.1007/s00220-018-3108-x},
   number={1},
   journal={Communications in Mathematical Physics},
   publisher={Springer Science and Business Media LLC},
   author={Kuan, Jeffrey},
   year={2018},
   month=mar, pages={121–187} }

@misc{belitsky2016selfdualityshockdynamicsncomponent,
      title={Self-duality and shock dynamics in the $n$-component priority ASEP}, 
      author={V. Belitsky and G. M. Schütz},
      year={2016},
      eprint={1606.04587},
      archivePrefix={arXiv},
      primaryClass={math.PR},
      url={https://arxiv.org/abs/1606.04587}, 
}

@article{Belitsky_2015,
   title={Quantum Algebra Symmetry of the ASEP with Second-Class Particles},
   volume={161},
   ISSN={1572-9613},
   url={http://dx.doi.org/10.1007/s10955-015-1363-1},
   DOI={10.1007/s10955-015-1363-1},
   number={4},
   journal={Journal of Statistical Physics},
   publisher={Springer Science and Business Media LLC},
   author={Belitsky, V. and Schütz, G. M.},
   year={2015},
   month=sep, pages={821–842} }

@book{QuantumBases,
author = {Jin Hong and Seok-Jin Kang},
title = {Introduction to Quantum Groups and Crystal Bases},
publisher = {American Mathematical Society},
address = {Providence},
year = {2002}}

@book{FultonHarris,
author = {William Fulton and Joe Harris},
title = {Representation Theory: A First Course},
publisher = {Springer},
address = {New York},
year = {2004},
pages = {410}}

@misc{Carinci2014,
      title={A generalized Asymmetric Exclusion Process with $U_q(\mathfrak{sl}_2)$ stochastic duality}, 
      author={Gioia Carinci and Cristian Giardina' and Frank Redig and Tomohiro Sasamoto},
      year={2014},
      eprint={1407.3367},
      archivePrefix={arXiv},
      primaryClass={math.PR},
      url={https://arxiv.org/abs/1407.3367}, 
}

@article{Carinci2016,
   title={Asymmetric Stochastic Transport Models with ${\mathcal {U}
}_q(\mathfrak {su}(1,1))$ U q ( su ( 1 , 1 ) ) Symmetry},
   volume={163},
   ISSN={1572-9613},
   url={http://dx.doi.org/10.1007/s10955-016-1473-4},
   DOI={10.1007/s10955-016-1473-4},
   number={2},
   journal={Journal of Statistical Physics},
   publisher={Springer Science and Business Media LLC},
   author={Carinci, Gioia and Giardinà, Cristian and Redig, Frank and Sasamoto, Tomohiro},
   year={2016},
   month=feb, pages={239–279} }

@misc{TailorBot,
author = {Jeffrey Kuan},
url = {https://tailorswiftbot.godaddysites.com/arxiv-2209-11114},
year = {2024}
}

@article{HarishChandraComment,
   title={Explicit Central Elements of $\mathcal{U}_q(\mathfrak{gl}(N+1))$},
   ISSN={1815-0659},
   url={http://dx.doi.org/10.3842/SIGMA.2023.036},
   DOI={10.3842/sigma.2023.036},
   journal={Symmetry, Integrability and Geometry: Methods and Applications},
   publisher={SIGMA (Symmetry, Integrability and Geometry: Methods and Application)},
   author={Kuan, Jeffrey and Zhang, Keke},
   year={2023},
   month=jun }

@article{10.1214/aop/1176996084,
author = {Thomas M. Liggett},
title = {{Coupling the Simple Exclusion Process}},
volume = {4},
journal = {The Annals of Probability},
number = {3},
publisher = {Institute of Mathematical Statistics},
pages = {339 -- 356},
keywords = {coupling, Infinite particle system, Invariant measures, simple exclusion process},
year = {1976},
doi = {10.1214/aop/1176996084},
URL = {https://doi.org/10.1214/aop/1176996084}
}

@misc{carinci2021qorthogonaldualitiesasymmetricparticle,
      title={q-Orthogonal dualities for asymmetric particle systems}, 
      author={Gioia Carinci and Chiara Franceschini and Wolter Groenevelt},
      year={2021},
      eprint={2003.07837},
      archivePrefix={arXiv},
      primaryClass={math.PR},
      url={https://arxiv.org/abs/2003.07837}, 
}

@article{Zhou_2021,
   title={Orthogonal Polynomial Stochastic Duality Functions for Multi-Species SEP(2j) and Multi-Species IRW},
   ISSN={1815-0659},
   url={http://dx.doi.org/10.3842/SIGMA.2021.113},
   DOI={10.3842/sigma.2021.113},
   journal={Symmetry, Integrability and Geometry: Methods and Applications},
   publisher={SIGMA (Symmetry, Integrability and Geometry: Methods and Application)},
   author={Zhou, Zhengye},
   year={2021},
   month=dec }

@article{Groenevelt_2018,
   title={Orthogonal Stochastic Duality Functions from Lie Algebra Representations},
   volume={174},
   ISSN={1572-9613},
   url={http://dx.doi.org/10.1007/s10955-018-2178-7},
   DOI={10.1007/s10955-018-2178-7},
   number={1},
   journal={Journal of Statistical Physics},
   publisher={Springer Science and Business Media LLC},
   author={Groenevelt, Wolter},
   year={2018},
   month=oct, pages={97–119} }

@misc{kuanAcknowledgement,
    title={Accessible Version of Arxiv:2209.03531v3},
    url={https://tailorswiftbot.godaddysites.com/arxiv-2209-03531},
    author={Jeffrey Kuan},
    year={2024}}

@article{KuanTwoDualities2021,
    author = {Kuan, Jeffrey},
    title = "{Two Dualities: Markov and Schur–Weyl}",
    journal = {International Mathematics Research Notices},
    volume = {2022},
    number = {13},
    pages = {9633-9662},
    year = {2021},
    month = {02},
    abstract = "{We show that quantum Schur–Weyl duality leads to Markov duality for a variety of asymmetric interacting particle systems. In particular, we consider the following three cases:(1) Using a Schur–Weyl duality between a two-parameter quantum group and a two-parameter Hecke algebra from [6], we recover the Markov self-duality of multi-species ASEP previously discovered in [23] and [3].(2) From a Schur–Weyl duality between a co-ideal subalgebra of a quantum group and a Hecke algebra of type B [2], we find a Markov duality for a multi-species open ASEP on the semi-infinite line. The duality functional has not previously appeared in the literature.(3) A “fused” Hecke algebra from [15] leads to a new process, which we call braided ASEP. In braided ASEP, up to \\$m\\$ particles may occupy a site and up to \\$m\\$ particles may jump at a time. The Schur–Weyl duality between this Hecke algebra and a quantum group lead to a Markov duality. The duality function had previously appeared as the duality function of the multi-species ASEP\\$(q,m/2)\\$ [23] and the stochastic multi-species higher spin vertex model [24].}",
    issn = {1073-7928},
    doi = {10.1093/imrn/rnaa333},
    url = {https://doi.org/10.1093/imrn/rnaa333},
    eprint = {https://academic.oup.com/imrn/article-pdf/2022/13/9633/44360142/rnaa333.pdf},
}

@article{CrampeRagoucyVanicatReview2014,
doi = {10.1088/1742-5468/2014/11/P11032},
url = {https://dx.doi.org/10.1088/1742-5468/2014/11/P11032},
year = {2014},
month = {11},
publisher = {IOP Publishing and SISSA},
volume = {2014},
number = {11},
pages = {P11032},
author = {N Crampe and E Ragoucy and M Vanicat},
title = {Integrable approach to simple exclusion processes with boundaries. Review and progress},
journal = {Journal of Statistical Mechanics: Theory and Experiment},
abstract = {We study the matrix ansatz in the quantum group framework, applying integrable systems techniques to statistical physics models. We start by reviewing the two approaches, and then show how one can use the former to get new insight into the latter. We illustrate our method by solving a model of reaction-diffusion. An eigenvector for the transfer matrix for the XXZ spin chain with non-diagonal boundary is obtained using a matrix ansatz.}
}

@misc{kuan2024qexchangeablemeasurestransformationsinteracting,
      title={q--exchangeable Measures and Transformations in Interacting Particle Systems}, 
      author={Jeffrey Kuan},
      year={2024},
      eprint={2409.12888},
      archivePrefix={arXiv},
      primaryClass={math.PR},
      url={https://arxiv.org/abs/2409.12888}, 
}

\end{document}